\newif\iflong\longfalse

\documentclass[a4paper,cleveref, autoref, thm-restate]{lipics-v2021}
\pdfoutput=1
\hideLIPIcs
\nolinenumbers

\usepackage{amsthm}
\usepackage{newtxmath}
\usepackage{stmaryrd}
\usepackage{mathtools} %
\usepackage{mathpartir}
\usepackage{xspace}

\usepackage{tikz,tikz-cd}

\usepackage[]{commonmacros}

\title{Wiring the \texorpdfstring{\( \pi \)}{pi}-calculus to Denotational Semantics} %

\author{Ken Sakayori}{The University of Tokyo, Japan }{
sakayori@is.s.u-tokyo.ac.jp
}{https://orcid.org/0000-0003-3238-9279}{JSPS KAKENHI Grant Number 24K20731}

\author{Davide Sangiorgi}{University of Bologna, Italy \and Centre Inria d'Université Côte d'Azur, France}{davide.sangiorgi@unibo.it}{0000-0001-5823-3235}{MIUR11PRIN project ‘Resource Awareness in Programming: Algebra, Rewriting, and
Analysis’ (RAP, ID P2022HXNSC)}
\author{Simon Castellan}{
Inria, Univ Rennes, CNRS, IRISA, France
}{simon.castellan@inria.fr}{https://orcid.org/0000-0001-5886-5793}{}
\author{Pierre Clairambault}{
  CNRS, Aix Marseille Univ, LIS, France
}{pierre.clairambault@cnrs.fr}{0000-0002-3285-6028}{ANR project DyVerSe (ANR-19-CE48-0010-01)}

\authorrunning{K.\ Sakayori, D.\ Sangiorgi, S.\ Castellan and P.\ Clairambault} %

\Copyright{Ken Sakayori, Davide Sangiorgi, Simon Castellan and Pierre Clairambault} %

\ccsdesc[500]{Theory of computation~Denotational semantics}
\ccsdesc[500]{Theory of computation~Process calculi}
\ccsdesc[300]{Theory of computation~Categorical semantics}
\ccsdesc[300]{Theory of computation~Interactive computation}

\keywords{\texorpdfstring{\( \pi \)}{pi}-calculus, denotational semantics, game semantics, wire process} %

\category{} %

\acknowledgements{We are grateful to the anonymous reviewers for their feedback.}%

\EventEditors{Claudia Faggian and Joost-Pieter Katoen}
\EventNoEds{2}
\EventLongTitle{41st Annual Symposium on Logic in Computer Science (LICS 2026)}
\EventShortTitle{LICS 2026}
\EventAcronym{LICS}
\EventYear{2026}
\EventDate{July 20--23, 2026}
\EventLocation{Lisbon, Portugal}
\EventLogo{}
\SeriesVolume{380}
\ArticleNo{17}

\begin{document}

\maketitle

\begin{abstract}
We introduce a dialect of the Asynchronous \( \pi \)-calculus, called \wapi{}, in which
(1) an input name may be owned, at any time,  by at most one process;
(2) each name has either only the input or only the output capability.
As a result, special processes called wires (aka forwarders, that is, processes that receive values at one name and re-transmit) behave as substitutions when composed with any \wapi{} process.
Thus \api{} naturally yields a category,  whose morphisms are \wapi{} processes (modulo the
reference behavioural equivalence,  barbed congruence) and whose objects are types; and
where wires act as identity morphisms.
We show that the category of processes can be further organised into (sub)categories with the structures needed for the interpretation of
common higher-order language features in the literature by drawing on insights from game semantics; notably, we construct
a relative Seely category, the categorical structure that concurrent
game semantics has.
At the same time,  \wapi{} follows the tradition of ordinary \( \pi \)-calculi in that
expressiveness is preserved and the operational and
algebraic theory are developed in a similar manner, notwithstanding substantial technical differences in their development and proofs.
In short, the goal of \api{} is to remain faithful to the operational and algebraic tradition of the \( \pi \)-calculi while connecting to  the tradition of denotational models for programming languages.
\end{abstract}

\section{Introduction}
\label{sec:intro}

The \( \pi \)-calculus has been widely used as a basic tool for representing message-passing computation due to its expressiveness and simplicity.
It has been advocated as a \emph{metalanguage} for describing, and reasoning about, programming languages (or core fragments of them), even sequential ones, such as  \( \lambda \)-calculi~\cite{Milner90,SangiorgiWalker01} and object-oriented languages~\cite{Walker91,Jones93}.
Indeed, its message-passing abilities can be leveraged to represent the interactions of open programs with their contexts, allowing reasoning on open programs.
These encodings and translations not only reveal the interactive nature of the source
 objects, but also give access to powerful algebraic and %
operational proof techniques, which the \( \pi \)-calculus offers.

Another methodology to reason about the interactive behaviour of programs
is given by denotational semantics. Following denotational semantics,
types and open programs are interpreted into an appropriate
mathematical universe, reflecting the relevant aspects of their
execution. Precise settings vary -- in particular, \emph{game
semantics} represents a program as a strategy, a process exchanging
messages with an arbitrary execution environment, with a clear kinship
with process calculi (in fact, the $\pi$-calculus has been naturally
used to provide \emph{syntactic representations of denotational
objects}, notably  strategies of game
semantics~\cite{HylandOng95,FioreHonda98,YoshidaCS20}). But the
distinctive feature of denotational semantics is that models are
organised as structured categories (typically cartesian closed
categories), and the bulk of the interpretation of expressive
programming languages is streamlined following this categorical machinery.

Both of these established methodologies rest on a whole literature with
powerful developments and reasoning techniques; and in principle, one
could have the best of both worlds: translations to process calculi
could be instances of denotational semantics if processes were to be
organised as denotational models; or from the other side, denotational
interpretations could be given a syntactic description via a translation
into a process calculus.

However, in trying to accommodate these two strands of research, there is a difficulty: \emph{processes cannot be easily organised into categories}.
To align with denotational semantics, we must form a category where morphisms are processes and objects are types; for the composition of morphisms, it appears natural to adopt `parallel composition plus hiding' in an interactive semantics, as advocated, for example, by Abramsky~\cite{AbramskyGN96}.
The equality of the category should match the operational equality on the processes,
usually  contextually-defined (e.g., \emph{barbed congruence}).
\newcommand*{\idProc}[2]{\mathrm{Id}(#1, #2)}
If one attempts to form a category along these lines, an immediate difficulty arises:
finding a process \( \idProc a b \) that satisfies the following identity law
\begin{equation}
\label{eq:Asynchronous-law}
P \sub a b =   \res b ( \idProc a b | P )\,,
\end{equation}
in other words, one needs  a process that works as a substitution.
In the literature,  instances of the
 law may be found taking
 for \( \idProc a b \)  special processes called \emph{wires}~\cite{SangiorgiWalker01} or
\emph{forwarders}~\cite{HondaYoshida95}  (or their bidirectional version
called equators~\cite{HondaYoshida95,Merro99}).
 A wire    \( \links b a  \defi !\inp b x .\out a x \) receives values at
$b$ and re-emits them at $a$.
For example,
in \iflong the 
\fi
\emph{Asynchronous Localised $\pi$-calculus}~\cite{MerroSangiorgi04},
\iflong
(\ALpi{})
\fi
the
law  holds for \( \idProc b a = \links b a \) 
\iflong
provided
\else
when
\fi
 $b$ is only used in output in $P$;
\iflong
even so, 
 \ALpi{} 
\else
still, the calculus 
\fi
cannot be seen as a category as the law does not hold in general.
\iflong for all processes.\fi

Certain calculi achieve this law, but at the cost of departing from the standard operational theory of the \( \pi \)-calculus.
For example, (1) by significantly modifying the reduction semantics of
the \( \pi \)-calculus (i.e.,  allowing certain \iflong conventional \fi multi-step
reductions to happen in one step)~\cite{SakayoriTsukada21}, (2) by
considering processes up to permutation of certain prefixes (i.e., 
enlarging structural congruence)~\cite{YoshidaCS20} or (3) by using
a coarse process equivalence such as may-testing equivalence coupled with
asynchrony and  i/o types~\cite{SakayoriTsukada17}.

In this paper, we study a dialect  of   Asynchronous \( \pi \)-calculus, called \api{}~--- where A and W emphasise the role of asynchrony and wires.
It satisfies (the analogue of)
law \eqref{eq:Asynchronous-law}, \emph{while maintaining the standard operational machinery of
$\pi$-calculi}:   reduction semantics, a branching-time 
contextually-defined
reference behavioural
equivalence such as barbed congruence,   proof techniques based on \emph{labelled bisimilarity}
and \emph{up-to enhancements}  for it (adapted to the new calculus).
The modifications do not affect
  the expressive power of the calculus~--- demonstrated
by an encoding of \ALpi{}.
In \api, processes (and types) naturally form a category which can be equipped with standard structures in denotational semantics, in particular that of a cartesian-closed category (thus a model of call-by-name $\lambda$-calculus).
We show that the interpretation of call-by-name $\lambda$-calculus by the categorical
semantics yields a process encoding of \( \lambda \)-calculus along the lines of 
 that obtained via game semantics~\cite{HylandOng95}.

The calculus \api is derived  from   Asynchronous $\pi$-calculus by imposing
 two additional features, which
we shall call \emph{1-input property} and \emph{I/O separation}.
The 1-input property is the requirement  that an \emph{input name may be owned, at any
  given time,  by at most one process}.
For example, a process \( \inp a x  . \out c x \mid \inp b x . \out c x \) follows the 1-input property, whereas \( \inp a x . \out c x \mid \inp a x . \out c x \) does not.
Sequential use of the same name in input is allowed, and the input capability of channels may be exported.
For instance, \( \inp a x . \inp a y . P \) and \( \out a c \mid Q \), where \( c \) has the input capability, are valid processes if \( c \) is not used as an input in \( Q \).
Roughly, the 1-input condition excludes races caused by multiple receivers.
I/O separation, in contrast,  requires any name to have either only the input or only the output capability.
This differs from the conventional notion of a name in $\pi$-calculi,
where a name is introduced, in a restriction, with
 both the input and the output capability.
In \api{},  name restriction takes a form \( \resB x y P \), and  binds an input name \( x
\) and output name \( y \),
so that communications may happen along these now-connected names.

Neither feature, per se, is new. For instance,  the 1-input feature may be found in
actor-based languages such as Erlang (where communication occurs via \emph{process
  names}), and  in 
certain dialects of the $\pi$-calculus such as Join~\cite{FournetGonthier96} and
$\pi_1$ \cite{Amadio97} (however, usually in these languages,
an input capability may not be exported);
and I/O separation may be found in denotational studies of Asynchronous \( \pi \)-calculus~\cite{Laird05,SakayoriTsukada19} or in session-typed languages~\cite{GayVasconcelos25}, where its use is now standard.  Both features may even be found in the asynchronous channels of Rust~\cite{RustMpsc} and
in certain
 constrained forms of concurrency such as those of
 linear-logic inspired session-typed calculi (e.g.~\cite{CairesPT16,Wadler14,KokkeMP19}).
In this respect, the novelty of \api is  to  study the combination of the
 two features within a basic calculus  of mobile processes    on an equal footing with
traditional  $\pi$-calculi (session types, or more generally behavioural types and 
 other advanced type systems, could be added onto \api in the same way as they
\iflong
 have
 been
\else
are
\fi
 added to $\pi$-calculi).

In \api, (the analogue of)
law \eqref{eq:Asynchronous-law} holds, even for demanding branching-time behavioural
equivalences such as  barbed congruence,  taking \( \idProc a b \)  to be  a wire process.
Precisely,  \( \idProc a b \)  can be  \( \links a b \) or its dual  \( \links b a \),  depending on its polarity.
Technically, for a wire to behave as the identity in 
\eqref{eq:Asynchronous-law}, the calculus must be asynchronous,  and 
 all messages emitted towards the
input name of the wire must  indeed flow through the wire. The 1-input feature guarantees
the latter property.

Proving  \eqref{eq:Asynchronous-law} in \api turns out to be
challenging. A  key ingredient, and an instance of the law itself, 
 is showing that a `free output'~---  the output of a global name~--- 
 can be transformed into a  `bound  output'~---  the output of a fresh name~---  provided
 that the fresh name is `wired' with the free name; e.g., when $P = \out ca $
 in   \eqref{eq:Asynchronous-law}.
The transformation of free outputs into bound outputs
 has appeared in the study of localised
$\pi$-calculi;  and similarly for the instance of 
\eqref{eq:Asynchronous-law} in which the substituting name $a$ is an output name.
 The  novelty and the challenge  here was to accommodate  
the case when the  emitted name is an input name and,
in \eqref{eq:Asynchronous-law},  the case in which  $a$ is an input name.
Indeed, the proofs for the two cases of  \eqref{eq:Asynchronous-law}  are quite different:
the output case is a simple structural induction, whereas 
 the input case requires a combination of  induction and up-to
techniques. 
Another benefit of the   transformation of free outputs into  bound outputs is that   \api processes may be 
 turned to equivalent \emph{internal} processes \cite{Sangiorgi96a}, i.e.,  processes in
 which all outputs are bound.  
In $\pi$-calculus,  internal processes have a simpler theory, and the same holds for \api. 
The proof techniques  for the
contextually-defined barbed congruence of \api, such as  labelled bisimilarity and their up-to
enhancements, are  precisely derived 
by  exploiting the transformation into internal processes.

The law~\eqref{eq:Asynchronous-law} only ensures that \api{} processes become a plain category; however, we can go further.
We will use game semantics as our guiding principle to carve out certain subcategories of \api{} processes equipped with richer categorical structures that allow us to interpret various programming language features.
This supports the expressiveness of  \api{} from a denotational perspective and also suggests that \api{} could be a suitable language to represent strategies of (concurrent) game semantics.
This connection with game semantics rests on the I/O separation feature, which makes the notion of duality explicit.

While motivated by categorical matters,
I/O separation has also a significant impact on the operational theory.
For instance, 
in the definition of contextual 
  behavioural equivalences, free  input and output names  \iflong of processes \fi may be connected,
by `closing' contexts, in an  arbitrary manner; moreover substitutions act separately on such
input and output ends. These aspects bring up   new  technicalities concerning substitutions and 
name synchronisation; there are also 
 similarities with the treatment of distinctions among names in
\emph{open bisimilarity} for $\pi$-calculi \cite{Sangiori93}.  
Another example of the impact of I/O separation concerns the 
proof of  the invariance of bisimilarity
under the  restriction operator: 
in contrast with $\pi$-calculi,  in
\api restriction connects names, hence     synchronisations between names
may appear  that  did not exist before.

\subparagraph*{Plan of the paper.}
The contributions and the structure  of this paper can be summarised as follows:
\begin{enumerate}
  \item  In Sections~\ref{sec:syntax} and~\ref{s:rs}
we define \api{}:
\iflong
, a \( \pi \)-calculus with the asynchrony, 1-input,  and I/O separation
features, and with  a  grammar similar to that of asynchronous $\pi$-calculi.
We define a reduction semantics for it, and then barbed congruence on top of this.
\else
syntax,  reduction semantics, barbed congruence.
\fi

  \item
    \label{it:intro-laws}
   In Section~\ref{s:alaws} we present  some laws that are specific to
 \api{}
 (i.e., they do not hold for barbed congruence in Asynchronous $\pi$-calculus),
including the laws  for wires discussed in this introduction.
\iflong
,  and a law that allows  commutativity of inputs.
\fi
  \item In Section~\ref{s:ls}, we 
 define an LTS semantics for \api, which  we use to transfer a wealth of prior results from
\( \pi \)-calculi to \api. 
On top of the LTS, we then 
devise bisimulation-based proof techniques, specific to \api{},
resting on a transformation of \api processes into  internal processes.
A labelled bisimilarity  is proved to be  sound and complete with respect to barbed congruence (as
usual in $\pi$-calculi,  completeness holds on image-finite processes).
  \item In Section~\ref{s:expr} we discuss  the encoding of Asynchronous Localised \( \pi
    \)-calculus
    into \api{}.
\item In Section~\ref{sec:esyntax} we discuss how \api{} can be extended with linear types and standard data types.
  \item
In Section~\ref{sec:category},
inspired by game semantics, and using the algebraic laws of \api, 
we show that a relative Seely category~\cite{Clairambault24} and sequoidal category~\cite{Laird02} can be constructed out of processes.
    Then, in Section~\ref{sec:encoding-lambda}, we show that interpreting the simply-typed \( \lambda \)-calculus allows us to derive the translation from \( \lambda \) to \( \pi \) obtained via game semantics~\cite{HylandOng95}.
\end{enumerate}
In short, the goal of \api is to be faithful to the process calculus tradition while
connecting  to the literature on  denotational models.

\section{Syntax } 
\label{sec:syntax}
This section introduces the monadic \api{}, i.e.~the simplest but representative version, to highlight its key features.
Later, in Section~\ref{sec:esyntax}, we consider standard extensions of
the grammar of types and, correspondingly,  of values
(e.g., polyadicity, sums, linearity).

\subsubsection*{Grammar}
\label{ss:gr}

Let ${\mathcal N}$ be a countable infinite set of  \emph{names},
 ranged over by small letters
($a,b,c, \ldots,  x,y,z$).   These include 
channel names (names used for communication); a channel name 
may be an \emph{input name} (I-name), 
which may only be used in input, or
an
\emph{output name} (O-name), 
which may only be used in output.
The grammar of (monadic) \api processes %
 is
\[ \begin{array}{rcl}
P & \Coloneqq &  \nil \midd  P | Q\midd   \inp a x .{P}
\midd \out bv \midd %
\resB* a b  \ty P
\midd ! \inp a x .P \\
 v & \Coloneqq & a \midd \unitvalue
   \end{array}
  \]

The operators are those of ordinary Asynchronous $\pi$-calculus,  except for  restriction,
that  follows certain
presentations of session-typed calculi~\cite{GayVasconcelos25}.
A restriction $\resB* a b \ty P$  connects, and therefore makes
interaction possible between, 
 an input name $a$ whose type is $T$, and an output name $b$ whose type is the dual of
 $T$. 
We say that $a$ and $b$ are
\emph{connected}, and that $a$ is the \emph{companion} of $b$ (and conversely).
The type  subscript \( \ty \) will often be omitted in the sequel.
The core calculus is monadic and $ \unitvalue$ is the only `non-channel' value. 
 In examples and
applications (e.g., encodings) we sometimes  use polyadicity.
As usual, input and restriction are binders.
The set of names, free and bound names ($\nsymb$, $\fnsymb$ and $\bnsymb$, respectively)
of a process or other entities such as actions,  \iflong contexts, 
\fi
 or typings,  are defined as expected.
  In statements, 
we suppose that  processes and actions  
abide by a Barendregt convention:
bound names are pairwise distinct and distinct from the free names.

Following the conventions  for testing equivalences, 
we assume that there are special
 names, the \emph{success names},  that are used by testing contexts
 to report the success of an experiment. 
Success names may only be used in output and may not be restricted. 
\iflong We write $\successSET$ for the set of such names. 
\fi
The value that success names carry is irrelevant and we may thus assume it to
be the unit value. 
 A \emph{closed} term is a term whose free names may only
  be success names.  
We may also assume that processes have no free name of unit type; i.e., any free name is
either an I-name or an O-name.

\subparagraph*{Notations}
We  abbreviate an output $\out  b \unitvalue $ as $\outC b$.
We write $\res \tilc P $ for a sequence of restrictions, and then 
abbreviate
  $\resB a b  \res{\tilc} P $ as
 $(\resBtil a b , \tilc) P $.
 In a statement, 
if $a$ is a bound name, then  $\comp a $  is the companion of $a$
(that is, if $\resB a b $ or $\resB b a$ is the restriction that binds $a$, then $\comp a
= b$).
\iflong
By convention, $ \comp{(\comp a)} $ is the same as $a$.  
\fi
We write $p\usepRES p'$  to mean 
$p\sepRES p'$ or $p'\sepRES p$, depending on the polarity of the two names (i.e., whether
$p$ is the I-name and $p'$  the O-name, or conversely).
Then   $\resBP a $  stands for  $\res{a \usepRES  {\comp a}}$. Finally, the bound-output
notation $\bout a c  $ indicates the  output $\out a c$ preceded by the restriction
$\resBP c $; accordingly the bound-output prefix
  $\bout a c: P  $ stands for  $\resBP c ( \out a c | P )$.
\iflong
 (sometimes called  a \emph{permeable} or \emph{asynchronous}  bound output in the literature).
\fi

\begin{remark} 
In Section~\ref{s:ls}, we show that \api can be compiled into a subset of processes with only bound outputs.
We prefer, however, to maintain the free output construct in \api
because it is convenient in practice, often allowing more direct and efficient modelling. Milner's
original encoding of the $\lambda$-calculus \cite{Milner90} for instance, I/O separation aside, is
an encoding into \api, and  uses free outputs (including outputs of 
input capabilities, while respecting the 1-input property).
\end{remark}

\subsubsection*{Types}
We begin with a core syntax for types, postponing standard extensions
(e.g., product types, sum types, and linear types)
 to Section~\ref{sec:esyntax}: 
\begin{align*}
\ty \Coloneqq %
\och \ty \mid \ich \ty \mid \unittype 
\end{align*}

The type \( \och \ty \) is the type of names that may only be used 
in
output,  to send values  of type \( \ty \); whereas \( \ich
\ty \) is the type of names that may only be used in input to
 receive values of type \( \ty \); and $\unittype $ is the
unit type. 
The \emph{dual} of \( \och { \ty} \), denoted  \( \dual {\och { \ty}}\) is 
\iflong
defined as
\fi
 \( \ich { \ty}\) and conversely.
Typing judgments are of the form \( \tyenv \vdash P \) where \( \tyenv \) is a \emph{type environment}, a finite set of bindings of the form \( a : \ty \).
The typing rules are 
\iflong
given 
\fi
in Figure~\ref{fig:typing}, where  
$ \iname \tyenv $ is the set of I-names appearing  in $\tyenv$ (i.e., names $a$ such that $a:  \ich { \ty} \in \tyenv$), and $ \tyenv_1 , \tyenv_2$ is the union of $
\tyenv_1$ and $ \tyenv_2$, and is only defined if they agree on the type assignments for
shared names.   Moreover, 
 \( \och* \tyenv \)
indicates a type environment in which each name is an O-name
(i.e., 
 each binding %
is of the form \( a : \och {\seqq  \ty} \)).

\begin{figure}[t]
  \begin{mathpar}
  \textsc{Nil}\inferrule{ }{\tyenv \vdash \nil}\qquad
  \inferrule[Par]{\tyenv_1 \vdash P_1 \\ \tyenv_2 \vdash P_2 \\ \iname{\tyenv_1} \cap \iname{\tyenv_2} = \emptyset}{\tyenv_1, \tyenv_2 \vdash P_1 \mid P_2}
  \qquad
  \inferrule[Res]{\tyenv, a : \ich \ty, b : \och \ty \vdash P}{\tyenv \vdash \resB* a b {\ich \ty} P}\\
  \textsc{Out}\inferrule{ }{\tyenv, a : \och {\seqq \ty}, \seqq b : \seqq \ty \vdash \out a {\seqq b}}
  \qquad
  \textsc{In}\inferrule{\tyenv, a : \ich {\seqq \ty}, \seqq b : \seqq \ty \vdash P}{\tyenv, a : \ich {\seqq \ty} \vdash \inp a {\seqq b}. P} \qquad
  \textsc{RIn}\inferrule{\och* \tyenv, \seqq b : \seqq \ty \vdash P}{\och* \tyenv, a : \ich {\seqq \ty}, \tyenv' \vdash  !\inp a {\seqq b}. P}
\end{mathpar}
\caption{Typing rules}
\label{fig:typing}
\end{figure}

The type system enforces the 1-input property. 
This comes up in rule \trans{Par}, where the I-names are split between the two components
of the composition, and in the rule \trans{RIn} for replication  
 \( !\inp a b .P \), which ensures that 
no I-name may appear in the body $P$ of the replication (not even the head name $a$), 
possibly with the exception of
the formal parameter $b$. (The rule \trans{RIn} is thus analogous to the promotion rule of linear logic.)
Rule \trans{In} shows that a process that owns an I-name may perform nested inputs at
that name.
The typing system is \emph{affine}, i.e.~the processes need not use all
the names declared in the type environment, including the declared
I-names.

\begin{example}
  \label{ex:server}
In this example (for readability, using integers and binary communications) a server
accepts, at $a$, requests consisting of an integer and a return name; it answers with the
successor integer and then delegates, using $b$, to a different permanent (replicated)  process the task of answering
further requests:
\[ 
\resBP b (  \inp a {n,q}. (\out q {n+1} |  \out b a  ) | \inp {\comp b } x . ! \inp  x
{m,r}. P )   
\]
\end{example} 
\begin{example}
  \label{ex:internal-choice}
Because of the  1-input property,  in \api race conditions, and hence non-determinism, 
may only  be created  on access to \emph{inputs}.
In asynchronous $\pi$-calculi without the 1-input constraint, 
an internal choice operator \( P \oplus Q \), which may non-deterministically evolve to
either $P$ or $Q$, is usually encoded by a race condition on \emph{outputs}
(e.g., \( \resBP a  (\outC {\comp a}  |  a . P \mid a. Q) \), for \( \breve a \)
fresh).
This may be mimicked in \api, via a race condition on inputs:
 \( \resBP a \resBP b \resBP c (\inp a x . \outC x \mid \out {\comp a} {\comp b} \! \mid \out
 {\comp a} {\comp c} \! \mid \! b.P \! \mid \!c . Q ) \), where  $\breve a,\breve b, \breve c $
 are fresh.
\end{example}

\section{Reduction Semantics}
\label{s:rs}
Reduction between %
processes, 
written $P \longrightarrow P' $, is defined %
as expected. We just recall that, with the separation between I-names and O-names, the
basic reduction rules must be justified by a restriction connecting the I-name and
O-name involved \cite{Laird05,Vasconcelos12}, e.g.:
\[
\resB a b (\inp a x . P | \out b v | Q ) \longrightarrow   
\resB a b (  P \sub v x  | Q )    
\] 

See Appendix~\ref{a:b} for the details,
including the definition of the auxiliary relation
of structural congruence ($\equiv$), and the Subject Reduction property relating reductions
and types.  
The relation $ \Longrightarrow $ is the reflexive and transitive closure of $ \longrightarrow
$.  
As usual, 
a \emph{barb} $P \dwaB \success$ , where $b$ is a success name,  
holds if 
$P$ has an unguarded occurrence of an output at
$b$
(i.e., $P \equiv 
\outC b |
P') $,  for some %
$P'$
(we recall that a success name may not be restricted);
then 
$P \DwaB \success \defi P \Longrightarrow \dwaB \success$.
We only report the (standard) definition of 
barbed bisimilarity (types  play no role in its definition,  and are omitted).

\iflong
\begin{definition}[barbed bisimulation]
A symmetric relation $\R$ on (closed?)  processes
 is a  {\it barbed bisimulation} if 
whenever $P \RR Q$ the following holds:
 
\begin{enumerate}

\item If $P \dwaB \success $ then $Q  \DwaB \success$.

\item If $P \longrightarrow  P'$ then $Q \Longrightarrow  Q'$ and
  $P'\RR Q'$.

\end{enumerate}
We write  $\wbb$  for  the largest barbed bisimulation.
\end{definition}
We recall that a process is \emph{closed} it all its free names are in
$\Omega$ (i.e, success names). 

\begin{itemize}

\item $ \Delta, \Delta'$ is an \emph{extension} of $\Delta$ if $\Delta'$ only contains
  O-names.   
A context $\qct$ is a  \emph{$\Gamma/\Delta$ context} if $\ok \qct \Gamma $ holds,
assuming that the hole is typable under an extension of $ \Delta $.  
\item
A testing context $\qct$ is \emph{closing for $\Delta$} if  
$\qct$ is a $\Gamma/\Delta$ context, 
for some $\Gamma$ and
 $\Gamma$ only has success names. 
\end{itemize}

\begin{definition}[$ \Delta $-barbed congruence]
\label{d:}
If 
 $\ok \Delta {P,Q}$, then 
we say that \emph{$P$ and $Q$
 are $\Delta$-barbed congruent}, 
 written
$ P \wbcTT \Delta{} Q$, if 
 for any context $\qct$ that is closing for $\Delta$, 
  it holds that
$\ct P \wbb \ct Q$. 
\end{definition} 

\else
\begin{definition}[Barbed bisimilarity]
\emph{Barbed bisimilarity} is the largest symmetric relation $\wbb$  on closed  processes
such that, whenever $P \wbb Q$:
 \begin{enumerate}
\item If $P \dwaB \success $ then $Q  \DwaB \success$;
\item if $P \longrightarrow  P'$ then $Q \Longrightarrow  Q'$ and
  $P'\wbb Q'$.
\end{enumerate}
\end{definition}
\fi

We may now define $\Delta$-barbed congruence.
 $ \Delta, \Delta'$ is an \emph{extension} of $\Delta$ if $\Delta'$ only contains
  O-names.   
A context $\qct$ is a  \emph{$\Gamma/\Delta$ context} if $\ok \tyenv \qct$ holds,
assuming that the hole is typable under an extension of $ \Delta $.  
A testing context $\qct$ is \emph{closing for $\Delta$} if  
$\qct$ is a $\Gamma/\Delta$ context, 
for some $\Gamma$ and
 $\Gamma$ only has success names.

\begin{definition}[$\Delta$-barbed congruence]
\label{d:dbc}
Assume $\ok \Delta {P,Q}$. 
Then \emph{$P$ and $Q$
 are $\Delta$-barbed congruent}, 
 written
$ P \wbcTT \Delta{} Q$, if 
 for any context $\qct$ that is closing for $\Delta$, 
  it holds that
$\ct P \wbb \ct Q$.
\end{definition} 

Intuitively
 $\wbcTT \Delta{} $
relates processes that are typable at $\Delta$, and that  cannot be distinguished by contexts that 
cannot use the input names in $ \Delta $. 
This means that the context can decide to connect the free I-names and
O-names (with dual types) in $P$ and
$Q$ in an arbitrary manner. 
Sometimes it is useful to fix certain connections between names, and  allow only
contexts that respect such connections. We therefore introduce a more
general notion of barbed congruence, parametrised on a  \emph{connection set} $ \delta $,  
i.e., a 
  finite set of connections between names: 
\(
  \delta \Coloneqq a\sepRES b, \delta  \midd \emptyset
\).

A connection set $\delta$ will be used on processes whose typing includes the names in $\delta$.
A  context $\qct$, closing for $\Delta$,  \emph{respects the connection set $\delta$}
 if $\delta \subseteq \delta_\qct$, where
 $\delta_\qct$ is the set of restrictions in the context $\qct$ whose scope embraces the
hole of the context.

\begin{definition}
\label{d:deltabc}
Assume $ \n \delta  \subseteq \n \Delta $ 
and  $\ok \Delta {P,Q}$. 
Then
  \emph{$P$ and $Q$
 are $\Delta$-barbed congruent at $\delta$}, 
 written
$ P \wbcTT \Delta \delta  Q$, if 
 for any context $\qct$ that is closing for $\Delta$ and that respects
$\delta$,  
  it holds that
$\ct P \wbb \ct Q$. 
 \end{definition}

\section{Algebraic Laws}
\label{s:alaws}
In this section,
we report  some laws that are valid in \api, and that bring up the
peculiarities of the calculus, notably its 1-input property. 
The most important law is that in Theorem~\ref{t:wire-subst-law}. 
Most of the laws are proved with the technique based on labelled bisimilarity that will be  introduced in
Section~\ref{s:ls}.

A \emph{wire},  between an  I-name $a$ and  an O-name  $b$  is 
a process $ !  \inp a x . \out bx $
(once linear types are introduced in Section~\ref{sec:esyntax}, the replication will be dropped when
$a$ and $b$ are linear). 
We abbreviate such a wire process 
as
$ \links a b$.
Moreover 
  $\ulinks  a b $ stands for either
$\links  a b $ or 
$\links  ba $, depending  on the polarity of the two names (which one is the I-name and
the O-name).

The  following \emph{law}  allows us to transform a free output into a
bound output, using  a wire.
We recall that \( \bout a b : P \) is \( \resBP b (\out a b \mid P ) \).

\begin{lemma}
\label{l:wire}
$ \bout c b : \ulinks {\comp b} a        \wbcTT \Delta{}  \out c a $. 
\end{lemma}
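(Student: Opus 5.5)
We prove the law by splitting on the polarity of the emitted name $a$; the two cases behave differently. Since the calculus is monadic, $a$ is either an O-name of type $\och T$ or an I-name of type $\ich T$. Because $\wbcTT \Delta{}$ is defined by closing contexts, the natural route is to build, for each closing context $C$ for $\Delta$, a barbed bisimulation relating $C[\bout c b : \ulinks{\comp b} a]$ with $C[\out c a]$. To keep this manageable, we will close the relation under arbitrary surrounding processes: we use a relation of the form
\[ \R = \{ (\res{\tilde e}(R \mid \bout c b : \ulinks{\comp b} a),\ \res{\tilde e}(R \mid \out c a)) \} \]
up to structural congruence. More precisely, we use its generalisation in which $b$ may already have been received and spread into $R$: pairs $\res{\tilde e}(\resBP b (R\sub{b}{a} \mid \ulinks{\comp b} a))$ versus $\res{\tilde e}(R)$, with $b$ fresh. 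In the second form, the first component keeps the wire and the second is the result of substituting $a$ directly.

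\textbf{Case $a$ an O-name.} Here $b$ is also an O-name (carried by $c$) and $\comp b$ is an I-name, so the wire is $\links{\comp b} a = !\inp{\comp b} x.\out a x$. The wire owns the only input end of $b$, so every output at $b$ anywhere in $R\sub b a$ can only be consumed by the wire, and the wire immediately re-emits the message at $a$. We check the bisimulation clauses in order.
\begin{enumerate}
\item Barbs coincide, since success names are never $a$ or $b$.
\item A reduction of the right process on $a$ (an output at $a$ meeting an input at $\comp a$ in $R$) is matched on the left by two steps: first a communication with the wire, then the forwarded communication.
\item Conversely, a left communication with the wire is matched by no move on the right; the resulting pair is kept in $\R$ by allowing the left side to also contain pending outputs $\out a v$ in place of the corresponding $\out b v$.
\end{enumerate}
This is the localised-$\pi$ argument and is essentially a structural induction. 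Asynchrony is essential here: an output prefix never guards anything, so delaying it by one hop is unobservable.

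\textbf{Case $a$ an I-name.} This is the main obstacle. Now $b$ is an I-name sent along $c$, and the wire is $\links a{\comp b} = !\inp a x.\out{\comp b}x$. The wire consumes messages at $a$ and forwards them to whoever holds $b$. By the typing rules (Par, RIn), the 1-input property guarantees that once $\out c a$ is emitted, no other component owns $a$. In the left process, therefore, the wire is the unique receiver at $a$, and the receiver of the message on $c$ is the unique holder of $b$. Every message reaching $a$ on the right thus corresponds, on the left, to a message that goes through the wire to $\comp b$ and then to the same receiving continuation.

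The difficulty is that on the left the wire may eagerly pull messages from $a$ before the receiver is ready. The left process can then hold pending outputs at $\comp b$, whose order and number differ from the queue at $a$ on the right. The relation must therefore relate $\resBP b(R\sub b a \mid \ulinks{\comp b}a \mid \prod_i \out{\comp b}{v_i})$ with $R \mid \prod_i \out{\comp a}{v_i}$, with the outputs relocated to the companion of $a$ in the context. This relocation is sound because messages in transit are unordered in an asynchronous calculus. With this, each reduction is again matched by zero or two steps.

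To keep the candidate relation finite in shape, we will use an up-to technique: up to structural congruence, up to context, and up to expansion, absorbing the administrative wire steps. If the bisimulation clauses do not close directly, we will instead argue by induction on the structure of the receiving continuation combined with this up-to-expansion relation, as the introduction anticipates. Finally, $\Delta$ and the context are arbitrary, so the law holds for $\wbcTT \Delta{}$.
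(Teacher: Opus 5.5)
Your I-name argument is sound in substance, but it follows a different route from the paper. The paper imports the O-name case from the localised $\pi$-calculus, and for I-names it never lets the emitted name spread. It first normalises the closing context so that it never emits an O-name in a free output (justified by Lemma~\ref{lem:wire-subst-law_O}). It then shows that a very small relation, together with $\wbb$, is a barbed bisimulation up to expansion. In your naming, that relation pairs $C[\out c a \mid \prod_i \out{\comp a}{x_i}]$ with $C[\resBP b(\out c b \mid \links a {\comp b} \mid \prod_i \out{\comp b}{x_i})]$, where the $x_i$ are messages the context has already sent towards the emitted name.

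As soon as the context receives the name, the replication theorem distributes the leftover wire, which now acts as an O-name wire from $\comp a$ to $\comp b$. The expansion form of the O-name law (Lemma~\ref{lem:wire-sust-contr-O}) removes it; this applies because after normalisation $\comp a$ occurs only in subject position. What remains is the free-output derivative up to $\alpha$-conversion.

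You instead keep the wire and follow the received name, relating $\resBP b(R\sub b a \mid \links a {\comp b} \mid \prod_i \out{\comp b}{v_i})$ with $R \mid \prod_i \out{\comp a}{v_i}$. This does close directly:
a wire step is matched by no move;
an input at $b$ consuming a pending $\out{\comp b}{v_i}$ is matched by one step;
an interaction at $a$ using an output of $R$ is matched by two steps;
everything else is matched one for one.
This works because the 1-input property makes the wire the only receiver at $a$. Simple typing also ensures that values travelling on $\comp a$ and $\comp b$ contain neither $a$ nor $b$, and that $\comp b$ is never transmitted.

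The paper's route buys a tiny relation, at the cost of depending on the O-name law and the normalisation step. Yours is self-contained. Moreover, since $R$ need not contain $\out c a$, the same relation contains the pairs $(C[\resBP b(\links a {\comp b} \mid P)], C[P\sub a b])$. So it essentially yields law~\eqref{e:islaw} directly, which the paper reaches only through internal bisimilarity and induction on types.

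Three things should be repaired.

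First, drop ``up to context''. Barbed bisimilarity is not a congruence, so closing a candidate relation under contexts is not a sound technique for barbed bisimulation. If the hole of $C$ lies under an input prefix or a replication, let the relation explicitly contain finitely many wire instances $\links{a_j}{\comp b_j}$, together with still-guarded copies of the two sides of the lemma. The paper's proof also tacitly takes the hole unguarded.

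Second, in the O-name case your generalised form replaces \emph{every} occurrence of $a$ by $b$. But the O-name $a$ may also be used by the context itself, so the initial pair need not be of that form. The invariant should be that the left-hand process is $\resBP b(\links{\comp b}a \mid L)$ for any $L$ with $\comp b$ not free and $L\sub a b$ equal to the right-hand process.

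Third, the fallback ``induction on the receiving continuation'' is unnecessary: the clauses close as they stand.
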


Thus, if $a$ is an O-name, the law says that
$ 
\bout  c b : \links {\comp b} a \wbcTT \Delta{}
\out ca $, whereas  
 if $a$ is an I-name, the law says that
$ 
\bout  c b : \links a {\comp b}  \wbcTT \Delta{}
\out ca $.
The  law for O-names has been widely used in  \emph{localised}  $\pi$-calculi, 
 where only the output capability of names may be exported
 \cite{Boreale98,MerroSangiorgi04,Yoshida02}. In contrast,
the wire law for I-names is specific to \api: the constraint on the
input end of a channel being owned, at any time,  by a single process,
is essential.

Theorem~\ref{t:wire-subst-law}
formalises
a behavioural effect of wires, relating them to substitutions. 

\begin{theorem}[Wire substitution law]
\label{t:wire-subst-law}
$ $  \newline
$ 
\resBP b  ( \ulinks {\comp b }a| P ) 
 \wbcTT \Delta{}
 P \sub a{b}$,   with $\comp b$  fresh for $P$. 
\end{theorem}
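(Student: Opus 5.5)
The plan is to split on the polarity of $a$. The introduction says the two cases are genuinely different: the output case is a structural induction, the input case needs induction plus up-to techniques. In both cases I work up to $\wbcTT \Delta{}$ and use that it is a congruence, which holds by construction for contexts respecting typing. I also rely on the laws the paper proves with the labelled bisimilarity of Section~\ref{s:ls}: $\resBP b$ distributes over parallel composition when a component does not mention $b$ or $\comp b$, and garbage collection $\resBP b \, R \wbcTT \Delta{} R$ for $b,\comp b$ fresh in $R$. Lemma~\ref{l:wire} is the base case throughout.

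\textbf{Case $a$ is an O-name.} Then $b$ is an O-name of $P$, the wire is $\links{\comp b}a = !\inp{\comp b}x.\out a x$, and it is replicated with no I-names in its body. So it can be duplicated: $\resBP b(\links{\comp b}a \mid P_1 \mid P_2) \wbcTT \Delta{} \resBP b(\links{\comp b}a\mid P_1)\mid \resBP b(\links{\comp b}a\mid P_2)$. I would prove this distribution law first, directly by a bisimulation, as in localised calculi. Then I do a structural induction on $P$.
\begin{itemize}
\item For $\nil$ and restrictions, apply garbage collection or commute the restriction outward.
\item For input and replicated input, the wire can be pushed under the prefix by distribution, then the induction hypothesis applies.
\item For $\out b v$, the process $\resBP b(\links{\comp b}a\mid \out b v)$ reduces deterministically to $\out a v$. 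This is an expansion, since the single reduction is confluent and the wire is then garbage.
\item For $\out c b$ with $c\neq b$, this is exactly Lemma~\ref{l:wire}, read right to left after $\alpha$-renaming.
\end{itemize}

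\textbf{Case $a$ is an I-name.} Then $b$ is an I-name owned by exactly one subterm of $P$, by the 1-input typing, and the wire is $\links a{\comp b}$. It cannot be duplicated, since it owns $a$. Here I would build a relation
\[\R = \{(\resBP b(\links a{\comp b}\mid P),\ P\sub a b)\}\]
closed under the relevant $\Delta$, and show it is a bisimulation up to context, up to expansion and up to restriction for the labelled bisimilarity of Section~\ref{s:ls}. The transitions to match are as follows.
\begin{itemize}
\item An output or internal step of $P$ not involving $b$ is matched directly.
\item An input at $a$ on the left is received by the wire and forwarded to $\comp b$. The resulting message $\out{\comp b}v$ sitting next to the owner of $b$ corresponds, up to expansion, to the right side having received at $a$ as an asynchronous pending message.
\item An extrusion of $b$ in $P$, i.e.\ $\out c b$, is handled by Lemma~\ref{l:wire}: on the right it is $\out c a$, which equals $\bout c{b'}:\links a{\comp{b'}}$. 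This returns us to a pair of the relation after renaming.
\item When $b$ is passed along internally, the input ownership moves. The 1-input property guarantees that all messages to $a$ still flow through the single wire, which is exactly what keeps pairs in $\R$.
\end{itemize}
To keep $P$ manageable, I would first convert it to an internal process, using the transformation of Section~\ref{s:ls} that rests on Lemma~\ref{l:wire}, so that all outputs are bound.

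The main obstacle I expect is the input case. The wire introduces an extra forwarding step, and asynchronous messages may accumulate in the wire's buffer. Matching them against messages already delivered on the right requires an up-to-expansion argument together with the observation, from the 1-input property, that no other receiver at $a$ or $b$ can interfere. I would try first to prove the auxiliary law $\resBP b(\links a{\comp b}\mid \out{\comp b}v\mid R)\wbcTT\Delta{}\resBP b(\links a{\comp b}\mid R)\mid\out a v$, and use it to normalise pending messages before closing the bisimulation.
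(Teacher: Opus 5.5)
Your O-name case is essentially the paper's (Lemma~\ref{lem:wire-subst-law_O}), but the I-name case has a genuine gap.

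\textbf{The I-name case.} The labelled bisimilarity of Section~\ref{s:ls} (Definition~\ref{d:hb-bisimulation}) and its soundness (Theorem~\ref{t:sound}) apply only to \apiE processes. So making $P$ internal is not enough: the wire must be translated too, and $\transf{\links a{\comp b}} = !\inp a x.\bout{\comp b}d:\transf{\ulinks{\comp d}x}$.

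Take $P \equiv \inp b y.P_1 \mid P_2$, and let the right-hand side perform $a(c)$. The same happens if it consumes a message sent to the companion of $a$. The left-hand side must route $c$ through the wire, so $P$'s input at $b$ receives the fresh name $d$ instead of $c$. Under the remaining context you must then compare $\resBP d(\transf{\ulinks{\comp d}c}\mid P_1\sub d y)$ with $P_1\sub c y$. This pair is not in your single-wire relation $\R$, and none of the up-to techniques you list produce it from $\R$. It is an instance of the very law being proved, at the type carried by $a$, with either polarity depending on $c$.

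This is why the paper argues by induction on the order of the type of $a,\comp b$ (Theorem~\ref{thm:wire-subst-law_I}). The inductive hypothesis is the internal-bisimilarity form of the whole theorem at lower order, with the O-name half being Lemma~\ref{lem:wire-subst-law_O}. The paper then uses internal bisimulation up to $\wb$ to collapse the nested wire. Only when $a$ carries the unit value does no nested wire arise.

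Keeping the free-output wire to avoid this does not help. The left-hand side is then outside \apiE, and the \Api-transported bisimilarities cannot be used, since the I-name law fails in \Api.

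\textbf{The connection set.} When the wire performs $a(c)$ and $P$ is not ready at $b$, the right-hand side answers by clause 3(b) of Definition~\ref{d:hb-bisimulation}, adding a pending output at a companion of $a$. So the relation must be indexed by $a\sepRES\comp a$, and must let $P$ contain outputs at $\comp b$ (renamed to $\comp a$ on the right). The paper proves the stronger $\resBP b(\transf{\links a{\comp b}}\mid P)\wbTT\Delta{a\sepRES\comp a}P\sub{a,\comp a}{b,\comp b}$ and recovers the statement by Lemma~\ref{l:aapIB}.

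Your auxiliary law is ill-typed as written: $\out a v$ outputs on the I-name $a$. Its meaningful form requires the connection $a\sepRES\comp a$, i.e.\ exactly this strengthening. Your cases on extrusion of $b$ and on $b$ being passed along do not arise once $P$ is internal, since only fresh names are ever emitted.

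\textbf{A caution on the O-name case.} The paper applies Lemma~\ref{l:wire} \emph{before} the induction, so that $b$ is never exported. That is the side condition needed to transport the replication theorems of Corollary~\ref{c:laws_A} from \Api. Handling $\out c b$ only at the leaves leaves that condition open at the parallel nodes where you apply distribution. Proving distribution via internal processes instead would be circular: Lemma~\ref{l:wire_trans} rests on Lemma~\ref{l:wire} for I-names, which the paper derives from the O-name case of this very theorem.
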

Again, it is useful to spell out the meaning of the law in the theorem
according to whether  the
substituted name $a $ is an O-name  or an I-name.
In the case of an  O-name, the law becomes: 
\begin{equation}
\label{e:oslaw}
\resBP b  ( \links {\comp b }a| P ) 
 \wbcTT \Delta{}
 P \sub a{b}
\end{equation} 
where `$\comp b$  fresh for $P$' can then be left  implicit, as otherwise 
$ \links {\comp b }a| P$ would not type.
Whereas for  an I-name,  the law becomes
\begin{equation}
\label{e:islaw}
\resBP b  ( \links a {\comp b }| P ) 
 \wbcTT \Delta{}
 P \sub a{b}   \mbox{ with $\comp b$  fresh for $P$. }
\end{equation} 

In this case it is implicit, by similar typing issues, that 
$a$ should be fresh for $P$.
Technically, the proofs for the two cases are quite different and, moreover, their dependencies on Lemma~\ref{l:wire} are subtle.
Lemma~\ref{l:wire} for I-names is used in the proof of Theorem~\ref{t:wire-subst-law} for I-names, but we use Theorem~\ref{t:wire-subst-law} for O-names to prove Lemma~\ref{l:wire} for I-names.
See Section~\ref{s:ls} and
Appendix~\ref{a:laws}
for details.

Another way of expressing law \reff{e:islaw} above, and making explicit all typing assumptions, is the
following one. 
\begin{corollary}
\label{c:ur}
Suppose $a \in \iname \Delta $, and $b, \comp b $ are fresh names. Then  
$ 
\resBP b  ( \links a {\comp b }| P  \sub b a  ) 
 \wbcTT \Delta{}
 P$.
\end{corollary}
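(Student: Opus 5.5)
The corollary is law \eqref{e:islaw} of Theorem~\ref{t:wire-subst-law}, instantiated at the process $Q \defi P\sub b a$ instead of $P$. The plan is to check that all side conditions of the theorem hold for $Q$ under the stated typing assumptions, and then to compute the right-hand side of the theorem.

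\emph{Choice of names and typing.} Since $b,\comp b$ are fresh for $P$ and $\Delta$, the renaming $a\mapsto b$ is injective on $\fn P$. Hence $Q$ is well-typed under $\Delta' \defi (\Delta\setminus\{a:T\}), b:T$, where $T=\ich{T'}$ is the type of $a$ in $\Delta$. This is a routine renaming lemma for the type system of Figure~\ref{fig:typing}; injectivity keeps the disjointness of I-names in \trans{Par} and the side condition of \trans{RIn} intact. Moreover $a\notin\fn Q$: by the Barendregt convention $a$ is not bound in $P$, and every free occurrence of $a$ has been replaced by $b$. So $a$ is fresh for $Q$, as law \eqref{e:islaw} implicitly requires, and $\comp b$ is fresh for $Q$ by assumption.

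\emph{Type of the left-hand side.} The process $\links a{\comp b}=!\inp a x.\out{\comp b}x$ is typed with $a:\ich{T'}$ and $\comp b:\och{T'}$. Its body contains no I-name, so \trans{RIn} applies. By \trans{Par}, the composition $\links a{\comp b}\mid Q$ is typed under $\Delta, b:\ich{T'}, \comp b:\och{T'}$; the I-name sets are disjoint because $a\notin\fn Q$. Rule \trans{Res} then gives $\Delta\vdash\resBP b(\links a{\comp b}\mid Q)$.

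\emph{Applying the theorem.} Theorem~\ref{t:wire-subst-law} now gives
$\resBP b(\links a{\comp b}\mid Q)\wbcTT\Delta{} Q\sub a b$. Since $b$ is fresh for $P$, we have $Q\sub a b=P\sub b a\sub a b = P$ syntactically (up to $\alpha$-conversion), which is the claim.

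The only delicate point is the bookkeeping for the renaming $P\sub b a$ when $a$ is an I-name. One might worry that substituting an I-name could break the 1-input property, but the substitution is a bijective renaming onto a fresh name, so it preserves typability. Once that is checked, the corollary follows directly from the theorem.
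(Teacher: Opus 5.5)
Your proposal is correct and matches the paper, which presents this corollary simply as a restatement of law \eqref{e:islaw} of Theorem~\ref{t:wire-subst-law}, applied to $P\sub b a$ and using $P\sub b a\sub a b = P$ for fresh $b$; your typing checks make explicit the side conditions the paper leaves implicit. One small correction: the body $\out{\comp b}x$ of the wire may contain the formal parameter $x$ at an input type, so it is not free of I-names, but \trans{RIn} explicitly permits the formal parameter, and the step still goes through.
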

 Corollary~\ref{c:ur}  shows that any free name that a process owns in input,
 like $a$ in the corollary, may be made~-- in  process calculus  terminology~-- 
 \emph{uniformly receptive} \cite{Sangiorgi97}:
that is, the name is immediately available (i.e., placed at the outermost level) and is always
available (i.e., replicated).  For this, however, a new  name has to be introduced~-- name
$b$ in the corollary~--- that  is \emph{private}  to the process  and need not be
uniformly receptive.
By repeatedly applying the law, all free input names of a process may be made uniformly
receptive, and moreover, such a property remains invariant under reductions (and labelled transitions, introduced in Section~\ref{s:ls}).
\iflong
\fi

Lemma~\ref{l:commutativity} shows that inputs can be commuted:
\begin{lemma}
\label{l:commutativity}
$\inp a x . \inp by . P
 \wbcTT \Delta{}
\inp by .
\inp a x .  P
$
\end{lemma}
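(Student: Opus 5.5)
We plan to prove Lemma~\ref{l:commutativity} with the labelled bisimilarity of Section~\ref{s:ls}, which is sound for $\wbcTT \Delta{}$, so it suffices to exhibit a bisimulation (up to the enhancements developed there). By typing, $a$ and $b$ are I-names in $\Delta$, owned by the process. By the 1-input property, no context closing for $\Delta$ can use $a$ or $b$ in input. The context can only send messages to $a$ and $b$, and these messages remain pending until consumed.

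\textbf{Step 1 (reduction to internal processes).} We first use the transformation of Section~\ref{s:ls} into internal processes, which relies on Lemma~\ref{l:wire}. This lets us work with the simpler labelled bisimilarity in which outputs are bound and inputs are received as fresh names. Since input prefixes are preserved by the transformation, both sides keep the shape $\inp a x.\inp b y.P$ versus $\inp b y.\inp a x.P$.

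\textbf{Step 2 (the asynchronous input clause).} In an asynchronous LTS, the bisimulation clause for an input $a(c)$ can be matched either by the corresponding input or by composing with the message, $Q \mid \out a c$. Accordingly, we take as candidate relation the set of pairs
\[
(\inp a x.\inp b y.P \mid M,\ \inp b y.\inp a x.P \mid M)
\]
where $M$ is any parallel composition of outputs on $a$ and $b$. We add the pairs $(\inp b y.P\sub c x \mid M,\ \inp b y.\inp a x.P \mid M \mid \out a c)$, and symmetrically for $b$. Finally we include $(P\sub{c,d}{x,y}\mid M,\ P\sub{c,d}{x,y}\mid M)$, closed under identity. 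When the left side performs its input at $a$, the right side answers by simply keeping the message $\out a c$ in parallel. A later input at $b$ on the left is matched on the right by performing $b$ and then consuming the buffered $\out a c$ with a $\tau$ step. Because $a$ and $b$ cannot be restricted by the context and no other process owns their input ends, nothing else can steal the buffered message. This is exactly where the 1-input property is used, and it is why the law fails in the ordinary Asynchronous $\pi$-calculus.

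\textbf{Main obstacle.} The delicate point is the treatment of messages that the context has already sent, and of the $\tau$-moves that consume them. The two processes may consume messages on $a$ and $b$ in different orders. We therefore expect to need up-to-context and up-to-expansion: the internal synchronisations are deterministic, since only one receiver exists, so they can be absorbed as expansions. The remaining care concerns the case $a=b$, which is trivial, and the case where both substitutions read the same message variables. We would first try a plain weak bisimulation with the buffered messages described above, and fall back on the up-to techniques of Section~\ref{s:ls} if closure under the buffers becomes cumbersome.
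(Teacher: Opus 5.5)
\textbf{There is a genuine gap in your Step~2, and it cannot be repaired: the law is false in the generality stated.}

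Your route (pass to internal processes, use internal bisimilarity with the asynchronous input clause, answer an early input at $a$ by buffering the message) is the natural one, and it is the technique the paper indicates for its algebraic laws. The problem is that your candidate relation is not closed under transitions. The failing case is exactly the one you dismiss with ``the internal synchronisations are deterministic, since only one receiver exists''.

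Clause~3(b)(ii) of Definition~\ref{d:hb-bisimulation} produces the pair $(\inp b y.P\sub c x,\ \inp b y.\inp a x.P \mid \transf{\out{a'}{c}})$ at connection set $a\sepRES a'$. Note that the buffered message sits on a fresh companion $a'$, not on $a$. Now let the \emph{right} side move first with $b(d)$. Whatever the left answers (its own $b(d)$, or buffering $d$), you reach a pair where the left has already fixed $x:=c$, while the right still has an unguarded $\inp a x.P\sub d y$.

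The right then performs a fresh environment input $a(f)$; in your formulation with $M$, it could instead consume another pending message on $a'$. For a $P$ that does not input at $a$ again, the left can only answer by clause~3(b)(i). This yields $P\sub{c,d}{x,y}\mid\transf{\out{a'}{f}}$ against $P\sub{f,d}{x,y}\mid\transf{\out{a'}{c}}$. Both buffered messages are now dead: the unique receiver at $a$ is gone, and outputs at $a'\in\n{\delta}$ are unobservable. So the pair is bisimilar only if $P\sub{c,d}{x,y}$ and $P\sub{f,d}{x,y}$ are, which fails for, e.g., $P=\outC x$. The 1-input property prevents a \emph{second receiver} from stealing the buffered message. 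It does not prevent the \emph{single receiver} from picking a different pending message, and that choice is a genuine branching point.

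Up-to techniques cannot fix this, because of the following counterexample. Take $a:\ich{\och{\unittype}}$, $b:\ich{\unittype}$, $P=\outC x$, and the closing context $\resB{a}{a'}\resB{b}{b'}\resB{c_i}{c}\resB{f_i}{f}(\contexthole\mid\out{a'}{c}\mid\out{a'}{f}\mid(\outC{b'}\oplus\nil)\mid\inp{c_i}{w}.\outC{s_1}\mid\inp{f_i}{w}.\outC{s_2})$, with $\oplus$ as in Example~\ref{ex:internal-choice}.
\begin{itemize}
\item With $\inp a x.\inp b y.\outC x$ in the hole, the system can consume $c$ before the choice is resolved. The resulting state weakly exhibits $s_1$, never $s_2$, and can still reduce to a barbless state.
\item With $\inp b y.\inp a x.\outC x$ in the hole, any reachable state that excludes $s_2$ must already have received at $b$. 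So the choice was resolved towards $\outC{b'}$, and $s_1$ is inevitable; such a state can never become barbless.
\end{itemize}
Hence the barbed bisimulation game fails at the first reduction.

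The case $a=b$, which you call trivial, is not trivial either. It compares ``$x$ is the first message received'' with ``$x$ is the second'', and fails in the same way.

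The law needs a hypothesis that rules out a choice among distinguishable pending messages on $a$ or $b$. Examples are $a,b$ linear (the setting of commuting conversions and courtesy) or unit-carrying. Under such a hypothesis your buffering relation is the right candidate, with one further correction. In internal processes the buffered message is $\transf{\out{a'}{c}}$, so consuming it leaves $\resBP{e}(\transf{\ulinks{\comp e}{c}}\mid P\sub{e,d}{x,y})$ rather than $P\sub{c,d}{x,y}$. Those pairs must be closed with the wire substitution law (Theorem~\ref{t:wire-subst-law}, in its internal-bisimilarity form) and bisimulation up to $\wb$, not ``by identity''.
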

The (analogue of the) commutativity of input prefixes  may be found in the literature.
For example, in asynchronous session-subtyping~\cite{MostrousYH09}, as
a process representation of the commuting conversions of linear
logic~\cite{BellinScott94,Wadler14,KokkeMP19}, in concurrent game semantics~\cite{GhicaMurawski08,CastellanCRW17} (under the name \emph{courtesy}).\footnote{
The commutation of inputs is often noted together with a condition saying that neither inputs nor outputs depend on outputs (when no causality is induced by name bindings); this condition also holds in \api{} because outputs do not have continuations.}
While the commutation of inputs holds for asynchronous may-testing equivalence~\cite{BorealeNP02}, 1-input property is needed to justify this in a branching equivalence.

Receptiveness and courtesy are the key conditions imposed to strategies in concurrent game semantics.
It has been proved that the two are sufficient and necessary
conditions to make strategies remain invariant under their composition
with copycat~\cite{CastellanCRW17}, the game semantic counterpart of
the wire process.\footnote{More precisely, copycat strategies
  correspond to `dynamic' wires, i.e.,  wires written as
  internal processes (\( \transf{\links a b} \) introduced in Section~\ref{s:ls}).}

In Lemma~\ref{l:otherlaws}, the first two laws  again rely on the 1-input property. In the
first, the property is needed as otherwise the input might `steal' messages sent to other inputs.
In the second one, the replication at $a$ guarantees that the output at $b$ will eventually be
consumed (in $\pi$-calculi a similar law  requires $a$ to have a uniform-receptive type
\cite{SangiorgiWalker01}).  The third law is a useful unfolding law for replication and,
 in contrast, it also holds in $\pi$-calculi; it shows that in \api a non-replicated and
a  replicated input at the same name may coexist, as long as the former precedes the latter.

\begin{lemma}
\label{l:otherlaws}
\noindent
\begin{enumerate}
\item
$\inp a x .\nil
 \wbcTT \Delta{}
\nil $ 

\item
$! \inp a x .P | \out b v
 \wbcTT  \Delta  {\delta }
! \inp a x .P | P \sub v x
 $, if $a\sepRES b \in \delta $.
\iflong
\fi
  \item
    \label{law:unf}
    $! a(x).P  \wbcTT   \Delta{ }  a(x). (P | ! a(x).P)$.
\end{enumerate}
\end{lemma}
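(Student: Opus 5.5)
We prove the three laws separately, each by exhibiting a relation that is closed under every closing context (respecting $\delta$ in the second law). Following the paper's announced method, the cleanest route is the labelled bisimilarity of Section~\ref{s:ls}, which is sound for barbed congruence. Lacking that, we can argue directly on contexts: for each law we take the relation $\{(\ct{P_1},\ct{P_2})\}$, with $P_1,P_2$ the two sides and $\qct$ ranging over closing contexts (respecting $\delta$ for law~2), close it under reduction residuals, and show that it is a barbed bisimulation up to structural congruence. In every case the key invariant comes from the typing rules \trans{Par} and \trans{RIn}. Since $a$ is an I-name of the hole, the context $\qct$ owns no input at $a$, and the only input capability at $a$ lies inside the hole. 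The context may still emit outputs on the companion of $a$.

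For law~1, $\inp a x.\nil$ versus $\nil$, the only difference is that $\inp a x.\nil$ may consume one message sent to $a$. By the 1-input property no other process can ever consume messages at $a$, so each such message in $\ct{\nil}$ is a dead output that can never produce a barb or enable a reduction; outputs have no continuation. The relation therefore pairs states that differ only by garbage messages at $a$, together possibly with the pending input. A reduction consuming the message on the left is matched by no move on the right, and barbs coincide because $a$ is not a success name. The same argument must also cover the case where the capability $a$ has been sent around inside the context before use. This is fine: typing guarantees that whoever later holds $a$ is the unique owner, and the input on the left is guarded at $a$ itself, which cannot be transmitted until it is consumed.

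For law~2, the context respects $a\sepRES b$, so the output $\out b v$ can only ever be consumed by an input at $a$. The sole input at $a$ in scope is the replicated $!\inp a x.P$, because \trans{RIn} forbids $a$ inside $P$ and \trans{Par} forbids $a$ in the context. The left-hand side can thus perform the reduction to the right-hand side at any moment. We use the relation consisting of pairs with a set of pending outputs at $b$ on the left matched by corresponding copies of $P\sub v x$ on the right, and this is a bisimulation up to $\equiv$ and expansion. A reduction from a right-side copy is answered on the left by first performing the pending communication. A left reduction firing the communication is matched by the identity. Barbs of the left are barbs of the right, because the output at $b$ is not a success barb. For the converse, weak barbs suffice: fire the communication first.

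For law~3, both sides are the unique owners of $a$. The left offers $P\sub v x$ in parallel with $!\inp a x.P$ after each input; the right offers the same after its first input, and the replication takes over from then on. The two sides are therefore related through the identity after the first input, and before it neither side can do anything but wait. This is again a simple bisimulation up to $\equiv$, using $!\inp a x.P \equiv$ itself after unfolding in the transition.

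We expect the main obstacle to be law~1 (and the similar reasoning in law~2): showing rigorously that input capability at $a$ cannot reappear in the context after being exported. This is best handled through Subject Reduction, which preserves the typing invariant that $a$ has a unique owner at all times. We would try first to state this as a lemma on typed contexts.
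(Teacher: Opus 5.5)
Your proposal is correct in outline but takes a different route from the paper. The paper does not quantify over contexts for these laws. It translates both sides into internal processes (Lemma~\ref{l:wire_trans}), exhibits internal bisimulations (Definition~\ref{d:hb-bisimulation}) at the appropriate connection set, and concludes by Theorem~\ref{t:sound}. On that route the 1-input property and the connection $a\sepRES b$ are built into the bisimulation game. An input that the other side cannot match, as in law~1, is answered via clause~3(b) by a pending output whose subject lies in the connection set, so it is unobservable. The communication of law~2 is simply a $\tau$ at a pair in $\delta$. All the work on connections, name passing and substitutions was done once and for all in Lemmas~\ref{l:res} and~\ref{l:par}.

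Your direct barbed-bisimulation argument is more elementary. Because you keep the free output $\out b v$, the communication in law~2 yields exactly $P\sub v x$. After the translation to internal processes, by contrast, the emitted value becomes a fresh name wired to $v$, and that wire must then be absorbed (e.g.\ by the wire substitution law).

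The price is that you must handle arbitrary closing contexts yourself:
\begin{itemize}
\item holes under prefixes and under replicated inputs, so the hole can be copied and your relations must range over multi-hole contexts;
\item instantiation of the hole's free names by the context's input binders;
\item the typed-context invariant you single out.
\end{itemize}
State that invariant carefully. It is not true that the context owns no input at $a$: rule \textsc{In} admits contexts such as $\inp a y.\contexthole$, whose hole then owns $a$. What you actually need, from a Subject Reduction lemma for typed multi-hole contexts, is this: once the hole is active, no component outside it holds, or can ever acquire, an input at $a$.

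Law~3 needs none of this. It is an ordinary $\pi$-calculus law (strong bisimilarity), so it can simply be transported via Theorem~\ref{t:Api_api}.
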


\iflong
\fi

\iflong 
\section{Labelled Transition System and Proof Techniques}
\else
\section{LTS and Proof Techniques}
\fi
\label{s:ls}

The goal of this section is to define proof techniques for behavioural equivalence in
\api.  As usual in process calculi,  for this we  set a Labelled Transition System (LTS).
Using the LTS, we first establish a correspondence with ordinary Asynchronous
$\pi$-calculus (\Api), in order to import results from its theory.
\iflong,  based on synchronous and asynchronous
forms of bisimilarity. 
\fi
However, to be able to handle also equalities
that are specific to \api and do not hold  in \Api,
we also    introduce a new form of labelled bisimilarity, and then prove  soundness and completeness
results for it with respect to barbed congruence. 
\iflong
Such proof techniques  however cannot be used to reason about equalities that only hold
in \api, that is, equalities whose counterparts fail in \Api. Indeed ordinary asynchronous
bisimilarity is sound but not complete for barbed congruence in \api. 
Therefore we  also  introduce a new form of bisimilarity, that is specific to \api. 
\fi
\iflong
Indeed, another motivation  for this  study is to see whether 
\api is sufficiently robust to be able to derive labelled characterisations of
barbed congruence, similarly to what happens in other dialects
of $\pi$-calculus, including $\pi$-calculus itself and \Api. %
\fi
 To facilitate the reading we omit type information, here and in the following section.

The LTS, in Figure~\ref{fig:LTS}, is parametrised  by  a
{connection set} \( \delta \); 
the LTS will be used in settings in which
   $\ok \Delta P$, for some $ \Delta $ with  
$\n \delta \subseteq \n \Delta $. 
Intuitively, 
$\ltsTv { \delta } P {\mu}  {P'}$ means that, assuming the  connections
in $\delta$, the process  $P$ is capable of performing  action $\mu$, thus
producing the derivative $P'$. 
In the  LTS, the  connections in  $\delta$  are used in rules \rn{Com} and \rn{Close} to derive
 $\tau$ actions (i.e.,  interactions)
between names related in $\delta $. 
\iflong
 Indeed, the only rules in which $\delta$ is looked at
are the communication rules \trans{Com} and \trans{Close}. 
In the premises of the 
rules for restriction, $\delta$ is extended so to add the new pair of
connected
 names.  
\fi
 In line with previous notation, a label 
 $\bout a c  $ stands for  $\resBP c  \out a c $
 (i.e.,   a restricted name  is exported, in which case both the name
 and its companion are extruded).
In rule \trans{Inp}, the formal parameter of an input remains
uninstantiated; instantiation occurs only later, in the communication
rules (\trans{Com} and \trans{Close}). 
Further material for this section may be found in
Appendix~\ref{a:ls},
including a Harmony Lemma relating $\tau$-transitions in the LTS and reductions in the reduction
semantics (Lemma~\ref{l:hl}).  
\iflong
 
We sometimes call \emph{ground} a transition derived in the  LTS with $\delta$ 
empty, as in   $\ltsTv { \emptyset  } P {\mu}  {P'}$.

 We abbreviate 
$\ltsTv { \delta } P {\mu}  {P'}$ as $P \arr\mu P'$ when $ \delta$ is not needed for
deriving that transition; that is,  $\mu \neq \tau$ or 
($\mu = \tau$ and 
$\ltsTv { \emptyset  } P {\mu}  {P'}$).

\fi

\begin{figure}[t]
\[\trans{Inp}\infR {}
{
\ltsTv \delta {\inp a x .  P} {\inp a x} P}
\andalso
\trans{Out}\infR {}
{\ltsTv \delta {\out a b } {\out a b} \nil}
\andalso
\trans{ParL}\infR {\ltsTv \delta P \act {P'}}
{\ltsTv \delta{ P \mid Q } \act {P'\mid Q}}
\]
\[
  \trans{Com}\infR {\ltsTv \delta P {\inp a x} {P'} \andalso
\ltsTv \delta{Q} {\out b {c}}{Q'} \andalso a\sepRES b \in \delta } 
{\ltsTv \delta{ P \mid Q } {\tau} {P' \sub {c}x \mid Q'}}
\]
\[
\trans{Close}\infR {\ltsTv \delta P {\inp a x }{P'} \andalso
\ltsTv \delta Q   {\bout b {c}} {Q'}
\andalso a\sepRES b \in \delta
} 
{
\ltsTv \delta{ P \mid Q } {\tau}{ \resBP c ( P' \sub
  { c}x \mid Q')}}
\]
\[
 \trans{Res}\infR {\ltsTv {\delta, \breve b} P  \act {P'}
   \andalso \pairR  b \cap  \fn \act = \emptyset}
 {\ltsTv \delta{\resBP b  P } \act {\resBP b  P'}}
\andalso
  \trans{ResExt}\infR {\ltsTv{\delta, \breve b} P  { \out c {b} } {P'}}
{
  \ltsTv \delta {\resBP b  P}  {  \bout  c {b} } {P'}}
\]
\caption{The LTS for \api{}}
\label{fig:LTS}
\end{figure}

\mysubsection{Transporting results from ordinary $\pi$-calculi onto \api}
\label{ss:trans}
We can straightforwardly encode an \api{} process into an  Asynchronous \( \pi \)-calculus (\Api{}) process if we are given the connection set \( \delta \).
We call the pairs \( \compP P \delta \) \emph{composite processes}.
The encoding
$\encoA { \compP P \delta }$ simply maps every pair of names \( a\sepRES b \)
  in \( \delta \) to the same name, say the O-name  \( b \), and 
similarly underneath any restriction
 \( \resB a b \) in \( P \).
Details are reported in
Appendix~\ref{aa:trans},
where we
develop a (strong) bisimilarity  between the source
 composite  \api process and
its target \Api process. 
Building on this, we can import a number of well-known definitions and
results from the
theory of the $\pi$-calculus onto \api. 
These include: 
 ordinary strong and weak
\iflong
, synchronous and asynchronous,   
\fi
 bisimilarity ($\sbS$ and $\wbS$), 
the expansion relation ($\expa$);   
 (pre)congruence
 results 
and up-to techniques
for these relations; 
soundness of these relations with respect to barbed congruence; 
algebraic laws and their validity; 
properties about action transitions, and about the interplay between
transitions and
 name
substitutions. 
Notably, we have: 
\begin{theorem}
\label{t:Api_api}
Suppose $\ok \Delta {P,Q}$. 
Then $\encoA{\compP P \emptyset } \wbS  \encoA{\compP Q \emptyset }$ implies 
  $P \wbcTT \Delta {} Q$.
\end{theorem}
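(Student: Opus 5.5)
The plan is to transport the problem into ordinary Asynchronous $\pi$-calculus via the composite-process encoding $\encoA{\cdot}$, and to use the soundness of weak bisimilarity for barbed congruence in \Api{}. Concretely, I will fix a context $\qct$ that is closing for $\Delta$ and aim to prove $\ct P \wbb \ct Q$. The first step is a \emph{context decomposition}. Up to structural congruence, $\ct P$ has the form $(\widetilde{\resB a b})(P \mid R)$, where $R$ collects the context's components in parallel with the hole. Any prefixes guarding the hole are handled by noting that barbed bisimilarity is insensitive to them once we reason up to the congruence of $\wbS$. The restrictions in this decomposition form the connection set $\delta_\qct$.

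The second step uses the encoding's key property, the strong bisimilarity between a composite process and its \Api{} image developed in Appendix~\ref{aa:trans}. This gives that $\encoA{\compP{\ct P}{\emptyset}}$ is, up to $\equiv$, the \Api{} context $\encoA{\cdot}$-image applied to $\encoA{\compP P {\delta_\qct}}$. Here every pair $a\sepRES b$ has been identified with a single name, so it suffices to relate $\encoA{\compP P\delta}$ with $\encoA{\compP P\emptyset}\sigma_\delta$, where $\sigma_\delta$ is the substitution mapping each I-name $a$ to its companion $b$. Indeed, the encoding of a composite process is by definition the plain encoding followed by this name identification.

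The third step is the main obstacle: showing that $\encoA{\compP P\emptyset} \wbS \encoA{\compP Q\emptyset}$ implies $\encoA{\compP P\emptyset}\sigma_\delta \wbS \encoA{\compP Q\emptyset}\sigma_\delta$. Weak bisimilarity in \Api{} is not closed under non-injective substitutions in general, because identifying names can create new synchronisations. My first attempt will be to avoid needing closure under substitution for arbitrary processes. Instead I would use the standard fact that weak asynchronous bisimilarity is preserved by input prefixing in \Api{}, and hence by substitution realised through prefixing. The name identification $\sigma_\delta$ would be mimicked by placing the processes under a context of inputs that receive the identified names. If that route fails, I would fall back on invoking the standard result that in \Api{} weak bisimilarity is a congruence for parallel composition and restriction, together with the harmony of transitions. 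I would then argue directly that each $\tau$ created by the identification corresponds to a \trans{Com}/\trans{Close} step with $a\sepRES b\in\delta$, which is matched by an input and an output transition of $P$ that $Q$ can answer. This yields a bisimulation up to $\sigma_\delta$, of the kind used for open bisimilarity.

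The final step closes the argument. Having obtained $\encoA{\compP P{\delta_\qct}}\wbS\encoA{\compP Q{\delta_\qct}}$, I apply the congruence of $\wbS$ under the encoded context $R$ and the restrictions. I then pull back through the strong correspondence between \api{} reductions and barbs and their \Api{} images, namely the Harmony Lemma~\ref{l:hl} together with the fact that success names are untouched by the encoding. This gives $\ct P\wbb\ct Q$, hence $P\wbcTT\Delta{}Q$.
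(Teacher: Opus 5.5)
Your overall architecture is the paper's. You encode via $\encoA{\cdot}$ and observe, by compositionality of the definition (not by the strong bisimilarity, as Step~2 suggests), that $\encoA{\compP{\ct P}\emptyset}$ is an \Api{} context applied to $\encoA{\compP P{\delta_\qct}}=\encoA{\compP P\emptyset}\sigma_{\delta_\qct}$. You then use that $\wbS$ is a congruence in \Api{} implying barbed bisimilarity, and pull back with Lemma~\ref{l:api_pi}. The genuine problem is Step~3, which rests on a false premise. In \Api{} as defined here (no sum, no matching), ground weak bisimilarity \emph{is} closed under all substitutions, injective or not. This is the standard fact the paper imports (soundness of $\wbS$ for barbed congruence and ``closure of behavioural relations under substitutions'', Theorem~\ref{t:congS}, Corollaries~\ref{c:trans_laws} and~\ref{c:congS}). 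The familiar counterexamples need sum or matching, e.g.\ $\bar a \mid b$ versus $\bar a.b+b.\bar a$. Your premise also contradicts your own last step. In the ground style, preservation of $\wbS$ by parallel composition is itself proved through closure under non-injective substitutions, because a communication instantiates the input parameter with a name that may already be free in the continuation.

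As a result, your first workaround is circular. Realising $\sigma_\delta$ with a context such as $\res c(\out c{\tilde b}\mid \inp c{\tilde a}.[\cdot])$ needs preservation under parallel composition with the triggering output, which presupposes the very closure you think fails. Your fallback is in effect the standard proof of that closure, but it omits the step that makes it work. After $Q$ answers $P$'s output at $b$ and input at $a$ by two separate weak transitions, you must merge them into a silent sequence of $Q\sigma_\delta$. This is possible only because outputs have no continuation and can be delayed past the intervening steps (Lemma~\ref{lem:Api-one-comp-ex}, with Lemma~\ref{lem:sub1}). The candidate relation must also contain $(P\sigma,Q\sigma)$ for \emph{all} $\sigma$, since the merged step instantiates the input parameter. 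Replace Step~3 by a citation of the standard fact, or complete this argument, and your proof coincides with the paper's.

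Finally, Step~1's decomposition is not general: the hole may sit under an input prefix, or under a replication when the typing permits. These cases are covered by applying the congruence of $\wbS$ to the whole encoded context. They are not covered by any insensitivity of $\wbb$ to prefixes, which does not hold.
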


\mysubsection{Internal Bisimilarity}
\label{sec:internal}
An \emph{internal process}~\cite{Sangiorgi96a} is a process that may never perform free
output actions.  
The goal here is to define a labelled bisimilarity, which will later be
shown to coincide with barbed congruence, on a subset of internal
processes rich enough to be able to represent the behaviour of all
\api processes. 
Thus we first
 define a translation \( \transf {-}\) from \api{} to  internal
\api processes.
The translation consists in repeatedly  applying the wire  law of
Lemma~\ref{l:wire}, in order to ensure that all names emitted in an
output are fresh. 
Accordingly,  the translation is a homorphism on all operators except for outputs, where we have: 
\[ 
\transf{{\out a b}} \defi
\bout a c : \transf{{\ulinks {\comp c} b}}    \andalso \mbox{(for  $c$ and
  $\comp c$  fresh)}
\]

(We recall that 
  $\ulinks  a b $ stands for either
$\links  a b $ or 
$\links  ba $, depending on the types for $a$ and $b$.)
The encoding terminates because   the calculus is simply-typed and hence
the wire law is applied a finite number of times. 
We call  \apiE the set of  target  processes of the translation. 
More details on the material in this section are in
Appendix~\ref{a:inter}.

The translation preserves typing
\iflong
 (the typing environment for
source and target terms of the translation are the same)
\fi
and  respects barbed congruence. 
\begin{lemma}
\label{l:wire_trans}
For any $P \in$ \api  and any $ \Delta $, if $\Delta \vdash P$ then also 
$\Delta \vdash   \transf P$;  moreover,
$P \wbcTT \Delta{}  {\transf P}$. 
\end{lemma}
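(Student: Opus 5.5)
We prove the two claims separately: first that the translation preserves typing, then that $P \wbcTT \Delta{} \transf P$. Both go by induction, on a measure that accounts for the recursive call $\transf{\ulinks{\comp c} b}$ in the output clause. That call is applied to a wire whose names carry the type \emph{carried} by $a$, which is strictly smaller than the type of $a$. So we use a lexicographic measure: first the multiset of types of names emitted in outputs, then the size of the process. This is the same measure that justifies termination of $\transf{-}$, and we fix it first so that both inductions are well-founded.

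\emph{Typing.} This is by induction on the derivation of $\Delta \vdash P$. For all operators except output, $\transf{-}$ is a homomorphism, so we reapply the same rule. The side conditions of \trans{Par} and \trans{RIn} concern I-names of the environment, and these are unchanged because the translation introduces only bound names. For $\out a b$ with $a:\och T$ and $b:T$, we type $\ulinks{\comp c} b$ under $b:T$ and $\comp c:\dual T$.
\begin{itemize}
\item If $b$ is an O-name, the wire $\links{\comp c} b$ owns only the fresh I-name $\comp c$, and \trans{RIn} applies since the body $\out b x$ has no I-names.
\item If $b$ is an I-name, the wire $\links b{\comp c}$ owns $b$. This is legitimate because in the source $\out a b$ already consumes the input capability of $b$ in the environment.
\end{itemize}
By the induction hypothesis $\transf{\ulinks{\comp c} b}$ is typable under the same environment. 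We then conclude with \trans{Par}, \trans{Out} and \trans{Res}. The I-name disjointness holds because $c$ (the O-end, used in $\out a c$) and its companion are split correctly.

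\emph{Behavioural equivalence.} We need $P \wbcTT \Delta{} \transf P$. The key observation is that barbed congruence is, by definition, closed under all typed contexts: if $Q \wbcTT{\Delta'}{} Q'$ and $\qct$ is a $\Delta/\Delta'$ context, then $\ct Q \wbcTT\Delta{} \ct{Q'}$, since contexts compose and a closing context for $\Delta$ composed with $\qct$ is closing for $\Delta'$. We also use that it is an equivalence. The proof is then an induction on the same measure.
\begin{itemize}
\item For operators other than output, each immediate subterm is related to its translation by the induction hypothesis, under the environment used in its typing premise. Congruence then yields the result. For input and replicated input the premise environment extends $\Delta$ with the bound name, and the context $\inp a x.[\cdot]$ or $!\inp a x.[\cdot]$ is a $\Delta/(\Delta,x:T)$ context. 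For parallel composition we replace one component at a time and use transitivity.
\item For output, Lemma~\ref{l:wire} gives $\out a b \wbcTT\Delta{} \bout a c:\ulinks{\comp c} b$. The induction hypothesis, applied to the wire at a smaller type, gives $\ulinks{\comp c} b \wbcTT{\Delta''}{} \transf{\ulinks{\comp c} b}$. Congruence under the context $\bout a c:[\cdot]$ and transitivity then finish the case.
\end{itemize}

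The main obstacle I expect is the congruence step itself. We must check that the notion of "closing context respecting the hole's typing up to O-extensions" really composes. In particular, plugging a $\Delta/\Delta'$ context into a closing-for-$\Delta$ context must yield a context closing for $\Delta'$ in the paper's sense. The delicate case is input binders and restrictions that bind I-names, since the hole environment there gains I-names, not only O-names. I would handle this by proving a small composition lemma on context typing directly from Figure~\ref{fig:typing}. The remaining ingredient is Lemma~\ref{l:wire}, which already covers both polarities and so requires no further work here.
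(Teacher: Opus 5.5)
Your proposal is correct and follows essentially the paper's route. Typing preservation goes by induction on the types emitted in outputs, and your lexicographic measure makes the paper's ``induction on types'' precise. The equivalence goes by structural induction: the output case is exactly the wire law of Lemma~\ref{l:wire}, and every other case uses closure of barbed congruence under typed contexts, which is what the paper compresses into ``follows from the correctness of the wire law''. One harmless slip: when $b$ is an I-name, the emitted name $c$ is the I-end and $\comp c$ is the O-end used by the wire, not the other way round, but the I-name split for \textsc{Par} still goes through as you conclude.
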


Lemma~\ref{l:du} shows that \apiE processes obey a strict duality
discipline: a process that exports a name retains the companion
of that name, but not the name itself. This  property, together with the property that 
 internal processes may only perform \emph{bound} outputs,  
are  useful
in definitions and applications of proof techniques (notably labelled bisimilarity),
and motivate the transformation to internal processes.

\begin{lemma}
\label{l:du}
If $P \in \apiE$ and $\ltsTv \delta P {\bout b c} P'$ then $c \not \in \fn {P'}$.
\end{lemma}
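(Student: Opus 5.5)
We prove the statement by induction on the derivation of the transition $\ltsTv \delta P {\bout b c} P'$. To make the induction go through, we strengthen it to an invariant on \apiE processes: for every $P \in \apiE$, every occurrence of an output in $P$ has the shape $\resBP c (\out a c \mid Q)$, and the only free occurrence of $c$ in this scope is the one in the output particle $\out a c$. All other uses are of the companion $\comp c$, which lives inside $Q$. This holds by construction of $\transf{-}$. Its output clause produces $\bout a c : \transf{\ulinks{\comp c} b}$, and the wire $\ulinks{\comp c} b$ mentions only $\comp c$ and $b$, never $c$. The translation acts homomorphically elsewhere, so it creates no other occurrences. We then show that the invariant is preserved by transitions, and that it immediately yields the statement.

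The main case analysis follows the last rule used to derive the bound output.
\begin{itemize}
\item \trans{ResExt}: here $P = \resBP c P_0$ and $\ltsTv{\delta,\breve c} P_0 {\out b c} P'$. By the invariant, the only free occurrence of $c$ in $P_0$ is the particle $\out b c$ that fires. A small auxiliary lemma, proved by induction on the derivation, says that an output transition removes exactly the fired particle, using \trans{Out} and \trans{ParL}. It follows that $c \notin \fn{P'}$.
\item \trans{ParL} (and its symmetric version): the transition comes from $P_1$, and $P' = P_1' \mid P_2$. We have $c \notin \fn{P_1'}$ by induction. For $P_2$, $c$ is bound in $P_1$, so by the Barendregt convention $c \notin \fn{P_2}$.
\item \trans{Res}: the restricted pair is distinct from $c$, and the claim follows by induction.
\end{itemize}
Input, \trans{Com} and \trans{Close} cannot produce a bound output, so these cases do not arise.

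The main obstacle is showing that the invariant is closed under transitions, so that it still holds at $P_0$ in derivations that went through earlier steps. The delicate case is substitution in \trans{Com} and \trans{Close}, namely $P'\sub{c}{x}$. Received names are always fresh bound names whose companion stays with the sender. Substituting such a name into a receiver replaces $x$ in the receiver's own particles and wires, and these contain only "companion-side" occurrences. We therefore need to check that substitution never puts an exported name $c$ back into the scope of its own binder. We would phrase the invariant on derivatives as "no name occurs free both as an extruded bound-output object and elsewhere" and check it rule by rule, appealing to the typing discipline: the 1-input property and I/O separation ensure that $c$ and $\comp c$ have dual capabilities, so they cannot both be retained.

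If this invariant proof gets heavy, there is a fallback. We would instead transport the corresponding freshness property for internal processes from \Api via $\encoA{\compP P \delta}$, using the bisimilarity of Section~\ref{ss:trans}.
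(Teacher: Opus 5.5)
Your core case analysis matches the paper's proof, which is a straightforward transition induction. The only substantive case is \rn{ResExt} at the restriction of some $\transf{\out a b}=\resBP c(\out a c \mid \transf{\ulinks{\comp c} b})$: there the fired particle is the only occurrence of $c$, and the input-guarded wire mentions only $\comp c$ and $b$. The \rn{ParL} and \rn{Res} cases follow by induction and the Barendregt convention.

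The step you call the main obstacle is not part of this lemma. Every rule of Figure~\ref{fig:LTS} has its premises on immediate syntactic subterms of the source, never on derivatives. So you need neither closure of your invariant under transitions nor any analysis of the substitutions in \rn{Com}/\rn{Close}. Closure of \apiE under transitions is Lemma~\ref{l:closedTR}, and it matters only when the present lemma is applied to derivatives.

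What the induction does need is that the invariant passes to the subterms it visits, and as you worded it, it does not. The body of $\resBP z(\out a z \mid R)$ violates ``every output has the shape $\resBP c(\out a c\mid Q)$'', yet the \rn{Res} case would apply the induction hypothesis to such a body. The unit outputs $\outC a$ that occur in \apiE also violate it. There are two easy fixes:
\begin{itemize}
\item induct on the \apiE grammar with $\transf{\out a b}$ as an atomic case, as the paper does for Lemma~\ref{l:closedTR}; or
\item state the invariant as ``every restricted name occurring as an output object occurs exactly once in its scope''. This is closed under subterms and gives the \rn{ResExt} case directly via your auxiliary lemma.
\end{itemize}

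Two of your side claims are wrong, so do not rely on them.
\begin{itemize}
\item The typing discipline does not stop a sender from keeping an exported name. With $c$ an O-name, $\resBP c(\out a c \mid \out d c \mid \inp{\comp c} x.\nil)$ is well typed, and after $\bout a c$ it retains both $c$ and $\comp c$; the paper remarks on exactly this for arbitrary \api processes. Only the shape of the outputs produced by $\transf{\cdot}$ yields the lemma.
\item The fallback through $\encoA{\cdot}$ cannot work, because the encoding merges $c$ and $\comp c$. The \Api image of the derivative $\transf{\ulinks{\comp c} b}$ then contains the exported name free. There is no corresponding freshness property in \Api to transport, since the lemma is specific to I/O separation.
\end{itemize}
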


A \emph{typed  (process) relation} is a  set of quadruples
$(\asetD, \Delta, P,Q)$, in which $\n \delta \subseteq \Delta $, and
$\ok \Delta {P,Q}$. 
\iflong Assuming a Curry-like assignment of types to names, 
\fi
We usually omit $\Delta $ and write 
$P\sor \S \asetD  Q$ if 
$(\asetD, \Delta, P,Q) \in \S$, for some $\Delta $.

In the bisimilarity below, the set of connections $\delta $
is relevant in two aspects: first, the LTS uses $\delta $ for
inferring synchronisations at names in $\delta $; secondly, 
I-names that are mentioned in $\delta$ are
owned
by the tested processes;
consequently,  
  outputs at the companion O-name (itself a name in $\delta $) are 
 not observable (i.e.,   visible from the
environment). This explains the condition 
`$b \not \in \n \asetD$' in clause \reff{i:out}.
Weak transitions, 
$\LtsSS \asetD {Q} {}  {\aset}{ Q'}$ and 
 $\LtsSS \asetD Q {\mu}  {\aset}{ Q'}$, are defined in the expected
manner (Appendix~\ref{a:inter}).
Both the definition of weak transitions, and the invariance of the
connection set $\delta $ in clause~\reff{i:out} of the bisimulation rely on
Lemma~\ref{l:du} (for instance, without Lemma~\ref{l:du} the connection set 
 might need to be extended, after a transition, which in turn  would impinge on the  composition of  transitions
 into `weak' transitions).
The bisimulation is in the `ground' style \cite{SangiorgiWalker01},
that is, no substitution is required on 
the formal parameter  of an input action; this
makes the  proof of congruence harder, but  
 simplifies applications
 of the technique.
Clause~\reff{i:in} is the usual input clause of asynchronous bisimilarity, tailored to fit I/O separation.
\iflong 
\fi
\begin{definition}
\label{d:hb-bisimulation}
 A symmetric typed relation $\R$ on \apiE processes is an
  {\em  internal bisimulation} if whenever $P\sor \R \asetD  Q$
 the following holds:
 \begin{enumerate}

    \item If 
$\ltsSS \asetD P \tau  {\aset}{ P'}$, then there is $Q'$ such that
$\LtsSS \asetD Q {}  {\aset}{ Q'}$
 and $P' \sor{\R}{\asetD} Q'$;

    \item
\label{i:out} 
if  $
\ltsSS \asetD P { {\bout b c}}   {\aset}{ P'}$
and   $b \not \in \n \asetD$,
         then there is $Q'$ such that 
$\LtsSS \asetD Q {{\bout b c}}  {\aset}{ Q'}$
 and
$P' \sor{\R}{\asetD} Q'$

     \item \label{i:in} If
$\ltsSS \asetD P {a(c)}  {\aset'}{ P'}$,
   then  there is $ Q'$ such that:
         \begin{enumerate}
           \item either 
$\LtsSS \asetD Q { a( c)}  {\aset'}{ Q'}$
 and
$P' \sor{\R}{\asetD} Q'$
           \item or 
$\LtsSS \asetD Q {}  {\aset}{ Q'}$
 and, either
\begin{enumerate}
\item
 $a\sepRES a' \in  \asetD$, for some $a'$,  and
$P' \sor{\R}{\asetD }
(Q' | {\outbis  {a'}  c})
$, or

\item   
$a \not \in \n \asetD$ and
$P' \sor{\R}{\asetD, a\sepRES a' }
(Q' | {\outbis  {a'}  c})
$, for $a' $ fresh. 
\end{enumerate}
          \end{enumerate}
 \end{enumerate}

Processes $P$ and $Q$ are  \emph{internal  bisimilar at $\asetD$},
written $P \sor \wb \asetD Q$, if $P  \sor \R \asetD Q$ for some
internal bisimulation $\R$.
We abbreviate $\sor \wb \emptyset $ as $\wb$.
\end{definition}

Some `up-to' techniques for internal bisimilarity, including `internal bisimulation up-to $\wbS$ and $\expa$', are given in
the Appendix.

The next goal is to derive 
\iflong
the substitutivity properties 
\else
congruence properties
\fi 
for
internal bisimilarity,  from which   
 soundness with
respect to barbed congruence follows.
In  $\pi$-calculi, when we examine substitutivity properties of
a bisimilarity with respect to the language constructs, 
 parallel composition is delicate, as it is the
operator by means of which interactions among processes are derived,
whereas the restriction operator is straightforward. 
In \api, in contrast, restriction is the most delicate
operator, because it creates a connection between an
I-name and an O-name and therefore enables interactions. 
In the remainder of the section, all processes are meant to be in \apiE.

\begin{lemma}[Restriction, in \apiE]
\label{l:res}
If $P \wbT \delta Q$ and $z,z' \not\in \n \delta  $,  then also  
$\resB z{z'} P \wbT \delta \resB z{z'} Q$.
\end{lemma}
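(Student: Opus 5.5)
Since $\resB z{z'}$ creates a fresh connection, the natural candidate relation is the closure of $\wb$ under restriction with an enlarged connection set. Concretely, define $\R$ as the set of all $(\asetD, \resB z{z'} P, \resB z{z'} Q)$ such that $P \wbT{\asetD, z\sepRES z'} Q$ and $z,z'\notin\n\asetD$. Then show that $\R \cup \wb$ is an internal bisimulation. The point is that the hypothesis $P \wbT\delta Q$ is taken at $\delta$, whereas under the restriction the body is run with $\delta, z\sepRES z'$. So the first step is a \emph{weakening lemma}: if $P \wbT\delta Q$ and $z,z'$ are fresh for $\delta$, then $P \wbT{\delta,z\sepRES z'} Q$.

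Adding a connection turns a pair of visible actions (an input at $z$ and an output at $z'$) into an internal $\tau$ via \trans{Com}/\trans{Close}. For the weakening lemma, take the relation consisting of pairs $(\delta,z\sepRES z', P\mid\prod_i\outbis{z'}{c_i}, Q\mid\prod_i\outbis{z'}{c_i})$ with $P\wbT\delta Q$, closed up to $\wbS$ and expansion. Then handle the cases as follows.
\begin{enumerate}
\item A $\tau$ of the composite that synchronises an input $z(c)$ of $P$ with an output $\bout{z'}c$ of $P$ (via $z\sepRES z'$). Decompose it, using Lemma~\ref{l:du} and the fact that outputs in \apiE are bound, into the visible input $z(c)$ followed by the output that remains pending.
\item The input $z(c)$ of $P$. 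Use clause~\reff{i:in}: $Q$ either matches it or answers by absorbing a message, which gives the case $a\notin\n\delta$ and places the message $\outbis{z'}c$ in parallel. This is exactly why the relation carries parallel messages on $z'$.
\item The output $\bout{z'}c$. It is no longer observable at $\delta, z\sepRES z'$, so the message is simply kept as a pending component.
\end{enumerate}
Soundness of the extended relation is then checked with the up-to-$\wbS$/$\expa$ technique mentioned in the Appendix, using that \Api-results (expansion laws for asynchronous messages) transfer via Section~\ref{ss:trans}.

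With weakening available, the restriction case becomes routine. By \trans{Res} and \trans{ResExt}, every transition of $\resB z{z'}P$ at $\asetD$ comes from a transition of $P$ at $\asetD,z\sepRES z'$ (labels not mentioning $z,z'$, or a bound output extruding the restricted pair). Matching transitions of $Q$ at $\asetD,z\sepRES z'$ lift back through the restriction. Output clauses need $b\notin\n\asetD$; if $b\in\{z,z'\}$, the side condition rules it out at the inner level anyway. For extrusions via \trans{ResExt}, the derivatives are no longer restricted but are related by $\wb$ at $\asetD$, and by Lemma~\ref{l:du} the connection set does not need to grow. This is why the candidate relation includes $\wb$ itself.

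The main obstacle is the weakening step: showing that making $z$ and $z'$ communicate preserves $\wb$, since a $\tau$ internal to $P$ at $z$ may be matched in $Q$ only by an asynchronous absorption of the message. I would first try the relation with accumulated messages on $z'$ described above, relying on the 1-input property, which makes the messages on $z'$ deliverable only to the owner of $z$ inside the process. If that fails to close, I would fall back to an up-to-context technique.
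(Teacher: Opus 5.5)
\textbf{There is a genuine gap.} Your weakening step cannot be proved independently of the restriction lemma itself.

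\emph{The matched case.} Take the case you single out. At $\delta,z\sepRES z'$, $P$ performs a $\tau$ arising from $P \arr{\bout{z'}c} P_1 \arr{\inp{z}{x}} P_2$, so the derivative is $\resBP c(P_2\sub cx)$. Here $\comp c$ stays free in $P_1$ (Lemma~\ref{l:du}), and $c$ becomes free in $P_2\sub cx$. Suppose $Q$ answers both actions using $P\wbT\delta Q$, and you compose them with Lemma~\ref{lem:Api-one}. You still have to show $\resBP c(P_2\sub cx) \wbT{\delta,z\sepRES z'} \resBP c(Q_2\sub cx)$, knowing only $P_2\sub cx \wbT\delta Q_2\sub cx$. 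The restriction on the fresh pair $c,\comp c$ creates synchronisations that were visible actions before. So this is an instance of the very statement under proof, combined with weakening.

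Your relation of pairs $P\mid\prod_i\outbis{z'}{c_i}$ cannot absorb this, for three reasons:
\begin{itemize}
\item it contains no restrictions;
\item `up to $\wbS$ and $\expa$' cannot remove $\resBP c$, because $P_2$ and $Q_2$ are only internally bisimilar;
\item rewriting $P$ with Lemma~\ref{l:mm}(2) gives no free pending message, because $\resBP c$ also scopes over $P_1$.
\end{itemize}

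\emph{The asynchronous case.} If instead $Q_1$ answers the input by the asynchronous clause of Definition~\ref{d:hb-bisimulation}, you only get $P_2\wbT{\delta,z\sepRES a'} Q_2\mid\outbis{a'}{x}$ with $a'$ fresh. To return to the connection set $\delta,z\sepRES z'$ you must identify $a'$ with $z'$. But $z'$ may already occur free, and be observable, in $P_2$ and $Q_2$. This is a non-injective, connection-creating substitution, and a relation of your shape is not closed under it.

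Falling back on an up-to-context technique would be circular: its soundness for $\wb$ presupposes exactly this congruence property under restriction.

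\textbf{How the paper proceeds.} It avoids the circularity by proving weakening, restriction and name merging at once, in Lemma~\ref{l:crux}. The statement you are proving is the instance $\tila=\tilap=\tilb=\emptyset$, $\tilz=z\sepRES z'$ of that lemma. The paper shows that the triples $(\delta,\res\tilab\res\tilz(P\sigmaapb),\res\tilab\res\tilz(Q\sigmaapb))$, with $P\wbT{\delta,\tila\sepRES\tilap}Q$ and $\tilap\cap\tilb=\emptyset$, form an internal bisimulation up to $\wbS$ and $\expa$. In that proof:
\begin{itemize}
\item a $\tau$ at a restricted connection is split into a bound output followed by an input, as above;
\item the extruded pair $\resBP c$ is absorbed into the restriction prefix;
\item in the asynchronous subcase, the fresh companion is moved into the $\tila\sepRES\tilap$ part and renamed by $\sigma$, under the restriction, onto the name it must be merged with.
\end{itemize}
Your `routine' second step is the special case $\tilz=\emptyset$ with $\tilb$ fresh. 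To repair the proposal you essentially have to generalise your candidate relation to this form, at which point the separate weakening lemma is no longer needed.
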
 

In the assumption of the lemma, synchronisations between names $z,z'$
are not possible, and may not be inferred; only visible transitions
at these names are observed. In the conclusion, in contrast, only the
synchronisations are taken into account. 
The proof  of the lemma
 implicitly subsumes 
 properties of closure under
substitutions, for it consists in showing that the relation
\[ ( \delta ,  \res \tilab \res \tilz (P \sigmaapb ), \res \tilab
\res \tilz
 (Q \sigmaapb )     \]
with 
 $P \wbT{\delta, \tila \sepRES \tilap} Q$
and $\tilap \cap \tilb = \emptyset $ is an internal bisimulation 
up-to $\wbS$ and $\expa$ (Lemma~\ref{l:crux}).  

Two typings $\Delta $ and $ \Gamma $ are \emph{compatible} if they are disjoint on I-names
(i.e.,  $\iname \Delta \cap \iname \Gamma  = \emptyset $) and agree on the type assignments
for shared O-names.

\begin{lemma}[Parallel composition, in \apiE]
\label{l:par}
  Suppose that \( P \wbTT \Delta {\delta}  Q \), that $ \ok \Gamma R$, and that 
$\Delta$ and $\Gamma$ are compatible.
Then also 
 \( P \mid R \wbTT{\Delta , \Gamma }{ \delta }  Q \mid R\).
\end{lemma}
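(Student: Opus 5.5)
We prove Lemma~\ref{l:par} by the standard route for congruence of bisimilarity under parallel composition, adapted to the connection set $\delta$ and to I/O separation. We exhibit the relation
\[
\S \defi \{ (\delta', \Delta', \res{\tilab}(P \mid R), \res{\tilab}(Q \mid R)) \}
\]
and show it is an internal bisimulation up to $\wbS$ and $\expa$ (up-to techniques from the Appendix, together with the restriction lemma, Lemma~\ref{l:res}). In $\S$ we require $P \wbT{\delta', \tila\sepRES\tilap} Q$, $R\in\apiE$ with I-names disjoint from those of $P,Q$, and $\tilab$ connecting names of $R$ to names of $P,Q$ (in either direction).

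The restrictions are needed because after $R$ and $P$ exchange a bound output $\bout b c$, the extruded pair $c,\comp c$ becomes shared between the two components. The resulting term is then of the form $\resBP c(P' \mid R')$ rather than $P'\mid R'$. The statement of the lemma is the case $\tilab=\emptyset$. By Lemma~\ref{l:du}, the process that exported $c$ keeps only $\comp c$, so after extrusion the ownership of I-names is still split between the two sides, and typing compatibility is preserved.

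The case analysis on a transition of $\res\tilab(P\mid R)$ goes as follows.
\begin{enumerate}
\item A move of $R$ alone is matched by the same move of $R$ on the other side.
\item A visible move of $P$ (bound output at a name not in $\delta$, or input) is matched using clauses~\reff{i:out} and~\reff{i:in} of $P\wbT{\cdot} Q$. In the asynchronous branch of clause~\reff{i:in}, the extra output $\outbis{a'}c$ is absorbed into the $P$-side, and the new connection $a\sepRES a'$ is added to the index. This is legitimate because the definition allows $\delta$ to grow by fresh pairs.
\item A $\tau$ of $P$ is handled by clause~1.
\item Communications between $P$ and $R$ on a connected pair $a\sepRES b$ (in $\delta'$ or $\tilab$) are handled as follows.
\begin{enumerate}
\item If $R$ emits $\bout b c$ and $P$ inputs at $a$, then $P\arr{a(c)}$ is matched via clause~\reff{i:in}. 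In the delayed branch, $Q$ does silent steps and then pairs with $\outbis{a'}c$, i.e.\ with $R$'s output itself, so $R$'s output is left in parallel. Here we use that $a\in\n{\delta,\tila\sepRES\tilap}$ and hence the first subcase applies, which is exactly why the connections of $\tilab$ are kept in the index of $P\wbT{} Q$.
\item If $P$ emits $\bout b c$ with $b$ connected to an input of $R$, the output is observable in $P\wbT{\delta',\ldots}Q$ only if $b\notin\n{\cdot}$. This is why the connections involving $P$'s outputs toward $R$ must \emph{not} be placed in the index of $P$, while connections toward $P$'s own I-names are. We therefore split $\tila\sepRES\tilap$ to contain only pairs whose I-name is owned by $P$.
\end{enumerate}
\end{enumerate}

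The main obstacle is this bookkeeping of which connections sit in the index of $\wbT{}$ for $P,Q$ and which sit outside as restrictions, together with closure of $\wbT{}$ under extension of $\delta$ by pairs whose I-name belongs to $P$. I expect to first prove a weakening lemma: if $P\wbT{\delta}Q$ and $a$ is an I-name of $P$ not in $\n\delta$, $a'$ fresh, then $P\wbT{\delta,a\sepRES a'}Q$. Its proof is direct, since outputs at the fresh $a'$ do not occur in $P,Q$, and clause~\reff{i:in}(b)(ii) becomes (b)(i). With this, each derivative lands in $\S$ up to structural rearrangement of restrictions and $\expa$ (for the $\tau$ steps absorbed in the delayed input case), which closes the up-to argument.
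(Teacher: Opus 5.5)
Your proof is correct in outline, but you organise the restricted connections differently from the paper.

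\textbf{The paper's bookkeeping.} The paper works with triples $(\delta;\res{\tily}\res{\tilx}(P\mid R);\res{\tily}\res{\tilx}(Q\mid R))$ with $P \wbT{\delta,\tilx} Q$. Here $\tilx$ collects connections with both ends in $\n\Delta$, and $\tily$ those with exactly one end in $\n\Delta$. A connection from an I-name $a$ of $P$ to an O-name $\compa$ held only by $R$ therefore stays in $\tily$, outside the index. When $R$ outputs on $\compa$ and $P$ inputs on $a$ (the paper's case 3), clause~\ref{i:in}(b)(ii) of the hypothesis applies because $a$ is not in the index. It yields a relation at $\delta,\tilx,a\sepRES a''$ for a fresh $a''$. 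An injective renaming $a''\mapsto\compa$ then moves the pair into $\tilx$ exactly when it is needed.

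\textbf{Your bookkeeping, and the trade-off.} You place every connection whose I-name is owned by $P$ into the index eagerly. As a result, every communication where $P$ inputs is answered via clause (b)(i), which merges the paper's cases 2 and 3. The cost is the extra weakening lemma, invoked whenever an extrusion creates such a pair: $P$ receiving an I-name from $R$, or $P$ exporting an O-name and keeping its companion. Your sketch of that lemma is sound: use Lemma~\ref{l:notfreedelta} for $\tau$-moves, monotonicity of transitions in the connection set, and renaming of the fresh name to turn (b)(ii) into (b)(i). So you trade one small auxiliary lemma for a simpler case split, whereas the paper's lazy scheme lets the hypothesis' own clause (b)(ii) do the weakening.

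\textbf{One detail to fix.} Leaving $R$'s pending output on the answering side requires $R \wbS \resBP{c}(\transf{\out{b}{c}}\mid R')$, by Lemma~\ref{l:mm}(2). That step, not expansion of absorbed $\tau$-steps, is where the up-to-$\wbS$ technique is used, exactly as in the paper.
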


In the proof, we take the set $\R$ of all (well-typed) triples of the form 
\[
(
 \delta ;  
\res {\tily} \res {\tilx} ( P \mid R)  ; 
\res {\tily}  \res {\tilx} ( Q \mid R)  ) 
\]
where, for some $\Delta $ and $\Gamma $:  
\begin{itemize}
\item
 \( P \wbTT \Delta{ \delta, \tilx }  Q \) ;
\item $\n \delta , \tilx \subseteq \n \Delta $ ;
\item  for each $y \sepRES y' \in \tily$, the set $\{y,y'\} \cap \n \Delta $ is a
    singleton;

\item  $ \ok \Gamma R$ and 
$\iname \Delta \cap \iname \Gamma  = \emptyset $;
\end{itemize} 
and show that
 $\R$ is an
internal bisimulation up-to $\wbS$ and $\expa$.

Both lemmas above  rely  on various results relating transitions of internal processes
under  extensions or restrictions of the connection set on which the LTS is parametrised;
and results relating variations on the connection set to the addition of restricted names to
processes, and to the application of substitutions.

\begin{theorem}[Soundness]
\label{t:sound}
Let $P,Q \in \apiE$.
If $P  \wbTT\Delta{\delta}  Q$ then also $ P \wbcTT \Delta {\delta}   Q$.
\end{theorem}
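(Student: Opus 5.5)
Soundness follows the standard route: internal bisimilarity is preserved by every closing context that respects $\delta$, and it implies barbed bisimilarity on closed processes. Fix a context $\qct$ closing for $\Delta$ and respecting $\delta$. Up to structural congruence, and possibly after unfolding, $\qct$ can be put in the normal form
\[ \res{\tily}\res{\tilx}(\,[\cdot] \mid R\,). \]
Here $\tilx$ lists the restrictions embracing the hole that are in $\delta$, and $\tily$ lists the further restrictions embracing the hole. The process $R$ collects the parallel components. One point needs checking: the hole in a closing context cannot sit under an input prefix or a replication with free I-names. This is because the hole is typable under an extension of $\Delta$, and the rule \trans{RIn} forbids I-names in the body. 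Hence any prefixes above the hole can only be inputs guarding the hole. Those cases I would handle by a separate easy closure lemma for input prefix. Ground-style bisimulations are typically closed under input prefix directly, since clause \reff{i:in} with no instantiation handles the formal parameter.

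Since $\qct$ may contain non-internal subterms, I would first replace $R$ by $\transf R$ using Lemma~\ref{l:wire_trans}. This gives $\ct P \wbcTT{}{} \res{\tily}\res{\tilx}(P\mid\transf R)$ by congruence of barbed congruence, which holds by definition. It therefore suffices to treat contexts whose parallel part is in \apiE. Then Lemma~\ref{l:par} gives $P\mid\transf R \wbTT{\Delta,\Gamma}{\delta} Q\mid \transf R$, where compatibility of $\Delta$ and $\Gamma$ is ensured by typing of the context. Next, Lemma~\ref{l:res} is applied repeatedly: first to the pairs in $\tily$, which are not in the connection set, and then to the pairs of $\tilx\subseteq\delta$. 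This last step needs a variant that removes a pair $z\sepRES z'$ from $\delta$ when restricting it. I expect to obtain this by noting that $P\wbT{\delta,z\sepRES z'}Q$ implies $P\wbT{\delta}Q$ with $z,z'$ still usable. Alternatively, Lemma~\ref{l:crux} already covers restriction of names in the connection set, since its relation takes $P\wbT{\delta,\tila\sepRES\tilap}Q$ with $\tilb$ restricted. In the end we obtain $\ct P \wbT{\emptyset} \ct Q$ on closed processes.

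The final step shows that $\wb$ on closed \apiE processes is a barbed bisimulation. Reductions correspond to $\tau$-transitions with empty $\delta$ by the Harmony Lemma (Lemma~\ref{l:hl}). Barbs $\dwaB\success$ correspond to output transitions at success names $\success$, which are never in $\n\delta$ and hence observable by clause \reff{i:out}. One wrinkle is that after translation an output at a success name carries the unit value, so it is not turned into a bound output. I would therefore treat unit-carrying outputs as a degenerate case of clause~\reff{i:out}, or check the appendix definition to see how they are handled.

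The main obstacle I expect is the interplay between the connection set and contexts that restrict names of $\delta$: matching the statement of Lemma~\ref{l:res}, which requires $z,z'\notin\n\delta$, with contexts that respect $\delta$. I would resolve it through the general relation of Lemma~\ref{l:crux}.
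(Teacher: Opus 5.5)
Your plan is the paper's route: derive congruence of internal bisimilarity from Lemma~\ref{l:par} (parallel composition, applied once the context has been internalised) and Lemma~\ref{l:res}, using its general form Lemma~\ref{l:crux} with fresh $\tilb$ to restrict the pairs of $\delta$ up to $\alpha$-conversion, and then observe that $\wb$ on closed processes is a barbed bisimulation. Discard your first suggestion for the $\delta$-pairs, though: $P\wbT{\delta,z\sepRES z'}Q$ does not in general imply $P\wbT{\delta}Q$, because dropping the connection makes outputs at $z'$ observable (clause~\ref{i:out} of Definition~\ref{d:hb-bisimulation} hides only names of the connection set) and removes the synchronisations at $z,z'$, and the only weakening of this kind available is Lemma~\ref{l:aapIB}, which needs one of the two names to be not free.
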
 

\iflong
\begin{corollary}[Plain soundness]
\label{c:sound}
Let $P,Q \in \apiE$.
If $P  \wbTT\Delta{}  Q$ then also $ P \wbcTT  \Delta{}   Q$.
\end{corollary}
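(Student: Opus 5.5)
The statement to prove is Plain soundness: for $P,Q \in \apiE$, $P \wbTT\Delta{} Q$ implies $P \wbcTT \Delta{} Q$. I would derive it as the special case $\delta = \emptyset$ of Theorem~\ref{t:sound}. That theorem gives $P \wbcTT \Delta {\emptyset} Q$. It then remains to identify $\wbcTT \Delta{\emptyset}$ with $\wbcTT \Delta{}$.

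For this identification I would unfold Definitions~\ref{d:dbc} and~\ref{d:deltabc}. A context $\qct$ that is closing for $\Delta$ respects the connection set $\emptyset$ exactly when $\emptyset \subseteq \delta_\qct$, which holds for every context. Hence the two families of contexts quantified over coincide. The side condition $\n\emptyset \subseteq \n\Delta$ is trivial, and the typing hypothesis $\ok \Delta {P,Q}$ is part of the assumption $P \wbTT\Delta{} Q$, since a typed relation only contains quadruples satisfying it. So the two relations are the same, and the corollary follows.

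I expect no real obstacle here: all the substance lies in Theorem~\ref{t:sound}, and through it in Lemmas~\ref{l:res} and~\ref{l:par}. The only point to check is notational. The abbreviation $\wb$ for $\sor\wb\emptyset$ and the typed form $\wbTT\Delta{}$ must mean internal bisimilarity at the empty connection set under typing $\Delta$. That is how I read Definition~\ref{d:hb-bisimulation}.
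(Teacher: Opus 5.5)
Your proposal is correct and matches the paper, where the corollary is simply the instance $\delta = \emptyset$ of Theorem~\ref{t:sound}. Your observation that every context closing for $\Delta$ trivially respects the empty connection set is exactly what makes Definitions~\ref{d:dbc} and~\ref{d:deltabc} coincide at $\delta=\emptyset$.
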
 
\fi             

As usual in process calculi, 
completeness for barbed congruence is established for image-finite
processes, exploiting  the stratification of the labelled bisimilarity on the natural
numbers (Appendix~\ref{a:completeness}).
\iflong
The proof makes use of the fact that internal choice can be represented in \api{} (cf.~Example~\ref{ex:internal-choice}).
\fi
  A \emph{regular} process is a process without success names; these
  are the processes of interest, intended to be tested in closing contexts.

\begin{restatable}[Completeness]{theorem}{Completeness}
\label{t:completeness}
Let $P$ and $Q$ be two image-finite regular \apiE-processes. Then 
$ P \wbcTT  \Delta{\delta}  Q \; \; {\rm implies} \; \;  P 
\wbTT   \Delta \delta Q$. 
\end{restatable}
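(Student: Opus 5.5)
The plan follows the standard completeness argument for asynchronous $\pi$-calculi, adapted to \apiE{} and to connection sets. First I stratify internal bisimilarity on the natural numbers: $\sim^{\delta}_0$ relates all well-typed pairs, and $\sim^{\delta}_{n+1}$ is obtained by applying the three clauses of Definition~\ref{d:hb-bisimulation} once, requiring derivatives to be in $\sim_n$ at the appropriate connection set. On image-finite processes, $\bigcap_n \sim^{\delta}_n$ coincides with $\wbTT\Delta\delta$. This is the usual König-style argument: image-finiteness gives finitely many weak derivatives for each label up to the choice of fresh names, and Lemma~\ref{l:du} ensures $\delta$ is unchanged along output transitions, so the stratification is well defined. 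It then suffices to show that if $P \not\sim^{\delta}_n Q$, some context closing for $\Delta$ and respecting $\delta$ distinguishes $\ct P$ from $\ct Q$ under $\wbb$. I will prove this by induction on $n$.

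For each $n$ and each pair of inequivalent processes I build a closing context, with success names signalling which branch was taken. An output clause failure $P \arr{\bout b c} P'$ with $b \notin \n\delta$ is tested by a context that owns the I-end $b'$ of $b$ through a restriction $\resB{b'}{b}$, and runs $\inp{b'}{x}.(\outC{\omega_1} \mid \ldots)$, continuing with the tester for the derivatives with the received name $x$ bound to $c$. An input clause failure $P \arr{a(c)} P'$ is tested by emitting $\bout{a'}{c}$, where $a'$ is the companion O-name of $a$, which is either already in $\delta$ or freshly connected by the context. This explains the two subcases (i) and (ii) of clause~\reff{i:in}, where $Q$ may simply leave the message $\out{a'}c$ in parallel. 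Here I need that the context can play the companion of $c$ while respecting the 1-input property. Lemma~\ref{l:du} guarantees that the process retains only the companion, so the context may own $c$'s dual end. Since several candidate derivatives must be refuted simultaneously, I use the internal choice encoding of Example~\ref{ex:internal-choice}, a race on inputs, to combine finitely many testers. Each tester is guarded by fresh success barbs $\omega_i$ that are withdrawn when a branch commits, as in the classical Sangiorgi--Walker proof.

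I expect the main obstacle to be the $\tau$ and input cases combined with I/O separation. The context may connect names only through restrictions whose scope contains the hole, so every tester must be typable: it must not own any I-name in $\iname\Delta$ and must respect $\delta$. In addition, the "withdrawing" of success barbs must itself be expressed through input races rather than output races. I will first try to define the testers directly in \apiE{} form, using bound outputs only and wires from Lemma~\ref{l:wire} where a free name must be passed on. This lets the derivatives of $\ct P$ stay inside the class where the induction hypothesis applies. Finally, regularity of $P$ and $Q$ guarantees that every success barb observed comes from the context, which closes the argument.
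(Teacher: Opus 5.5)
Your outline shares the paper's skeleton. Both proofs stratify, with image-finiteness giving $\bigcap_n\approx_n=\approx$ as in Theorem~\ref{t:omega_strat}. Both argue by induction on $n$, build testers from the input-race $\oplus$ and withdrawable success barbs, and connect every free name outside $\delta$ to a fresh companion owned by the tester. The central step, however, is organised differently.

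You argue contrapositively. You build a distinguishing context for each inequivalent pair and combine with $\oplus$ the distinguishers for the finitely many weak derivatives $Q'_1,\dots,Q'_k$ of $Q$.

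The paper instead fixes one universal tester $R^n_A$ per $n$ and per finite set $A$ of names the observer may use. Its branches are $\overline{b_n}$, $\overline{b'_n}$, $\overline{c^\tau_n}\oplus R^{n-1}_A$, and, each guarded by its own withdrawable barb, a bound output on every O-name of $A$ and an input on every I-name of $A$, each followed by $R^{n-1}$ on the enlarged name set. It then proves directly that $(\nu\tilde x)(P\mid R^n_A)\wbb(\nu\tilde x)(Q\mid R^n_A)$ implies $P\approx_n Q$ whenever $(\tilde x,A)$ is complete for $\Delta,\delta$. Because the tester does not depend on the processes:
\begin{itemize}
\item the induction hypothesis applies to whatever derivative the matching reduction of $Q$ reaches;
\item image-finiteness is used only in Theorem~\ref{t:omega_strat};
\item the varying connection sets of clause~(3) are absorbed by the completeness invariant on $(\tilde x,A)$, with Lemma~\ref{l:notfreedelta} moving $Q$'s steps back to $\delta$.
\end{itemize}

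Your route yields explicit distinguishing contexts, but it leaves three obligations open.

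First, when the combined tester commits to the branch $C_j$ built against $Q'_j$, the other side may $\tau$-evolve. You then face $C_j[P']$ against $C_j[Q'']$ with $Q'_j\Longrightarrow Q''$, or a drifted $P''$ against $C_j[Q'_j]$. This is repaired by committing from each side in turn and using that $A\Longrightarrow A'\wbb B$ and $B\Longrightarrow B'\wbb A$ together imply $A\wbb B$.

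Second, you must prevent the opponent from delaying its handshake with the tester until a branch has been chosen; otherwise it can pick $Q'_k$ with $k\neq j$ against $C_j$. A persistent barb after the tester's input, as in your $b'(x).(\overline{\omega_1}\mid\cdots)$, does not prevent this. A withdrawable barb placed right after the input does.

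Third, your induction must fix the shape $(\nu\tilde x)([\cdot]\mid T)$ of the contexts and the connections they already assume. In the input case the candidates $Q'\mid\overline{a'}\langle c\rangle$ may be compared at $\delta$ extended with $a$ connected to $a'$, whereas input derivatives are compared at $\delta$.
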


\mysubsection{Proofs of the algebraic laws}
\label{ss:pal}
 By virtue of the Soundness Theorem~\ref{t:sound}, we can use  the proof techniques based
on internal bisimilarity to validate laws of Section~\ref{s:alaws}, notably 
the `wire substitution law' of Theorem~\ref{t:wire-subst-law}.  
The proofs for the two cases of the theorem (for O-names and I-names as the substituted
names, laws \eqref{e:oslaw} and \eqref{e:islaw})
are quite different.
The first, 
\eqref{e:oslaw},
is proved by  structural induction on processes  and using    algebraic
reasoning inherited from the $\pi$-calculus.
In contrast, 
the second, 
 \eqref{e:islaw},  makes use of the more advanced proof techniques
based on the transformation to internal processes: internal  bisimilarity, and up-to techniques for it.
The proof also uses induction on the order of the types of the names in the wires of the
law, as well as some strengthening results for $\wbT \delta  $ with respect to the
parameter $\delta$.
See
Appendix~\ref{a:laws}
for the details.

The proofs of the laws in Lemma~\ref{l:otherlaws} are simpler; they still use the
transformation to internal processes and internal bisimilarity.

\section{Expressiveness: Encoding of \texorpdfstring{\ALpi{}}{ALpi}}
\label{s:expr}

To show the expressiveness of  \api, we study the encoding
of the (simply-typed) Asynchronous Localised $\pi$-calculus (\ALpi),
a well-known subset of the Asynchronous $\pi$-calculus
\iflong
 in which only the output
capability of names may be exported, and 
\fi
still
sufficient to encode the  Asynchronous $\pi$-calculus \cite{MerroSangiorgi04,SangiorgiWalker01}.
The syntax of  \ALpi is as follows: 
\[ 
P \Coloneqq \nil \midd  P | Q\midd   \inp a x .{P}
\midd \out bc \midd \res a  P
\midd ! \inp a x .P
\]
with the 
\iflong
additional
\fi constraint that only the output capability of names may be
communicated. Thus, underneath an input 
$\inp a x .{P}$,  name $x$ may appear free in $P$ only within output
prefixes, and similarly for replicated inputs. 
So \ALpi is more liberal than \api in that
it allows multiple concurrent input occurrences of the same name
(which implies that there are also no constraints on the use of input
names underneath a replication); however it forbids 
communicating 
\iflong
the input capability of names.
\else
 input capabilities.
\fi

\begin{figure}
\begin{align*}
\encoL{\res a P } f  &\defeq
(\res{\pairR a, \pairR b, \pairR c})
( \encoL P {f,a\mapsto b} \mid \out c {\comp a, \comp b} \mid !\inp {\comp c} {a',b'}.
\inp {a'}x.\inp {b'}y.
 (\out c{a',b'} \mid
\out yx ) \\
\encoL{ \inp a y . P  } f  &\defeq \bout  {f_a} z: \inp z y. \encoL P f \\
\encoL{!  \inp a y . P  } f  & \defeq
\resBP d (\outC {\comp d} | ! d . (\outC {\comp d} | \encoL { \inp a y .P} f  )
\\
\encoL{ \out  a  b  } f  & \defeq \out a b
\qquad \encoL{ P|Q  } f  \defeq \encoL{ P  } f | \encoL{ Q  } f
\end{align*}

\caption{The encoding of \ALpi{} in \api{}}
\label{fig:fromALpi}
\end{figure}

We begin with the translation of terms given in Figure~\ref{fig:fromALpi}, leaving types aside for readability.
Here, $\comp a, b,\comp b, c, \comp c, d, \comp d$ are all supposed to be fresh names.
In the encoding, each name of \ALpi  is mapped onto two names of \api, one representing the input capability, the other the output capability.
For convenience, we assume that  each name $a$ of \ALpi is an O-name of \api.
In one place for readability we use a binary communication (polyadicity is
formally introduced in Section~\ref{sec:esyntax}); it could  be avoided using  the encoding
of polyadicity in name-passing calculi \cite{SangiorgiWalker01}, as the two names carried
have the same type. 
The encoding is parametrised on a (finite) function $f$, that tells us, for each name $a$
of \ALpi the I-name of \api that $a$ is mapped to.
We write $f_a$ for $f(a)$.

The key clause in the encoding is that  of a restriction $\res a P$. 
The name $a$ may appear several times in input in the \ALpi process $P$, possibly in   parallel. 
In the encoding, a small server for $a$ is set up,
\iflong 
 called the \emph{restriction server}, 
\fi
that
collects requests from processes that would like to perform inputs  at  $a$; such requests
use an auxiliary name, say $b$. The
\iflong
 restriction
\fi server receives a value emitted at $a$,
then listens for a request at $b$, and finally forwards the value collected along $a$ to
the appropriate process. 
Accordingly, the encoding of an input process at $a$  emits a request
for it at $ f_a$,
including 
a (linear) return  name at which a value for $a$ will be received.  

The encoding of a replication must take into account the constraint of \api  forbidding
free inputs in the body of the replication. 
The constraint is met because,  as mentioned, 
  a free input is
translated into a free output followed by a bound input.
This is  however not sufficient because
replications should be input-guarded. For this, an additional input (name $d$ in the
encoding below) is added, as a trigger for the body of the replication. 
In a behavioural equivalence that abstracts from $\tau$-steps (e.g.,
barbed congruence, similarity, or bisimilarity)
 the additional triggering is not observable. 
\iflong
\fi

\iflong
We now describe the translation of types.
In simply-typed \ALpi, the grammar of value types is:
\[  T \Coloneqq \Och \ty \mid \unittype   \]
and a typed  restriction is therefore 
$(\res {a: \Bch \ty }) P $, where  $ \Bch \ty$ is the type of a name that could be used
both in input and in output.
When translating such a process, types of the names $\comp a$, $\comp b$,  
and $\comp c$ that are thus introduced are, respectively, 
$\Ich \ty$, $ \Ich{\Och \ty}$, and $ \Ich{(\Ich \ty \times  \Ich{\Och
    \ty})}$ (where $T_1 \times T_2$ indicates a pair of types); and those of the companion names $a,b$ and $c$ are
their dual.
Finally, the name $d$ introduced in the encoding of replication
has type $\Ich \unittype $.

The correctness of the encoding is studied in Appendix~\ref{a:expr}. 
\else
The encoding of types, and  correctness, are discussed
in Appendix~\ref{a:expr}.
\fi
In the light of the  correspondence between \api and 
Asynchronous
$\pi$-calculus (\Api), in Section~\ref{ss:trans},
  we first  show that the  transformation applied in the
encoding
 is valid within \Api, exploiting the
theory of \Api. 
The operational correspondence between an \ALpi process $P$ and its
encoding
$ \encoL P  \emptyset$ in \api can be formalised 
by means of  similarity equivalence $\wse$ 
(i.e., similarity in both directions)
 on closed processes. 
As the encoding may `split' an input into two actions, in
general similarity equivalence cannot be replaced by bisimilarity. 
 We recall that
 on closed processes
the only free names are the success names, which are not modified in
the translation.   
\begin{lemma}
\iflong 
  If \( P \) is a \ALpi{} process, then \( \encoL P \emptyset \) is a \api{} process.
\else
  If \( P \) is in  \ALpi, then \( \encoL P \emptyset \) is  in  \api.
\fi
\end{lemma}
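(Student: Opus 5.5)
We establish the stronger, typed statement: if \( \Gamma \) is an \ALpi{} typing for \( P \) under which every name of \( P \) is an O-name of \api{}, and \( f \) maps the free names of \( P \) that are used in input in \( P \) to pairwise distinct fresh I-names, then \( \Delta_{\Gamma,f} \vdash \encoL P f \). Here \( \Delta_{\Gamma,f} \) consists of \( \Gamma \), with each \ALpi{} type \( \Och \ty \) read as the \api{} type \( \och \ty \), together with the bindings \( f_a : \och{\och{\ty_a}} \). Each \( f_a \) is an O-name whose companion I-name is bound by the server in the encoding of \( \res a \). The claim for \( \encoL P \emptyset \) is the special case where \( P \) has no free inputs, which holds for closed processes; for general \( P \) we take \( f \) defined on its free input names. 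The proof is by structural induction on \( P \).

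The cases \( \nil \), output and parallel composition are immediate. \( \encoL{\out a b} f = \out a b \) types by \textsc{Out}, because \( a \) and \( b \) are O-names. For \( P | Q \), the environments for the two components contain only O-names, so the side condition of \textsc{Par} on disjoint I-names holds trivially. For an input \( \inp a y . P \), the encoding \( \bout{f_a} z : \inp z y . \encoL P f \) binds a fresh pair \( z \usepRES \comp z \) with \( z : \ich{\ty_a} \). The output \( \out{f_a}{\comp z} \) types by \textsc{Out}. The process \( \inp z y . \encoL P f \) types by \textsc{In} from the induction hypothesis. The only I-name it owns is \( z \), and the parallel output owns none, so \textsc{Par} applies.

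The replication case is where the 1-input property is genuinely at stake, since \textsc{RIn} requires that the body contain no free I-name. First we observe, again by induction, that \( \encoL P f \) has no free I-names at all: inputs at \ALpi{} names become outputs on \( f_a \) followed by inputs on freshly bound names, and all other I-names introduced by the encoding are bound by restrictions. Hence the body \( \outC{\comp d} | \encoL{\inp a y.P} f \) is typed in an environment consisting only of O-names, and \( ! d . (\cdots) \) types by \textsc{RIn} with head I-name \( d \). The outer \( \outC{\comp d} \) and the restriction on \( d \usepRES \comp d \) then type by \textsc{Out}, \textsc{Par} and \textsc{Res}.

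The main obstacle is the restriction case, namely checking the types in the server \( ! \inp{\comp c}{a',b'} . \inp{a'} x . \inp{b'} y . (\out c{a',b'} | \out y x) \). Here the input capabilities \( a' \) and \( b' \) are received and then used in nested inputs inside a replication. This is allowed by \textsc{RIn}, because I-names bound as formal parameters are exempt from the restriction on the body. Using the types given above, with \( \comp a : \ich \ty \), \( \comp b : \ich{\och \ty} \) and \( \comp c : \ich{(\ich\ty \times \ich{\och\ty})} \), the continuation of \( \inp{a'} x . \inp{b'} y \) is typed as follows. It re-emits \( a' \) and \( b' \) on \( c \) (free output of input capabilities, which is legal), and it outputs \( x : \ty \) on \( y : \och \ty \). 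The parallel output \( \out c{\comp a,\comp b} \) gives away the only copies of \( \comp a \) and \( \comp b \), so \textsc{Par} sees disjoint I-names. Finally, \( \encoL P{f,a\mapsto b} \) is typed by the induction hypothesis with \( b : \och{\och \ty} \) and \( a : \och \ty \). After the three applications of \textsc{Res}, none of these names remains free in the environment.
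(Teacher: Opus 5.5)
Your proof is correct and is essentially the check the paper leaves implicit: the paper gives no written argument beyond the type assignment in Appendix~\ref{a:expr} ($\comp a : \Ich{\ty}$, $\comp b : \Ich{\Och{\ty}}$, $\comp c : \Ich{(\Ich{\ty} \times \Ich{\Och{\ty}})}$, $d : \Ich{\unittype}$), and your structural induction verifies exactly those types, with the key invariant that $\encoL P f$ has no free I-names so that \textsc{Par} and \textsc{RIn} apply. Two small fixes are needed: the $f_a$ are O-names (as you in fact type them), not I-names as your opening sentence says; and your invariant should also admit unit-typed variables received on unit-carrying channels, which may occur free in replication bodies and are harmless once \textsc{RIn} is read, as in Section~\ref{sec:esyntax}, as allowing copyable names in the body.
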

\begin{theorem}
\label{t:expr_pi}
\iflong 
If $P$ is a closed \ALpi process, then
$ \encoL P \emptyset  \wse P$. 
\else
If $P$ is in  \ALpi and closed, then
$ \encoL P \emptyset  \wse P$. 
\fi
\end{theorem}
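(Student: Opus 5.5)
The plan is to go through the Asynchronous $\pi$-calculus, as the text before the statement suggests. Both $P$ and $\encoL P \emptyset$ can be read as \Api{} processes: $P$ directly, since \ALpi{} is a subcalculus of \Api; and $\encoL P \emptyset$ through the encoding $\encoA{\compP{\encoL P \emptyset}{\emptyset}}$ of Section~\ref{ss:trans}, which identifies each connected pair $\pairR a$ with a single name. The strong bisimilarity between a composite \api{} process and its \Api{} image (Appendix~\ref{aa:trans}) preserves reductions and barbs. Since $P$ is closed, only success names are free, so similarity on closed processes may be computed on the \Api{} images. It therefore suffices to show that $Q \defi \encoA{\compP{\encoL P \emptyset}{\emptyset}}$ and $P$ simulate each other in \Api.

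To do this I would prove, by structural induction on $P$ and within \Api, a slightly generalised statement for open terms. Each clause of the encoding is a local transformation: (i) inputs become a request plus a bound input; (ii) replication gets a trigger $d$; (iii) restriction installs the server on $c$. For (ii), the trigger is harmless in \Api{} up to expansion. Indeed $\res d(\outC d \mid !d.(\outC d \mid R))$ is weakly bisimilar to $!R'$ where $R'$ is the encoded input $\inp a y.P$ after its request has been issued. This follows from standard replication laws and up-to-expansion techniques imported from \Api.

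For (iii), I would define a relation $\S$ pairing $\res a P'$ with states of the encoding. In such a state:
\begin{itemize}
\item the server holds some received outputs $\out a v$ in its queue;
\item pending requests $\out b{z}$ correspond to encoded inputs $\inp a y.P''$ that have already fired their request but not yet been served.
\end{itemize}
The key observation is that, after the \Api{} identification, the server is a buffer. It takes a message at $a$, then a request at $b$, and delivers the message to the requester. Every reduction of the encoding is then either administrative or mirrors a communication at $a$ in $P$.

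I would show that $\S$, closed under the inductive hypothesis and under contexts, is a simulation in both directions, up to expansion. In the direction where $P$ simulates the encoding, administrative steps are matched by no step. The buffer's intermediate states (a value received at $a$ but not yet delivered, or a request pending) are related to the source state in which the output $\out a v$ is still present and the input is still waiting. In the other direction, each source communication at $a$ is matched by the weak sequence: request, receive at $a$, receive request, forward.

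The main obstacle is the intermediate states, which is exactly why only similarity is claimed. Once the server has taken $\out a v$ and committed to a requester, the source can still route that $v$ to a different input at $a$, whereas the encoding cannot. So simulation of $P$ by the encoding must be stated from states without commitments. Conversely, the encoding's commitments are always matched by the source doing the same communication later. Care is also needed with the asynchronous reordering of values in the buffer. It is handled because the server re-emits $\out c{a',b'}$ and serves the next request, so every interleaving of the source is reachable. I would first try to present $\S$ with an explicit multiset of committed pairs.
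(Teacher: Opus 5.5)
Your plan is essentially the paper's: pass to the asynchronous $\pi$-calculus via the strong bisimilarity of Lemma~\ref{l:api_pi}, remove the replication trigger $d$ with a replication law, and prove the two simulations for the restriction server separately; as you say, the server's partial commitments are exactly why only $\wse$ is claimed. The paper lightens the bookkeeping by first simplifying the encoding inside the asynchronous $\pi$-calculus (dropping $d$ and the resulting $\tau$, and replacing the name-carrying token on $c$ by a plain trigger), and then proving the one-restriction Lemma~\ref{l:aux_pi} with only three shapes of pairs, where expansion absorbs the private deliveries to requesters; also, the single token on $c$ means the server holds at most one uncommitted value at a time, so there is no queue to track.
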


Since \Api{} can be encoded into \ALpi{}~\cite[Section 5.6]{SangiorgiWalker01}, combining the above result with this encoding  yields an encoding from \Api{} to \wapi{}.
Furthermore, as the synchronous \( \pi \)-calculus (without matching and mixed choice) can be encoded into \Api{}~\cite[Section 5.5]{SangiorgiWalker01}, closed processes of the ordinary  \( \pi\)-calculus can also be encoded into \wapi{}.

\section{Extended Calculus}
\label{sec:esyntax}
We extend the core calculus introduced in Section~\ref{sec:syntax} with tuple types, sum types and linear types.
This is done because adding these types allows us to construct categories with richer structure in the subsequent sections.
The extension is  standard~\cite{SangiorgiWalker01}.
Algebraic laws and proof techniques from the previous sections such as Sections~\ref{s:alaws} and~\ref{s:ls} also hold for this calculus.

The grammar of the processes now is
\begin{align*}
  \text{(Process)} \quad  P
  &\Coloneqq \nil \midd  P | Q\midd   \inp a x .{P}  \midd \out bv \midd \resB* a b  \ty P \midd ! \inp a x .P \\
  &\ \midd \pmTuple{x_1, \ldots, x_n} \val P \midd \binCase \val {x_1} {P_1} {x_2} {P_2} \\
  \text{(Values)} \quad \val &\Coloneqq x \mid \baseval \mid \tuple{\val_1, \ldots, \val_n} \mid \inl \val \mid \inr \val
\end{align*}

Constructs on the second line are the new ones.
The new construct  \( \pmTuple{x_1, \ldots, x_n} \val P \) is a destructor of a tuple.
We use tuples rather than pairs to keep our syntax close to that of the polyadic \( \pi \)-calculus; we write \( \inp a {x_1, \ldots, x_n} . P \) as an abbreviation of \( \inp a x . \pmTuple {x_1, \ldots, x_n} x P \).
The other new construct \( \binCase \val {x_1} {P_1} {x_2} {P_2} \) is
the pattern matching construct for sum types; here the name \( x_i \) is bound with scope \( P_i \).
The dynamics of these constructs are defined as usual (see
Appendix~\ref{app:esyntax}
).

The definition of values should be self-explanatory.
Points to note are, in the above grammar, \( x \) ranges over the set of names and \( \baseval \) ranges over basic values of the base types of our calculus, such as integer constants.\footnote{We may easily extend the calculus with \emph{value expressions}~\cite{SangiorgiWalker01} such as the successor operation \( \mathsf{succ}(n) \). Here, we shall not do so to keep the exposition simple.}

We now explain the type system, which is a variant of the linear type system~\cite{KobayashiPT99}.
For technical convenience, we classify types into two kinds~--- value type and channel type~--- which are mutually recursively defined.
Its grammar is given as follows:
\begin{align*}
  \text{(Channel type)} \quad \ty, \tyTwo &\Coloneqq \ich{\valty} \mid \och{\valty} \mid \lich{\valty} \mid \loch{\valty} \\
  \text{(Value type)} \quad \valty, \valtyTwo &\Coloneqq \btype \mid \ty \mid \valty_1 \times \cdots \times \valty_n \mid \valty_1 + \valty_2
\end{align*}

Here \( \btype \) is a metavariable over basic types such as the unit type used in Section~\ref{sec:syntax}.
The types \( \lich{\valty} \) and \( \loch{\valty} \) are types for linear (or \emph{affine} to be precise) input and output channels, respecively.

We briefly explain the modifications to the typing rules that are needed to incorporate linear types.
(The treatment of product and sum types should be evident.)
The rules \rn{Out} and \rn{In} in Figure~\ref{fig:typing} are modified as
\begin{gather*}
\textsc{Out}\inferrule{\ty \in \{ \och{\valty}, \loch{\valty} \} \quad \tyenv \vdash_v \val : \valty }{\tyenv, a : \ty \vdash \out a \val}\qquad
\textsc{In}\inferrule{\tyenv, a : \ich {\valty}, \seqq b : \valty \vdash P}{\tyenv, a : \ich {\valty} \vdash \inp a {\seqq b}. P} \qquad
\textsc{InL}\inferrule{\tyenv, b : \valty \vdash P}{\tyenv, a : \lich {\valty} \vdash \inp a b. P}
\end{gather*}
with a suitable notion of value typing \( \tyenv \vdash_v \val  : \valty \).
In \rn{InL}, the name \( a \) cannot appear free in \( P \).
The rule \rn{Par} is the same as before, except that \( \tyenv_1, \tyenv_2 \) should now be read as the combination of \( \tyenv_1 \) and \( \tyenv_2 \) that takes linearity into account; for example if \( a : \lich{\valty} \in \tyenv_1 \cap \tyenv_2 \) then \( \tyenv_1, \tyenv_2 \) should be undefined.
We also need to take care of the distinction between copyable and linear types in the rules.
For example, the rule \rn{RIn} is now
\[
  \textsc{RIn}\inferrule{\tyenv\ \text{copyable} \quad  \quad \iname{\tyenv} = \emptyset \quad \tyenv, \seqq b : \seqq \ty \vdash P}{\tyenv, a : \ich {\seqq \ty} \vdash  !\inp a {\seqq b}. P}
\]
The difference compared to the previous rule is that \( \tyenv \) may contain names that are not output channels, but are names with a copyable type such as a pair of integers.
See
Appendix~\ref{app:esyntax}
for the full list of the typing rules.

A type environment \( \tyenv \) is \emph{value closed} if \( \tyenv(a) \) is a channel type for all \( a \) in \( \tyenv \).
Similarly, a process is \emph{value closed} if it is typed under some value closed type environment.
In the rest of this paper, we shall use the word process to indicate value closed processes unless we explicitly say that there is a free name with a value type.

\renewcommand*{\category}[1]{\mathbf{#1}}
\newcommand*{\catC}{\category{C}}
\newcommand*{\catL}{\category{L}}
\newcommand*{\id}{\mathit{id}}
\newcommand*{\der}{\mathsf{der}}
\newcommand*{\prom}[1]{{#1}^\dagger}
\newcommand*{\emptyenv}{\bullet}
\newcommand*{\Proc}{\mathbf{Proc}}
\newcommand*{\ProcN}{\mathbf{Proc}^{-}}
\newcommand*{\ProcS}{\mathbf{Proc}_s}
\newcommand*{\ProcSL}{\mathbf{Proc}^l_s}
\newcommand*{\env}{\Gamma}
\newcommand*{\envTwo}{\Delta}
\newcommand*{\negenv}{\Gamma_{-}}
\newcommand*{\negenvTwo}{\Delta_{-}}
\newcommand*{\barnegenv}{\bar{\Gamma}_{-}}
\newcommand*{\barnegenvTwo}{\bar{\Delta}_{-}}
\newcommand*{\op}{\mathrm{op}}
\newcommand{\evm}{\mathrm{ev}}
\newcommand{\with}{\mathop{\&}}
\newcommand{\intr}[1]{\llbracket #1 \rrbracket}

\section{Some Categories of Processes}
\label{sec:category}

In this section, we show that, as claimed in the introduction, typed
processes of the extended calculus can be organised into a relative
Seely category. This will be done in two
steps. First, we will describe an ambient category $\Proc$ whose
objects are types (or type environments), and morphisms are processes~---
as in concurrent games~\cite{CastellanCRW17,CastellanCW19}, this category will turn out to be compact
closed, but lacks adequate structure to interpret programming languages.
To cope with this,
we then construct a subcategory of so-called \emph{negative} processes,
importing the familiar game semantics condition that Opponent should
always start. We show that negative processes can be organised as a
\emph{relative Seely category}, a weakening of Seely
categories~\cite{Clairambault24} sufficient to ensure that the Kleisli
category is cartesian closed, and thus a reasonable target for the
interpretation of higher-order programming languages.
In addition, we show that negative processes also have the sequoidal structure~\cite{Laird02}, which is used in game semantics to model stateful computation.

\subsection{A Compact Closed Category of Processes}

The objects of $\Proc$ are finite sequences of types, with the empty
sequence written $\emptyenv$. We use $\bar{\env}, \bar{\envTwo}$ as
metavariables for those; if $\env$ is a type environment we use
$\bar{\env}$ for the underlying sequence of types. A morphism from \(
\bar{\env} \) to \( \bar{\envTwo} \) is a (an equivalence class
of) triples \( (\env, \envTwo, P) \) so that $\env^\perp, \envTwo
\vdash P$, with names chosen so that $\n{\env} \cap \n{\envTwo} =
\emptyset$, and subject to the equivalence relation
$(\env, \envTwo, P) \sim (\env', \envTwo', P')$ when \( \bar{\env} = \bar{\env'} \), \( \bar{\envTwo} = \bar{\envTwo'} \), and moreover,
\[
  P \wbcTT {\env^\perp, \envTwo} {} P'\sub{\seq a}{\seq a'} \sub{\seq b}{\seq b'}
\]
where \( \seq a \) (resp.~\( \seq {a'} \)) are the names appearing in \(
\env \) (resp.~\( \env' \)) and \( \seq b \) (resp.~\( \seq{b'} \)) are
names appearing in \( \envTwo \) (resp.~\( \envTwo' \)).
In short, morphisms are processes modulo typed (weak) barbed congruence
and renaming. We write \( P \colon \bar{\env} \to \bar{\envTwo} \) for the
equivalence class \( [(\env, \envTwo, P)]_{\sim} \).

The identity over \( \ty_1, \ldots, \ty_n \) is the parallel composition
\begin{align*}
  b_1 : \ty_1^\perp, \ldots, b_n : \ty_n^\perp,  a_1 : \ty_1, \ldots, a_n : \ty_n \vdash \ulinks {a_1} {b_1} \mid \cdots \mid \ulinks {a_n} {b_n}.
\end{align*}
of wire processes.
The composition of \( P \colon \bar{\env} \to a_1 : \ty_1, \ldots, a_1 :
\ty_n \) to \(Q \colon b: \ty_1, \ldots, b_n \colon \ty_n \to
\bar{\envTwo} \) is (the equivalence class of):
\begin{align*}
  (\resBU {a_1} {b_1} \cdots \resBU {a_n} {b_n})(P \mid Q).
\end{align*}

\begin{remark}
In the definition of composition, we have slightly abused notations by
specifying the names $a_i$ and $b_i$ in $P$ and $Q$.
We are also (implicitly) assuming that \( P \) and \( Q \) do not share common names to avoid name clashes.
This is harmless
because morphisms are defined up to renaming; we shall make similar
notational shortcuts in the sequel.
\end{remark}

\begin{theorem}
  \label{thm:proc-is-category}
  \( \Proc \) is indeed a category.
\end{theorem}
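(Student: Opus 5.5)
We must check four things: composition is well defined on equivalence classes, identity morphisms are typable and are units, composition is associative, and everything respects the renaming convention.

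\emph{Well-definedness and typing.} Let $P\colon\bar\env\to\bar\Theta$ and $Q\colon\bar\Theta\to\bar\envTwo$, with $\env^\perp,\Theta\vdash P$ and $\Theta'^\perp,\envTwo\vdash Q$. The names of $\Theta$ and $\Theta'$ are disjoint, and their sequences of types coincide. For each connected pair the I-name lies on exactly one side, since $\ty$ and $\ty^\perp$ have opposite polarity. So the I-names of the two typings are disjoint, and rule \textsc{Par} followed by $n$ applications of \textsc{Res} gives $\env^\perp,\envTwo\vdash(\resBU{a_1}{b_1}\cdots\resBU{a_n}{b_n})(P\mid Q)$.

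Independence of representatives has two parts. Invariance under renaming holds because barbed congruence is closed under injective renaming of free names. For invariance under barbed congruence, suppose $P\wbcTT{\env^\perp,\Theta}{}P'$. Any context closing for $\env^\perp,\envTwo$ around the composite, once extended by $\res{\tilde a\usepRES\tilde b}([\cdot]\mid Q)$, is a context closing for $\env^\perp,\Theta$ around $P$. The typing side-conditions hold by the compatibility just noted. Hence the two composites are barbed congruent, and the argument for $Q$ is symmetric.

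\emph{Identity laws.} Fix $P\colon\bar\env\to\bar\Theta$ and compose with the identity on $\bar\Theta$. The composite is $\res{\tilde a\usepRES\tilde b}(P\mid \prod_i\ulinks{c_i}{b_i})$, where the $c_i$ are fresh names of the codomain. Apply Theorem~\ref{t:wire-subst-law} once per component $i$, inside the remaining restrictions and parallel components. This is allowed because barbed congruence is a congruence for these operators: the argument is the same context-closure argument as above, or alternatively Lemmas~\ref{l:res} and~\ref{l:par} after passing to internal processes via Lemma~\ref{l:wire_trans}. Each step turns the component into $P\sub{c_i}{a_i}$.

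The side condition ``$\comp b$ fresh for $P$'' holds. Take $b:=a_i$, so that its companion $\comp b=b_i$ is fresh for $P$, and take $a:=c_i$. Component $i$ then matches the theorem's left-hand side $\resBP b(\ulinks{\comp b}a\mid P)$. The polarity bookkeeping, that is, which of $\links{}{}$ or its dual is meant, is exactly what $\ulinks{}{}$ absorbs. The result is a renaming of $P$, hence equal to $P$ in $\Proc$. The left identity law is symmetric.

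\emph{Associativity.} The two bracketings are $\res{\tilde x}(\res{\tilde y}(P\mid Q)\mid R)$ and $\res{\tilde y}(P\mid\res{\tilde x}(Q\mid R))$. Their restricted names are disjoint from the free names of the parallel component outside each restriction. So standard scope extrusion and associativity of parallel composition, both parts of structural congruence, identify the two processes up to $\equiv$, and $\equiv$ is contained in barbed congruence.

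\emph{Main obstacle.} The substantive content is the identity law, and it rests entirely on Theorem~\ref{t:wire-subst-law}. The only real care needed is to check, for each polarity of each $\ty_i$, that the freshness hypothesis holds and the wire is oriented correctly. A secondary point is that barbed congruence must be a congruence under the restricted contexts used, including the $\delta$-free setting. This follows directly from Definition~\ref{d:dbc}, since composing closing contexts yields closing contexts, provided one checks that the typing side-conditions are preserved.
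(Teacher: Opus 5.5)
Your proposal is correct and follows the paper's proof. Associativity comes from structural congruence (scope extrusion plus associativity of parallel composition), and the identity laws come from the wire substitution law (Theorem~\ref{t:wire-subst-law}), applied componentwise with the freshness condition discharged exactly as you describe. Your explicit checks that the composite is typable and that composition respects barbed congruence and renaming, by composing closing contexts, are details the paper leaves implicit, and they are sound.
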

\begin{proof}
  The associativity of composition holds because
  \begin{align*}
    \res{\seq b} (\res {\seq a}(P \mid Q) \mid R) \equiv \res{\seq a}( P \mid \res{\seq b}(Q \mid R))
  \end{align*}
  assuming \( \seq a \cap \fn R = \emptyset \) and \( \seq b \cap \fn P = \emptyset \) (which can be assumed without loss of generality).

  Wires are indeed identities because, by Theorem~\ref{t:wire-subst-law}, we have
  \begin{equation*}
    \resBP {b_1} \cdots \resBP {b_n}(P \mid \ulinks {b_1} {a_1} \mid \cdots \mid \ulinks {b_n} {a_n} ) \wbc P \sub{\seq a}{\seq b}. \tag*{\popQED}
  \end{equation*}
\end{proof}

The category \( \Proc \) has a symmetric (strict) monoidal structure.
The tensor product \( \bar{\env} \otimes \bar{\envTwo} \) is simply the
concatenation of the two sequences, i.e.~\( \bar{\env}, \bar{\envTwo} \).
The monoidal unit $1$ is the empty sequence $\emptyenv$.
We define the tensor of morphisms \( P \colon \bar{\env} \to
\bar{\envTwo} \) and \( Q \colon \bar{\env'} \to \bar{\envTwo'} \) by
\( P \mid Q \colon \bar{\env}, \bar{\env'} \to \bar{\envTwo},
\bar{\envTwo'} \) (names chosen so as to avoid collisions).
As this is a strict monoidal category, the structural morphisms of the monoidal category structure are simply identities.
The symmetries are %
 the parallel composition of wires performing the
adequate swap. Finally, for any object $\bar{\env}$ we have \(
\eta_{\bar{\env}} : \emptyenv \to \bar{\env} \otimes
\bar{\env}^\perp \) and \( \epsilon_{\bar{\env}} : \bar{\env}^\perp \otimes \bar{\env} \to
\emptyenv \) composed of the obvious parallel compositions of wires.

Altogether, this forms the data of:
\begin{restatable}{theorem}{ProcisCompactClosed}
\label{thm:proc-is-compact-closed}
$\Proc$ is a compact closed category.
\end{restatable}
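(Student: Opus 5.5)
We need to check three things: that $\otimes$ is a symmetric monoidal functor on $\Proc$, that symmetries are natural and satisfy the coherence axioms, and that $\eta_{\bar\env}$, $\epsilon_{\bar\env}$ satisfy the two zig-zag (snake) identities. Strictness of associativity and unit is immediate, since $\otimes$ is concatenation of sequences and parallel composition is associative and commutative up to $\equiv$. The one delicate point is that every equation is between equivalence classes modulo $\wbcTT{}{}$ and renaming, so each step must either be a structural congruence or an instance of an earlier law, used in a context. The closure of barbed congruence under contexts holds by definition (Definition~\ref{d:dbc}), as long as the typings are compatible.

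\emph{Bifunctoriality of $\otimes$.} First, $\otimes$ is well defined on classes: if $P\wbcTT{}{}P'$ then $P\mid Q\wbcTT{}{}P'\mid Q$, because a context $C[\cdot\mid Q]$ is closing whenever $C$ is. Second, $\otimes$ preserves composition: $(P\mid P');(Q\mid Q')$ is $\res{\seq a}\res{\seq a'}(P\mid P'\mid Q\mid Q')$. Since the names of $P'$ and $Q$ are disjoint, scope extrusion gives $\equiv$ to $\res{\seq a}(P\mid Q)\mid\res{\seq a'}(P'\mid Q')$. Third, $\mathrm{id}\otimes\mathrm{id}$ is by definition a parallel composition of wires, which is again an identity.

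\emph{Symmetry and coherence.} The symmetry $\sigma_{\bar\env,\bar\envTwo}$ is a family of wires that swaps the two blocks. For naturality, compose $\sigma$ with $P\mid Q$ on either side and apply Theorem~\ref{t:wire-subst-law} to each wire; this reduces both sides to $P\mid Q$ with suitably renamed free names, which are equal as morphisms because morphisms are taken up to renaming. The hexagon and the identity $\sigma\circ\sigma=\mathrm{id}$ are proved the same way: each composite of wires collapses, again by the wire substitution law, to a single wire per name, and so to the appropriate identity or symmetry.

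\emph{Zig-zag identities.} This is the main step. Consider $(\mathrm{id}_{\bar\env}\otimes\epsilon_{\bar\env})\circ(\eta_{\bar\env}\otimes\mathrm{id}_{\bar\env})$. Written out for one type $\ty_i$ at a time, it is a term $\resBP{x}\resBP{y}(\ulinks{x}{a}\mid\ulinks{\comp x}{\comp y}\mid\ulinks{y}{b})$, with the I/O polarities fixed by $\ty_i$. Two applications of Theorem~\ref{t:wire-subst-law} turn this into a single wire $\ulinks{a}{b}$, which is the identity; the other zig-zag is symmetric. The care required is in the freshness side conditions of the theorem: in the I-name case we need $\comp b$ fresh for $P$, and we must ensure the companion of the I-name being substituted does not also occur in the remaining wire. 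We will discharge this by choosing, in each step, which wire plays the role of $\ulinks{\comp b}{a}$ and which plays the role of $P$, so that the substituted bound name occurs in $P$ exactly once and its companion does not occur in $P$. Because $\epsilon$ and $\eta$ are presented as families of independent wires, this bookkeeping only has to be done for a single type, and the general case follows by tensoring using bifunctoriality.
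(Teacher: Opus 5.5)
Your proposal is correct and follows essentially the paper's route: strict monoidality and the interchange law come from structural congruence, naturality of the symmetry comes from Theorem~\ref{t:wire-subst-law}, and coherence and the snake identities come from collapsing chains of wires (which the paper writes as expansion steps $\contr$, i.e.\ ``transitivity of wires''). One small slip: for a single type, the snake composite actually has four wires and three restrictions (the two identity wires plus $\eta$ and $\epsilon$), so it needs three collapses rather than two; and since each bound name then occurs in exactly one wire of the chain, the freshness side conditions you were worried about hold automatically.
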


A compact closed category is a degenerated model of Multiplicative
Linear Logic, where the tensor and par coincide. For semantic
purposes it is rather weak; however readers familiar with game
semantics may recognise familiar ground: basic categories of
unpolarised games tend to have this compact closed structure
\cite{joyal1977remarques,CastellanCRW17}. Also in line with game
semantics, $\Proc$ however lacks all of the structure needed to
interpret richer languages: for a start, it has neither products nor
coproducts, for the same reason as in game semantics
\cite{DBLP:journals/entcs/Mellies05}. In the nullary case, it is easy
to see that it does not have a terminal object: for every candidate
$\top$, a process \( P : \bar{\env} \to \top \)
can ignore $\top$ and perform arbitrary actions in $\bar{\env}$. The
usual game semantics answer to that is to force $P$ to first listen to
the right hand side via a condition called \emph{negativity} amounting
to always letting Opponent start~--- if $\top$ is empty, this forces $P$
to be the nil process. As the design principle of our process language
is that it lets us mimic game semantics, we follow this route here as well.

\subsection{The Category of Negative Processes}

We now aim to enforce the condition that `Opponent always starts'.

On objects, we first say that a type is \emph{negative} if it is of the
form $\ich{\valty}$ or $\lich{\valty}$. We say that a sequence of types
$\bar{\env} = \ty_1, \dots, \ty_n$ (or type environment) is
\emph{negative} iff $\ty_1, \dots, \ty_n$ are all negative. We use
$\barnegenv, \barnegenvTwo$ to range over sequences of negative
types, and $\negenv, \negenvTwo$ to range over negative type
environments. On morphisms, we set:
\begin{definition}\label{def:negproc}
  A process \( P \colon \! \negenv \to \negenvTwo \) is \emph{negative} if
\(   P \wbc^{\negenv^\perp, \negenvTwo}  Q \) for some \( Q \) such that  \( \emptyset \vDash Q \arr
\act  Q' \) implies \(  \act \) is an input action. %
 \end{definition}
 That is, \( P \) must be triggered first by an input action at a name in \( \negenvTwo \).
 Note that we may choose a \( P \) that is receptive by Corollary~\ref{c:ur}, so any input actions using the name in \( \negenvTwo \) is possible.
 We get:

\begin{theorem}
There is a sub symmetric monoidal category $\ProcN$ of $\Proc$ with
negative type sequences as objects, and negative processes (up to
equivalence) as morphisms.

Additionally, $\ProcN$ admits $\emptyenv$ as a terminal object, also
written $\top$.
\end{theorem}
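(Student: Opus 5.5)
To prove the statement I would check four things in turn: identities are negative, negative processes are closed under composition and tensor, the symmetries are negative, and $\emptyenv$ is terminal. The equivalence class of a negative process consists of negative processes, since Definition~\ref{def:negproc} is stated up to $\wbc$. So it suffices to work with representatives $Q$ whose ground transitions ($\delta=\emptyset$) are all inputs.

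\textbf{Identities and symmetries.} For negative $\negenv$, the identity is $\ulinks{a_1}{b_1}\mid\cdots\mid\ulinks{a_n}{b_n}$, typed with the $a_i$ of negative type in $\negenvTwo$ and the $b_i$ of dual (output) type in $\negenv^\perp$. Each component is therefore oriented as $!\inp{a_i}x.\out{b_i}x$. A replicated input can only fire an input action, so the identity is syntactically negative. Symmetries are parallel compositions of such wires, so the same argument applies.

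\textbf{Tensor.} If $P$ and $Q$ are negative representatives, then every ground transition of $P\mid Q$ is derived by \trans{ParL}/\trans{ParR} from a transition of one side. With $\delta=\emptyset$, \trans{Com} and \trans{Close} cannot fire. Hence $P\mid Q$ only performs inputs. By Lemma~\ref{l:par}, replacing each factor by its negative representative is sound.

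\textbf{Composition (the main step).} Take $P:\negenv\to\barnegenvTwo$ and $Q:\barnegenvTwo\to\bar\Theta_-$, both negative representatives, and form $R=\res{\seq a\sepRES\seq b}(P\mid Q)$. Here $\seq a$ are the negative (input) names of $P$, and $\seq b$ are the output names of $Q$, dual to $\barnegenvTwo$. I need every ground transition of $R$ to be an input. The rules \trans{Res} and \trans{ResExt} only let an action of $P\mid Q$ through, or produce $\tau$ via \trans{Com}/\trans{Close} at some $a_i\sepRES b_i$.
\begin{itemize}
\item A visible action of $P$ is an input, and it is at a name in $\seq a$, since $P$'s free names of $\negenv^\perp$ are outputs. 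It is therefore blocked by the side condition of \trans{Res}.
\item A visible action of $Q$ is an input at a name of $\bar\Theta_-$, which is fine.
\item A $\tau$ needs an output of $Q$ at some $b_i$, but $Q$ has no initial outputs.
\end{itemize}
So $R$ initially only inputs. The obstacle is that the definition quantifies only over \emph{initial} transitions, so this argument is sufficient. I would still double-check the intended reading in the definition ($\emptyset\vDash Q\arr\act Q'$ as a one-step condition) and confirm that the composite taken via representatives stays in the same $\wbc$ class. That follows from congruence of $\wbc$ under parallel and restriction, which holds since barbed congruence is contextual.

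\textbf{Terminality.} For any $\negenv$, a morphism $P:\negenv\to\emptyenv$ has typing $\negenv^\perp\vdash P$, which contains only output names. A negative representative therefore cannot perform any action, neither outputs nor inputs, and has no $\tau$ either since it is closed under $\delta=\emptyset$. Such a process is barbed congruent to $\nil$: any closing context sees $P$ as inert, and this can be shown via an internal bisimulation with $\nil$ using Theorem~\ref{t:sound}, after passing to $\transf{-}$ by Lemma~\ref{l:wire_trans}. Conversely, $\nil$ is trivially negative. Hence $\nil$ is the unique morphism into $\emptyenv$, which is terminal.
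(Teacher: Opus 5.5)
Your proposal is correct and follows essentially the same route as the paper's argument. Identities and symmetries are input-guarded wires. Composing negative representatives creates no initial interaction, because $P$'s inputs are hidden and $Q$ has no initial outputs. The monoidal structure is inherited, and a negative process into $\emptyenv$ can perform no transition, so it is barbed congruent to $\nil$.

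Two small remarks. For the tensor case, the fact you need is only that $\wbc$ is preserved by parallel composition, which follows from contextuality; Lemma~\ref{l:par} is not the right citation, since it concerns internal bisimilarity on \apiE. For terminality, the detour through $\transf{-}$ and internal bisimilarity is unnecessary: a representative with no ground transitions is strongly bisimilar to $\nil$ as a composite process, so Corollary~\ref{c:congS} gives the result directly.
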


Any identity process is negative. Moreover, negative
processes compose because if \( P \) and \( Q\) are negative, \( \res
{\seq x} (P \mid Q)\) cannot cause any interaction between \( P \) and
\( Q \). Moreover $\ProcN$ inherits from $\Proc$ its
symmetric (strict) monoidal structure. And $\emptyenv$ is
terminal: indeed, if $Q \colon \negenv \to \emptyenv$ satisfies the
condition of Definition \ref{def:negproc}, then it can only perform
negative actions, but none are available.
So it must be barbed congruent to $\nil$, as required.

However, $\ProcN$ still lacks the desired structure; it is not clear
how the arrow of two negative types, let alone the exponential of an
arbitrary negative type, could be defined in all generality within the
typing system of our extended calculus. Our solution is to restrict
these connectives to certain types dubbed \emph{strict}. This is not an
ad-hoc solution: we will show that we thus obtain a \emph{relative Seely
category} \cite{Clairambault24}. This is a weakening of Seely
categories matching the structure available in concurrent games, and yet
 sufficient to ensure that the Kleisli category is cartesian closed.

\subsection{The Strict Subcategory and Relative Closure}

In a relative Seely category, not all binary products are required:
we only ask for those between a subset of objects dubbed
\emph{strict}.

A type is \emph{strict} if it has the form $\lich \valty$ -- a type
sequence $\barnegenv$ is \emph{strict} if it is empty; or consists in a single
strict type. An important conceptual property of negative processes on strict
types is that they must be input processes (up-to barbed congruence):
\begin{lemma}
  \label{lem:negative-strict-proc}
  Let \( P \colon \negenv \to a: \lich{\valty} \) be a negative process.
  Then \( P \wbc \inp a x. P_0 \) for some process \( P_0 \).
\end{lemma}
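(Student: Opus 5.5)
By Definition~\ref{def:negproc} we may first replace $P$ by a barbed-congruent process $Q$ whose ground transitions are all input actions. We may also assume, via Lemma~\ref{l:wire_trans}, that $Q$ is in \apiE. The plan is to analyse the shape of $Q$ up to structural congruence and strong bisimilarity ($\sbS$, transported from \Api as in Section~\ref{ss:trans}). The aim is to show that $Q$ behaves as a single input prefix at $a$.

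The I-names in the typing $\negenv^\perp, a:\lich\valty$ are as follows: $a$ is the only free I-name, since $\negenv^\perp$ consists only of output types. Because $a$ is linear, the 1-input typing discipline (rule \textsc{Par} together with linearity, and rule \textsc{InL}, which forbids $a$ in the continuation) ensures that at most one unguarded occurrence of $a$ in input exists. Moreover, replication at $a$ is impossible, since \textsc{RIn} requires a copyable input type.

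Next, write $Q \equiv \res{\tilde c}(\inp a x. Q_1 \mid R)$, or $Q \equiv \res{\tilde c} R$ if no input at $a$ is unguarded. In the second case, $R$'s inputs are all at restricted names, and $R$ may only perform $\tau$ steps. I would then use the fact that barbed congruence is insensitive to divergence only up to weak equivalences. Here negativity gives no visible transitions at all, so $Q$ is weakly bisimilar to $\nil$ and hence $Q \wbc \inp a x.\nil$ by Lemma~\ref{l:otherlaws}(1). In the first case, any $\tau$ step of $Q$ comes from $R$ interacting internally, and $R$ itself can never emit on free O-names, since such a step would give an output action of $Q$.

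The key step is to push $\res{\tilde c}(\cdot \mid R)$ under the prefix. We want $\res{\tilde c}(\inp a x. Q_1 \mid R) \wbc \inp a x.\res{\tilde c}(Q_1\mid R)$, with $x$ fresh for $R$. I would prove this by exhibiting an internal bisimulation (Definition~\ref{d:hb-bisimulation}) up to $\wbS$ and $\expa$, then concluding by Theorem~\ref{t:sound}. The relation pairs $\res{\tilde c}(\inp a x. Q_1 \mid R')$ with $\inp a x.\res{\tilde c}(Q_1 \mid R)$ for every $\tau$-derivative $R'$ of $R$, and also pairs the derivatives after the input at $a$.

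The main obstacle is that $R$ may perform internal $\tau$ steps before the input at $a$, which the right-hand side cannot match. This is handled by closing the relation under $\tau$-derivatives of $R$, as above. It remains to check that the final derivatives $\res{\tilde c}(Q_1\mid R')$ and $\res{\tilde c}(Q_1\mid R)$ are related. I expect this to need the observation that $R$'s $\tau$-steps are confluent/inert enough; a cleaner route may be to define $P_0 := \res{\tilde c}(Q_1 \mid R)$ relative to the \emph{current} $R'$ and argue existence rather than fixing $R$. That is, we show the weak transitions match via clause~(3a) of the internal bisimulation, with clause~(3b) excluded because $a$ is not connected in $\delta$ and the asynchronous alternative yields $Q'\mid\out{a'}c$, which behaves like the left side by Lemma~\ref{l:otherlaws}. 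The case of divergent or nondeterministic $R$ is the delicate point, and I would first try to exploit that nondeterminism in \api arises only from input races, none of which can be resolved observably before $a$ fires.
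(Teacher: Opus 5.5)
\textbf{There is a genuine gap, and it comes from how you read negativity.} This misreading is what leaves your argument stuck at the `delicate point'.

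Definition~\ref{def:negproc} requires that every ground transition $\emptyset \vDash Q \arr{\mu} Q'$ of the representative $Q$ be an input action. Since $\tau$ is not an input action, $Q$ has no $\tau$-transitions at all.

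You instead let $R$ move internally before $a$ fires. You then try to push $\res{\tilde c}(\cdot \mid R)$ under the prefix via a relation closed under $\tau$-derivatives of $R$. Under that reading the step is not merely hard but false. Take $R = \inp{d}{z}.\out{\comp e}{z} \mid \out{\comp d}{b_1} \mid \out{\comp d}{b_2}$ and $Q_1 = \inp{e}{y}.\outC{y}$, with $d,e$ restricted. By an input race (as in Example~\ref{ex:internal-choice}), $Q$ can commit with a $\tau$-step to a derivative that, after the input at $a$, can only signal on $b_1$; another $\tau$-step commits it to $b_2$. A process $\inp a x.P_0$ has no $\tau$-move with which to answer this early commitment. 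The difference is observable, e.g.\ by a context that internally chooses between emitting on $\comp a$ and signalling success, and then observes $b_1$ and $b_2$.

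So $\res{\tilde c}(Q_1\mid R')$ and $\res{\tilde c}(Q_1\mid R)$ cannot be related in general, and no confluence of $R$ is available. Choosing $P_0$ `relative to the current $R'$' is also not an option, because $P_0$ must be one fixed process. Your claim that $R$ never emits on free O-names would not hold either: negativity constrains only the initial transitions of $Q$.

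\textbf{With the correct reading the lemma is immediate, and this is the paper's argument.} The only free I-name of $Q$ is $a$, since all names in $\negenv^\perp$ are O-names. $Q$ can only perform inputs. Linearity of $a$ leaves at most one unguarded input at $a$: rule \textsc{Par} splits the linear name, \textsc{InL} keeps $a$ out of the continuation, and \textsc{RIn} does not allow a linear subject. So $Q$ has at most one transition $Q \arr{\inp a x} Q'$, with a unique derivative.

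Then $\{(Q,\inp a x.Q')\}$ together with the identity is a strong bisimulation on the ground LTS. Hence $Q \wbc \inp a x.Q'$ by Corollary~\ref{c:congS}. If $Q$ has no transition at all, then $Q \sbS \nil \wbc \inp a x.\nil$ by Lemma~\ref{l:otherlaws}(1).

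Your decomposition $Q \equiv \res{\tilde c}(\inp a x.Q_1 \mid R)$ is fine. Once $R$ is seen to be inert, your candidate relation collapses to exactly this one. The detour through \apiE{}, internal bisimulation and clause~(3b) is unnecessary.
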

\begin{proof}
  Since \( P \) is a negative process whose only free input name is \(
  a \), the only action \( P \) can do is an input at \( a \).
  Thanks to the type system, there is a unique  \( P' \) such that \( P \arr{\inp a x} P' \), for some \( x \).
  Hence, we have \( P \wbc \inp a x. P'\)
\end{proof}

Lemma~\ref{lem:negative-strict-proc}  holds thanks to the absence of races caused by multiple receivers.
Hence, in \api, the same reult holds even when \( a \) has a type \( \ich{\valty}\).
This is not the case in \Api{}, because negativity and the restriction on the type environment cannot rule out processes such as \( \inp a x.P \mid \inp a x . Q \).

Strict environments and negative processes over them form a
subcategory of \( \ProcN \). We write this subcategory as \( \ProcS \)
and the inclusion from \( \ProcS \) to \( \ProcN \) as \( J \).

\subparagraph*{Cartesian products.} $\ProcS$ inherits from $\ProcN$ its
terminal object $\emptyenv$, which is indeed strict. We define the
binary product, written \( \with \), between objects by $\emptyenv \with \barnegenv,
\barnegenv \with \emptyenv \defeq \barnegenv$ if one of the objects is
empty, and
\( \lich \valty \with  \lich \valtyTwo \defeq \lich{\valty +
\valtyTwo} \) otherwise. We only detail the pairing and projections for
a product of non-empty objects: the pairing $\langle P, Q \rangle \colon \barnegenv \to c : \lich{\valty + \valtyTwo}$
of \( P \colon \negenv \to a : \lich{\valty} \) and \( Q \colon \negenv
\to b : \lich{\valtyTwo} \) is defined as
\begin{equation*}
  \inp c x . \binCase x y {\resBP a(P \mid \out {\comp a}{y})} y {\resBP b(Q \mid \out {\comp b}{y})}.
\end{equation*}
The projection \( \pi_i \colon (b : \loch{\valty_1 + \valty_2}) \to (a
: \lich{\valty_i}) \) is \( \inp a x . \out b {\coprodTag_i(x)}\).
These indeed define a cartesian
product (See Appendix~\ref{app:category}).

\subparagraph*{Relative closure.} Likewise, in a relative Seely category
the linear arrow $A \multimap B$ is only required for $B$ strict. Given
 a negative environment
\( \barnegenvTwo = \ty_1, \ldots, \ty_n \), we set
$\barnegenvTwo \multimap \emptyenv \defeq \emptyenv$ along with
\(
\barnegenvTwo \multimap \lich{\valtyTwo} \defeq \lich{\ty_1^\perp \times
\ldots  \times \ty_n^\perp \times \valtyTwo}\).

The readers familiar with game semantics may recognize here the usual
arrow arena construction: $\barnegenvTwo$ is dualised, and set underneath
the root of $\lich{\valtyTwo}$. This extends to a \emph{relative} closed
structure, \emph{i.e.}:

\begin{restatable}{lemma}{closedStructure}
  \label{lem:closed-struture}
  There is a \( J \)-relative coadjunction~\cite{ArkorMcDermott2024,AltenkirchCU15}: \( - \otimes \negenvTwo
  \mathrel{{}_J\!\dashv} J(\negenvTwo \multimap -) \).
\end{restatable}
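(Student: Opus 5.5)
We will establish the relative coadjunction by exhibiting, for every negative $\barnegenv$ and every strict object $\barnegenvTwo'$, a bijection
\[
\ProcN(\barnegenv \otimes \barnegenvTwo, J\barnegenvTwo') \;\cong\; \ProcS(\barnegenv, \barnegenvTwo \multimap \barnegenvTwo'),
\]
natural in $\barnegenv$ (in $\ProcS$) and in $\barnegenvTwo'$. The empty case $\barnegenvTwo' = \emptyenv$ is immediate, since both sides are singletons by terminality of $\emptyenv$ in $\ProcN$. So we concentrate on $\barnegenvTwo' = c : \lich{\valtyTwo}$, where $\barnegenvTwo \multimap \barnegenvTwo' = \lich{\ty_1^\perp \times \cdots \times \ty_n^\perp \times \valtyTwo}$.

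\textbf{Currying.} Given $P$ with $\negenv^\perp, d_1:\ty_1^\perp,\ldots,d_n:\ty_n^\perp, c:\lich\valtyTwo \vdash P$ (the names $d_i$ being the inputs of $\barnegenvTwo$ seen from $P$'s side), we define
\[
\Lambda(P) \defeq \inp{e}{d_1,\ldots,d_n,y}.\,\resBP c\,(P \mid \out{\comp c}{y}).
\]
This is well typed by \textsc{InL}. It is negative, since it starts with an input on a strict name, and hence lies in $\ProcS$.

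\textbf{Uncurrying.} Given $Q \colon \negenv \to e : \lich{\ty_1^\perp\times\cdots\times\valtyTwo}$ negative, Lemma~\ref{lem:negative-strict-proc} lets us assume $Q \wbc \inp e x.Q_0$. We set
\[
\Lambda^{-1}(Q) \defeq \inp c y.\,\resBP e\,\bigl(Q \mid \out{\comp e}{\tuple{d_1,\ldots,d_n,y}}\bigr).
\]
Here the $d_i$ are free and become the $\barnegenvTwo$ side. We must check that this is negative. Its only free inputs are in $\negenv$ (which lie on the output side of the morphism and are used only by $Q$, which is guarded) and the $d_i$. The $d_i$ occur only inside a message that is emitted after the input at $c$, so the first visible action is the input at $c$.

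\textbf{Inverse laws.} For $\Lambda^{-1}(\Lambda(P))$ we perform one internal communication at $e$. Using Lemma~\ref{l:otherlaws} (linear version of the second law, $\beta$-like, with $\wbcTT{}{\delta}$ upgraded once $e$ is restricted) we obtain $\inp c y.\resBP c'(P\sub{c'}{c} \mid \out{\comp c'} y)$. We then need the $\eta$-law $\inp c y.\resBP{c'}(P\sub{c'}c\mid\out{\comp{c'}}y) \wbc P$. This follows from Theorem~\ref{t:wire-subst-law}, since the prefix $\inp c y.\out{\comp{c'}}y$ is exactly the linear wire $\links c{\comp{c'}}$ (in the I-name case \eqref{e:islaw}). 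The other direction, $\Lambda(\Lambda^{-1}(Q))\wbc Q$, is analogous: after the communication we obtain $\inp e{\seq d,y}.\resBP{e'}(Q\sub{e'}e \mid \out{\comp{e'}}{\tuple{\seq d,y}})$, which is again a wire (now at tuple type, $\inp e x.\out{\comp{e'}}x$ after $\eta$-expanding the tuple pattern) composed with $Q$. We conclude by \eqref{e:islaw} together with a small law $\inp e x.\out{e'}x \wbc \inp e{\seq d,y}.\out{e'}{\tuple{\seq d,y}}$, provable via internal bisimilarity and Theorem~\ref{t:sound}. Both $\Lambda$ and $\Lambda^{-1}$ respect $\wbc$ because barbed congruence is a congruence for the typed contexts used. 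This makes them well defined on equivalence classes.

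\textbf{Naturality.} Naturality in $\barnegenv$ amounts to
\[
\Lambda\bigl(\res{}(R\otimes\mathrm{id}\mid P)\bigr) \wbc \res{}\bigl(R \mid \Lambda(P)\bigr)
\]
for $R$ in $\ProcS$. Here $R$ is negative and its outputs land on $\negenv$ inputs of $P$, so placing $R$ under the prefix $\inp e{\ldots}$ or outside it is equivalent. We will prove this via negativity of $R$: $R \wbc \inp a x.R_0$ with $a$ an input owned by $P$'s context. Then commuting this input prefix with $\inp e{\ldots}$ and the restriction relies on Lemma~\ref{l:commutativity} and the fact that no message reaches $a$ before $e$ fires. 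Naturality in $\barnegenvTwo'$ is a similar rearrangement of restrictions and the prefix.

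We expect the main obstacle to be this naturality argument, and more precisely justifying that an input-guarded (negative) component may be pulled inside a prefix up to $\wbc$. If a direct use of Lemma~\ref{l:commutativity} is insufficient, we will fall back on constructing an internal bisimulation up to $\expa$ on the \apiE translations and applying Theorem~\ref{t:sound}.
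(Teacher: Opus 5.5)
Your $\Lambda$ and $\Lambda^{-1}$ are the paper's, and the plan for the inverse laws (one $\beta$-communication, then an $\eta$-step) is the right shape. The $\eta$-step, however, is justified incorrectly. After the communication at $e$ you have $\inp c y.\resBP{c'}(P\sub{c'}{c}\mid\out{\comp{c'}}y)$. Here $P\sub{c'}{c}$ sits \emph{underneath} the prefix $\inp c y$. That is not the term $\resBP{c'}(\links c{\comp{c'}}\mid P\sub{c'}{c})$ to which \eqref{e:islaw} applies, where $P\sub{c'}{c}$ runs in parallel with the wire. Identifying the two means pulling $P$ out from under an input guard. This is sound only because $P$ can do nothing before it is triggered on its strict name, i.e.\ only because $P$ is negative. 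Your argument never invokes negativity at this point, and without it the claimed law is false. If $P$ could emit on one of its $\negenv^\perp$ outputs before reading $c$, the guarded left-hand side could not match it. The same problem occurs in $\Lambda(\Lambda^{-1}(Q))$, where $Q$ is guarded by $\inp e{\seq d,y}$.

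The missing ingredient is Lemma~\ref{lem:negative-strict-proc}. You quote it when defining $\Lambda^{-1}$, where it is not needed: $\Lambda^{-1}(Q)$ is input-prefixed, hence trivially negative. You never use it afterwards. Apply it where it matters instead. From $P\wbc\inp c x.P_0$, one communication at $c'$ gives $\inp c y.P_0\sub{y}{x}$, which is $\alpha$-equivalent to $\inp c x.P_0\wbc P$. Symmetrically for $Q\wbc\inp e x.Q_0$, together with your tuple-$\eta$ law, which is a fair point the paper glosses over. This is exactly the paper's proof, and it needs no wire law at all.

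Some smaller issues. Your polarities are off: in $P$ and in $\Lambda^{-1}(Q)$, the names of $\negenv^\perp$ and the $d_i$ are O-names, and $c$ is the only I-name. The right-hand hom-set should be $\ProcN(\barnegenv, J(\barnegenvTwo\multimap\barnegenvTwo'))$, with naturality over all of $\ProcN$. In that setting $R\wbc\inp a x.R_0$ is unavailable unless $R$'s codomain is strict. For naturality, what you need is not Lemma~\ref{l:commutativity}. You need that a negative $R$, whose I-names are all restricted and fed only by the guarded process, can be moved under $\inp e{\ldots}$. Compare Corollary~\ref{c:laws_A}(\ref{i:laws_INP}) for the replicated case, or use a direct bisimulation.
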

\begin{proof}[Proof Sketch]
It is sufficient to prove that for all $\barnegenv, \barnegenvTwo$ with
the latter strict, there is an object $\barnegenv \multimap
\barnegenvTwo$ and the \emph{evaluation morphism},
\( \evm_{\barnegenv, \barnegenvTwo} : (\barnegenv \multimap \barnegenvTwo) \otimes \barnegenv \to \barnegenvTwo \)
satisfying the usual universal property of an arrow object. We present
the data only for $\barnegenv \multimap \lich{\valty}$, the case of the
empty context being trivial.

For \( P \colon \negenv \otimes \negenvTwo \to b : \lich{T} \), its \emph{currying}
is
 \( \Lambda(P) \defeq c(\seq a, x).\resBP b (P \mid \out {\comp b} x) \).
This is invertible up to barbed congruence, with
\( \Lambda^{-1}(Q) \defeq \inp b x. \resBP c (\out {\comp c}{\seq a, x} \mid Q) \)
  for \( Q \colon \negenv \to c \colon \negenvTwo \multimap \lich{\valty}
\). As usual, the evaluation can be obtained as
$\evm_{\barnegenvTwo, \lich{\valty}} = \Lambda^{-1}(\id_{\barnegenvTwo
\multimap \lich{\valty}})$; this satisfies the equations required for
the universal property.
\end{proof}

\subsection{The Exponential Modality}

The next component is the
\emph{exponential modality}. The main difference with respect to
standard models of linear logic is that the bang is only defined on
strict objects, and sends a strict object to a non-strict one,
\emph{i.e.} $! \colon \Proc^{-}_s  \to \Proc^-$. Not being an
endofunctor it cannot be a comonad, but only a \emph{relative comonad}
in the sense of~\cite{AltenkirchCU15}, subject to additional conditions
adapted from Seely categories.

\subparagraph*{A relative comonad.} On strict objects, we set $! \emptyenv \defeq
\emptyenv$ and $! \lich{\valty} \defeq \ich{\valty}$, \emph{i.e.} we
simply transform a linear input into a non-linear input.

We do not need to define the functorial action of $!$, which
will be derived from the other components. The \emph{dereliction}
process
\(
\der_{\lich{\valty}} \defeq \inp a x. \out b x \colon (b :
\ich{\valty}) \to (a : \lich{\valty})
\)
is  a wire between types that differ only in linearity, and 
$\der_\emptyenv$ is %
 the terminal morphism. The \emph{promotion},
taking a process $P \colon (a : \ich{\valtyTwo}) \to (b :
\lich{\valty})$ to $\prom P \colon (a : \ich{\valtyTwo}) \to (c :
\ich{\valty})$ is defined as
\( \prom P \defeq !\inp c x. \resBP b(P \mid \out {\comp b}{x} ) \).

Since \( P \cong \inp b x . P_0 \) by
Lemma~\ref{lem:negative-strict-proc}, promotion can be seen as an
operator that turns a non-replicated input into a replicated one.

\begin{restatable}{lemma}{promotion}
  \label{lem:promotion}
  The following laws hold, i.e.~\( ! \) is a \( J \)-relative comonad.
  \begin{enumerate}
    \item \( \prom P ; \der = P \) \label{it:lem:promotion:left-unit}
    \item \( \prom{\der_{\lich{\valty}}} = \id_{\ich{\valty}}\) \label{it:lem:promotion:right-unit}
    \item \( \prom{(\prom P; Q)} = \prom P ; \prom Q\) \label{it:lem:promotion:assoc}
  \end{enumerate}
\end{restatable}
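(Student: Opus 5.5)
Each of the three laws unfolds to a composition in $\ProcN$, i.e.\ parallel composition under restriction. The plan is to reduce it with three tools: the reduction-based laws of Lemma~\ref{l:otherlaws}, the wire substitution law (Theorem~\ref{t:wire-subst-law}), and the normal form of Lemma~\ref{lem:negative-strict-proc}. All equalities are up to barbed congruence, which is a congruence, so we may rewrite inside contexts. By Lemma~\ref{lem:negative-strict-proc} we fix once and for all $P \wbc \inp b x. P_0$ for $P\colon (a:\ich{\valtyTwo}) \to (b:\lich{\valty})$, and similarly $Q \wbc \inp d y. Q_0$.

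\textbf{Law~\ref{it:lem:promotion:left-unit}.} The composite $\prom P;\der$ is
\[
\resBP c\bigl(!\inp c x.\resBP b(P\mid \out{\comp b}x) \mid \inp{a'} z.\out{\comp c}z\bigr),
\]
where $a'$ is the linear output end. First commute the restriction inward, so that the only process using $\comp c$ is the dereliction. The dereliction is a linear wire from $a'$ to $\comp c$. Then apply Theorem~\ref{t:wire-subst-law} (the I-name case, whose linear variant holds by Section~\ref{sec:esyntax}) in the reverse direction, or more concretely Lemma~\ref{l:otherlaws}(2) after the input at $a'$. This replaces $\comp c$ by a single triggering message, giving $\inp{a'}z.\resBP c(!\inp c x\ldots \mid \out{\comp c}z)$. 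Next, use Lemma~\ref{l:otherlaws}(2) to fire the replication once. The remaining replicated server $!\inp c x\ldots$ under $\resBP c$ has no companion outputs left, so it is garbage; we discard it by a simple internal bisimulation, in the spirit of Lemma~\ref{l:otherlaws}(1) for replicated inputs. We are left with $\inp{a'}z.\resBP b(P\mid\out{\comp b}z)$. With $P\wbc\inp b x.P_0$, one more application of the communication law (linear version) gives $\inp{a'}z.P_0\sub zx$, which is $P$ up to renaming.

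\textbf{Law~\ref{it:lem:promotion:right-unit}.} Unfolding gives $\prom{\der} = !\inp c x.\resBP b(\inp b z.\out a z\mid\out{\comp b}x)$. The inner restricted linear communication reduces deterministically, since the names are linear and private, to $\out a x$. Hence $\prom{\der}\wbc\, !\inp c x.\out a x = \links c a$, which is exactly $\id_{\ich\valty}$. The justification is that this reduction is the unique action of the body; formally it is an expansion inherited from \Api{} via Theorem~\ref{t:Api_api}, and $\wbS$ is a congruence, so the step is valid even under the replication.

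\textbf{Law~\ref{it:lem:promotion:assoc}.} This is the main obstacle, since it requires duplicating a replicated server. The right-hand side is
\[
\resBP c\bigl(!\inp c x.\resBP b(P\mid\out{\comp b}x)\mid !\inp e y.\resBP d(Q\mid\out{\comp d}y)\bigr),
\]
where $Q$ uses the I-name $\comp c$ non-linearly. The left-hand side has $!\inp e y.\resBP d(\resBP c(!\inp c x\ldots\mid Q)\mid\out{\comp d}y)$, i.e.\ one private copy of the server per request. The key step is a \emph{replication distribution law}: a replicated server $R$ that is the unique holder of its private input name (guaranteed by the 1-input property and rule \rn{RIn}) may be pushed under another replication, yielding
\[
\resBP c(!R\mid !\inp e y.S)\wbc\, !\inp e y.\resBP c(!R\mid S).
\]
I would prove this by exhibiting an internal bisimulation up to $\wbS$ and $\expa$ on \apiE{} forms, obtained via Lemma~\ref{l:wire_trans}, relating one shared server to a family of private copies. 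The argument is that $!R$ is stateless (its body contains no free I-names) and receptive, so sharing it is indistinguishable from copying it. The analogous law is standard in \Api{} for uniformly receptive names~\cite{SangiorgiWalker01}, and I would try to import it through Theorem~\ref{t:Api_api}, checking that the encoding preserves the needed side condition. Once this law is available, the remaining rewriting is routine restriction scoping plus structural congruence.
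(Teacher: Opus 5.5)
Your plan matches the paper's proof. Law~1 goes by pulling the input at $a'$ out of the restriction on $c$, firing and discarding the server, and then using Lemma~\ref{lem:negative-strict-proc}. Law~2 is one expansion step under the replication. Law~3 rests on the replication theorem, which the paper already has as Corollary~\ref{c:laws_A}(\ref{i:laws_INPrep}), so you need not reprove it; applying it to the unnormalised processes and closing with structural congruence, as you do, is slightly more direct than the paper's detour through $P_0$ and $Q_0$.

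The one flaw is how you justify the first step of Law~1. It is not an instance of Theorem~\ref{t:wire-subst-law}: the dereliction connects the linear $a'$ with the non-linear $\comp c$, so substituting $a'$ for $c$ in the replicated server would not type. Nor is it an instance of Lemma~\ref{l:otherlaws}(2). It is exactly Corollary~\ref{c:laws_A}(\ref{i:laws_INP}), or simply a strong bisimulation, since the input at $a'$ is the only available action.
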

\begin{proof}
  As a showcase we
  show~\ref{it:lem:promotion:left-unit} (the proofs for the other laws are in Appendix~\ref{app:category}).
  Without loss of generality, we can assume that \( P = \inp b x . P_0 \) by Lemma~\ref{lem:negative-strict-proc}.
  In this case, \( \prom P \wbc ! \inp b x . P_0 \).
  Therefore,  by laws for internal communication and GC, we have
  \begin{eqnarray*}
    \prom P ; \der &=& \resBP b (!\inp b x .P_0 \mid \inp a x . \out {\comp b} x) \\
    &\sbisim& \inp a x . \resBP b (!\inp b x.P_0 \mid \out{\comp b} x) \\
    &\contr& \inp a x .P_0
  \end{eqnarray*}
  In turn, this is exactly
$P \sub a b$, concluding the proof.
\end{proof}

The proof is notably simple, in particular comparing it with corresponding
proofs in game semantics (see \emph{e.g.} \cite[Proposition
8.3.7]{Clairambault24}). We see this as a strong argument in favour of
our syntax, which encapsulates the intricate combinatorial aspects of
game semantics, emphasizing the logical reasons why the laws hold.

\subparagraph*{An exponential modality.}
We now show that \( ! \colon \ProcS \to \ProcN \) is a symmetric strong monoidal functor.
Note that Lemma~\ref{lem:promotion} makes \( ! \) a functor by \( !P \defeq \prom {(\der ;P)} \).
We define the mediating maps: the morphism \( m_0  \colon 1  \to !\top
\) is defined as $\id_\emptyenv$,
and the \emph{Seely isomorphism} \( m_{\lich{\valty}, \lich{\valtyTwo}}
\colon  (a : \ich \valty, b : \ich \valtyTwo) \to (c : \ich{(\valty +
\valtyTwo)}) \) is defined as
\begin{align*}
  &!\inp c x. \binCase x y {\out a y} z {\out b z}
\end{align*}
and trivial if one of the operands is the empty context.
It is easy to check that the mediating maps satisfy the coherence laws.
Each \(  m_{\lich{\valty}, \lich{\valtyTwo}}  \) is an isomorphism because its inverse is given as the process \( !\inp a y . \out c {\inl y} \mid !\inp b z . \out c {\inr z} \).

The mediating maps, moreover, are compatible with promotion as in Seely categories.
This concludes to show that the \( \ProcN\) together with its
subcategory \( \ProcS \) forms a relative Seely category:
\begin{restatable}{lemma}{promotionMediating}
  The following diagram commutes.
  \[
    \begin{tikzcd}[row sep = 11pt]
      !\lich{\valtyThree} \ar[d, "\prom {\tuple{\der, \der}}"] \ar[rr, "\prom {\tuple{P, Q}}"]& & !(\lich{\valtyTwo} \& \lich{\valty}) \ar[d, "m^{-1}"]\\
      !(\lich{\valtyThree} \ar[rd,"m^{-1}"]\& \lich{\valtyThree}) & & !\lich{\valtyTwo} \otimes !\lich{\valty} \\
      & ! \lich{\valtyThree} \otimes !\lich{\valtyThree} \ar[ru,"\prom P \otimes \prom Q"] &
    \end{tikzcd}
\]

  \end{restatable}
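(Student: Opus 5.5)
We need to show \( \prom{\tuple{P,Q}} ; m^{-1} = \prom{\tuple{\der,\der}} ; m^{-1} ; (\prom P \otimes \prom Q) \), where \( P : (d : \ich\valtyThree) \to (a : \lich\valtyTwo) \) and \( Q : (d : \ich\valtyThree) \to (b : \lich\valty) \). The plan is to compute both sides explicitly, up to barbed congruence, and reach the same normal form
\[
R \defeq\; !\inp {a} y . P_0 \mid !\inp {b} z . Q_0 ,
\]
with free I-names \( a, b \) and free O-name the output companion of \( d \). By Lemma~\ref{lem:negative-strict-proc} we may assume \( P = \inp a y . P_0 \) and \( Q = \inp b z . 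Q_0 \). The only I-name free in \( P_0 \) and \( Q_0 \) is then the bound one, since the source \( d : \ich\valtyThree \) is dualised to an O-name. So \( P_0 \) and \( Q_0 \) may legally sit under a replication.

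For the top-right path:
\begin{enumerate}
\item Unfold \( \tuple{P,Q} \) on \( c : \lich{\valtyTwo+\valty} \): it is \( \inp c x . \mathsf{case}\ x \) of the two branches \( \resBP a(P \mid \out{\comp a} y) \) and \( \resBP b(Q \mid \out{\comp b} z) \).
\item By one communication (contraction \( \contr \)) each branch reduces to \( P_0 \) or \( Q_0 \) respectively. Hence \( \prom{\tuple{P,Q}} \wbc\, !\inp c x . \mathsf{case}\ x\ (P_0 \mid Q_0) \), mirroring the computation in the proof of Lemma~\ref{lem:promotion}.
\item Compose with \( m^{-1} = !\inp a y . \out{c'} {\inl y} \mid !\inp b z . \out{c'}{\inr z} \) by restricting the pair \( c, c' \).
\item Use Lemma~\ref{l:otherlaws}(2), applied under the connection set containing \( c\sepRES c' \), to consume each emitted output against the replicated server. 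Then evaluate the case construct, which is justified by the structural/reduction rules for sums.
\item Garbage-collect the now unreachable server \( !\inp c x \ldots \), using an ordinary \( \pi \)-calculus GC law transported via Theorem~\ref{t:Api_api}. This yields \( R \).
\end{enumerate}

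For the left-bottom path:
\begin{enumerate}
\item By the same computation, \( \prom{\tuple{\der,\der}} \wbc\, !\inp e x . \mathsf{case}\ x\ (\out{\comp d} y \mid \out{\comp d} z) \).
\item Composing with \( m^{-1} \) and simplifying as above gives \( !\inp {a'} y . \out{\comp d} y \mid !\inp {b'} z . \out{\comp d} z \), i.e.\ two wires into \( d \). In other words, this composite is the contraction/diagonal \( !\lich\valtyThree \to !\lich\valtyThree \otimes !\lich\valtyThree \).
\item Post-compose with \( \prom P \otimes \prom Q \wbc\, !\inp a y . P_0' \mid !\inp b z . Q_0' \), where \( P_0', Q_0' \) use the source names \( a', b' \) of the two copies. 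The restrictions connect these names to the two wires.
\item By the wire substitution law, Theorem~\ref{t:wire-subst-law}, in its O-name form~\eqref{e:oslaw}, each wire acts as a substitution of \( \comp d \) for the respective copy name. This gives \( R \) again.
\end{enumerate}

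All steps are congruence steps, so they are valid in context by Theorem~\ref{t:sound}. Commuting the wire substitution under the replications is allowed because~\eqref{e:oslaw} holds for arbitrary \( P \).

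The main obstacle is step 4 of the first path: justifying the server-forwarding step uniformly under replication. The output \( \out{c'}{\inl y} \) lies underneath \( !\inp a y \), so Lemma~\ref{l:otherlaws}(2) does not apply directly at top level. I would first rewrite \( \resBP c(!\inp c x . S \mid !\inp a y . \out{c'}{\inl y}) \) by distributing the private replicated server into the body, using the replication laws inherited from \Api{} via Theorem~\ref{t:Api_api}: \( c \) is used only in output by the other component, so the standard ``replicated resource distribution'' law applies. After that, the forwarding is a plain communication followed by GC.
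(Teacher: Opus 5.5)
Your proposal is correct and follows essentially the same route as the paper. Both paths are normalised to \( !\inp a y . P_0 \mid !\inp b z . Q_0 \) using the same ingredients: Lemma~\ref{lem:negative-strict-proc}; the replication theorem (Corollary~\ref{c:laws_A}), which pushes the private case-server under each replicated forwarder of \( m^{-1} \) before a communication and garbage collection; and the wire substitution law (Theorem~\ref{t:wire-subst-law}) on the left-bottom path. One small correction: your closing appeal to Theorem~\ref{t:sound} is misplaced, since that theorem concerns internal bisimilarity; the strong-bisimilarity and expansion steps are justified by Corollary~\ref{c:congS}, and barbed congruence is preserved by the composition contexts by definition.
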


As is the case for the standard Seely categories, we can apply the Kleisli construction to a relative Seely category to obtain a cartesian closed category.
We write \( \ProcN_{!} \) for the Kleisli category of the (relative) comonad \( ! \colon \ProcS \to \ProcN \).
Its objects are those of \( \ProcS \) and homsets are defined by \( \ProcN_{!}(\lich{\valtyTwo}, \lich{\valty}) \defeq  \ProcN(\ich{\valtyTwo}, \lich{\valty})\).
The identities are derelictions and compositions are defined using promotion.
The finite product in \( \ProcN_{!} \) are defined as in \( \ProcS \)
and the arrow is defined by \( \lich{\valtyTwo} \Rightarrow
\lich{\valty} \defeq !\lich{\valtyTwo} \multimap \lich{\valty} =
\lich{\och{\valtyTwo} \times \valty}\); along with $\top \Rightarrow
\barnegenv = \barnegenv$ and $\barnegenv \Rightarrow \top = \top$.

\begin{theorem}
 $\ProcN$ is a relative Seely category. Consequently, the Kleisli
category $\ProcN_!$ is cartesian closed.
\end{theorem}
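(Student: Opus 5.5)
The theorem packages the structure already established, so I would check the axioms of a relative Seely category in the sense of \cite{Clairambault24} one by one, citing the earlier lemma for each, and then invoke the general result of \cite{Clairambault24} that the Kleisli category of such a category is cartesian closed.

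The ingredients, in the order I would check them, are as follows.
\begin{enumerate}
\item $\ProcN$ is symmetric monoidal with terminal object $\emptyenv$. This is the theorem on negative processes.
\item $\ProcS$, with inclusion $J$, has finite products $\with$. I would verify the pairing and projection equations $\langle P,Q\rangle;\pi_i = P$, and also $\langle R;\pi_1, R;\pi_2\rangle = R$, by internal bisimilarity up to expansion.
\item There is a $J$-relative closed structure. This is Lemma~\ref{lem:closed-struture}.
\item The functor $!$ is a $J$-relative comonad. This is Lemma~\ref{lem:promotion}.
\item $!$ is symmetric strong monoidal from $(\ProcS,\with,\top)$ to $(\ProcN,\otimes,1)$ via $m_0$ and $m$.
\item The Seely maps are compatible with promotion. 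This is the last diagram lemma.
\end{enumerate}

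For the inverse of $m$, I would show that $m$ composed with its inverse is a wire. On one side this uses the law $!\inp a x.P \mid \out b v \wbc\,!\inp a x.P \mid P\sub v x$ of Lemma~\ref{l:otherlaws} together with case reduction. On the other side it uses Theorem~\ref{t:wire-subst-law} to absorb the resulting forwarding chain $!\inp c x.\out{c'}{\inl x}\ldots$ back into identities.

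For the cartesian product laws, I would first use Lemma~\ref{lem:negative-strict-proc} to put every morphism into a strict object into the form $\inp a x.P_0$. The surjective pairing law then reduces to commuting $\inp c x$ past the case analysis, followed by the substitution law (Theorem~\ref{t:wire-subst-law}). Naturality and the coherence conditions of $m$ (associativity, unit and symmetry) are pure rewiring, and should follow from Theorem~\ref{t:wire-subst-law} and structural congruence.

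I expect the main obstacle to be the coherence bookkeeping rather than any single equation. In particular, the definition of a relative Seely category requires that the functorial action $!P \defeq \prom{(\der;P)}$ agree with the product structure, and that $m$ be natural in both arguments. Naturality involves replicated processes composed with promotions, so the argument must remove replication under restriction. For this I would use the unfolding law (Lemma~\ref{l:otherlaws}, item~\ref{law:unf}) and the replication-distribution laws inherited from \Api via Theorem~\ref{t:Api_api}, and then close with an up-to-expansion internal bisimulation.

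Once these axioms hold, cartesian closure of $\ProcN_!$ follows from the general theorem of \cite{Clairambault24}. Concretely, the exponential is $!\lich{\valtyTwo}\multimap\lich{\valty}$, and currying is obtained by composing $\Lambda$ with $m$.
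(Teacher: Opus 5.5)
Your proposal is correct and follows the paper's route: the paper likewise assembles the components established in Section~\ref{sec:category} and then appeals to the Kleisli result of \cite{Clairambault24}. Those components are the symmetric monoidal $\ProcN$ with terminal $\emptyenv$, products on $\ProcS$, the relative closure of Lemma~\ref{lem:closed-struture}, the relative comonad of Lemma~\ref{lem:promotion}, the Seely isomorphism $m$, and the promotion-compatibility diagram. One small correction to your extra detail: the tagging forwarders $!\inp a y.\out c{\inl y}$ in $m^{-1};m$, and likewise the re-tagging associator of $\with$ in the coherence of $m$, are not wires, so Theorem~\ref{t:wire-subst-law} cannot absorb them; use instead the replication theorems of Corollary~\ref{c:laws_A}, the communication law, and the sum-$\eta$ step $\inp{c'}x.\binCase x y {\out c{\inl y}} z {\out c{\inr z}} \wbc \inp{c'}x.\out c x$, as the paper does in its product $\eta$-law.
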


\subsection{Sequoidal Structure}
\label{sec:sequoid}
From processes we can also construct a sequoidal category~\cite{Laird02}, a categorical structure used to express state (and sequential behaviour) in game semantic models.
The construction follows that for HO games~\cite{Laird19}.
Together with the encoding in Section~\ref{s:expr}, this supports that \api{} is suited to modelling imperative features.

The \emph{sequoid} operator \( \oslash \) is a non-commutative operator over strict types and negative type sequences defined as
\begin{align*}
  \lich{\valtyTwo} \oslash (\ty_1, \ldots, \ty_n) \defeq 
	\lich{\valtyTwo \times \ty_1 \times \ldots \times \ty_n}.
\end{align*}

This is essentially the same as \( (\ty_1, \ldots, \ty_n) \multimap \lich{\tyTwo} \), except that \( \ty_i \) is not dualised.
Since \( \ty_i \) is a negative type, \emph{this construction exploits
the ability of \api{} to exchange input names.}\footnote{In game semantics
terms, the corresponding arenas are not alternating.}

We first recall the general definition of a sequoidal category, and then construct such a structure using processes.
\begin{definition}[\cite{Laird19}]
  A \emph{sequoidal category} is a tuple \( (\category C, \category L, \oslash, J, \omega) \) where \( (C, \otimes, I)\) is a symmetric monoidal category, \( (\category L, \oslash) \) is a \(\category C\)-action and \( (J, \omega) \) is a weak morphism of \( \category C\)-actions from \( (\category L, \oslash ) \) to \( (\category C, \otimes) \).
  Here a \emph{weak morphism} from \( (\category L, \oslash) \) to \(
  (\category C, \otimes) \) is a functor \( J \colon \category L \rightarrow
  \category C \) with a natural transformation \( \omega_{L, C} \colon
  J(L) \otimes C \to J(L \oslash C) \) that satisfies some coherence
  condition (for which we refer to~\cite{Laird19}).
\end{definition}

 To define a category of processes playing the role of \( \category L \), we consider a subcategory of \( \ProcS \) following~\cite{Laird19}.
\begin{definition}
  A negative process \( P \colon b: \lich{\valtyTwo} \to a : \lich{\valty}\) is \emph{linear} if, for any \( P \arr{\inp a x} P' \) and for all \( v \), we have \( P'\sub v x  \contr P'' \mid \out b w \) for some \( P'' \) and \( w \).
  We also consider any negative process \( P \colon b: \lich{\valtyTwo} \to \emptyenv \) or \( P \colon \emptyenv  \to a : \lich{\valty} \) to be linear.
\end{definition}
Intuitively, a linear process is a process that can always perform an output action using the unique free output name once the initial input guard is executed.
Since parallel composition plus hiding preserves linearity and wires are linear, we have a subcategory of \( \ProcS \) of linear processes, which we write as \( \ProcSL \).

\begin{theorem}
  There is a sequoidal category \( (\ProcN, \ProcSL, \oslash, J, \omega) \).
\end{theorem}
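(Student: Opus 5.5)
We need to exhibit the data of the sequoidal category and check the axioms, following the HO-games construction of Laird. The symmetric monoidal category $(\ProcN,\otimes,\emptyenv)$ is already available, so the work is in three parts: the action $\oslash\colon \ProcSL\times\ProcN\to\ProcSL$ on morphisms, the coherence isomorphisms making $(\ProcSL,\oslash)$ a $\ProcN$-action, and the natural transformation $\omega$.

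\textbf{The functorial action of $\oslash$.} On objects we use the given definition, with $\emptyenv\oslash\barnegenv\defeq\emptyenv$. For $P\colon b:\lich{\valtyTwo}\to a:\lich{\valty}$ linear and $Q\colon\negenv\to\negenvTwo$ negative, with $\negenvTwo = d_1:\ty_1,\ldots,d_n:\ty_n$ and $\negenv = e_1,\ldots,e_m$, we define
\[
P\oslash Q\colon b':\loch{\valtyTwo\times\seqq{\ty}'}\to a':\lich{\valty\times\seqq\ty}
\]
as follows. The process inputs $\inp{a'}{x,\seqq d}$, and then runs $\resBP a(P\mid\out{\comp a}x)$ with the output at $b$ intercepted. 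Concretely, $b$ is bound through a restriction to an input $\inp{\comp b}{y}$, which is then followed by $\resBP{\seqq e}\bigl(\out{b'}{y,\seqq{e}}\mid Q\bigr)$, with $Q$ acting on the received names $\seqq d$.

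Linearity of $P$ guarantees that the output at $b$ eventually happens. The names $\seqq d$ are I-names received by the process, and $Q$ uses them in input. This is legitimate in \api{} precisely because input capabilities can be exported and the 1-input property is respected: each $d_i$ is owned only by $Q$. Functoriality ($\id\oslash\id=\id$ and preservation of composition) follows from Theorem~\ref{t:wire-subst-law}, which makes identity wires behave as substitutions. Linearity of $P\oslash Q$ follows from linearity of $P$, because the output at $b'$ fires right after that of $P$.

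\textbf{Action isomorphisms and $\omega$.} The unit isomorphism $L\oslash\emptyenv\cong L$ is the identity up to the tuple-type isomorphism $\valty\times()\cong\valty$. The associator $(L\oslash\barnegenv)\oslash\barnegenvTwo\cong L\oslash(\barnegenv\otimes\barnegenvTwo)$ is a wire that repacks nested tuples, realised with tuple destructors. Its inverse is the converse repacking, and invertibility holds by Theorem~\ref{t:wire-subst-law} together with the reduction laws for $\mathsf{let}$-destructuring.

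For $\omega_{L,C}\colon J(L)\otimes C\to J(L\oslash C)$ with $L=b:\lich{\valty}$ and $C=\seqq c:\seqq\ty$ we take
\[
\omega \defeq \inp{a}{x,\seqq d}.\bigl(\out{\comp b}{x}\mid \ulinks{\seqq d}{\seqq c}\bigr)
\]
(with appropriate polarities), which forwards the received input names $\seqq d$ to the context's $\seqq c$. This $\omega$ is negative. Its naturality in $L$ and $C$ follows from the same wire-substitution calculation: after the initial input both sides reduce, up to $\contr$ and Theorem~\ref{t:wire-subst-law}, to the same process.

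\textbf{Coherence — the main obstacle.} The coherence diagrams of $\omega$ with the associator and unitors reduce to checking that two composites of wires and tuple repackings agree. The plan is to first apply Lemma~\ref{lem:negative-strict-proc} to put every morphism into input-prefixed form. Each side is then reduced via Lemma~\ref{l:otherlaws} and the wire law. The main difficulty is the naturality of $\omega$ in $C$: $Q$ is placed under an input prefix on one side and in parallel on the other. Here negativity of $Q$ is essential: since $Q$ cannot act until it receives an input on $\seqq d$, and those names only become available after the prefix, the two placements are barbed congruent. I would prove this via internal bisimilarity (Theorem~\ref{t:sound}) on the internal forms, matching transitions of $Q$ only after the initial input.
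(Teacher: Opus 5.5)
Your construction is essentially the paper's: $\omega$ is the same process, and your $P\oslash Q$ follows the same pattern of intercepting the output of $P$ and re-emitting it. For $Q$ an identity it coincides with the paper's $P\oslash\Delta_{-}$ by the input-name case of Lemma~\ref{l:wire}, and you also supply pieces the sketch leaves implicit (the action on $\mathbf{Proc}^{-}$-morphisms, later needed for $\oslash_{!}$, and the unit/associativity isomorphisms). One correction: preservation of composition by your joint definition (i.e.\ the interchange between the two arguments) and naturality of $\omega$ in $L$ do not follow from Theorem~\ref{t:wire-subst-law} alone, because on one side $Q'$ (respectively, the wires out of the received input names) is released before $P$ has emitted on $b$, and only linearity of $P$ (its output being available up to expansion) makes the two timings indistinguishable---this is exactly why the first argument of $\oslash$ ranges over linear processes---whereas negativity of $Q$ is what settles naturality in $C$.
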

\begin{proof}[Proof Sketch]
  The functorial action of \( \oslash \) is defined by
  \begin{align*}
    P \oslash \negenvTwo \defeq \inp {c'} {x, \seq a} . \resBP b (\resBP c (P \mid \out {\comp c}{x}) \mid \inp {\comp b}{y} . \out {b'}{y, \seq a})
  \end{align*}
  for \( P \colon b : \lich{\valty} \to c : \lich{\valty} \) and \( \negenvTwo = a_1 : \ty_1, \ldots, a_n : \ty_n \).
  Note that this process is linear because once the input at \( c' \) happens, \( P \) will perform a communication using \( c \), which in turn allows an output at \( b \) that triggers the output at \( b' \).

Additionally, the family of processes \( \omega_{\lich{\valty}, (\ty_1, \ldots,
\ty_n) } \colon J(\lich{\valty}) \otimes (\ty_1, \ldots,\ty_n)) \to
J(\lich{\valty} \oslash (\ty_1, \ldots, \ty_n))\) given by
  \begin{align*}
    \inp a{x, \seq d} . (\out b x \mid  \links  {\seq d} {\seq c} ) \colon b : \lich{\valty}, \seq{c} : \seq{\ty} \to a : \lich{\valty \times \ty_1 \times \cdots \times \ty_n}.
  \end{align*}
forms a natural transformation as required.
\end{proof}
The difference between \( \inp a x. P \otimes \inp b y . Q = \inp a x.
P  \mid \inp b y . Q\) and \( (\inp a x. P \otimes \inp b y . Q);
\omega = \inp a {x, b} . (P \mid \inp b y . Q)\) highlights the behaviour of the sequoid.
In the former case, the environment can directly access \( \inp b y . Q \) whereas in
the latter it can be executed only after the input at
\( a \) (which also receives the name \( b \)) is triggered, thereby making the resource at \( b \) internal to the `object' at \( a \).
This object-oriented view of the sequoid also helps explain why sequoid is related to references: a reference may be modelled as an object with read and write methods.

We can further construct a richer structure called \emph{sequoidal CCC} from processes by considering the Kleisli category.
Roughly, a sequoidal CCC is a sequoidal category whose monoidal category is a cartesian closed category in which \( \oslash \) and \( \Rightarrow \) interact
(see~Appendix~\ref{app:sequoidal} or~\cite{Laird19} for the formal definition).
\begin{restatable}{theorem}{sequoidalCCC}
  The  tuple \( (\ProcN_{!}, \ProcSL, \oslash_{!},
  J_{!}) \) is a sequoidal CCC.

  Here \( \oslash_{!} \colon \ProcSL \times \ProcN_{!} \to \ProcSL \) is defined by \( \lich{\valtyTwo} \oslash_{!} \lich{\valty} \defeq  \lich{\valtyTwo} \oslash !\lich{\valty} \) and \( \lich{\valtyTwo} \oslash_{!} P \defeq \lich{\valtyTwo} \oslash \prom P \), \( J_{!} \) is an identity-on-object functor such that \( J_{!}(P) \defeq \der; P \).
\end{restatable}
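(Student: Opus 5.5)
The plan is to check the axioms of a sequoidal CCC (following Laird~\cite{Laird19}) one at a time, reusing the structure already built. There are three groups of facts to establish.

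\textbf{Step 1 (cartesian closure).} $\ProcN_!$ is cartesian closed. This is the preceding theorem (Kleisli category of the relative Seely category $\ProcN$), so it can be taken as given.

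\textbf{Step 2 (the sequoidal category over $\ProcN_!$).} We must show that $(\ProcN_!, \ProcSL, \oslash_!, J_!)$ is a sequoidal category over the cartesian monoidal structure of $\ProcN_!$.
\begin{itemize}
\item For the action, we define $\oslash_!$ by precomposing $\oslash$ with the strong monoidal functor $!$, using the Seely isomorphisms $m$. The action coherences for $\oslash_!$ then follow from those for $\oslash$ in the sequoidal category $(\ProcN,\ProcSL,\oslash,J,\omega)$, since $!(A \with B)\cong !A\otimes !B$.
\item Functoriality of $L\oslash_! P = L\oslash \prom P$ in $P$ reduces to the relative comonad laws of Lemma~\ref{lem:promotion}.
\item $J_!$ is functorial because $\der$ is the Kleisli identity and $\der; (P;Q) = (\der;P)\,;_!\,(\der;Q)$. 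This follows from $\prom{(\der;P)};\der = \der;P$, i.e.\ law~\ref{it:lem:promotion:left-unit}.
\item For the strength, we take $\omega_!$ to be $\omega$ transported along $!$, namely $\der$ followed by $\omega$ at $!A$. Its naturality again follows from naturality of $\omega$ and Lemma~\ref{lem:promotion}.
\end{itemize}

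\textbf{Step 3 (the sequoidal CCC axioms proper).} Two things remain.
\begin{itemize}
\item The morphism $J_!(L\oslash_! A) \to J_!(L) \times A$ (roughly, the pair of $\der$ and the map that forgets the sequoid) must make $L\oslash_! A$ a retract. Equivalently, the relevant decomposition $J(L\oslash !A)\to J L \otimes !A$ should be an isomorphism on the appropriate summand.
\item The exponential must distribute over the sequoid: $(A\Rightarrow L)\oslash_! B$ and $A\Rightarrow(L\oslash_! B)$, together with the currying isomorphisms, should satisfy Laird's diagram.
\end{itemize}

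For every such isomorphism, I would write explicit processes in both directions. Concretely, these are inputs of tuples that re-emit components, plus wires on the exchanged input names. I would then verify the compositions by internal communication laws, Lemma~\ref{l:commutativity} (commutation of inputs), and the wire substitution law, Theorem~\ref{t:wire-subst-law}, which collapses the extra wires. Where linearity matters (membership in $\ProcSL$), I would use that $\oslash$ preserves linearity, as in the preceding proof sketch.

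\textbf{Main obstacle.} I expect the hardest part to be the "decomposition" axiom relating $L\oslash_! A$ and $L\otimes !A$. It fails to be an isomorphism in $\ProcN$ in general, and only holds because linear processes in $\ProcSL$ must first take the input at $a$. Showing that the composite is the identity needs Lemma~\ref{lem:negative-strict-proc} to put processes into input-prefixed form, plus the 1-input property to argue that delaying the inner resource $\inp b y.Q$ under the outer input is unobservable after hiding. I would try to prove it first via internal bisimulation up-to expansion, with the relational candidate built from an input-prefixed normal form.
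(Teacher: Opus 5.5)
You are verifying the wrong list of axioms, and the one that carries the real content is missing. In the definition of sequoidal CCC the statement refers to (Laird's, recalled in the appendix), one needs three things:
\begin{itemize}
\item $\ProcN_!$ is cartesian closed;
\item $(\ProcSL,\oslash_!)$ is a $\ProcN_!$-action that has a \emph{dual} action, i.e.\ for every object $A$ the functor $A\Rightarrow -$ on $\ProcSL$ is both a left and a right adjoint of $-\oslash_! A$;
\item $J_!$ is a strong morphism of $(\ProcN_!)^{\op}$-actions for $\Rightarrow$.
\end{itemize}
Your Step~3 instead targets a ``decomposition'' axiom (making $L\oslash_! A$ a retract of $J_!(L)\times A$) and an isomorphism $(A\Rightarrow L)\oslash_! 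B\cong A\Rightarrow(L\oslash_! B)$. Neither is part of the definition. Neither of them, nor the strength $\omega_!$ of your Step~2, yields the required adjunctions. So even if carried out completely, your plan does not prove the theorem, and the ``main obstacle'' you single out is spent on a condition that is not needed. Your Step~2 checks of the action and of functoriality of $J_!$ are fine; the paper takes them for granted.

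What is missing is the natural bijection $\ProcSL(\lich{\valty}\Rightarrow\lich{\valtyTwo},\lich{\valtyThree})\cong\ProcSL(\lich{\valtyTwo},\lich{\valtyThree}\oslash_!\lich{\valty})$, together with its mirror image.
\begin{itemize}
\item The paper first proves, for the $\ProcN$-action $(\ProcSL,\oslash)$, that $\seq\ty\multimap -$ is left and right adjoint to $-\oslash\seq\ty$ (Lemma~\ref{lem:multimap-is-dual-of-oslash}). It uses an explicit process $\varphi(P)$ which, on input $(x,\seq d)$, runs $P$ on $x$. When $P$ emits $(\seq e,y)$ on its domain name, $\varphi(P)$ forwards $y$ and wires $\seq d$ to $\seq e$.
\item It then transports this along $\lich{\valty}\Rightarrow - = {!\lich{\valty}}\multimap -$ and $-\oslash_!\lich{\valty} = -\oslash{!\lich{\valty}}$ (Lemma~\ref{lem:exp-is-dual-of-oslash-bang}).
\item Checking that $\varphi$ and its inverse compose to identities uses internal communication, Lemma~\ref{lem:negative-strict-proc}, Lemma~\ref{l:wire} and Theorem~\ref{t:wire-subst-law}.
\item Crucially, it also uses the \emph{linearity} of the argument $Q$, which guarantees that the hidden output at $b''$ actually fires. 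This, rather than closure of linearity under $\oslash$, is why $\ProcSL$ and not $\ProcS$ is the right category.
\end{itemize}
Once the dual action is established, the condition on $J_!$ is immediate. $J_!$ is the identity on objects, and $J_!(A)\Rightarrow B$ and $J_!(A\Rightarrow B)$ are literally the same object, so identities serve as the required isomorphisms.
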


\section{Interpretation of STLC}
\label{sec:encoding-lambda}
\newcommand*{\lamty}{\sigma}
\newcommand*{\lamtyTwo}{\tau}
\newcommand*{\lamterm}{t}
\newcommand*{\lamtermTwo}{s}
\newcommand*{\app}[2]{#1 \; #2}
\newcommand*{\ltopi}[2]{\llparenthesis #1 \rrparenthesis_{#2}}

We detail the interpretation of the simply-typed \(
\lambda \)-calculus in the cartesian closed category \( \ProcN_{!} \).
This induces a syntactic translation from \( \lambda \) to $\pi$
that validates \( \beta\eta \)-equivalence by construction.
We then compare this encoding with the encodings in the literature such as the seminal paper by Milner~\cite{Milner92} and the work by Hyland and Ong~\cite{HylandOng95} that gives an encoding based on game semantics.
As expected, the encoding we derive is essentially the one given by Hyland and Ong.
This provides evidence that the construction in Section~\ref{sec:category} faithfully follows that of game semantics.

\subparagraph*{The simply-typed $\lambda$-calculus.}
We fix the grammars
\begin{align*}
  \lamterm &\Coloneqq x \mid \lambda x. \lamterm \mid
\app \lamtermTwo \lamterm \qquad  \lamty \Coloneqq o \mid \lamtyTwo \to \lamty
\end{align*}
for types and terms, where $o$ is a fixed base type. We consider the
terms equipped with the standard typing rules, that we shall not repeat
here. Note that every type \( \lamty \) has a unique representation of
the form \( \lamty_1 \to \cdots \to \lamty_n \to  o \) where \( n \ge 0
\). 

\subparagraph*{Interpretation in $\ProcN_{!}$.}
The simply-typed $\lambda$-calculus has a sound interpretation in any
cartesian closed category, following a well-established recipe
\cite{lambekscott}. The only degree of liberty is
the image set for the base type~--- here we choose the minimal
$\ltopi{o}{} \defeq \lich{\unittype}$ (which corresponds, in game
semantics terms, to an arena with a single move). 
From there, the definition of the interpretation follows purely
mechanically. The interpretation of the arrow type is $\ltopi{\lamtyTwo
\to \lamty}{} \defeq !\lamtyTwo \multimap \lamty$, altogether giving an
interpretation of every type of the simply-typed $\lambda$-calculus as
a strict type of our $\pi$-calculus. 

For iterated arrows, this amounts to:
\begin{align*}
  \ltopi{\lamty_1 \to \cdots \to \lamty_n \to o}{} = \lich{!
{\ltopi {\lamty_1}{}}^{\bot} \times \cdots \times !{\ltopi
{\lamty_n}{}}^{\bot} \times \unittype}
\end{align*}
which is essentially how types are translated in the
translation from PCF to the \( \pi \)-calculus by Hyland and
Ong~\cite{HylandOng95}. We note that this is different from variants of Milner's encoding that has been commonly used in the literature~\cite{Milner92,SangiorgiWalker01}.
Typically, arguments of a function are received one by one even if the target of the translation is the polyadic \( \pi \)-calculus.\footnote{The original encoding by Milner used the monadic \( \pi \)-calculus.}

The interpretation
of a simply-typed context $\Gamma : x_1 : \lamty_1, \ldots, x_n :
\lamty_n$ is the iterated product $\ltopi{\Gamma}{} =
\ltopi{\lamty_1}{} \with \cdots \with \ltopi{\lamty_n}{}$.
For each $\Gamma \vdash \lamterm : \lamtyTwo$, the interpretation of terms in
the Kleisli category $\ProcN_{!}$ yields \( \intr{\lamterm}{} \in \ProcN_{!}(\ltopi{\Gamma}{}, \ltopi{\lamtyTwo}{}) \)
\emph{i.e.} a negative process $x : \ich{(\valty_1 + \cdots +
\valty_n)}, p : \ltopi{\lamtyTwo}{} \vdash \ltopi{\lamterm}{}$ where $\Gamma =
x_1 : \lamty_1, \ldots, x_n : \lamty_n$ with $\ltopi{\lamty_i}{} =
\lich{\valty_i}$. This process $\ltopi{\lamterm}{}$ is computed inductively on
$\lamterm$, in a way that we may wish to regard as a
syntactic translation, which we set to describe.

\subparagraph*{The interpretation as an encoding.} Because of the way
contexts are interpreted in a cartesian closed category, the induced
encoding uses the cartesian product, and hence sum types. We prefer to
avoid this, because $\pi$-calculi often do not have sum types; and we
wish to relate to previous encodings that do not involve sums. So we
present a syntactic encoding instead sending $\Gamma \vdash \lamterm : \lamtyTwo$ to
\[
\ltopi{\lamterm}{} \in \ProcN(!\ltopi{\lamty_1}{} \otimes \cdots \otimes
!\ltopi{\lamty_1}{}, \ltopi{\lamtyTwo}{})\,,
\]
which is related to $\intr{\lamterm}$ via the canonical isomorphism 
\[
s_\Gamma : ! \ltopi{\lamty_1}{} \otimes \cdots \otimes ! \ltopi{\lamty_n}{} \cong
! (\ltopi{\lamty_1}{} \with \cdots \with \ltopi{\lamty_n}{})\,,
\]
obtained via the Seely isomorphism 
$!\tyTwo \otimes !\ty \cong !(\tyTwo \& \ty)$, one of the components of
a relative Seely category.

An encoding
$\ltopi{\lamterm}{} \in \ProcN(!\ltopi{\lamty_1}{} \otimes \cdots \otimes
!\ltopi{\lamty_1}{}, \ltopi{\lamtyTwo}{})$ 
has free output  names
corresponding to \( x_i \), and one  free linear input name; the latter
is often called a \emph{continuation name}.
In the translation, we override \( x_i \) and also use them for the output names whereas the continuation name \( p \) is passed as a parameter of the encoding of terms, which we write as \( \ltopi{-} p \).
The encoding is given as below.
\begin{align*}
  \ltopi x p &\defeq \inp p {\seq y} . \out x {\seq y} \\
  \ltopi{\lambda x_1 . \lamterm} p &\defeq \inp p {x_1, \ldots, x_n, q} . \resBP r(\out {\comp r} {x_2, \ldots, x_n, q} \mid \ltopi \lamterm r) \\
  \ltopi{\app \lamtermTwo \lamterm} p &\defeq \inp p {x_2, \ldots, x_n, q} . \resBP r \resB{x_1}{y}(\out{\comp r}{x_1, \ldots, x_n, q} \mid \ltopi \lamtermTwo r \mid !{\ltopi \lamterm y})
\end{align*}

The encoding is derived from the interpretation (modulo the choice of
representatives for equivalence classes).
For example, the encoding of \( \lambda \)-abstraction simply uses the
bijection of Lemma~\ref{lem:closed-struture}.\footnote{Here variables
of a tuple type are expanded so as to make the encoding similar to that of
Hyland and Ong, which uses the polyadic \( \pi \)-calculus.} 

This encoding follows the categorical interpretation, so the following
theorem is a routine verification by induction on $\lamterm$:

\begin{theorem}
For any $\Gamma \vdash \lamterm : \lamty$, $\ltopi{\lamterm}{} \wbc \intr{\lamterm}
\circ s_\Gamma$.
\end{theorem}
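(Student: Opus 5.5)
We argue by induction on the typing derivation of $\Gamma \vdash \lamterm : \lamty$, establishing at each step $\ltopi{\lamterm}{} \wbc \intr{\lamterm} \circ s_\Gamma$. It is convenient to first recast the CCC interpretation in $\ProcN_!$ as an interpretation in $\ProcN$ with source $!\ltopi{\lamty_1}{}\otimes\cdots\otimes!\ltopi{\lamty_n}{}$, transporting the CCC structure along $s_\Gamma$. Concretely, we compute once and for all the following three conjugates, each as an explicit process up to $\wbc$:
\begin{itemize}
\item the projection $\der;\pi_i$ conjugated by $s_\Gamma$;
\item currying $\Lambda$ transported along $s_\Gamma$ (using $s_{\Gamma,x} = s_\Gamma \otimes \id$ followed by the Seely map);
\item Kleisli composition with pairing, $\prom{\langle s,t\rangle};\evm$, transported the same way.
\end{itemize}
Since $s_\Gamma$ is built from the Seely maps $m$ and their inverses $!\inp a y.\out c{\inl y}\mid !\inp b z.\out c{\inr z}$, each conjugation amounts to composing with wires that merely tag and untag sums. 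We eliminate these using the wire substitution law (Theorem~\ref{t:wire-subst-law}), the replication laws of Lemma~\ref{l:otherlaws}, and case-reduction on injected values.

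\emph{Variable case.} $\intr{x_i}\circ s_\Gamma$ is $m$-tagging followed by dereliction and the projection $\inp a y.\out b{\coprodTag_i(y)}$. Composing with the untagging half of $m$, the $\binCase{}{}{}{}{}$ reduces deterministically. We then contract the intermediate restricted names by the expansion laws inherited from \Api{} (Section~\ref{ss:trans}) together with Theorem~\ref{t:wire-subst-law}. The result is $\inp p{\seq y}.\out{x_i}{\seq y}$, where the received tuple is expanded as in the footnote.

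\emph{Abstraction case.} By the induction hypothesis, $\ltopi\lamterm r$ represents $\intr\lamterm\circ s_{\Gamma,x_1}$. The curried morphism is $\Lambda(\cdot)$ from Lemma~\ref{lem:closed-struture}, namely $c(\seq a,x).\resBP b(P\mid \out{\comp b}x)$. Instantiating it gives exactly $\inp p{x_1,\dots,x_n,q}.\resBP r(\out{\comp r}{x_2,\dots,x_n,q}\mid\ltopi\lamterm r)$, modulo the isomorphism $s_{\Gamma,x_1}\cong s_\Gamma\otimes{!}\ltopi{\lamty_1}{}$. That isomorphism is discharged by $m\circ m^{-1}=\id$ and Theorem~\ref{t:wire-subst-law}.

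\emph{Application case.} In the Kleisli category, $\intr{\lamtermTwo\,\lamterm}=\langle\intr\lamtermTwo,\intr\lamterm\rangle^\dagger$-composed with $\evm$, where $\evm=\Lambda^{-1}(\id)$. We proceed in four steps:
\begin{enumerate}
\item Unfold $\evm$ via $\Lambda^{-1}(Q)=\inp b x.\resBP c(\out{\comp c}{\seq a,x}\mid Q)$.
\item Replace the promotion of the pair by $\prom{\intr\lamtermTwo}\otimes\prom{\intr\lamterm}$, precomposed with the contraction $\prom{\langle\der,\der\rangle}$; this is the promotion/Seely diagram, Lemma~\ref{lem:promotion} and the preceding lemma.
\item Collapse $\prom{(\cdot)}$ on the argument to $!\ltopi\lamterm y$ by Lemma~\ref{lem:negative-strict-proc}, which lets us write the argument as an input process.
\item Collapse contraction on the context names. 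Duplicating output names $x_i$ by wires $!\inp{}{}.\out{}{}$ is absorbed by Theorem~\ref{t:wire-subst-law} for O-names, yielding shared free outputs $x_i$ in both $\ltopi\lamtermTwo r$ and $!\ltopi\lamterm y$.
\end{enumerate}
What remains is $\resBP r\resB{x_1}{y}(\out{\comp r}{x_1,\dots,q}\mid\ltopi\lamtermTwo r\mid!\ltopi\lamterm y)$ under the input prefix at $p$. Here the prefix $\inp p{\ldots}$ commutes past the restrictions, since they do not bind $p$ and the body is guarded.

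The main obstacle is this last case: the bookkeeping of contraction and the Seely isomorphism, and in particular justifying that the duplicating wires on context names disappear. This requires the O-name instance of Theorem~\ref{t:wire-subst-law} applied under a replication, so we will use the congruence of $\wbc$ (Lemmas~\ref{l:res}, \ref{l:par}, via Lemma~\ref{l:wire_trans}) to rewrite inside the context. All other steps are routine unfoldings of definitions and applications of the cited laws.
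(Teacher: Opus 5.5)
Your proposal is correct and follows the paper's route. The paper only calls this a routine induction on $t$ following the categorical interpretation, and your three cases are exactly the unfoldings it has in mind: dereliction followed by projection for variables; $\Lambda$ for abstraction, where what is really used is naturality of $\Lambda$ in the context together with $s_{\Gamma,x_1}=(s_\Gamma\otimes\id);m$; and, for application, the promotion/Seely diagram with $\evm=\Lambda^{-1}(\id)$, the sum-tagging and contraction wires being removed by the wire substitution law and the replication theorems.

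One small correction: rewriting underneath a replicated input only needs $\wbc$ to be closed under contexts, which holds by definition since $\wbc$ quantifies over all closing contexts. The hole is reached through the promotion context $!\inp c x.\resBP b(\cdot\mid\out{\comp b}x)$. Lemmas~\ref{l:res} and~\ref{l:par} do not give this, since they concern internal bisimilarity and cover only restriction and parallel composition.
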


By the soundness theorem for cartesian closed categories, it
automatically follows that our encoding is sound for
$\beta\eta$-equivalence. 

\subparagraph*{Observations.}
Again, this is essentially the translation given by Hyland and Ong~\cite{HylandOng95}.
One difference is that our translation uses free outputs whereas they used bound outputs with dynamic wires, i.e.~\( \bout a c : \transf{{\links {\comp c} b}} \).
Since the two are barbed congruent, this is just a difference in the choice of the representative.
Free outputs give a more economical representation whereas only using bound outputs gives a tighter connection with game semantics in the sense that each output action corresponds to a move of a strategy.
Note that the equivalence \( \out a b \cong \bout a c : \transf{{\links {\comp c} b}} \) was not known (even for O-names) when~\cite{HylandOng95} was published.
Another difference lies in the way \( \beta\eta \)-equivalence is validated.
The validity of Hyland and Ong's encoding is proved via a non-trivial correspondence between processes and strategies and the fact that the category of games forms a CCC.\footnote{This is partly because the motivation of~\cite{HylandOng95} was to show this correspondence.}
Directly checking the \( \beta\eta \)-laws at the level of processes (i.e.~constructing a CCC as in Section~\ref{sec:category}) is simpler.

While we use the \( \lambda \)-calculus as an illustrative example, we note that the categorical interpretation is a general methodology for deriving process encodings of various (higher-order) programming languages (non-deterministic, concurrent, mutable state, etc.).
The categorical structure and its compositionality can be used to
derive a correct by construction encoding of standard data types (e.g.
arrow type, product type). This allows us to focus on the encoding of
additional constructs, whose correctness can be checked separately.
This is in stark  contrast with other process encodings that are usually designed by operational means.

\section{Related Work}

A number of offspring  of the $\pi$-calculus  require that the
input end of a channel should be  permanently owned by a single
process.
Sometimes this feature is enforced by the type systems, as in the case of
\iflong
\emph{linearly} or
\fi
 \emph{uniformly  receptive} names
\cite{Sangiorgi97}.
\iflong
,
where inputs are either linear or replicated and hence behave
 as `function' names.
\fi
In the Join calculus~\cite{FournetGonthier96}, the feature
is syntactic, as  all inputs  are unique and  replicated; the need for modelling state is
achieved by the introduction of a `join' construct.
\iflong
 that combines two or more inputs. 
\fi
In the $\pi_1$-calculus \cite{Amadio97,AmadioBL03}, stateful processes
are modelled using recursion, in place of replication and join, to allow nesting of
inputs. Similar approaches are followed 
in other $\pi$-calculus  languages addressing mobility in distributed systems, see e.g.,  \cite{Castellani01}.
In these calculi, the owner of an input name is persistent since
input capabilities may not be communicated.
\iflong
 only 
output names may
be exchanged. 
\fi This property is sometimes called
\emph{locality}, and is  studied in the Asynchronous Localised $\pi$-calculus
\cite{MeSa98}.  Our proof technique based on a translation into a
subset of internal processes is inspired by  proof techniques in \cite{MeSa98}.

Another class of \( \pi \)-calculi related to \api{} is the process calculi developed through the Curry–Howard correspondence with linear logic~\cite{BellinScott94,CairesPT16,Wadler14}.
In these calculi, typically an input name is either used linearly or
has a unique replicated occurrence.
Hence these calculi satisfy the 1-input property.
In these linear-logic-inspired calculi, `wire substitution law' and the commutativity of prefixes have also been considered.
In particular, wires are often defined as a primitive construct with the substitution law serving as their defining property (e.g.~in~\cite{Wadler14,KokkeMP19,YoshidaCS20}).
However, the validity of these laws is often supported only by informal arguments based on proof-theoretic adequacy, lacking a formal justification from a behavioural perspective.
A notable exception is~\cite{PerezCPT14}, in which the behavioural validity of commutativity laws for \( \pi\text{DILL} \) is proved using typed context bisimilarity.
 One reason the behavioural theory of calculi such as Wadler's CP~\cite{Wadler14} remains underexplored is the difficulty of defining a standard LTS, due to discrepancies between the syntax of CP and the standard \( \pi \)-calculus.
To remedy this issue, Kokke et al.~\cite{KokkeMP19} proposed a calculus inspired by the theory of hyper sequents.
While their calculus has a syntax close to that of the ordinary \( \pi \)-calculus and is equipped with an LTS, input and output actions are delayed, unlike in \api{}.

The behavioural equivalence we have used in \api is as flexible as possible in the
treatment of input and output ends of channels: closing contexts may connect names in an
arbitrary manner.  This is reminiscent of \emph{open bisimulation} for $\pi$-calculi
\cite{Sangiori93}.  In both cases, behavioural equivalence must take into account the
possibility that syntactically different names may give rise to interactions, depending on
the context in which they are used.  This treatment of name identities yields sharp
substitutivity and congruence properties, and has in turn analogies with symbolic
bisimulation (e.g., \cite{BorealeN96}). Indeed, open bisimilarity has been advocated in
practical implementations of bisimulation checkers for $\pi$-calculi and in
representations in logical frameworks; see e.g.,
\cite{VictorM94,BaeldeGMNT07,TiuM10,TiuD10,LiuTH25}.  A further analogy between
bisimilarity in \api and open bisimilarity in $\pi$-calculi appears in the dynamic
creation of permanent constraints on names during the bisimulation game: in the case of
\api, the constraints represent connections between input and output names; in the case of
open bisimilarity the constraints represent distinctions among names.  While seemingly
dual in the two cases, the constraints are handled similarly in both definitions and proofs.

The idea of using \( \pi \)-calculus processes as syntactic representations of strategies of game semantics dates back to the early days of game semantics~\cite{HylandOng95,FioreHonda98,HondaYoshida99}.
In the study of process representations of strategies, the redundancy of copycat strategies (i.e.~dynamic wires) was also pointed out by Fiore and Honda~\cite{FioreHonda98}.
To address this issue, they introduced the notion of optimised
strategies to represent the free outputs of \( \pi
\)-calculus, which can be understood as an application of
Lemma~\ref{l:wire}.
In a similar vein, the connection between operational game semantics and Internal \( \pi \)-calculus has been studied~\cite{JaberSangiorgi22}.
Since these works studied sequential languages, only restricted (essentially sequential) fragments of \( \pi \)-calculus were employed.
The relationship between concurrent game semantics and the \( \pi \)-calculus is explored in~\cite{YoshidaCS20} via a session-typed variant of the \( \pi \)-calculus. In that calculus, a typing rule inspired by differential linear logic is introduced, which is similar to the In rule of \api{}.

The processes-as-strategies paradigm has also enabled the transfer of game semantic conditions~--- such as sequentiality, well-bracketing, visibility, and innocence~--- to the \( \pi \)-calculus by means of type systems~\cite{BergerHY01,HirschkoffPS21,HirschkoffQS25}.
This line of work is orthogonal to the 1-input property, and the two approaches may be combined.

On the other hand, game models of \( \pi \)-calculi have also been investigated.
Laird~\cite{Laird05} introduced a game model for the Asynchronous \( \pi \)-calculus (with I/O separation), which was later extended to a truly concurrent model~\cite{SakayoriTsukada17}.
Since these game models require copycat strategies to behave as identities, they are not sound with respect to barbed congruence: input wire law does not hold for barbed congruence in ordinary \Api{}.
By contrast, by considering a session typed \( \pi \)-calculus rather than \Api{}, a causal game model that is fully abstract for barbed congruence is introduced in~\cite{CastellanYoshida19}.

Beyond game semantic models, we should also mention other denotational models of the \( \pi\)-calculus, including those based on functor categories~\cite{Stark96,FioreMS02,Hennessy02,CattaniSW97}, sheaves~\cite{EberhartHS15} and Geometry of Interaction~\cite{JagadeesanJagadeesan95}.
These works differ in focus from ours: they aim to interpret the \( \pi \)-calculus in specific categories, whereas our goal is to organize processes themselves into categories.
On a more technical note, the models based on functor categories place emphasis on the nominal aspects of the \( \pi \)-calculus.
By contrast, our results are achieved by excluding such nominal aspects, by forcing communications to happen along names that have been explicitly connected when created.

\bibliography{main}

\appendix
\section{Additional material for Section~\ref{s:rs} }
\label{a:b}

\begin{definition}[Structural congruence]
\label{d:sc}
\emph{Structural congruence} is the smallest congruence on  processes
induced by the following rules: 
	\begin{mathpar}
		P|\nil\equiv P
		\and
		P|Q\equiv Q|P
		\and
		P|(Q|R)\equiv (P|Q)|R
\\
		P|\resBPP{a}b Q \equiv \resBPP{a}b ({P|Q}) \text{ if }a,\comp a \notin\fn{P}
		\and
\resBPP{a}b\resBPP{c}d		{P}\equiv \resBPP{c}d \resBPP a{b}P
		\and
		\resBPP{a}b \nil\equiv\nil
	\end{mathpar}
      \end{definition}
       There is no unfolding of replication, as usual in $\pi$-calculi, as this would violate the
1-input property.  We can however derive an unfolding law for
replication, as by Lemma~\ref{l:otherlaws}(\ref{law:unf}).

\begin{definition}[Reduction]
\label{d:rr}
The \emph{reduction relation},  $P \longrightarrow P' $, is  defined on processes by the
following rules: 
  \begin{mathpar}
  \inferrule[Red]{}{\resB a b (\inp a x . P | \out b v | Q ) \longrightarrow
\resB a b (  P \sub v x  | Q )
}
\\
  \inferrule[RedRep]{}{\resB a b (!\inp a x . P | \out b v | Q ) \longrightarrow
\resB a b (!\inp a x . P |  P \sub v x  | Q )
}
\\
  \inferrule[Par]{P \longrightarrow P'}{
P | Q \longrightarrow P'|Q}
 \qquad
  \inferrule[Res]{P \longrightarrow P'}{
\resBP a P  \longrightarrow \resBP a P'}
 \qquad
  \inferrule[Equiv]{P \equiv P' \andalso P' \longrightarrow P'' \andalso P'' \equiv P'''}{
P  \longrightarrow P'''}
\end{mathpar} 
\end{definition}

\begin{theorem}[Subject Reduction]
\label{t:sr}
If $\ok \Gamma P$ and $P \longrightarrow P' $ then also 
$\ok \Gamma {P'}$.
\end{theorem}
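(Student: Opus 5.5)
We prove Subject Reduction by induction on the derivation of $P \longrightarrow P'$, after first establishing two auxiliary facts.

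\textbf{Substitution lemma.} If $\Gamma, x : T \vdash P$ and $\Gamma' \vdash_v v : T$, with $\Gamma$ and $\Gamma'$ compatible (they agree on shared names and have disjoint I-names when $T$ is an input type), then $\Gamma, \Gamma' \vdash P\sub v x$. We prove this by induction on the typing derivation of $P$. Three cases need care.
\begin{itemize}
\item Rule \textsc{Par}: the I-name $v$ (if $x$ is an input) goes to exactly the component that owned $x$. This keeps $\iname{\Gamma_1} \cap \iname{\Gamma_2} = \emptyset$.
\item Rule \textsc{RIn}: the premise forbids I-names in the body, so $x$ can occur there only if it is an O-name. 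Substituting an O-name preserves the side condition.
\item Rule \textsc{In}: if $v$ is an I-name, $x$ itself is the subject of the nested inputs. Replacing it by $v$ is fine because $v$ is owned only through this substitution, by compatibility.
\end{itemize}

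\textbf{Structural congruence.} If $\Gamma \vdash P$ and $P \equiv Q$, then $\Gamma \vdash Q$. This is checked rule by rule and in both directions. The only delicate case is scope extrusion: since $a, \comp a \notin \fn P$, the typing of $P$ can be moved under the restriction, and affinity (weakening) handles names that are unused on one side.

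\textbf{Main induction.}
\begin{itemize}
\item \textsc{Red}: invert the typing of $\resB a b (\inp a x . P \mid \out b v \mid Q)$. This gives $\Gamma, a : \ich T, b : \och T \vdash \ldots$, split by \textsc{Par} into three environments with disjoint I-names. If $v$ is an I-name, it lies in the environment of the output and is therefore disjoint from those of $P$ and $Q$. The substitution lemma then types $P\sub v x$ under the union, and rule \textsc{Res} closes the case.
\item \textsc{RedRep}: proceed similarly, but keep a copy of $!\inp a x . P$. Its typing involves only O-names (plus $a$), so adding the new body $P\sub v x$ in parallel requires only that the I-names of $P\sub v x$ avoid those of the replication and of $Q$. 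The replication owns only $a$, and $P$ contains no free I-name other than possibly $x$, which becomes $v$; since $v$ came from the output environment, it is disjoint from the rest.
\item \textsc{Par}, \textsc{Res}, \textsc{Equiv}: these follow from the induction hypothesis and the structural-congruence lemma.
\end{itemize}
For the extended calculus the same argument goes through, using linear environment combination and the standard tuple and case reductions.

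\textbf{Main obstacle.} The hard part is \textsc{RedRep} together with the substitution lemma for I-names. We must verify that duplicating the body never duplicates an I-name, and this relies exactly on the \textsc{RIn} side condition that the body has no free I-names except the parameter.
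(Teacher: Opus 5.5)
Your proof is correct and follows the standard route the paper implicitly relies on; the paper itself states Subject Reduction in the appendix without proof. The route is a substitution lemma, invariance of typing under $\equiv$, and induction on the reduction, resting on two observations: a transmitted I-name belongs to the output's share of the environment, and \textsc{RIn} forbids free I-names other than the parameter in a replicated body.

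Two routine completions are needed:
\begin{enumerate}
\item In the \textsc{RIn} case of the substitution lemma, $x$ may also be the input subject of the replication. This is harmless, since the subject is simply renamed.
\item The scope-extrusion case of $\equiv$ needs strengthening as well as weakening: either drop unused names from an environment, or $\alpha$-rename the restricted pair away from the environment.
\end{enumerate}
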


\section{Additional material for Section~\ref{s:ls}}
\label{a:ls}

\subsection{Background on \texorpdfstring{\( \pi \)}{pi}-calculi}
\label{aa:backpi}

We recall the syntax of the  (simply-typed)  Asynchronous $\pi$-calculus, \Api:

\begin{definition}[\Api]
\label{d:Api}
\[ \begin{array}{rcl}
P & \Coloneqq &  \nil \midd  P | Q\midd   \inp a x .{P}
\midd \out bv \midd %
\res a   P
\midd ! \inp a x .P \\
 v & \Coloneqq & a \midd \unitvalue
   \end{array}
  \]
\end{definition} 
Below is  the standard LTS for the calculus (in the \emph{ground}
style); as usual, $\bout ac $ stands for $\res c \out a c$.

\begin{center}
\begin{tabular}{rlcrl}
{\trans{ inp}}:& $ \inp  a b .     P \arr{\inp a b }P$ & 
&
{\trans{ out}}:& $ \out a b     \arr{\out a b}\nil$ \\% [\mysp]
{\trans{ rep}}:& $ ! \inp  a b .     P \arr{\inp a b }P | ! \inp  a b .     P$ & 
&
{\trans{ par}}:& $\displaystyle{   P \arr\mu   P' \over   P \mid Q   \arr\mu
P' \mid Q } $ if $\bn \mu \cap \fn Q = \emptyset $   \\ %
{\trans{ com}}:& $\displaystyle{ 
P \arr{\inp a b }P' \andalso Q \arr{\out a c} Q' \over
P | Q \arr\tau P' \sub cb | Q'} 
$& 
&
{\trans{ close}}:& $\displaystyle{ 
P \arr{\inp a b }P' \andalso Q \arr{\bout a c} Q'\over
P | Q \arr\tau \res c (P' \sub cb | Q')} 
$
  if $c \cap \fn {P} = \emptyset $
\\
{\trans{ res}}:& $\displaystyle{ P \arr{\mu}P' \over
 \res a     P   \arr{ \mu} \res a P'  } $ $ a \not \in \n\mu$
& %
&
{\trans{ open}}:& $\displaystyle{ P \arr{\out ab }P' \over
 \res b     P   \arr { \bout a b  }  P'  } $ 
\end{tabular} 
\end{center} 

We recall the ordinary definitions of strong and weak (synchronous) bisimilarity,
and of the expansion relation,  on an arbitrary LTS; thus in the remainder
of this section all definitions and results are on a generic LTS, which 
could be  the LTS for
\Api, as well as that for other languages, including  \api. As usual, following
the Barendregt convention, we assume freshness of possible bound names  in actions. 

\begin{definition}[Strong bisimulation]
\label{d:bisimulation}
 A symmetric relation $\RS$ on the processes of an LTS
is a   {\em strong bisimulation} if  $P\RR  Q$ implies that, 
whenever $P \arr\mu P'$, there is $Q'$ such that 
$Q \arr\mu Q'$ and $P' \RR Q'$.

Processes $P$ and $Q$ are  \emph{strongly  bisimilar},
written $P \sbS Q$, if $P  \RR Q$ for some
strong bisimulation $\R$.
\end{definition} 
As usual, %
$ \Longrightarrow $ is the reflexive and transitive closure of $ \arr\tau $, and
$\Arr\mu \defi \Longrightarrow \arr\mu \Longrightarrow $. 
Further, $\Arcap\mu $ is $ \Longrightarrow $ when $\mu=\tau$ and $\Arr\mu$ otherwise. 
 
\emph{Weak bisimulation} and  \emph{weak bisimilarity}, written $\wbS$, are defined as
their strong counterpart in Definition~\ref{d:bisimulation}, except that  the answer
transition in the bisimulation game is allowed to be $Q \Arcap \mu Q'$.
\emph{Weak simulation} and  \emph{weak similarity} are the `one-way'
analogue; that is, in  weak simulation does not have the requirement
of symmetry. We write $\wse$ for (weak)  \emph{simulation
  equivalence}, or \emph{mutual simulation}, which holds on two
processes if  similarity between them holds in both directions.

A useful auxiliary relation in proof techniques is the \emph{expansion relation}, $\expa$.

\begin{definition}[Expansion]
\label{d:expa}
 A relation $\RS$ on the processes of an LTS
 is an {\em  expansion} if, whenever
 $P\RR Q$, %

\begin{enumerate}

\item   $P \arr\mu P'$ implies that there is $Q'$ with $Q \arcap \mu
  Q'$
 and $P' \RR Q'$;

\item 
    $Q \arr\mu Q'$   implies that there is $P'$ with $P \Arr \mu
 P'$ and $P' 
\RR Q'$.

\end{enumerate}
 $P$  {\em expands } $Q$, written
$Q  \expa P$, 
if $P \RR Q$,  for some expansion $\R$. 
\end{definition}

The definition of \emph{weak bisimulation up-to $\wbS$ and $\expa$}
is obtained from that of weak bisimulation by replacing 
 the requirement 
$P' \RR Q'$ (on the derivatives the challenge and response transitions), with 
  $
 P' \wbS \, \RR\, \expa Q'$. 

\begin{theorem}[Soundness of bisimulation up-to $\wbS$ and $\expa$~\cite{Sangiorgi96b}]
\label{t:uptoWE}
If $\R$ is  a weak bisimulation up-to $\wbS$ and $\expa$, then ${\R} \subseteq {\wbS}$.
\end{theorem}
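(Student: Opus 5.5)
The statement to prove is Theorem~\ref{t:uptoWE}: every weak bisimulation up-to $\wbS$ and $\expa$ is contained in $\wbS$. I would use the standard argument. Given such an $\R$, set
\[
\S \defi {\wbS}\,\R\,{\expa}
\]
and show that $\S \cup (\S)^{-1}$ is a weak bisimulation. Since $\R \subseteq \S$ (by reflexivity of $\wbS$ and $\expa$), this yields ${\R}\subseteq{\wbS}$.

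Before the main argument I establish three auxiliary facts.
\begin{enumerate}
\item $\wbS$ is transitive, and ${\expa} \subseteq {\wbS}$.
\item The expansion relation lifts to weak transitions. If $P \expa Q$ and $Q \Arcap\mu Q'$, then $P \Arcap\mu P'$ with $P' \expa Q'$. Conversely, if $P \Arr\mu P'$, then $Q \Arcap\mu Q'$ with $P' \expa Q'$, so $Q$ uses at most as many $\tau$-steps. Both are proved by induction on the number of steps, using the two clauses of Definition~\ref{d:expa}.
\item The up-to clause lifts from single challenge transitions to weak ones. If $P \RR Q$ and $P \Arcap\mu P'$, then $Q \Arcap\mu Q'$ with $P' \wbS\,\R\,\expa Q'$.
\end{enumerate}

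\textbf{The main obstacle is the third fact.} The induction on the length of $P$'s weak transition gets stuck: after one step we only have $P_1 \wbS\,\R\,\expa Q_1$, not $P_1 \RR Q_1$, so the inductive hypothesis does not apply directly. I handle this with a nested argument. Suppose $P_1 \wbS P_2 \RR Q_2 \expa Q_1$. For the remaining steps of $P_1$:
\begin{itemize}
\item first match them by $P_2$, using weak bisimilarity;
\item then apply the inductive hypothesis to $P_2 \RR Q_2$, whose weak transition is possibly longer;
\item finally match the result from $Q_2$ by $Q_1$, using the first (``no more steps'') clause of expansion lifted as in the second fact.
\end{itemize}
The induction must therefore be on the number of challenge transitions of the original $P$, not on derivation length, and the inductive hypothesis must be applied to arbitrary pairs of $\R$ with weak challenges. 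To make this well-founded I induct on the length of the weak challenge and generalise the statement to all pairs in $\R$. This is exactly where the asymmetry of $\expa$ matters: the $\expa$ on the right must be the ``efficient'' side, so that $Q_1$'s response does not grow. This is why the theorem is stated with $\expa$ rather than $\wbS$ on the right.

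With these in hand, the main argument is a diagram chase. Take $P \wbS P_1 \RR Q_1 \expa Q$ and a transition $P \arr\mu P'$.
\begin{itemize}
\item By weak bisimilarity, $P_1 \Arcap\mu P_1'$ with $P' \wbS P_1'$.
\item By the third fact, $Q_1 \Arcap\mu Q_1'$ with $P_1' \wbS\,\R\,\expa Q_1'$.
\item By the second fact, $Q \Arcap\mu Q'$ with $Q_1' \expa Q'$.
\end{itemize}
Composing, using transitivity of $\wbS$ and of $\expa$ (the latter also by a standard argument), gives $P' \;\S\; Q'$. The symmetric direction, a transition of $Q$, is analogous. Here I use that $\R$ is symmetric and that the second clause of $\expa$ permits weak answers from $Q_1$.
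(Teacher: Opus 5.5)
\paragraph{The gap is your third fact, and a cleverer induction cannot fix it.}
With $\wbS$ on the challenger's side, you match the remaining steps of $P_1$ through $P_1 \wbS P_2$. This yields a weak transition of $P_2$ that may be arbitrarily longer than the one you started from. You spot this obstacle correctly, but your remedy (induction on the length of the weak challenge, generalised to all pairs of $\R$) is exactly the induction the obstacle rules out: it gives no hypothesis for $P_2 \mathrel{\R} Q_2$.

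In fact, with $P'$ as the challenger's derivative and $\wbS$ on its side, both your third fact and the theorem are false. Take an LTS with $s_0 \arr\tau s_1 \arr{a} s_2$ and a deadlocked state $t$, and let $\R = \{(s_0,t),(t,s_0)\}$.
The only challenge is $s_0 \arr\tau s_1$, and $t$ answers by staying put. We have $s_1 \wbS s_0 \mathrel{\R} t \expa t$, since an expansion is reflexive whichever way you orient it. So $\R$ satisfies the clause $P' \wbS\,\R\,\expa Q'$.
Yet $s_0$ and $t$ are not weakly bisimilar. Moreover $s_0 \Arr{a} s_2$ cannot be matched by $t$, which contradicts your third fact directly. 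Weak bisimilarity on the challenger's side lets the challenger undo its own $\tau$-step; this is the classical failure of ``up to $\wbS$''.

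\paragraph{What is missing is the correct placement of the expansion.}
The cited result puts it on the challenger's side, and so does the paper itself later for internal bisimulation up-to, where $\contr$ is on the challenger's side and $\wbS$ on the answerer's. Concretely, the challenger's derivative must expand the $\R$-related process, $P' \contr P'' \mathrel{\R} Q'' \wbS Q'$, and $\wbS$ is allowed only on the answerer's side.
Read with Definition~\ref{d:expa}, where $Q \expa P$ means that $P$ expands $Q$, the clause $P' \wbS\,\R\,\expa Q'$ has exactly this shape when $Q'$ is the challenger's derivative. With $P'$ as the challenger, as you take it, it is the unsound variant above.

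With this orientation your plan goes through:
\begin{enumerate}
\item After the first step you have $P_1 \contr P_2 \mathrel{\R} Q_2 \wbS Q_1$.
\item The efficient process $P_2$ matches the remaining steps of $P_1$ in at most as many steps. This is your second fact, applied on the challenger's side.
\item Plain induction on the length of the challenge therefore applies to $P_2 \mathrel{\R} Q_2$.
\item $Q_1$ then catches up via $\wbS$, where any lengthening of the answer is harmless.
\end{enumerate}
Finally, $\wbS\,\R\,\wbS$ is symmetric because $\R$ is. It is a weak bisimulation by the same kind of diagram chase you give, using ${\contr} \subseteq {\wbS}$ and transitivity of $\wbS$.

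Your remark about the asymmetry is aimed at the wrong side. The answerer's response may grow freely; what must not grow is the challenge you are inducting on.
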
 

\subsection{Transporting results from  ordinary \texorpdfstring{\( \pi \)}{pi}-calculi onto \texorpdfstring{\api}{AWpi}}
\label{aa:trans}

\subsubsection{Composite \texorpdfstring{\api}{AWpi} processes}
                             
In order to relate \api to Asynchronous  $\pi$-calculus and transplant results from the
theory of the latter onto the former, we first revisit the LTS for \api 
in Figure~\ref{fig:LTS}.   
In such an LTS, when  $\ltsTv { \delta } P {\mu}  {P'}$,  the connection set under which the behaviour
of $P'$ is to be further examined should remain $\delta$, unless $\mu$ is a bound output, say $\bout b
c$, and $c$ is an O-name; in this case,  the derivative process
is allowed to use both $c$ and its companion, and  the
connection set  should thus become $\delta, \breve c$. 
It is convenient to make  this aspect  explicit,
using an alternative 
 presentation of the LTS for \api on
\emph{composite processes}, that is, pairs  
$\compP P \delta$
of a process and connection set.
Transitions on composite processes are defined via the
following rule:
\[
\infR{\ltsTv \delta P {\mu } {P'}   \andalso 
\delta' = \left\{\begin{array}{cl}
 \delta, \breve c  & 
\mbox{ if $\mu$ is a bound output  $\bout
b c = (\resB {\comp c} c ) \out b c$ (i.e., $c$ is an O-name)} \\
 \delta  & 
\mbox{ otherwise} \\
\end{array} \right.
}{
\ltsC \delta P \mu {\delta'}{P'}
}
\] 
Structural congruence  is extended to composite processes following the definition on
plain processes; however we also allow garbage collection of names in the connection set that do not
occur free in the process: 
\[ 
\infR { P \equiv P' }{\compP P \delta  \equiv \compP {P'} \delta}
\andalso\andalso 
\infR {  }{\compP P{ \delta, \breve a}  \equiv \compP {P} \delta} \;\mbox{$\breve a \cap
 \fn P = \emptyset $}
 \]

On composite processes, weak transitions, 
$\LtsC \delta P \mu {\delta'}{P'}$,
 are defined in the usual manner (i.e., 
via relational composition).

\subsubsection{From composite \texorpdfstring{\api}{AWpi} processes to \texorpdfstring{\Api}{Api}}

There is a straightforward encoding, written $\encoA{\cdot}  $,  from  composite \api processes onto the Asynchronous
$\pi$-calculus (\Api), in which,
in  every restricted subterm
$\resB a b P $,  names 
$a,b$ are 
are  mapped onto the same name, 
 say the O-name $b$ and, similarly, for every pair of names  $a \sepRES b \in \delta $,
any (free) occurrence of   $a$ is replaced by $b$:
\begin{align*}
\encoA{ \compP {\inp a c . P} \delta} & =
\left\{ \begin{array}{ll}
\inp b c . \encoA{ \compP P \delta } & \mbox{if $a\sepRES b  \in \delta $, for some $b$  } \\[\tkpP]
\inp a c . \encoA{ \compP P \delta } & \mbox{otherwise} 
\end{array} \right.
\\
\encoA{ \compP {\out  c a} \delta} & =
\left\{ \begin{array}{ll}
\out c b  & \mbox{if $a\sepRES b  \in \delta $, for some $b$ } \\[\tkpP]
\out c a  & \mbox{otherwise} 
\end{array} \right.   \\[\tkp]
\encoA{ \compP {!\inp a c . P} \delta} & =
\left\{ \begin{array}{ll}
!\inp b c . \encoA{ \compP P \delta } & \mbox{if $a\sepRES b  \in \delta $, for some $b$  } \\[\tkpP]
!\inp a c . \encoA{ \compP P \delta } & \mbox{otherwise} 
\end{array} \right.
\\
\encoA{ \compP {\resB a b P} \delta} & =
\res b \encoA{ \compP { P} {\delta, a \sepRES  b }} \\
\encoA{ \compP { P| Q} \delta}  &=
\encoA{ \compP { P} {\delta} } | 
\encoA{ \compP { Q} {\delta} } \\
\encoA{ \compP { \nil} \delta} &=
\nil
\end{align*}
  In the same manner, the encoding is
extended to (composite)
actions,
by replacing, in 
 an action $\mu$, 
for every $a \sepRES b \in
  \delta $,  
 free occurrences of 
$a$ 
with $b$, and, in a bound output in which a pair of connected
  names $a \sepRES b$ is exported, name $a$ with $b$:
\begin{align*}
  \encoA{ \compP {\inp a c } \delta} & =
  \begin{cases}
\inp b c & \mbox{if $a\sepRES b  \in \delta $, for some $b$  } \\
\inp a c   & \mbox{otherwise} 
 \end{cases} \\
\encoA{ \compP {\out  c a} \delta} & =
\begin{cases}
\out c b  & \mbox{if $a\sepRES b  \in \delta $, for some $b$ } \\
\out c a  & \mbox{otherwise}
\end{cases} \\
\encoA{ \compP { \resB a b \out c a  } \delta} & =
\res b \out  cb \qquad
\encoA{ \compP { \resB a b \out c b  } \delta}  =
\res b \out  cb \qquad
\encoA{ \compP { \tau  } \delta}  = \tau
\end{align*}

\begin{lemma}
\label{l:encoA}
If $\ltsC \delta P  \mu{\delta'} {P'}$, then $\encoA{ \compP P \delta}
  \arr {\encoA{\compP\mu \delta}} \encoA{\compP {P'}{\delta'}}
$. 
And the converse, on the actions from $\encoA{ \compP P \delta}$.
\end{lemma}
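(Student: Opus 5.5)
The proof goes by induction on the derivation of $\ltsC \delta P \mu {\delta'}{P'}$, i.e.\ on the derivation of $\ltsTv \delta P \mu {P'}$ in the LTS of Figure~\ref{fig:LTS}. The converse direction goes by induction on the derivation of the transition of $\encoA{\compP P \delta}$ in the ground LTS for \Api. In both directions we first record two auxiliary facts.
\begin{itemize}
\item[(i)] The encoding commutes with substitution: $\encoA{\compP {P\sub c x} \delta} = \encoA{\compP P \delta}\sub{\hat c}{x}$, where $\hat c$ is $c$ mapped through $\delta$ (i.e.\ replaced by its companion O-name if $c$ is a connected I-name). This is a routine induction on $P$, using the Barendregt convention so that bound names are untouched.
\item[(ii)] Weakening and strengthening of the connection set for names not free in the process, in accordance with the garbage-collection rule for composite processes.
\end{itemize}

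For the forward direction we proceed rule by rule. Rules \trans{Inp} and \trans{Out} are immediate from the clauses of the encoding. Rule \trans{ParL} is mapped to \trans{par}; the side condition on bound names holds by freshness. Rule \trans{Res} is mapped to \trans{res}: the premise under $\delta,\breve b$ gives, by induction, a transition of $\encoA{\compP P {\delta, \breve b}}$. The side condition $\pairR b\cap\fn\mu=\emptyset$ ensures that the single representative name does not occur in the encoded action, and the encoding of the action is the same under $\delta$ and under $\delta,\breve b$. Rule \trans{ResExt} is mapped to \trans{open}. Here we must match the clauses of the encoding on bound outputs: whichever of the two connected names is emitted, the result is $\res b\,\out c b$. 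We must also check that the connection set in the composite derivative is exactly $\delta,\breve b$ when $b$ is an O-name. When the emitted name is an I-name, $\delta'=\delta$, and we need the derivative to match; here we expect to use (ii) together with the fact that the encoding maps both connected names to the same name anyway. Rules \trans{Com} and \trans{Close} are mapped to \trans{com} and \trans{close}. The hypothesis $a\sepRES b\in\delta$ guarantees that both actions are encoded at the same subject $b$, and fact (i) handles the instantiation $P'\sub c x$.

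For the converse direction we invert on the structure of $P$ rather than on an \api derivation. Every transition of $\encoA{\compP P\delta}$ is generated by the homomorphic image of a subterm of $P$. The one real difference is replication: \Api has the rule \trans{rep}, whereas the \api LTS in Figure~\ref{fig:LTS} as displayed has no explicit replication rule. We assume the intended \api rule $!\inp a x.P\arr{\inp a x} P\mid !\inp a x.P$, which presumably appears in the full version, and match the two. A synchronisation in \Api between an input at $b$ and an output at $b$ arises from an \api input at some name mapped to $b$, which must be $a$ with $a\sepRES b\in\delta$ (or restricted, which has been pushed into $\delta$ by the restriction clause), and from an output at $b$ itself. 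The typing discipline (I-names and O-names are disjoint, and each I-name has a unique companion) ensures this reconstruction is unambiguous, so \trans{Com} or \trans{Close} applies.

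The main obstacle I expect is the bookkeeping of bound outputs and connection sets. When the extruded pair is encoded as a single restricted name, we must show that after the \trans{close} step the \api derivative $\resBP c(P'\sub c x\mid Q')$ encodes exactly to $\res c(\dots)$. This requires an $\alpha$-conversion choosing the O-name of the pair as representative, and a check that the I-name companion is correctly substituted in $P'$. For this case we will state an explicit lemma: $\encoA{\compP{\resB{a}{b}R}\delta}=\res b\,\encoA{\compP R{\delta,a\sepRES b}}$ is invariant under renaming of the pair. This lemma is then used uniformly in \trans{Res}, \trans{ResExt} and \trans{Close}.
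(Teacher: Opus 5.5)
Your route is the paper's own: its proof is just ``straightforward transition induction''. Your remark that Figure~\ref{fig:LTS} lacks a replication rule is also well taken, since without one the converse fails for $!\inp a x.P$.

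However, the one non-routine step is not covered by what you wrote. When an I-name $a$ with companion $b$ is extruded, in \trans{ResExt} and in \trans{Close}, the \Api{} derivative is $\encoA{\compP{P'}{\delta,a\sepRES b}}$ while the composite derivative is $\compP{P'}{\delta}$. The same happens dually for \trans{open} and \trans{close} in the converse. These coincide exactly when $a\notin\fn{P'}$. Free occurrences of the O-name $b$ are harmless, because $b$ is its own representative. For the same reason the garbage-collection rule, which needs both names absent, is the wrong tool: $b$ typically stays free, e.g.\ in $\resB a b(\out c a\mid\out d b)$. Neither your fact (ii) nor ``the encoding maps both names to the same name'' yields $a\notin\fn{P'}$. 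The same issue arises for the extruded I-name in $Q'$ in \trans{Close}.

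The missing ingredient is the 1-input typing. Outputs have no continuation, so the derivative of the output of $a$ is the rest of the parallel context. Rule \textsc{Out} places $a$ among the I-names of the output particle, and \textsc{Par} forbids any other parallel component from having $a$ free. This is the place where typing is essential, and without it the lemma is false. Take the ill-typed $P=\resB a b(\out c a\mid\inp a x.\nil)$. On the \api{} side, $\ltsTv \emptyset P {\bout c a} {\nil\mid\inp a x.\nil}$ with composite connection set $\emptyset$, so the derivative encodes to $\nil\mid\inp a x.\nil$. On the \Api{} side, $\res b(\out c b\mid\inp b x.\nil)\arr{\bout c b}\nil\mid\inp b x.\nil$. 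The two derivatives differ.

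Relatedly, ``each I-name has a unique companion'' does not follow from the typing discipline. It is a well-formedness assumption on $\delta$, needed for $\encoA{\cdot}$ to be well defined.
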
 

\begin{proof}
Straightforward transition induction.
\end{proof} 

\begin{corollary}
\label{c:compBis}
Let $\bowtie \;  \in \{ \sbS,\wbS, \expa\}$.
Then 
  $ \encoA{ \compP P \emptyset } \bowtie  \encoA{ \compP Q
    \emptyset }$ 
iff 
${ \compP P \emptyset } \bowtie { \compP Q \emptyset }$.
\end{corollary}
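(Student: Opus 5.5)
Corollary~\ref{c:compBis} follows directly from Lemma~\ref{l:encoA}: it transfers each of the three relations between composite processes and their \Api{} images. I would take the relation $\mathcal{S} = \{ (\compP P \delta, \compP Q {\delta'}) \mid \encoA{\compP P \delta} \bowtie \encoA{\compP Q {\delta'}} \}$ and show that it is a relation of the required kind on composite processes. I would also take its inverse image, $\{(\encoA{\compP P \delta}, \encoA{\compP Q {\delta'}}) \mid \compP P \delta \bowtie \compP Q {\delta'}\}$, and show that it is a relation of the same kind on \Api{} processes. Each of these is closed under the relevant derivatives, and the relations at $\emptyset$ are the special case we need.

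For the direction from \api{} to \Api{}, I suppose $\compP P \delta \bowtie \compP Q {\delta'}$. Let $\encoA{\compP P \delta} \arr{\nu} R$. By the converse part of Lemma~\ref{l:encoA}, there is $\mu$ with $\ltsC \delta P \mu {\delta_1} {P'}$, $\nu = \encoA{\compP \mu \delta}$ and $R = \encoA{\compP{P'}{\delta_1}}$. The bisimulation (or expansion) hypothesis then gives a matching strong, weak or expansion step of $\compP Q {\delta'}$, composed from single steps. Applying the forward part of Lemma~\ref{l:encoA} step by step to these single steps yields a matching sequence of \Api{} transitions whose labels have the same encoding. The other direction is symmetric and uses the forward and converse parts of Lemma~\ref{l:encoA} in the opposite roles. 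The expansion case needs both clauses of Definition~\ref{d:expa}, but each clause is handled in the same way, and the "$\arcap\mu$" case with $\mu=\tau$ is trivial.

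The main obstacle is the treatment of labels. The encoding of actions is not injective, because connected names collapse to a single name; so in the two systems "the same label" has to be read modulo $\encoA{\cdot}$. I would therefore define bisimilarity on composite processes so that labels are compared through their encodings, or, equivalently, so that labels are compared syntactically while the connection sets are kept aligned. At $\delta=\emptyset$ the free names are untouched, and each bound output $\bout b c$ extends both connection sets with the same pair $\breve c$ (up to $\alpha$-renaming of the fresh companion). Hence, along matched transitions, the connection sets of the two related processes stay equal up to $\alpha$-renaming, and under equal connection sets the encoding of labels is injective on the labels that actually occur. This is the point I would check most carefully. It relies on the Barendregt convention for bound names and on the garbage-collection rule for composite structural congruence, which discards stale connections.
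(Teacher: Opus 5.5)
This is the paper's route. The corollary is given there as an immediate consequence of the two-way operational correspondence of Lemma~\ref{l:encoA}, lifted step by step to weak transitions. You are also right that the two sides must carry equal connection sets: your first relation $\mathcal S$, with independent $\delta$ and $\delta'$, is not a composite bisimulation, since e.g.\ $\inp a x.\nil$ under $a \sepRES b$ and $\inp{a'}x.\nil$ under $a' \sepRES b$ have the same image but different labels.

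One correction concerns the injectivity of the label encoding at a fixed connection set. It comes from $P$ and $Q$ being typed under a common environment, not from the Barendregt convention or garbage collection. Garbage collection plays no role, because $\delta'$ is determined by $\delta$ and the label. For instance, $\resB a b \out c a$ and $\resB a b \out c b$ have the same image $\res b \out c b$, both as processes and as labels, yet one exports an I-name and the other an O-name. They are excluded only because $c$ has a single type, which is the implicit setting $\ok \Delta {P,Q}$ of every use of the corollary.
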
 
 In the above translation from \api to \Api, outputs
at success names, carrying the unit value, are mapped onto
themselves. Given this, and since 
the only actions that 
   closed \api and \Api processes 
can perform  are either $\tau$ actions or outputs at
success names, 
we can convert the operational correspondence of Lemma~\ref{l:encoA} into a bisimilarity
result on closed processes.  

\begin{lemma}
\label{l:api_pi}
If $P$ is  closed, then 
$ P  \sbS \encoA{  \compP P \emptyset}$.  
\end{lemma}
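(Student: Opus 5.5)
The plan is to exhibit a strong bisimulation between a closed \api{} process $P$, run in the LTS of Figure~\ref{fig:LTS} with the empty connection set, and its encoding $\encoA{\compP P \emptyset}$. The tool is the operational correspondence of Lemma~\ref{l:encoA}. For closed terms the LTS for \api{} and its composite version agree up to the connection set. Indeed, a closed process cannot perform an extruding bound output: success names only carry the unit value and may not be restricted, so the only visible actions are $\tau$ and outputs $\outC{b}$ at success names. Hence every transition $\ltsTv{\emptyset}{P}{\mu}{P'}$ of a closed process is also a composite transition $\ltsC{\emptyset}{P}{\mu}{\emptyset}{P'}$, with the connection set unchanged.

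I would take the relation
\[\mathcal R = \{ (P, \encoA{\compP P \emptyset}) : P \text{ closed} \}\]
and check that it is a strong bisimulation (its symmetric closure, formally).

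First, closedness is preserved. If $P$ is closed and $\ltsTv{\emptyset}{P}{\mu}{P'}$, then $\fn{P'} \subseteq \fn{P}$, because input transitions are impossible on closed terms and outputs cannot extrude channel names. So $P'$ is closed.

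Second, for the challenge from $P$, Lemma~\ref{l:encoA} gives
\[\encoA{\compP P\emptyset} \arr{\encoA{\compP \mu \emptyset}} \encoA{\compP{P'}{\emptyset}}.\]
The encoding fixes $\tau$ and outputs at success names, since with empty $\delta$ nothing is renamed, so the label is exactly $\mu$. The derivatives are then related by $\mathcal R$.

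Third, for the challenge from the encoding, I use the converse half of Lemma~\ref{l:encoA}. Any transition $\encoA{\compP P\emptyset} \arr{\nu} R$ arises from some $\ltsC{\emptyset}{P}{\mu}{\delta'}{P'}$ with $\encoA{\compP\mu\emptyset}=\nu$ and $R = \encoA{\compP{P'}{\delta'}}$. By the closedness observation, $\delta'=\emptyset$ and $\mu=\nu$.

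The main obstacle is making precise that the converse direction of Lemma~\ref{l:encoA} returns a matching source derivative syntactically, rather than only up to structural congruence or $\alpha$-conversion. The restriction $\resB a b$ is collapsed onto $\res b$, and $\tau$-steps from \trans{Close} in \Api{} generate $\res c$, whereas in \api{} they generate $\resBP c$. If the lemma only holds modulo $\equiv$, I would close $\mathcal R$ under structural congruence on both sides. This is harmless, because $\equiv$ is a strong bisimulation in both LTSs. I would also check that the encoding commutes with substitution, $\encoA{\compP{P\sub c x}{\delta}} = \encoA{\compP P\delta}\sub{c'}{x}$, where $c'$ is the image of $c$ under the identification induced by $\delta$. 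This is what the \trans{Com}/\trans{Close} cases of Lemma~\ref{l:encoA} already need. It is a routine induction on $P$.
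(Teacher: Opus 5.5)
Your proposal is correct and is essentially the paper's proof. It uses the same relation pairing each closed $P$ with its encoding under the empty connection set, and the same observation that closed processes can only perform $\tau$ or outputs at success names, so the connection set stays empty and the encoding leaves the label unchanged. It also invokes the two directions of Lemma~\ref{l:encoA} for the two challenge sides, as the paper does. Your extra checks are harmless refinements that the paper leaves implicit, taking Lemma~\ref{l:encoA} to hold on the nose: preservation of closedness, and the fallback of closing the relation under $\equiv$ if Lemma~\ref{l:encoA} only holds up to structural congruence.
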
 
\begin{proof}
  
  We consider the relation
  \begin{align*}
    {\R} \defeq \left\{ (P, \encoA{  \compP P \emptyset}) \; \midbar \; \text{\( P \) is closed} \right\}
  \end{align*}
  and show that this is a strong bisimulation.

  We first consider the challenge moves from \( P \).
  Since \( P \) is closed, the only action \( P \) can do is an internal action or an ouput at a success name.
  We proceed by a case analysis.

  If \( P \arr{\tau} P' \), that is, \( \emptyset \vdash P \arr \tau P'\), then we also have \( \compP P \emptyset \arr \tau \compP {P'} \emptyset \) by the definition of the transitions on composite processes.
  Hence, by Lemma~\ref{l:encoA}, we have \( \encoA{\compP P \emptyset} \arr \tau \encoA{\compP {P'} \emptyset} \) because  \( \encoA{ \compP { \tau  } \delta}  = \tau \).
  We have \( P' \R   \encoA{\compP {P'} \emptyset} \) because a \( P' \) is closed.

  If \( P \arr{\outC a} P' \),  then we have \( \compP P \emptyset \arr {\outC a}\compP {P'} \emptyset \) because the connection set only changes when a bound output is performed.
  Again, by Lemma~\ref{l:encoA} and the definition of \( \encoA {\compP \act \delta} \),  we have \( \encoA{\compP P \emptyset} \arr {\outC a}\ \encoA{\compP {P'} \emptyset} \).
  Since \( P' \) is closed,  we have \( P' \R   \encoA{\compP {P'} \emptyset} \)  as desired.

  The case where \(  \encoA{  \compP P \emptyset} \) makes the challenge is proved similarly because Lemma~\ref{l:encoA} relates the transitions in both directions.
\end{proof}

From Lemma~\ref{l:api_pi}, the compositionality of the translation $\encoA{\cdot}$,
and the soundness of strong and weak bisimilarity ($\sbS$ and $\wbS$) for barbed congruence in
$\pi$-calculi,  we derive:

\begin{corollary}
\label{c:trans_laws}
For $\ok{\Delta}{P,Q}$ with $\n \delta \subseteq \n \Delta $, if
$\encoA{ \compP P \delta} \wbS 
\encoA{ \compP Q \delta} $ then  
$P \wbcTT {\Delta}\delta  Q$.
\end{corollary}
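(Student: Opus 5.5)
The plan is to go through closing contexts, reducing barbed congruence of the source processes to barbed bisimilarity of their closed images in \Api. Fix a context $\qct$ closing for $\Delta$ and respecting $\delta$. We must show $\ct P \wbb \ct Q$. Since $\ct P$ and $\ct Q$ are closed, Lemma~\ref{l:api_pi} gives $\ct P \sbS \encoA{\compP{\ct P}\emptyset}$, and similarly for $Q$. Strong bisimilarity on closed processes implies barbed bisimilarity, because closed processes only perform $\tau$ and success outputs, and barbs coincide with the ability to perform a success output. It therefore suffices to show
\[
\encoA{\compP{\ct P}\emptyset} \wbS \encoA{\compP{\ct Q}\emptyset},
\]
because weak bisimilarity implies barbed bisimilarity, and $\wbb$ is transitive.

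Next we use compositionality of $\encoA{\cdot}$. By induction on the structure of $\qct$, we show that $\encoA{\compP{\ct P}\emptyset} = \mathcal{C}'[\encoA{\compP P{\delta_\qct}}]$ for an \Api context $\mathcal{C}'$. This context $\mathcal{C}'$ is obtained by translating $\qct$, with restriction pairs collapsed to single restrictions, and it does not depend on the hole. The key point is the clause $\encoA{\compP{\resB a b P}\delta} = \res b\,\encoA{\compP P{\delta, a\sepRES b}}$. This clause shows that, when we pass under the binders of $\qct$ that enclose the hole, the connection set grows to exactly $\delta_\qct$. Clauses for parallel composition, input and replication are homomorphic, so they only push the accumulated connection set inward.

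We then relate $\encoA{\compP P{\delta_\qct}}$ to $\encoA{\compP P\delta}$. Since $\delta\subseteq\delta_\qct$, the extra connections $\delta_\qct\setminus\delta$ only identify further pairs of free names of $P$. Thus $\encoA{\compP P{\delta_\qct}} = \encoA{\compP P\delta}\sigma$ for a name substitution $\sigma$ sending each I-name $a$ with $a\sepRES b\in\delta_\qct\setminus\delta$ to $b$. The same $\sigma$ works for $Q$. Names of $P$ bound by $\qct$ but not in $\delta_\qct$ cannot occur: by typing, the hole's free names come from $\Delta$ or an extension of it.

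This is the main obstacle. Weak (ground) bisimilarity in \Api is not in general closed under non-injective substitutions, nor under input prefix. So from $\encoA{\compP P\delta}\wbS\encoA{\compP Q\delta}$ alone we cannot directly conclude $\mathcal{C}'[\encoA{\compP P\delta}\sigma] \wbS \mathcal{C}'[\encoA{\compP Q\delta}\sigma]$. I would first try to resolve this by invoking the soundness of $\wbS$ for barbed congruence in \Api, as the text preceding the statement indicates, rather than pushing $\wbS$ through the context.

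To apply that soundness result, the plan is to realise $\sigma$ contextually in \Api. For each pair $a\sepRES b\in\delta_\qct\setminus\delta$, the corresponding connection in \api is produced by an enclosing restriction $\resB a b$ of $\qct$. Its collapsed image $\res b$ performs the identification via the $\encoA{\cdot}$ clause for restriction, and this is itself a context operation in \Api. In other words, I would redo the compositionality step so that it yields an \Api context applied directly to $\encoA{\compP P\delta}$. This amounts to choosing the translation so that identification of $a$ with $b$ happens at the restriction inside $\mathcal{C}'$: renaming the bound $b$ into $a$ by alpha-conversion, the restriction $\res b$ in $\mathcal{C}'$ binds the occurrences of $a$ as well.

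Barbed congruence of \Api, which is implied by $\wbS$, is preserved by all \Api contexts. Hence $\mathcal{C}'[\encoA{\compP P\delta}]$ and $\mathcal{C}'[\encoA{\compP Q\delta}]$ are barbed bisimilar. Combining this with Lemma~\ref{l:api_pi} yields $\ct P\wbb\ct Q$, and therefore $P \wbcTT {\Delta}\delta Q$.

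The delicate bookkeeping is at the alpha-renaming step: one must check that renaming is consistent with the Barendregt convention. One must also check that free names of $P$ outside $\delta_\qct$ remain distinct in the target.
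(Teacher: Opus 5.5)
Your skeleton is the paper's: Lemma~\ref{l:api_pi}, compositionality of $\encoA{\cdot}$, and soundness of $\wbS$ for barbed congruence in \Api. You also correctly isolate the real issue, $\encoA{\compP P{\delta_\qct}} = \encoA{\compP P\delta}\sigma$. But your way around it fails. Filling a hole lets a binder capture the occurrences of the one name it binds; it can never merge two distinct free names. So no choice of bound names in $\mathcal{C}'$ implements a non-injective $\sigma$.

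Concretely, take $\Delta = a:\ich{\unittype}, b:\och{\unittype}$, $\delta=\emptyset$, $P = \inp a x.\out b x$ and $\qct = \resB a b\,[\cdot]$, which is closing and respects $\emptyset$. Then $\encoA{\compP{\ct P}\emptyset} = \res b(\inp b x.\out b x)$. On the other hand, $\encoA{\compP P\emptyset} = \inp a x.\out b x$ has $a$ and $b$ as distinct free names, and they stay distinct whatever restrictions $\mathcal{C}'$ places around it. Your renaming argument only covers the case where, for every pair in $\delta_\qct\setminus\delta$, at most one of the two names is free in $P$. Yet a context connecting an I-name and an O-name that $\delta$ leaves unconnected is exactly what $\wbcTT\Delta\delta$ quantifies over, so this case cannot be avoided.

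What is missing is a genuine non-injective substitution step. The paper's terse derivation uses the same three ingredients as you, and this step is covered by the \Api theory it imports. It explicitly lists the closure of the \Api behavioural relations under name substitutions, contrary to your remark that $\wbS$ is not closed under non-injective substitutions. With that closure, $\encoA{\compP P\delta}\sigma \wbS \encoA{\compP Q\delta}\sigma$, and soundness applied to $\mathcal{C}'$ finishes the proof.

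If you want to use soundness alone, realise $\sigma$ with an input-prefixed context instead. For a pair $a\sepRES b$, let $D = \res c(\out c b \mid \inp c a.[\cdot])$ with $c$ fresh, and nest one such context per pair. $D[R]$ has a single transition, a $\tau$ to $R\sub b a$, so $D[R] \wbS R\sub b a$. Soundness, whose contexts may put the hole under an input, then gives
\[
\mathcal{C}'[\encoA{\compP P\delta}\sigma] \wbb \mathcal{C}'[D[\encoA{\compP P\delta}]] \wbb \mathcal{C}'[D[\encoA{\compP Q\delta}]] \wbb \mathcal{C}'[\encoA{\compP Q\delta}\sigma].
\]
Together with transitivity of $\wbb$ and Lemma~\ref{l:api_pi}, this yields $\ct P \wbb \ct Q$.
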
 

A similar corollary holds for \emph{asynchronous} bisimilarity
\cite{AmCaSa96},  in
place of the synchronous one.

\subsubsection{Transferring results about transitions and substitutions}
Using Lemma~\ref{l:encoA}, 
we  can import a number of results from \Api relating silent and visible
transitions, and relating
transitions with substitutions.
These results are useful when reasoning on processes.

\iflong
\fi

\begin{lemma}
\label{lem:api-zero}
Suppose $P$ %
$
{\compP P \delta} 
{\arr \tau}{\compP {P'} \delta} $, and this is a synchronisation between two 
names $a,b$ with $a \sepRES b \in\delta $.
Then either 
\begin{enumerate}
\item 
$
{\compP P \delta} 
\arr{\out{b}{c}}\arr{\inp{a}{x}}
{\compP {P''} \delta} 
$,
for some $c,P'$ with 
${\compP {P'} \delta} \equiv {\compP {P''\sub{c}{x}} \delta} $, or  
\item 
${\compP P \delta} 
\arr{\bout{b}{c}}\arr{\inp{a}{x}}
{\compP {P''} {\delta'}} 
$,
for some $\breve c,P'', \delta'$ with 
${\compP {P'} \delta} \equiv {\compP { \resBP c P''\sub{c}{x}} {\delta'}} $.
\end{enumerate}
\end{lemma}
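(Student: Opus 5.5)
The plan is to transfer the corresponding decomposition result for \Api{} through the encoding $\encoA{\cdot}$, using the operational correspondence of Lemma~\ref{l:encoA} in both directions. Alternatively one can argue directly by induction on the derivation of the $\tau$-transition in the LTS of Figure~\ref{fig:LTS}. I would do the latter, since it is essentially a transition induction and yields the structural-congruence conclusions directly.

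I proceed by induction on the derivation of $\ltsTv \delta P \tau {P'}$, restricting to derivations whose synchronisation occurs at a pair $a\sepRES b\in\delta$ (as opposed to a pair introduced by an inner restriction).

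The base cases are \trans{Com} and \trans{Close}. In \trans{Com} we have $P = P_1\mid P_2$ with $\ltsTv\delta{P_1}{\inp a x}{P_1'}$ and $\ltsTv\delta{P_2}{\out b c}{P_2'}$. Applying \trans{ParL} (and its symmetric variant) we first fire the output, getting $P\arr{\out b c} P_1\mid P_2'$. Since outputs are asynchronous and do not affect $P_1$, the input transition of $P_1$ is still available, giving $P_1\mid P_2' \arr{\inp a x} P_1'\mid P_2' =: P''$, and then $P' = P_1'\sub c x\mid P_2' = P''\sub c x$, noting $x$ is fresh for $P_2'$ by the Barendregt convention. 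For \trans{Close} the output is bound, $\bout b c$, and the composite transition extends the connection set to $\delta' = \delta,\breve c$ when $c$ is an O-name. The subsequent input gives $P''$ with $P' = \resBP c (P_1'\sub c x\mid P_2') \equiv \resBP c P''\sub c x$ by scope extrusion, since $c$ is fresh for $P_1'$.

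The inductive cases are \trans{ParL}/\trans{ParR} and \trans{Res}. For parallel composition we simply propagate the two transitions through the context; bound names are fresh by convention. For \trans{Res} with $P=\resBP d P_0$ and a synchronisation at $a\sepRES b\in\delta$, the induction hypothesis applies under $\delta,\breve d$. The two visible steps lift through \trans{Res}, or through \trans{ResExt} when the emitted name is $d$; the latter turns case 1 into case 2. For this we need $\{a,b\}\cap\breve d=\emptyset$, which holds because $a,b\in\n\delta$ and $d$ is bound.

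The main obstacle is the bookkeeping of connection sets. After a bound output the composite process moves to $\delta'$, and we must check that the input transition at $a$ is still derivable and that garbage collection in the structural congruence on composite processes reconciles $\delta$ with $\delta'$ in the final equation. Here I would rely on the garbage-collection rule for composite processes, together with the fact that $c$ does not occur in the input action.
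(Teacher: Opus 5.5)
Your direct transition induction is correct, but it is not the paper's route. The paper gives no separate argument for this lemma: it lists it among the facts imported from \Api{} through the two-way operational correspondence of Lemma~\ref{l:encoA}, which is exactly the alternative you mention and set aside.

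The transfer route gets the decomposition essentially for free from the standard \Api{} lemma. Its cost is pulling back through an encoding that collapses each connected pair into a single name. You must recognise the \Api{} input at $b$ as the image of an \api{} input at $a$, read the resulting $\delta'$ off the composite-transition rule, and reflect the \Api{} structural equation back into \api{}.

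Your induction on the derivation avoids the non-injectivity of $\encoA{\cdot}$. Its base cases are \trans{Com}/\trans{Close}; its inductive cases are parallel composition and \trans{Res}, with \trans{ResExt} turning case 1 into case 2 when the emitted name is the restricted one. It produces the $\equiv$ and the $\delta/\delta'$ bookkeeping explicitly, at the price of the case analysis.

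Two small points of justification are worth fixing.
\begin{enumerate}
\item Re-deriving the input at $a$ under a different connection set is needed in the \trans{Close} case, and in the \trans{ResExt} subcase where an I-name is exported and you move from $\delta,\breve d$ back to $\delta$. What licenses it is that \trans{Com} and \trans{Close} are the only rules that consult the connection set, and they only conclude $\tau$. Hence visible transitions are derivable under any connection set (modulo the Barendregt convention). This has nothing to do with $c$ not occurring in the input action.
\item In the \trans{Close} base case, $\resBP c (P''\sub c x)$ is literally $P'$ once $x\notin\fn{P_2'}$, so no scope extrusion is needed there. Scope extrusion is genuinely used when lifting case 2 through parallel composition, and swapping restrictions when lifting it through \trans{Res}.
\end{enumerate}
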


Lemma~\ref{lem:api-one-comp} is a standard result for asynchronous
calculi, relating visible and silent  transitions.

\begin{lemma}
\label{lem:api-one-comp}
Suppose  $a\sepRES b \in \delta $.
\begin{enumerate}
\item
If 
$
{\compP P \delta} 
\arr{\out{b}{c}}\arr{\inp{a}{x}}
{\compP {P'} \delta} 
$,
then also 
$
{\compP P \delta} 
{\arr\tau}\equiv  
{\compP {P'\sub{c}{x}} \delta} 
$.
\item
If
${\compP P \delta} 
\arr{\bout{b}{c}}\arr{\inp{a}{x}}
{\compP {P'} {\delta'}} 
$,
then also 
$
{\compP P \delta} 
{\arr\tau}
\equiv 
{\compP{
\resBP c  P'\sub{c}{x}}{\delta}}
$.
\end{enumerate}
\end{lemma}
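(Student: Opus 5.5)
The plan is to transfer the corresponding results for \Api{} through the encoding $\encoA{\cdot}$, using Lemma~\ref{l:encoA}, which relates transitions of composite processes and of their \Api{} images in both directions. For item~1, first encode the two consecutive transitions. Since $a\sepRES b \in \delta$, the encoding of the output action $\out b c$ is $\out b {c'}$, where $c'$ is $c$ renamed by $\delta$. The encoding of the input action $\inp a x$ is $\inp b x$, because $a$ is replaced by its companion $b$. The second transition starts from $\compP{P_1}{\delta}$ (the connection set is unchanged after a free output), so Lemma~\ref{l:encoA} gives
\[
\encoA{\compP P \delta}\arr{\out b {c'}}\encoA{\compP{P_1}\delta}\arr{\inp b x}\encoA{\compP{P'}\delta}.
\]
In \Api{} the standard result applies: an output followed by an input on the same name can be replaced by a single $\tau$-step reaching a term structurally congruent to the input derivative with $c'$ substituted for $x$. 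This is proved by transition induction on the two derivations, using the fact that the output comes from an unguarded output particle, which is permuted next to the receiving input. Applying the converse direction of Lemma~\ref{l:encoA}, I pull this $\tau$-step back to $\compP P\delta\arr\tau\compP{P''}\delta$ with $\encoA{\compP{P''}\delta}\equiv\encoA{\compP{P'\sub c x}\delta}$.

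I do not expect a direct transfer of $\equiv$ back to \api{} to be safe, since the encoding identifies names. So rather than relying only on the encoding, I would also do a direct transition induction in the \api{} LTS of Figure~\ref{fig:LTS}, mirroring the \Api{} proof. First prove by induction on the derivation of $\ltsTv\delta P{\out b c}P_1$ that $P\equiv(\resBtil{}{}\,\tilc)(\out b c\mid R)$ and $P_1\equiv(\resBtil{}{}\,\tilc)R$. Here the restricted names $\tilc$ avoid $b,c$ by the side condition of \rn{Res}, so no \rn{ResExt} is used. Next, the input $\ltsTv\delta{P_1}{\inp a x}P'$ is derived inside $R$ under the same restrictions. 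Putting $\out b c$ in parallel and using \rn{Com} with $a\sepRES b\in\delta$, then \rn{ParL} and \rn{Res}, gives a $\tau$-transition to $(\resBtil{}{}\,\tilc)(R'\sub c x)$. This is $\equiv P'\sub c x$, because $x$ is fresh for the restricted names by the Barendregt convention.

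Item~2 follows the same pattern. The bound output is decomposed as $P\equiv(\resBtil{}{}\,\tilc)\resBP c(\out b c\mid R)$ with $P_1\equiv(\resBtil{}{}\,\tilc)R$ and $\delta'=\delta,\breve c$ when $c$ is an O-name (otherwise $\delta'=\delta$). The subsequent input at $a$ does not involve the companion of $c$, since $a$ is a free name in $\delta$ and is distinct from the extruded pair by freshness. Using \rn{Close} instead of \rn{Com} then yields $\resBP c(R'\sub c x)$ under the outer restrictions, and scope extrusion in $\equiv$ rearranges this into $\resBP c\,P'\sub c x$. Finally, the garbage-collection rule for composite processes removes $\breve c$ from the connection set, giving $\delta$ in place of $\delta'$.

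The main obstacle is the bookkeeping in item~2: showing that the input transition of $P_1$ under $\delta'$ can be re-derived under $\delta$ inside the restriction $\resBP c$. This should hold because the input is visible and therefore the connection set is irrelevant to it (\rn{Inp}, \rn{ParL} and \rn{Res} ignore $\delta$). It requires a small auxiliary lemma stating that visible transitions are independent of $\delta$, proved by transition induction.
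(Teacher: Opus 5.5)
Your proof is correct, but it does not follow the paper's route. The paper gives no separate argument for this lemma. It lists the lemma among the facts imported from \Api{} through $\encoA{\cdot}$ and the two-way correspondence of Lemma~\ref{l:encoA}, which is the first approach you sketch and then set aside. Your reservation about that approach is justified. Since $\encoA{\cdot}$ collapses each connected pair onto one name, reflecting the $\equiv$ obtained in \Api{} back into \api{} needs an extra observation: well-typed \api{} terms are recoverable from their images, and the $\equiv$-steps used on the \Api{} side have \api{} counterparts. The paper leaves this implicit. Your direct transition induction in the LTS of Figure~\ref{fig:LTS} sidesteps the issue and is self-contained. The paper's route buys economy instead: one transfer principle yields this lemma together with Lemmas~\ref{lem:api-zero}, \ref{lem:sub1}, \ref{lem:api-three} and~\ref{lem:Api-one-comp-ex}.

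Three small remarks on your version.

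\emph{Closure under $\equiv$.} Moving from $P_1\equiv\res{\tilc}R$ to an input transition of $R$, and back from a $\tau$-step of $\res{\tilc}(\out{b}{c}\mid R)$ to one of $P$, uses closure of the \api{} LTS under $\equiv$. This is standard (it underlies Lemma~\ref{l:hl}) but should be stated. Alternatively, you can avoid it by inducting on the derivation of the output and inverting the input transition of its derivative.

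\emph{Which rule fires in item~2.} With your decomposition, the input sits inside the scope of the restriction on $c$. So what fires is \rn{Com} (under $\delta,\tilc,\breve c$) followed by \rn{Res}. \rn{Close} arises only in the derivation-induction presentation, when the bound output and the input come from different parallel components.

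\emph{Unneeded steps.} The garbage-collection step is unnecessary: a $\tau$-transition never changes the connection set, and the target in the statement is already paired with $\delta$. Likewise, the obstacle you flag is immediate, since only \rn{Com} and \rn{Close} consult $\delta$.
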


Application of
a substitution to a process does not diminish its
capabilities for action. 
A substitution $ \sigma $  \emph{agrees with} a connection set $\delta$
if $a \sepRES b \in \delta $ implies that  also 
 $\sigma(a) \sepRES \sigma(b) \in \delta $ holds.
(We recall that a substitution acts on names of the same type.)  

\begin{lemma}
\label{lem:sub1}
If $\sigma$ agrees with $\delta$ and 
 $ %
{\compP P \delta} 
{\arr\mu}
{\compP {P'} {\delta'}} 
$, 
then also  
 $ %
{\compP {P\sigma} \delta} 
{\arr{\mu \sigma }}
{\compP {P' \sigma } {\delta'}} 
$. 
\end{lemma}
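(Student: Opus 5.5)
The plan is a transition induction on the derivation of $\ltsTv \delta P \mu {P'}$ (rules of Figure~\ref{fig:LTS}). In each case we build a derivation of $\ltsTv \delta {P\sigma} {\mu\sigma} {P'\sigma}$ with the same shape. Since the connection set on the composite side changes only through bound outputs, and those involve bound names untouched by $\sigma$, the resulting $\delta'$ is the same on both sides. By the Barendregt convention we assume the bound names of $P$ and $\mu$ lie outside $\n\sigma$, so $\sigma$ commutes with all binders in the derivation.

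The axioms are immediate. \trans{Inp} gives $\ltsTv\delta{\inp{\sigma(a)} x.P\sigma}{\inp{\sigma(a)}x}{P\sigma}$, and \trans{Out} gives $\out{\sigma(a)}{\sigma(b)}\arr{\out{\sigma(a)}{\sigma(b)}}\nil$. \trans{ParL} follows directly from the induction hypothesis. For \trans{Com}, the induction hypothesis yields $P\sigma\arr{\inp{\sigma(a)}x}P'\sigma$ and $Q\sigma\arr{\out{\sigma(b)}{\sigma(c)}}Q'\sigma$. Because $\sigma$ agrees with $\delta$, we have $\sigma(a)\sepRES\sigma(b)\in\delta$, so \trans{Com} applies. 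The target is $P'\sigma\sub{\sigma(c)}x\mid Q'\sigma = (P'\sub c x\mid Q')\sigma$, using $x\notin\n\sigma$. \trans{Close} is handled the same way, with the extruded pair $\breve c$ fresh for $\sigma$.

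The restriction rules are where the real work lies. For \trans{Res} and \trans{ResExt}, the premise is derived under $\delta,\breve b$. We extend $\sigma$ to act as the identity on $\breve b$. This extension agrees with $\delta,\breve b$: it agrees with $\delta$ by hypothesis, and it fixes the new pair since $\breve b\cap\n\sigma=\emptyset$. Applying the induction hypothesis then gives the premise for $P\sigma$. The side condition $\breve b\cap\fn{\mu\sigma}=\emptyset$ holds because $\breve b$ is fresh for $\sigma$. The induction must therefore be stated for an arbitrary connection set, as the lemma is.

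The main obstacle I expect is that typing is not preserved by a substitution that identifies two I-names. Such a substitution could violate the 1-input property. We sidestep this because the LTS rules never inspect typing, so the induction is purely syntactic. If typing is wanted, I would restrict attention to well-typed $\sigma$.

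An alternative, in the spirit of Section~\ref{ss:trans}, is to route the argument through $\encoA{\cdot}$ and Lemma~\ref{l:encoA}, then import the standard \Api{} substitution lemma. However, showing that $\sigma$ commutes with the encoding needs exactly the agreement hypothesis, so the direct induction seems cleaner.
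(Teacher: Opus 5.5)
Your direct transition induction is correct, but it takes a different route from the paper. The paper gives no separate argument for this lemma. It lists it among the results imported from \Api{} through the encoding \(\encoA{\cdot}\) and the two-way operational correspondence of Lemma~\ref{l:encoA}: push \(\compP{P}{\delta}\) to \Api{}, use the standard fact that \Api{} transitions are closed under substitution, and pull the transition back.

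\emph{What the transfer route buys.} It is economical, since the same move yields the whole family of transition lemmas in that subsection, plus the congruence and up-to results, without redoing any \Api{} theory.

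\emph{What it costs.} As you anticipate, making it rigorous needs two checks.
\begin{itemize}
\item You need a translated substitution \(\sigma'\) with \(\encoA{\compP{P\sigma}{\delta}} = \encoA{\compP{P}{\delta}}\sigma'\). It is well defined precisely because \(\sigma\) agrees with \(\delta\) and each name has a unique companion in \(\delta\).
\item The pull-back step is delicate because \(\encoA{\cdot}\) collapses connected names, so it is not injective. The converse half of Lemma~\ref{l:encoA} only returns some \api{} transition whose encoding matches. One must still argue that its label and derivative are literally \(\mu\sigma\) and \(P'\sigma\), using that each name's polarity is recoverable from its syntactic position.
\end{itemize}

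\emph{What your route buys.} It avoids this bookkeeping and is self-contained. It also shows exactly where the hypothesis matters: agreement is consumed only in \trans{Com} and \trans{Close}. In \trans{Res} and \trans{ResExt} it is merely propagated to \(\delta,\breve b\), using the freshness of the restricted pair. That is why you were right to quantify the induction over all connection sets.

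\emph{Small gaps to close.} Also cover the symmetric parallel and communication rules, and the replication axiom \(!\inp a x.P \arr{\inp a x} P \mid !\inp a x.P\). Figure~\ref{fig:LTS} leaves these implicit, as it lists only the left rules and no replication rule. All of these cases are immediate.
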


The following (limited) form of converse, with an empty connection set,  will be sufficient  for our purposes.

\begin{lemma}
\label{lem:api-three}
If %
$
{\compP {P\sigma} \emptyset } 
{\arr{\mu  }}
{\compP {P'  } {\delta}} 
$
then 
$
{\compP {P} \emptyset } 
{\arr{\mu' }}
{\compP {P'' } {\delta}} 
$,
for some $P'', \mu' $    with $\mu'\sigma=\mu$
and $P''\sigma = P'$.
\end{lemma}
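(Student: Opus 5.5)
The plan is to reduce the statement to the ordinary Asynchronous $\pi$-calculus through the encoding $\encoA{\cdot}$ and use Lemma~\ref{l:encoA} together with the standard \Api{} fact that a ground transition of $P\sigma$ is the image of a ground transition of $P$. This fact holds for \Api{} because the empty connection set means no synchronisation is ever created by the substitution. I proceed by induction on the derivation of the transition $\compP{P\sigma}{\emptyset}\arr{\mu}\compP{P'}{\delta}$ in the LTS of Figure~\ref{fig:LTS}, with the composite-process rule on top. I handle $\delta$ at the end: it is either empty or $\breve c$ for the extruded name in a bound output, and bound names are untouched by $\sigma$ under the Barendregt convention.

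The axiom cases are immediate. If $P\sigma = \inp{\sigma(a)}{x}.P_0\sigma$ (with $x$ chosen fresh for $\sigma$), then $P = \inp a x.P_0$. We take $\mu' = \inp a x$ and $P'' = P_0$. The output case $\out{\sigma(a)}{\sigma(v)}$ is identical, with $P''=\nil$. For \trans{ParL} (and its symmetric version), the inductive hypothesis is applied to the active component, and the inactive component is carried along unchanged; since $\mu'$ has the same bound names as $\mu$, the side conditions are preserved. For \trans{Res} and \trans{ResExt} we have $P = \resBP b P_0$ with $b,\comp b$ fresh for $\sigma$. In both rules the premise is derived under connection set $\breve b$, not $\emptyset$, so I first strengthen the statement being proved: for any $\delta_0$ whose names are not touched by $\sigma$ and are disjoint from its range, a transition of $P\sigma$ under $\delta_0$ comes from a transition of $P$ under $\delta_0$. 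Restriction then just extends $\delta_0$ with fresh names, which meets this side condition. The side condition $\pairR b\cap \fn{\mu}=\emptyset$ transfers from $\mu$ to $\mu'$, because $\fn{\mu'}\sigma \supseteq \fn{\mu}$ restricted to names not in the range of $\sigma$, and $b$ is fresh.

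The main obstacle is \trans{Com} and \trans{Close}. Suppose $P\sigma = P_1\sigma\mid P_2\sigma$ synchronises on an input at $a_1$ and an output at $b_1$ with $a_1\sepRES b_1\in\delta_0$. By induction, $P_1$ performs $\inp{a}{x}$ with $\sigma(a)=a_1$, and $P_2$ performs an output at some $b$ with $\sigma(b)=b_1$. I must show $a\sepRES b\in\delta_0$. This is exactly where the strengthened hypothesis is needed: $a_1,b_1$ are names of $\delta_0$, which are outside the range and the domain of $\sigma$, so $a=a_1$ and $b=b_1$. The rule therefore re-fires on $P$ with $\mu'=\tau$. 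The resulting derivative is $P_1''\sub{c}{x}\mid P_2''$ (or its restricted version for \trans{Close}), and applying $\sigma$ gives $P'$ because $x$ is fresh for $\sigma$, so $\sub c x$ and $\sigma$ commute up to replacing $c$ by $\sigma(c)$, which matches the label. Finally I specialise to $\delta_0=\emptyset$ and note that the composite connection set $\delta$ is determined by $\mu'$ exactly as by $\mu$, since the two labels have the same bound output names.
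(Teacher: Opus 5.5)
Your argument is correct, but the proof you actually carry out differs both from the plan announced in your first sentence and from the paper's route. The paper gives no separate proof. It obtains this lemma, together with the other results of that subsection, by transporting the corresponding \Api{} fact about transitions and substitutions through $\encoA{\cdot}$ and Lemma~\ref{l:encoA}. You never use the encoding. Instead you run a self-contained transition induction in the \api{} LTS, generalised to any connection set $\delta_0$ whose names lie outside the support and the range of $\sigma$. That generalisation is what lets \rn{Res}/\rn{ResExt} go through, since their premises are under $\delta_0,\breve b$. It also settles \rn{Com}/\rn{Close}, because connected names are neither moved by $\sigma$ nor hit by it.

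The two routes buy different things. Your direct route is elementary and isolates exactly where the empty connection set is used. The transfer route reuses \Api{} theory, but it must still exclude the \Api{} case where $\sigma$ creates a new $\tau$ by identifying an input subject with an output subject. That exclusion holds here because, with an empty connection set, free I-names and O-names stay distinct under $\encoA{\cdot}$ and $\sigma$ preserves polarity. The transfer route also needs $\encoA{\compP{P\sigma}{\emptyset}} = \encoA{\compP{P}{\emptyset}}\sigma$.

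There are two small slips to fix.
\begin{itemize}
\item The ``standard \Api{} fact'' in your first paragraph is false as stated for $\tau$-transitions in \Api{}: it needs the exclusion clause above, and ``empty connection set'' is not an \Api{} notion. Your induction does not rely on it, so just drop that framing.
\item In the \rn{Res} case your inclusion points the wrong way. What you need is $\sigma(\fn{\mu'}) \subseteq \fn{\mu}$ together with $\sigma$ fixing $b$ and $\comp b$, so that $b \in \fn{\mu'}$ would force $b \in \fn{\mu}$.
\end{itemize}
If the LTS includes a replication axiom, that case is as immediate as \rn{Inp}.
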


Lemma~\ref{lem:Api-one-comp-ex} is the `weak' counterpart of Lemma~\ref{lem:api-one-comp}.

\begin{lemma}
\label{lem:Api-one-comp-ex}
Suppose $a\sepRES b \in \delta $.
\begin{enumerate}
\item
If 
$
\compP P { \delta } \Arr {\out{b}{c}} 
\Arr{\inp{a}{x}}  \compP {P'}{ \delta } $
 then also 
$\LtsCs \delta
P{\tau}\delta{ P'\sub{c}{x}}$.
\item
If $
\compP P { \delta } \Arr {\bout{b}{c}} 
\Arr{\inp{a}{x}}  \compP {P'}{ \delta' }$
 then also 
$\LtsCs \delta
P{\tau}{\delta}{
 \resBP c (P'\sub{c}{x})}
$.
\end{enumerate}
\end{lemma}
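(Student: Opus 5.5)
Both parts will be proved by passing through the encoding $\encoA{\cdot}$ and using Lemma~\ref{l:encoA} together with the corresponding standard facts for \Api. The encoding collapses each connected pair $a\sepRES b\in\delta$ onto the single name $b$.

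\emph{Part 1 (strong transitions).} Suppose $\ltsC \delta P {\out b c} \delta {P_1}$ and $\ltsC \delta {P_1} {\inp a x} \delta {P'}$, with $a\sepRES b\in\delta$. By Lemma~\ref{l:encoA} we obtain
\[
\encoA{\compP P\delta}\arr{\out b {c'}}\encoA{\compP{P_1}\delta}\arr{\inp b x}\encoA{\compP{P'}\delta},
\]
where $c'$ is the image of $c$ under the encoding. The input is at $b$ because $a$ is mapped to $b$.

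In \Api, the standard asynchronous lemma then gives a single step
\[
\encoA{\compP P\delta}\arr\tau\equiv\encoA{\compP{P'}\delta}\sub{c'}x .
\]
It is proved by transition induction on the output: the output comes from an unguarded particle $\out b{c'}$, and the input is performed by a component that was already able to perform it in $P$. The reason is that an output has no continuation, so removing it cannot enable anything new. Those two components then synchronise by \trans{com}.

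Next we pull the step back with the converse direction of Lemma~\ref{l:encoA}. This yields $\ltsC\delta P\tau\delta{P''}$ with $\encoA{\compP{P''}\delta}\equiv\encoA{\compP{P'\sub cx}\delta}$. We also need that the encoding commutes with the substitution, which holds since $c$ and its image differ only through $\delta$.

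To lift the $\equiv$ back to \api, I would avoid an injectivity argument for the encoding. Instead I would redo the same transition induction directly in the \api LTS of Figure~\ref{fig:LTS}. The key cases are:
\begin{itemize}
\item \trans{ParL}/\trans{ParR}: the output and the input lie in different components, and rule \trans{Com} applies because $a\sepRES b\in\delta$.
\item \trans{Res}: we use the extended set $\delta,\breve d$.
\end{itemize}
Lemma~\ref{l:encoA} then only serves as a sanity check.

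\emph{Part 2 (bound output).} The argument is the same, now using \trans{ResExt} and rule \trans{Close}. The extruded pair $\breve c$ is put back as a restriction $\resBP c$ around the synchronised components. The structural rule $P\mid\resBP c Q\equiv\resBP c(P\mid Q)$ is justified by the Barendregt convention, since $c,\comp c$ do not occur free in the input side.

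The connection set after the output is $\delta'=\delta,\breve c$ when $c$ is an O-name. After re-closing, the composite is considered at $\delta$. This is allowed because $\breve c$ is bound again, and garbage collection of connections on composite processes absorbs the difference.

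\emph{Main obstacle.} I expect the main difficulty to be the bookkeeping in the \trans{Res} case of the induction. A restriction $\resBP d$ may separate the output from the input, and we must show that the input at $a$ is not blocked by it. The side condition $\pairR d\cap\fn{\mu}=\emptyset$ gives this, because $a$ and $b$ are free and hence distinct from $d,\comp d$.

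When the output carries $d$ itself, it was turned into a bound output by \trans{ResExt}. That is exactly Part 2, so the two parts should be proved by a simultaneous induction.
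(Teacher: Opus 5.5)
Your argument proves the wrong lemma. The statement is the \emph{weak} version. Its hypothesis $\compP P\delta \Arr{\out b c}\Arr{\inp a x}\compP{P'}\delta$ unfolds to $P\Longrightarrow P_1\arr{\out b c}P_2\Longrightarrow P_3\arr{\inp a x}P_4\Longrightarrow P'$, all at $\delta$. Everything you write assumes the output is immediately followed by the input; you even label it ``strong transitions''. So what you prove is Lemma~\ref{lem:api-one-comp}, which the paper states separately just before. Your transition induction, with the Par/Res case analysis that lets \trans{Com} or \trans{Close} fire, is a reasonable proof of that strong lemma. But it only applies when the two visible actions are consecutive. The silent steps before, between and after them are never treated, and the weak statement does not follow from the strong one without a further argument.

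The missing idea is output postponement, which is where asynchrony is used at the weak level. Your remark that ``an output has no continuation'' is the germ of it, but you only use it within a single step. The argument runs as follows:
\begin{enumerate}
\item From $P_1\arr{\out b c}P_2$ you get $P_1\equiv \out b c\mid P_2$. The particle is unguarded and $b,c$ are free, so enclosing restrictions can be pushed past it.
\item Hence the steps $P_2\Longrightarrow P_3$ can be performed by $P_1$ with the particle left idle, giving $P_1\Longrightarrow P_3^\circ\arr{\out b c}P_3$.
\item Only now does your strong argument apply, to $P_3^\circ\arr{\out b c}P_3\arr{\inp a x}P_4$, yielding $P_3^\circ\arr\tau\equiv P_4\sub c x$.
\item You also need $P_4\sub c x\Longrightarrow P'\sub c x$. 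This follows from Lemma~\ref{lem:sub1}, since the substitution agrees with $\delta$ ($x$ is fresh).
\end{enumerate}
In Part~2 you start from $P_1\equiv\resBP c(\out b c\mid P_2)$. The steps after the bound output are taken at $\delta'$, which is $\delta,\breve c$ when $c$ is an O-name. They may therefore be synchronisations between $c$ and $\comp c$. Both when postponing the bound output and for the trailing steps after the substitution, they must be re-derived at $\delta$ underneath $\resBP c$ via \trans{Res}. The paper's intended route is instead to import the standard weak \Api{} fact through Lemma~\ref{l:encoA}. If you take that route, the pull-back problem you wanted to avoid arises for a whole weak sequence rather than for a single step. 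A minor point: no simultaneous induction is needed. A free output is never derived using \trans{ResExt}, and Part~2 reduces to Part~1 at the \trans{ResExt} node.
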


As a final example of transfer of results concerning transitions, we mention the Harmony Lemma, relating
reductions in the reduction semantics and $\tau$-transitions in the LTS of a process
calculus \cite{SangiorgiWalker01}. Indeed, the correspondence between transitions in the LTS for
\api and in  that of \Api can also be made on their reduction semantics 
(Sections~\ref{s:rs} and Appendix~\ref{a:b}). We omit the details. 
We then derive the following result, expressed on   composite processes.

\begin{lemma}[Harmony Lemma, in \api]
\label{l:hl}  For any \api process $P$ and connection set $\tilx$, we have: 
\begin{itemize}
\item
if $ \res \tilx P \longrightarrow \res \tilx P' $ then also 
$\compP P  \tilx \arr\tau \equiv \compP {P'} {\tilx}$
\item conversely, if 
$\compP P \tilx \arr\tau \compP {P'} {\tilx}$ then also 
 $\res \tilx P \longrightarrow \res \tilx P' $.
\end{itemize}
 \end{lemma}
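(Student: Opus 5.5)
Both directions of the Harmony Lemma follow from the standard Harmony Lemma for \Api{}, transported along the encoding $\encoA{\cdot}$ by Lemma~\ref{l:encoA}. The direct alternative is a pair of inductions on derivations within \api{}, and I will use it wherever the transport does not go through cleanly. Throughout, the connection set $\tilx$ is the set of pairs bound by the outer restrictions $\res\tilx$.

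For the first item, I argue by induction on the derivation of $\res\tilx P \longrightarrow \res\tilx P'$, i.e.\ the rules of Definition~\ref{d:rr}.
\begin{itemize}
\item \textbf{Axioms} \trans{Red} and \trans{RedRep}. The redex $\resB a b(\inp a x.P_0 \mid \out b v \mid Q)$ has $a \sepRES b$ either in $\tilx$ or inside $P$. If it lies in $\tilx$, rule \trans{Com} applies under $\tilx$, with \trans{Inp} for the input side and \trans{Out} for the output side; the restriction is then pushed back, up to $\equiv$.
\item \textbf{Replicated input.} The LTS of Figure~\ref{fig:LTS} has no rule for $!\inp a x.P$. I would take the intended rule to be the \trans{rep}-style unfolding $!\inp a x.P \arr{\inp a x} P \mid {!\inp a x.P}$, which matches \trans{RedRep} exactly. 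If $v$ is a restricted name being exported, I use \trans{Close} instead of \trans{Com}.
\item \textbf{Congruence rules.} \trans{Par} is handled by \trans{ParL}, symmetrically by commutativity of $\mid$. \trans{Res} is handled by \trans{Res}, which moves a pair into the connection set exactly as the premise of the LTS rule \trans{Res} demands.
\item \textbf{Rule} \trans{Equiv}. This needs a lemma that structural congruence preserves transitions: if $\compP P\delta \equiv \compP Q\delta$ and $\compP P\delta \arr\mu \compP{P'}{\delta'}$, then $\compP Q\delta \arr\mu \equiv \compP{P'}{\delta'}$. I prove it by induction on the derivation of $\equiv$, with scope extrusion being the only interesting axiom.
\end{itemize}
A point to watch in the restriction case is the generality of the statement. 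Since $\res\tilx$ is stripped on both sides, I actually prove a stronger claim: whenever $Q\longrightarrow Q'$, writing $Q\equiv\res{\tily}R$ with $R$ free of top-level restrictions, the redex is found in $R$ with its pair in $\tily$. To get the required normal form I prenex-normalise $P$ using the $\equiv$ rules together with the Barendregt convention.

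For the converse, I argue by induction on the derivation of $\compP P\tilx \arr\tau \compP{P'}\tilx$. The $\tau$ arises from \trans{Com} or \trans{Close} with $a\sepRES b\in\tilx$ (or $\in\tilx$ extended by enclosing \trans{Res} rules), possibly underneath \trans{ParL} and \trans{Res}. I need two auxiliary facts, both proved by induction on the derivation of the visible transition:
\begin{itemize}
\item if $\ltsTv\delta P{\inp a x}P_1$, then $P\equiv\res\tilz(\inp a x.P_0\mid R)$ and $P_1\equiv\res\tilz(P_0\mid R)$, or the replicated analogue;
\item if $\ltsTv\delta P{\out b c}P_1$, then $P\equiv\res\tilz(\out b c\mid R)$ and $P_1\equiv\res\tilz R$; for a bound output, $\breve c$ is among the extruded pairs.
\end{itemize}
Assembling these decompositions under $\res\tilx$, and applying scope extrusion of $\tilz$ (the Barendregt convention makes it legal), yields a term matching \trans{Red} or \trans{RedRep}, which concludes via \trans{Equiv}.

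The main obstacle is the \trans{Close} case. There $P'$ has the form $\resBP c(P_1\sub cx\mid Q_1)$, and I must check that the companion pair of $c$ is correctly reinserted and that this agrees up to $\equiv$ with the reduction result. This is precisely the bookkeeping already recorded in Lemma~\ref{lem:api-one-comp}, so I would invoke that lemma rather than redo it. The first item's result has $\equiv$ on the right and the converse does not; that asymmetry is inherent and harmless.
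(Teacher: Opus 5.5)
Your opening sentence is in fact all the paper offers. It justifies the lemma by transporting the \Api{} Harmony Lemma along $\encoA{\cdot}$ (Lemma~\ref{l:encoA}) and omits the details. What you actually carry out, though, is a different, self-contained proof inside \api{}, following the textbook pattern. For the first item you use induction on $\longrightarrow$ plus a lemma that $\equiv$ preserves transitions. For the converse you use shape lemmas for input and (bound) output transitions, reassembled under restrictions into a \trans{Red}/\trans{RedRep} redex.

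The transport looks shorter, but it tacitly requires that $\encoA{\cdot}$ not only maps \api{} reductions and $\equiv$ to \Api{} ones, but also reflects them back (recovering I/O polarities from the typing). That is about as much work as your proof, and it is exactly what the paper leaves out. Your route also makes explicit where the connection set enters: \trans{Res} extends it, and \trans{Com}/\trans{Close} consult it.

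The outline is sound and can be streamlined:
\begin{itemize}
\item Rather than the prenex-form ``stronger claim'' (which would need uniqueness of prenex forms up to $\equiv$), prove that $Q\longrightarrow Q'$ implies $\ltsTv{\emptyset}{Q}{\tau}{Q''}$ for some $Q''\equiv Q'$. Then invert \trans{Res}, the only rule yielding a $\tau$ from a restriction, on $Q=\res\tilx P$.
\item For the converse, make the induction invariant the shape of a $\tau$-transition, since ``$\res\delta R\longrightarrow\res\delta R'$'' is not preserved by \trans{ParL}.
\end{itemize}

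Three points need fixing.

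\textbf{The \trans{Close} case.} Lemma~\ref{lem:api-one-comp} does not do this bookkeeping. It turns an output followed by an input into a $\tau$-transition, i.e.\ it goes from the LTS to the LTS and says nothing about $\longrightarrow$. The case is instead closed by your own shape lemmas. From $Q\equiv\res{\tilz}\resBP{c}(\out{b}{c}\mid S)$, $Q_1\equiv\res{\tilz}S$ and the input decomposition of $P$, scope extrusion produces a \trans{Red} redex whose reduct is $\equiv\res\tilx\resBP{c}(P_1\sub{c}{x}\mid Q_1)$.

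\textbf{Missing symmetric rules.} Figure~\ref{fig:LTS} omits not only the replication rule (which you rightly add) but also the symmetric versions of \trans{ParL}, \trans{Com} and \trans{Close}. Your $\equiv$-preservation lemma fails without them (consider $\nil\mid\outC{e}$), and commutativity of $\mid$ in $\equiv$ cannot supply them. Assume them explicitly.

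\textbf{Stripping $\res\tilx$.} This is only sound up to a bijective renaming of $\tilx$: inverting \trans{Res} gives $\compP{P}{\tilx}\arr\tau\compP{P''}{\tilx}$ with merely $\res\tilx P''\equiv\res\tilx P'$. For instance, take $\tilx = a\sepRES b, c\sepRES d$ with $a,c:\ich{\unittype}$, and $P=\inpC{a}.\outC{e}\mid\outC{b}\mid\inpC{c}.\outC{f}$, where $e,f$ are success names. By $\alpha$-conversion,
\[ \res\tilx P\longrightarrow\res\tilx(\outC{e}\mid\inpC{c}.\outC{f})\equiv\res\tilx(\outC{e}\mid\inpC{a}.\outC{f}). \]
Yet the only $\tau$-derivative of $\compP{P}{\tilx}$ is, up to $\equiv$, $\compP{\outC{e}\mid\inpC{c}.\outC{f}}{\tilx}$, which is not $\equiv$ to $\compP{\outC{e}\mid\inpC{a}.\outC{f}}{\tilx}$. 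So the first item has to be read modulo such renamings. This is a looseness in the statement itself, which neither your argument nor the paper's transport can avoid.
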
 

\subsubsection{Transferring results about behavioural relations}
On composite processes, the 
 definitions of 
 (strong and weak)   bisimilarity ($\sbS$ and $\wbS$), and 
 of the  expansion relation ($\expa$) are as in Section~\ref{aa:backpi}. 
Exploiting Corollary~\ref{c:compBis} 
 and  the analogous results in  \Api, including
the closure of behavioural relations    under substitutions, 
we can  derive 
the congruence properties of the bisimilarities and 
the precongruence property of the expansion relation in \api.
We only report the result for an initial empty connection set, which is simpler and
sufficient for our needs in the paper. 

\begin{theorem}
\label{t:congS}
Let $\bowtie \;  \in \{ \sbS,\wbS, \expa\}$. If 
$ \compP P \emptyset   \bowtie  \compP Q \emptyset$, then for any \api context  $\qct$
also $ \compP {\ct P} \emptyset   \bowtie  \compP {\ct Q} \emptyset$.
\end{theorem}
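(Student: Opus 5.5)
The plan is to reduce the statement to the corresponding congruence results for \Api, using the encoding $\encoA{\cdot}$ and Corollary~\ref{c:compBis}. By that corollary, $\compP P \emptyset \bowtie \compP Q \emptyset$ is equivalent to $\encoA{\compP P \emptyset} \bowtie \encoA{\compP Q \emptyset}$. So it suffices to establish three things. First, $\encoA{\compP{\ct P}\emptyset}$ can be written as $\qct'[\encoA{\compP P {\delta_\qct}}]$, where $\qct'$ is an \Api context obtained by translating $\qct$. Second, $\encoA{\compP P{\delta}}$ is obtained from $\encoA{\compP P\emptyset}$ by a name substitution. Third, \Api relations are closed under contexts and under substitutions of that kind.

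\textbf{Step 1 (compositionality).} We prove by induction on $\qct$ that $\encoA{\compP{\ct P}\emptyset} = \qct'[\encoA{\compP P{\delta_\qct}}]$. Here $\qct'$ is the homomorphic image of $\qct$, with each restriction $\resB a b$ turned into $\res b$, and $\delta_\qct$ is the set of connections of the restrictions above the hole, as in the definition of respecting a connection set. We also need $\qct'$ not to depend on the hole content. Input-bound names of $\qct$ that scope over the hole are unaffected by $\delta_\qct$, so this holds. Every clause of $\encoA{\cdot}$ is homomorphic, and the restriction clause only extends $\delta$, so the induction is direct.

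\textbf{Step 2 (from $\delta$ to a substitution).} We show $\encoA{\compP P{\delta}} = \encoA{\compP P\emptyset}\sigma_\delta$, where $\sigma_\delta$ maps $a\mapsto b$ for each $a\sepRES b\in\delta$. This is again a structural induction. Bound names are fresh by the Barendregt convention, so nothing is captured.

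\textbf{Step 3 (closure in \Api).} Assume $\encoA{\compP P\emptyset}\bowtie\encoA{\compP Q\emptyset}$. We need closure under the substitution $\sigma_\delta$ followed by the context $\qct'$. For $\sbS$ this is standard, since strong bisimilarity in \Api is preserved by all contexts and substitutions. Weak bisimilarity $\wbS$ and $\expa$ are congruences in the asynchronous calculus because there is no choice operator. This includes input prefix: closure under input prefix needs closure under substitutions, and that is what makes input prefix safe.

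I expect the main obstacle to be exactly this closure under substitution, because ground-style bisimilarity in the synchronous-label LTS is not in general closed under name fusions. The approach I would try first is to invoke the known \Api result that, for the asynchronous calculus without matching, ground bisimilarity coincides with its substitution-closed (early/late) version and is preserved by arbitrary substitutions \cite{SangiorgiWalker01}. This result applies equally to $\expa$ via its standard proof. If that citation is too coarse, the fallback is to prove a substitution lemma directly on composite processes: show that $\{(\compP{P\sigma}\emptyset,\compP{Q\sigma}\emptyset)\}$ is a bisimulation (resp.\ expansion) up to $\equiv$. This uses Lemma~\ref{lem:sub1} for the forward direction and Lemma~\ref{lem:api-three} to pull transitions of $P\sigma$ back to transitions of $P$. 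The fused synchronisations that the substitution newly creates are handled with Lemma~\ref{lem:api-zero}, Lemma~\ref{lem:api-one-comp} and Lemma~\ref{lem:Api-one-comp-ex}, which decompose such a $\tau$ into an output followed by an input and then recompose it on the other side.

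Finally, we combine the steps. By Step 3, $\qct'[\encoA{\compP P\emptyset}\sigma_\delta] \bowtie \qct'[\encoA{\compP Q\emptyset}\sigma_\delta]$. By Steps 2 and 1 this is $\encoA{\compP{\ct P}\emptyset}\bowtie\encoA{\compP{\ct Q}\emptyset}$, and Corollary~\ref{c:compBis} gives the statement.
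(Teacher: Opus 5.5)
Your proposal is correct and follows the paper's route. You transfer along $\encoA{\cdot}$ using Corollary~\ref{c:compBis} and appeal to the congruence and substitution-closure properties of $\sbS$, $\wbS$ and $\expa$ in \Api; your Steps~1--2 (compositionality of the encoding, and the connection set above the hole acting as the fusion $\sigma_\delta$) make explicit what the paper's one-line justification leaves implicit.

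One caveat: your fallback would not cover $\sigma_\delta$. That substitution identifies an I-name with an O-name, so it is not a substitution on \api composite processes; at that level the fusion corresponds to extending the connection set, as in Lemma~\ref{l:ddp}. Step~3 therefore rests on the cited \Api result, exactly as in the paper.
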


We then derive soundness of the relations with respect to barbed
congruence.
\begin{corollary}
\label{c:congS}
Let $\bowtie \;  \in \{ \sbS,\wbS, \expa\}$, and suppose $\ok \Delta {P,Q}$.
 If 
$ \compP P \emptyset   \bowtie  \compP Q \emptyset$, then
also 
$ P   \wbcTT \Delta{}  Q$.
\end{corollary}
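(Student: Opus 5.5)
The statement to prove is Corollary~\ref{c:congS}. I would obtain it by composing two facts already available: the congruence result of Theorem~\ref{t:congS}, and the strong correspondence between closed \api processes and their \Api images from Lemma~\ref{l:api_pi}.

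\emph{Reduction to weak bisimilarity.} Since $\sbS \subseteq \wbS$, and expansion is contained in weak bisimilarity (${\expa} \subseteq {\wbS}$, as usual), it suffices to treat the case $\compP P \emptyset \wbS \compP Q \emptyset$.

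\emph{Closing the processes.} Fix a context $\qct$ that is closing for $\Delta$. By Theorem~\ref{t:congS} we get $\compP{\ct P}\emptyset \wbS \compP{\ct Q}\emptyset$. By Corollary~\ref{c:compBis} this transfers to $\encoA{\compP{\ct P}\emptyset} \wbS \encoA{\compP{\ct Q}\emptyset}$. The processes $\ct P$ and $\ct Q$ are closed. Hence Lemma~\ref{l:api_pi} gives $\ct P \sbS \encoA{\compP{\ct P}\emptyset}$, and likewise for $Q$. By transitivity, $\ct P \wbS \ct Q$ holds in the \api LTS with empty connection set.

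\emph{From weak bisimilarity to barbed bisimilarity.} It remains to show that weak bisimilarity on closed processes implies barbed bisimilarity $\wbb$. This is standard, and I would check two points.
\begin{itemize}
\item For closed processes, $P \dwaB \success$ iff $P$ can perform an output action at $\success$. This holds because success names are unrestricted and appear only in outputs.
\item Reductions coincide with $\tau$-transitions with empty $\delta$. This follows from the Harmony Lemma~\ref{l:hl} with $\tilx=\emptyset$, up to $\equiv$, which is itself contained in $\sbS$.
\end{itemize}
Given these, the weak bisimulation $\wbS$ restricted to closed processes is a barbed bisimulation: barbs are matched by weak output transitions, and reductions by weak $\tau$-sequences. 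This yields $\ct P \wbb \ct Q$ for every closing $\qct$, that is, $P \wbcTT\Delta{} Q$.

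\emph{Expected obstacle.} The main obstacle is mild. Theorem~\ref{t:congS} is stated for arbitrary \api contexts, whereas closing contexts may connect free names of the hole through restrictions. I must make sure these connections are accounted for by the LTS of the \emph{closed} term, where they appear as restrictions handled by rules \trans{Res}/\trans{Close}. They do not need to be accounted for by a nonempty initial $\delta$. Since $\ct P$ is itself a closed process, this is fine.
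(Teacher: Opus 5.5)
Your proposal is correct and follows the paper's route: the corollary comes from the congruence result of Theorem~\ref{t:congS}, applied to a closing context, together with the standard fact that weak bisimilarity on closed processes (whose only visible actions are outputs at success names) is a barbed bisimulation. The round trip through the encoding via Corollary~\ref{c:compBis} and Lemma~\ref{l:api_pi} is unnecessary, since Theorem~\ref{t:congS} already gives $\compP{\ct P}\emptyset \wbS \compP{\ct Q}\emptyset$ directly in the LTS of \api{}, but it does no harm.
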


Similarly we can  derive: 

\begin{lemma}
\label{l:ddp}
Let $\bowtie \in \{ \sbS,\wbS, \expa\}$.  
Suppose $\delta \subseteq \delta'$. 
Then  $\compP P \delta \bowtie \compP Q \delta$ implies 
$\compP P{\delta'} \bowtie \compP Q{\delta'}$.
\end{lemma}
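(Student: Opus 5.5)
The plan is to transport the statement from the Asynchronous $\pi$-calculus through the encoding $\encoA{\cdot}$, as in Corollary~\ref{c:compBis}. Assume $\compP P \delta \bowtie \compP Q \delta$ with $\delta \subseteq \delta'$. By definition, $\encoA{\compP P {\delta'}}$ is obtained from $\encoA{\compP P \delta}$ by a further identification of names. For each extra pair $a\sepRES b \in \delta'\setminus\delta$, the free occurrences of $a$ are replaced by $b$. Hence
\[
\encoA{\compP P {\delta'}} = \encoA{\compP P \delta}\sigma ,
\]
where $\sigma$ maps each such $a$ to its companion $b$. The same $\sigma$ works for $Q$. This identity should follow by a routine structural induction on $P$: in the restriction case the bound names are fresh, so they are disjoint from $\n{\delta'}$ and commute with $\sigma$.

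Next I use Lemma~\ref{l:encoA} in both directions to read $\compP P\delta \bowtie \compP Q\delta$ as $\encoA{\compP P\delta} \bowtie \encoA{\compP Q\delta}$ in \Api. This is the same step that underlies Corollary~\ref{c:compBis}, stated there for $\delta=\emptyset$; it extends verbatim to an arbitrary $\delta$.

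The main obstacle is closure under substitution. Ground strong and weak bisimilarity and expansion in \Api are in general not preserved by non-injective substitutions, and $\sigma$ fuses an input name with an output name. I would therefore rely on the \Api processes $\encoA{\compP P \delta}$ satisfying the relevant typing/locality discipline: after the encoding, $a$ and $b$ have dual types in the source and become the same channel only through $\sigma$. Failing that, I would avoid substitution closure altogether and instead show directly that
\[
\{(\compP {P_1}{\delta_1'},\compP{Q_1}{\delta_1'}) \mid \compP{P_1}{\delta_1}\bowtie\compP{Q_1}{\delta_1},\ \delta_1\subseteq\delta_1'\}
\]
is a $\bowtie$-relation, up to $\equiv$. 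The only new transitions under $\delta'$ are $\tau$-steps between some $a\sepRES b\in\delta'\setminus\delta$. By Lemma~\ref{lem:api-zero}, such a $\tau$ decomposes under $\delta$ into an output at $b$ followed by an input at $a$. I then match these two visible actions from $Q_1$ under $\delta$, and recombine them into a (weak) $\tau$ under $\delta'$ using Lemma~\ref{lem:api-one-comp}, or Lemma~\ref{lem:Api-one-comp-ex} in the weak case.

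The input-matching step is where I expect the difficulty, because the input parameter is instantiated by a substitution $\sub c x$. Closure of $\bowtie$ under substitutions that agree with the connection set, Lemma~\ref{lem:sub1}, should handle it. In the bound-output case the connection set grows to $\delta_1',\breve c$; this is still an extension of $\delta_1,\breve c$, so the derivatives remain in the relation. For expansion, I check that the recombination respects the asymmetric clauses: a single $\tau$ on the left is answered by at most one $\tau$ on the right, since the two visible actions answered strongly recombine into exactly one $\tau$.
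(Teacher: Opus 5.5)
Your direct route breaks exactly at the step you flag. You split a new synchronisation on $a\sepRES b\in\delta_1'\setminus\delta_1$, match the output and the input in $Q_1$ under $\delta_1$, and recombine under $\delta_1'$. The derivatives you then get are $\compP{P''\sub{c}{x}}{\delta_1'}$ and $\compP{Q''\sub{c}{x}}{\delta_1'}$. But all you know is $\compP{P''}{\delta_1}\bowtie\compP{Q''}{\delta_1}$, and your candidate relation contains no substituted pairs.

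Lemma~\ref{lem:sub1} does not supply this. It pushes a single transition of $P$ forward to $P\sigma$. Closure of $\bowtie$ under $\sigma$ also needs the converse direction and, above all, an account of the $\tau$-steps that $\sigma$ creates. Here $c$ is in general free in $P''$ and $Q''$, so $\sub{c}{x}$ fuses names. If $x$ is an O-name and the companion of $c$ is in $\delta_1'$, outputs at $x$ acquire synchronisations that $P''$ did not have under $\delta_1$. That is precisely the phenomenon the lemma is about, so substitution closure is not a side fact you can cite: it must be proved together with the enlargement of the connection set.

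The repair is to build both into the candidate relation. Take all pairs $(\compP{P\sigma}{\delta'},\compP{Q\sigma}{\delta'})$ with $\compP{P}{\delta}\bowtie\compP{Q}{\delta}$ and $\sigma$ a type-respecting substitution such that $a\sepRES b\in\delta$ implies $\sigma(a)\sepRES\sigma(b)\in\delta'$. Your statement is the case where $\sigma$ is the identity.
\begin{itemize}
\item Visible moves of $P\sigma$ come from moves of $P$ by Lemma~\ref{lem:api-three}, since visible transitions do not depend on the connection set.
\item A $\tau$ of $P\sigma$ under $\delta'$ is either the image of a $\tau$ of $P$ under $\delta$, or a synchronisation between an output at $b$ and an input at $a$ of $P$ with $\sigma(a)\sepRES\sigma(b)\in\delta'$.
\item The second kind you handle as you propose, and the derivatives are again related, now via $\sigma$ extended with $x\mapsto\sigma(c)$.
\end{itemize}

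Your first route is in fact the paper's own argument. The paper transports along $\encoA{\cdot}$, where enlarging $\delta$ amounts, as you correctly observe, to applying the fusing substitution. It then invokes closure of $\sbS$, $\wbS$ and $\expa$ under name substitutions in \Api{}.

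Your reason for not committing to this route is misplaced. The classical failures of substitution closure for ground bisimilarity in $\pi$-calculi come from matching and (mixed) sums, e.g.\ $\inp{a}{x}\mid\out{b}{c}$ against its interleaving expansion. \Api{} has neither. By asynchrony, an output followed by an input on names that a substitution fuses can always be merged into a single $\tau$. So the substitution-closed relation above is a bisimulation (resp.\ expansion) there, with no typing or locality assumption needed.

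Route one then needs one more remark for the way back from \Api{} to composite processes. For well-typed processes the encoding of labels is injective: a channel carries either I-names or O-names, so $\out{c}{a}$ and $\out{c}{b}$ with $a\sepRES b$ cannot both occur. With the substitution-closure fact and this remark, route one is a complete proof.
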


Using Corollaries~\ref{c:compBis}  and ~\ref{c:congS},
we  can transfer well-known laws  from $\pi$-calculi to \api. 
Some examples laws, that we will use later, are the following ones. 

\begin{corollary}
  \label{c:laws_A}
\noindent
\begin{enumerate}
\item  
\label{i:laws_PC}
$\resBP a (! \inp a x .  P | (Q_1 | Q_2)) 
 \wbcTT \Delta{}
\resBP a (! \inp a x .  P | Q_1)
 |
\resBP a (! \inp a x .  P | Q_2)
$ ;
\item  
\label{i:laws_INP}
$\resBP a (! \inp a x .  P | \inp b y . Q) 
 \wbcTT \Delta{}
\inp b y . \resBP a (! \inp a x .  P |  Q) 
$ ; 
\item  
\label{i:laws_INPrep}
$\resBP a (! \inp a x .  P |!  \inp b y . Q) 
 \wbcTT \Delta{}
! \inp b y . \resBP a (! \inp a x .  P |  Q) 
$ ; 
\item  
\label{i:laws_COM}
$\resBP a (
 \inp a x .  P | \out{\comp a } v) 
 \wbcTT \Delta{}
\resBP a ( P \sub v x )
$.
\end{enumerate} 
\end{corollary}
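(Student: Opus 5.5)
Each of the four laws can be proved the same way. Build the corresponding composite processes with empty connection set, translate them into \Api{} via $\encoA{\cdot}$, and appeal to well-known laws of the asynchronous $\pi$-calculus. Corollary~\ref{c:trans_laws} (equivalently Corollaries~\ref{c:compBis} and~\ref{c:congS}) then turns weak bisimilarity (or strong bisimilarity/expansion) of the translations into $\wbcTT\Delta{}$ in \api. So the plan is: compute the translations, identify them with standard \Api{} laws, and check that the hypotheses of those laws hold.

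\textbf{Computing the translations.} Restriction $\resBP a$ becomes a single restriction $\res{a'}$, and both $a$ and $\comp a$ are mapped to the same name.
\begin{itemize}
\item Law~(\ref{i:laws_PC}) becomes the replication/distribution law $\res{a}(!\inp a x.P' \mid (Q_1'\mid Q_2')) \;\wbS\; \res a(!\inp a x.P'\mid Q_1')\mid\res a(!\inp a x.P'\mid Q_2')$. This holds when $a$ is used only in output in $Q_1',Q_2',P'$, i.e.\ $a$ is a ``service'' name with a unique replicated receiver.
\item Law~(\ref{i:laws_INP}) becomes the standard replication-under-input law.
\item Law~(\ref{i:laws_INPrep}) is its replicated analogue.
\item Law~(\ref{i:laws_COM}) is the usual $\beta$-like law $\res a(\inp a x.P'\mid \out a v)\;\expa\;\res a(P'\sub v x)$.
\end{itemize}

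\textbf{Checking the side conditions.} The typing of \api{} supplies exactly the conditions the \Api{} laws require. Since $a$ is an I-name bound by the restriction, rule \trans{Par} together with the 1-input property means the only occurrence of $a$ in input is the displayed replicated (or linear) input. Moreover, \trans{RIn} forbids $a$ from occurring as input inside $P$. After translation, the name $a'$ therefore appears in input only at $!\inp{a'}x.P'$, and elsewhere only in output position (including as the object of outputs, since companion O-names are what get exported). This is precisely the condition under which the replication laws are valid in \Api{}.

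For law~(\ref{i:laws_COM}), the 1-input property again guarantees that no other receiver at $a$ can compete for the message. The communication is therefore deterministic, and the left side expands the right: the only moves of the left side before the synchronisation are those of $P$ under the input prefix, and there are none. The proof is then a direct expansion-up-to check, or equivalently an application of Lemma~\ref{lem:api-one-comp}.

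\textbf{Main obstacle.} The main obstacle is law~(\ref{i:laws_PC}). The \Api{} replication theorems are usually stated for $a$ with a purely output (localised) use in $Q_1,Q_2$, whereas here the \api{} processes may \emph{export} $\comp a$ or even other input capabilities. After translation, however, only the O-name $a'$ can be sent, so the receiving side can only output on it. I would first handle this by noting that the translated processes are \ALpi-like with respect to $a'$ and citing the replication theorem of \cite{SangiorgiWalker01}. If that is insufficient, I would prove the law directly with a bisimulation up to $\wbS$ and $\expa$, using the standard relation that closes pairs under parallel composition with copies of the server. Throughout, type preservation (Theorem~\ref{t:sr}) is used to keep the invariant that $a'$ never appears in input position.
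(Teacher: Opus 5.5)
\textbf{There is a genuine gap, in the side condition you state for the replication theorems (laws (\ref{i:laws_PC}) and (\ref{i:laws_INPrep})).} Your overall route is the paper's: translate with $\encoA{\cdot}$, invoke standard \Api{} laws, and transfer back with Corollary~\ref{c:trans_laws}/\ref{c:congS}. Laws (\ref{i:laws_INP}) and (\ref{i:laws_COM}) go through as you describe.

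The transfer corollaries only accept untyped bisimilarity ($\wbS$, $\sbS$, $\expa$) between the translations. In untyped \Api{}, the replication theorems require the restricted name to occur only in output \emph{subject} position. Occurrences as the object of an output are not harmless, contrary to your claim.

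Take $P=\nil$, $Q_1=\out{c}{\comp a}$ and $Q_2=\out{d}{\comp a}$ in law (\ref{i:laws_PC}). After translation, the left side performs $\bout{c}{a'}$ followed by the \emph{free} output $\out{d}{a'}$. The right side can only answer the second action with the bound output $\bout{d}{a''}$ of a different private name. So the translations are not weakly bisimilar, and your second fallback (a bisimulation up to $\wbS$ and $\expa$ on the translations) cannot succeed.

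The same failure occurs in law (\ref{i:laws_INPrep}) with $Q=\out{c}{\comp a}$, since each copy of $Q$ on the right creates its own restricted name. That law is therefore not an unproblematic ``replicated analogue'' either.

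Your first fallback argues that receivers of $a'$ can only output on it, and appeals to a localised/typed replication theorem. This leaves unspecified both which typed theorem you use and how it is carried over to $\wbcTT{\Delta}{}$. The transfer corollaries, as stated, only carry untyped bisimilarity through $\encoA{\cdot}$, and that translation merges the two ends of every channel.

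\textbf{The missing step} is to eliminate the exports of $\comp a$ before translating. This is how the paper handles law (\ref{i:laws_PC}), and law (\ref{i:laws_INPrep}) needs the same treatment. Use the O-name wire law (Lemma~\ref{l:wire} for O-names, i.e.\ Lemma~\ref{l:wire-O}) and closure of $\wbcTT{\Delta}{}$ under contexts. Replace every free output $\out{c}{\comp a}$ in $P$, $Q_1$, $Q_2$ (respectively $Q$) on both sides by $\bout{c}{e} : \links{\comp e}{\comp a}$, with $e$ fresh.

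Afterwards $\comp a$ occurs only as an output subject. By typing, $a$ occurs only as the subject of the server. So the translated name satisfies the hypothesis of Milner's replication theorems, which give bisimilarity of the translations. Corollary~\ref{c:congS} then brings the laws back to $\wbcTT{\Delta}{}$.
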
 
The first three laws are often called  `replication theorems'.
To derive law \reff{i:laws_PC} in the corollary from  $\pi$-calculus laws, we have however
first to apply Lemma~\ref{l:wire} (for O-names) so to make sure that name $\compa$ is never exported within
$P$, $Q_1$ and $Q_2$.

\section{Internal processes}
\label{a:inter}

\subsection{Transformation into  internal processes and basic properties of internal processes}
We  define the translation that maps \api processes onto
internal \api processes.
The translation, $\transf \cdot$, uses wires.
We recall that
  $\ulinks  a b $ stands for either
$\links  a b $ or
$\links  ba $, depending on the types for $a$ and $b$.
The translation is a homomorphism on all operators except for output
particles.
\[
\begin{array}{lll}
\transf{P_{1}|P_{2}} \defi \transf{P_{1}}|\transf{P_{2}}
\hspace*{.5cm}
&
\transf{{\inp a x. P}} \defi {\inp a x. \transf{P}}
&
\hspace*{.5cm}
\transf{! {\inp a x .P}} \defi ! {\inp a x. \transf{P}}\\[\tkp]
\transf{{\resBP a}P} \defi {\resBP a}\transf{P}
 &
 \multicolumn{2}{c}{
\transf{{\out a b}} \defi
\bout a c : \transf{{\ulinks {\comp c} b}}    \andalso \mbox{where $c$ and $\comp c$ are fresh}
}
\end{array}
\]

Using induction on types one can show that the encoding is
well-defined (cf., the recursive application of the encoding in
clause for output terminates), and preserves typing (i.e., if $\ok
\Gamma P$ then also $\ok \Gamma {\transf P}$).

The encoding maps \api processes onto  the calculus \apiE, whose syntax is as follows:
\[
P::= \nil  \midd {\inp a x. P} \midd  \transf{{\out a b}} \midd
   P | P \midd   {\resB a b P} \midd ! {\inp a x .P}
\]

Lemma~\ref{l:closedTR} shows that  \apiE is indeed  a set of internal
processes, and is  invariant under transition.

\begin{lemma}
\label{l:closedTR}
If  $P$ is  an \apiE process and $\ltsTv \delta P  \mu
{P_{1}}$ then $\mu$ may not be a free output, and
$P_{1} \in $ \apiE.
\end{lemma}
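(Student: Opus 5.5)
Since \apiE is given by a grammar, we will prove the statement by induction on the derivation of the transition $\ltsTv \delta P \mu {P_1}$, i.e.\ on the structure of $P$, following the rules of Figure~\ref{fig:LTS}. The two claims are handled together: the label is never a free output, and the derivative is again generated by the \apiE grammar. First we make the grammar precise. The particle $\transf{\out a b}$ unfolds to $\resBP c(\out a c \mid \transf{\ulinks{\comp c} b})$. Here $\transf{\ulinks{\comp c} b}$ is a replicated input whose body is $\transf{\out{\cdot}{x}}$ at a strictly smaller type. So every free-output particle in an \apiE process occurs directly under a restriction binding its object.

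The cases go as follows.

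\emph{Case \trans{Inp}.} For $\inp a x. P$, the label is an input and the derivative $P$ is in \apiE by the grammar. Replicated inputs transition through the derived rule, as in the Appendix, to $P \mid !\inp a x.P$, which is also in \apiE.

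\emph{Case \trans{ParL}.} Apply the induction hypothesis to the active component. The label is unchanged and \apiE is closed under $\mid$.

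\emph{Cases \trans{Com} and \trans{Close}.} The label is $\tau$. The derivative has the form $P' \sub{c}{x} \mid Q'$, possibly under $\resBP c$. By the induction hypothesis, $Q'$ is in \apiE and, in the \trans{Com} case, the output came from a free output. That contradicts the induction hypothesis, so \trans{Com} can only fire if the output came from a particle. In fact the induction hypothesis rules out \trans{Com} entirely inside \apiE: only \trans{Close} applies. We still need $P'\sub{c}{x}$ to be in \apiE, which requires \apiE to be closed under name-for-name substitution. This holds by an easy auxiliary induction, since $\transf{\cdot}$ commutes with substitution: $\transf{\out a b}\sigma = \transf{\out{a\sigma}{b\sigma}}$ up to $\alpha$-conversion.

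\emph{Case \trans{Res}.} Immediate from the induction hypothesis.

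\emph{Case \trans{ResExt}.} The premise is a free output $\out c b$ from $P$ under connection set $\delta, \breve b$, and the conclusion is the bound output $\bout c b$. Here the induction hypothesis cannot be applied directly, because the premise itself has a free-output label. This is the main obstacle. It is resolved by strengthening the statement to an inner lemma: for $P$ in \apiE, if $\ltsTv{\delta}{P}{\out c b}{P_1}$, then $P$ contains the output $\out c b$ as an unguarded bare particle whose name $b$ is bound by a restriction inside the term. Hence $P$ itself, being a parallel composition of \apiE components, cannot be a bare free output, which is impossible at top level. We therefore prove the pair of claims by simultaneous induction on \emph{terms} $R$ of the extended grammar, which also includes the body $\out a c \mid \transf{\ulinks{\comp c} b}$ together with its context of already-opened restrictions.

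The invariant is as follows. Any free-output transition of such a term has its object among the names bound by an enclosing restriction. The derivative, after \trans{ResExt}, drops $\out a c$ and leaves $\transf{\ulinks{\comp c} b}$ and the other components, all of which are in \apiE. With this invariant, the \trans{ResExt} case gives a bound-output label and a derivative in \apiE. Any free output that reaches the top level of an \apiE process would therefore have an unbound object, contradicting the invariant, which settles the first claim.
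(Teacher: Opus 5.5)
Your route is the paper's: structural induction over the \apiE grammar, where the only interesting case is the output particle $\transf{\out a b}=\resBP c(\out a c \mid \transf{\ulinks{\comp c}{b}})$. The obstacle you locate at \textsc{ResExt} is not really there, and the extended grammar is unnecessary. Your inner lemma, as phrased, even concerns an \apiE process performing $\out c b$, which is exactly what is being ruled out. If $\resBP b R$ arises from the grammar's restriction clause, then $R$ is in \apiE, and the induction hypothesis already forbids the free-output premise of \textsc{ResExt}. The only restriction with a body outside \apiE is the particle's own. The paper simply treats the particle as a base case and reads off its transition $\bout a c$, whose derivative is the wire. (Minor: \textsc{Com} is not vacuous, since unit outputs $\outC a$ are left untouched by $\transf{\cdot}$, but your substitution-closure remark covers that case.)

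The real gap is your claim that \textsc{Res} is immediate. It fails precisely for the particle's own restriction, and the paper's one-line proof has the same omission. When $b$ is an I-name, the wire is $\transf{\links{b}{\comp c}} = !\inp b x.\transf{\out{\comp c}{x}}$ with $b$ free. So the particle can first input at $b$ (replication, parallel composition, then \textsc{Res}, since $b\neq c,\comp c$), reaching $\resBP c(\out a c \mid \transf{\out{\comp c}{y}} \mid \transf{\links{b}{\comp c}})$. This process is not generated by the \apiE grammar, even up to structural congruence: the new component mentions $\comp c$, so it cannot be extruded. Nor is it of your body form, whose invariant only speaks about free-output transitions.

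The first claim survives, because the bare $\out a c$ keeps a restricted object and can only be released as $\bout a c$. For the second claim, you must enlarge \apiE to a class closed under this step and run the same induction over that class. For example, allow forwarded particles $\transf{\out{\comp c}{y}}$ (recursively of the same generalised shape) to accumulate under $\resBP c$ beside $\out a c$ and the wire. Typing excludes any $\tau$ between $\out a c$ and these components, and the $\bout a c$ derivative lands back in the class.
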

\begin{proof}
   By structural induction.
 The  interesting case is when
 \[ P =  \transf{\out a b} =
\bout  a c :  \transf{{\ulinks {\comp c} b}}
\]
In this case,  $\mu = {\bout  a c}$ and $P_1 =
\transf{{\ulinks {\comp c} b}}$,  which, by definition, is in \apiE.
\end{proof}

The three lemmas below bring up key properties of \apiE processes.
First,  the translation maps an \api process into a barbed congruent
\apiE process.

\begin{lemma}[Lemma~\ref{l:wire_trans} of the main text]
For any $P \in$ \api  and any $ \Delta $, if $\ok P \Delta $ then
$P \wbcTT \Delta{}  {\transf P}$.
\end{lemma}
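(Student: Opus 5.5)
We prove $P \wbcTT \Delta{} \transf P$ by induction on the size of $P$, with an inner induction on the order of the type of the emitted name to handle the recursion hidden in the output clause. The translation $\transf\cdot$ is a homomorphism on every operator except output. Barbed congruence $\wbcTT{}{}$ is preserved by all \api contexts, provided typing is respected (contexts closing for the outer typing compose with the inner context). This second fact follows from the definition: if $\qct$ is closing for $\Delta$ and $C'$ is a context, then $\qct[C'[\cdot]]$ is closing for the typing of the hole of $C'$. So the only real work is the output case. Note that typing preservation $\Delta \vdash \transf P$ must be checked alongside; it holds because a wire $\ulinks{\comp c}b$ is typable with $b$ at the same type and $c,\comp c$ fresh.

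\textbf{Homomorphic cases.} For $\inp a x.P$, $!\inp a x.P$, $P_1|P_2$ and $\resBP a P$, the induction hypothesis gives $P \wbcTT{\Delta'}{} \transf P$ at the appropriate typing $\Delta'$ of the subterm. Take a closing context for the compound process; it is obtained by plugging the operator into a closing context. For $!\inp a x.P$ the hole is typed under an environment whose I-names are only the bound parameter, which is still allowed, since contexts may fill holes under any extension by O-names and the body's typing is fixed. Congruence then yields the claim. For parallel composition we replace one component at a time and use transitivity.

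\textbf{Output case.} We must show $\out a b \wbcTT\Delta{} \bout a c : \transf{\ulinks{\comp c} b}$. Lemma~\ref{l:wire} already gives $\out a b \wbcTT\Delta{} \bout a c:\ulinks{\comp c}b$. It remains to show $\ulinks{\comp c} b \wbcTT{}{} \transf{\ulinks{\comp c}b}$ at the typing of the wire and then close under the context $\bout a c:[\cdot]$. The wire is $!\inp x y.\out z y$ (for $x,z$ being $\comp c,b$ in the appropriate order), and its translation is $!\inp x y.\transf{\out z y}$. The body $\out z y$ emits a name $y$ whose type is strictly smaller than the type of $b$ (the payload of the channel type). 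By the inner induction on type order, $\out z y \wbcTT{}{} \transf{\out z y}$. Replication is a valid context here because the body contains no free I-names other than the parameter, so congruence gives the result. The base case, where the payload is unit or a base value, is trivial since $\transf{\out z v}$ is literally $\out z v$.

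\textbf{Main obstacle.} The delicate point is the congruence under the replicated input. The hypothesis for the body is stated at a typing containing the formal parameter $y$. If $y$ is an I-name, this is exactly the case excluded from ordinary localised calculi, and we must make sure that contexts closing for the body's typing can be manufactured from contexts closing for the whole process. I would first handle this abstractly, by observing that $\wbcTT{}{}$ is closed under all well-typed contexts by definition, since contexts compose. If that turns out to be too coarse (for example because a hole under a replication forbids certain free I-names), the fallback is to use Theorem~\ref{t:Api_api} together with Corollary~\ref{c:congS}. That route would instead require proving the output case via $\wbS$ on composite processes, which may fail for the I-name wire law. For that reason I would stay with the contextual argument above.
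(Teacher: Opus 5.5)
Your proposal is correct and takes the same route as the paper. The paper's proof is the one-line remark that the lemma follows from the wire law (Lemma~\ref{l:wire}) applied repeatedly, with the recursion terminating because types are simple. You make that explicit: a structural induction that uses the fact that barbed congruence is closed under well-typed contexts (including the replicated-input context, whose hole may contain the formal parameter as an I-name), together with an induction on the order of the emitted name's type for the recursive output clause. As you concluded yourself, the fallback via Theorem~\ref{t:Api_api} is not needed.
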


\begin{proof}
Follows from the correctness of the wire law,
Lemma~\ref{l:wire}.
\end{proof}

\subsection{Properties of transitions for internal  processes}

Two important properties for transitions on internal processes
 concern bound outputs. Lemma~\ref{l:mm} highlights
a structural property of \apiE processes  capable of performing a
bound output.

\begin{lemma}
\label{l:mm}
If $P$ is a  \apiE process and $\ltsTv \delta P {\bout a c } {P_1}$, then we have:
\begin{enumerate}
\item $P\equiv \res\tilz  ( \resBP{c} (\out a c | \ulinks {\comp c }b) | P_2) $,
and  $P_1\equiv \resb\tilz  ( \ulinks {\comp c}b) | P_2) $,
for
  some $\tilz, b,P_2$ with $\{ a,c,\comp c\} \cap \tilz = \emptyset $ and $c,\comp c \not \in \fn
  {P_2}$.
\item $P \wbS
 \resBP{c} (  \transf{ \out a c} | P_1) $
\end{enumerate}
\end{lemma}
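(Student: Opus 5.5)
We prove part 1 by induction on the derivation of $\ltsTv \delta P {\bout a c } {P_1}$; equivalently, by structural induction on $P \in \apiE$, using the syntax of \apiE. By Lemma~\ref{l:closedTR}, \apiE processes never perform free outputs. A bound output must therefore originate from an output particle $\transf{\out a b}$. No other rule can introduce it: \trans{ResExt} would need a free output $\out a c$ from the body, and that body is again in \apiE.

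The cases are as follows.
\begin{itemize}
\item Base case, $P = \transf{\out a b} = \bout a c : \transf{\ulinks {\comp c} b}$. Here $\tilz$ is empty and $P_2 = \nil$. Since $\transf{\ulinks{\comp c} b}$ is the dynamic version of the wire, we read $\ulinks{\comp c} b$ in the statement as its translated form, so the shape matches up to $\equiv$. The transition is derived by \trans{ResExt} from \trans{Out}, and the derivative is $\transf{\ulinks{\comp c} b}$, as claimed. Freshness of $c,\comp c$ for $P_2$ is trivial.
\item Parallel composition, via \trans{ParL} (or its symmetric version). We apply the induction hypothesis to the active component and absorb the other component into $P_2$. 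By the Barendregt convention $c,\comp c$ are fresh, so scope extrusion by $\equiv$ moves the other component inside $\res\tilz$.
\item Restriction, via \trans{Res}. The side condition ensures the restricted pair is disjoint from $\{a,c,\comp c\}$, so we add it to $\tilz$.
\item No other cases arise. Input prefixes and replications cannot do outputs, \trans{Com} and \trans{Close} yield $\tau$, and \trans{ResExt} would need a free output underneath, which is excluded above.
\end{itemize}
In each case the derivative matches $\res\tilz(\transf{\ulinks{\comp c} b} \mid P_2)$ up to $\equiv$.

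For part 2 we use the shape from part 1. Take $P_1 \equiv \res\tilz(\transf{\ulinks{\comp c}b} \mid P_2)$ with $c,\comp c \notin \fn{P_2}$ and $a,c,\comp c \notin \tilz$. Then
\[
\resBP c(\transf{\out a c} \mid P_1) \equiv \resBP c\,\res\tilz(\transf{\out a c} \mid \transf{\ulinks{\comp c}b} \mid P_2).
\]
Here $c$ is bound, $\transf{\out a c}$ outputs a fresh name $c'$ wired to $c$, and $\comp c$ is consumed only by the wire $\transf{\ulinks{\comp c}b}$. So the right-hand side has the shape $\res\tilz(\bout a{c'} : (\text{wire } \comp{c'}\!\to c \mid \text{wire } \comp c \to b) \mid P_2)$, where the connected pair $c,\comp c$ is private. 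The goal is then to collapse this two-wire chain into the single wire $\transf{\ulinks{\comp{c'}}b}$, up to $\wbS$. After that, $\alpha$-conversion gives back $P$.

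The collapse is the main obstacle. It is an instance of the wire-composition law: two dynamic wires chained through a private name are weakly bisimilar, in the plain $\pi$-calculus sense, to one dynamic wire. We would prove it through the encoding $\encoA{\cdot}$ into \Api, using Corollary~\ref{c:compBis}. There it reduces to the known \Api\ fact that $\res c(\links{c'}{c} \mid \links{c}{b})$ is weakly bisimilar to $\links{c'}{b}$ for output-typed wires, provided $c$ is used only as an output by the context. This provision holds because $c$ is private and its companion appears only in the wire.

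The translated wires also emit fresh names recursively via $\transf\cdot$. We therefore argue by induction on the type of $b$, which is legitimate since types are finite. At each level, an output of the composed wires is matched by an output of the single wire whose residual is again a composed pair, at a smaller type. Treating these residuals with the induction hypothesis, we close the argument as a bisimulation up to $\wbS$ and up to context, relying on the congruence of $\wbS$ from Theorem~\ref{t:congS}.
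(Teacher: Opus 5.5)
Your proposal is correct. The paper states this lemma without proof; it only remarks, for the companion Lemma~\ref{l:du}, that a straightforward transition induction suffices. Your part~1 is exactly that induction. You are also right to read $\ulinks{\comp c}b$ in the statement as the dynamic wire $\transf{\ulinks{\comp c}b}$, since a static wire carrying names is not an \apiE{} process. For part~2, you use $\equiv$ and the congruence of $\wbS$ (Theorem~\ref{t:congS}) to reduce the claim to the transitivity law $\resBP{c}(\transf{\ulinks{\comp{c'}}{c}} \mid \transf{\ulinks{\comp c}{b}}) \wbS \transf{\ulinks{\comp{c'}}{b}}$. You then prove this by induction on the type of $b$, because each forwarded name leaves behind a composed pair of wires at a strictly smaller type. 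This is the natural argument.

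Two details deserve care.

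\emph{The closing technique.} Do not close the inductive step as a bisimulation up to $\wbS$ on both sides. Weak bisimulation up to weak bisimilarity is unsound in general, which is why the paper only offers up to $\wbS$ and $\expa$ (Theorem~\ref{t:uptoWE}). Here the problem is easy to avoid:
\begin{itemize}
\item Relate pairs $(L \mid A,\ R \mid B)$, taken up to $\equiv$. Here $L$ and $R$ are the current states of the composed wire and of the single wire, including pending outputs at $c$ and $b$, and $A \wbS B$ collects the residuals already emitted.
\item When a pending output at $b$ fires, the two new residuals are related by the induction hypothesis. They are absorbed into $A \wbS B$ by congruence.
\item Since the top-level connection set is empty, $A$ cannot synchronise with $L$ or $R$.
\end{itemize}
So this relation is a plain weak bisimulation, and no up-to technique is needed.

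\emph{The connection set.} Your argument gives the result at the empty connection set, whereas the lemma is applied at a nonempty $\delta$ (for instance in Lemma~\ref{l:crux}). The general case follows from Lemma~\ref{l:ddp}.
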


The other relevant lemma about bound outputs for internal processes
is    Lemma~\ref{l:du} in the
 main text. Its
  proof is a straightforward
transition induction.
The lemma is  important for defining proof techniques about
process behaviours.
One of the reasons is that, as hinted at
in the comments to the LTS of  Figure~\ref{fig:LTS},
when an arbitrary \api process performs  a bound output of an O-name, say
$\ltsTv { \delta  } P {\bout b c }  {P'}$, the connection set under
which the behaviour of the derivative $P'$ should be examined is an
extension of the original set $\delta$, namely $\delta, \breve c$,
because $P'$  may retain both the exported name $c$ and its
companion.
Lemma~\ref{l:du} ensures us that, in contrast, an extension of $\delta
$ is never  necessary in \apiE. Therefore, when examining the
behaviour of an \apiE process (the transitions that it can perform and,
recursively, those of its derivatives), the initial connection set
never changes.
The statements and proofs in this section rely on this property.
In the remainder of the section, unless otherwise stated, all
processes are in \apiE.

We write
$\LtsSS \asetD {Q} {}  {\aset}{ Q'}$ if there is $n \geq 1$ and
 $Q_1, \ldots, Q_n$ with $Q_1 = Q$, $Q_n = Q'$ and for each $i<n$,
 $\ltsSS \asetD {Q_i} {\tau}  {\aset}{ Q_{i+1}}$.
Then
 $\LtsSS \asetD Q {\mu}  {\aset}{ Q'}$ holds if
$\LtsSS \asetD {Q} {}  {\aset}{ Q_1}$,
 $\ltsSS \asetD {Q_1} {\mu}  {\aset}{ Q_2}$, and
$\LtsSS \asetD {Q_2} {}  {\aset}{ Q'}$, for some $Q_1,Q_2$ (here again we are exploiting Lemma~\ref{l:du} and the fact that $\delta
$ need not change through transitions in \apiE).
Below we list a few simple results concerning transitions for internal processes.

\subparagraph*{Notation}
For tuples  of I-names  $\tila$ and O-names $\tilb$,  of the same
length and, componentwise, of dual types,
we write
 $ \tila \sepRES \tilb$  to indicate the componentwise connection
 among the names of the tuples.

\begin{lemma}
\label{l:deltaABtau}
Suppose $
\ltsTv { \delta, \tila \sepRES \tilap}  {
P  }\tau {P'}$, and $\tilb \cap \n {\delta, \tila \sepRES \tilap} =
\emptyset $. Then also
 $
 \ltsTv { \delta, \tila \sepRES \tilb}  {
 P \sub {\tilb}\tilap  }\tau {P'\sub {\tilb}\tilap  }$
\end{lemma}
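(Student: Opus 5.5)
We argue by induction on the derivation of the transition $\ltsTv{\delta,\tila\sepRES\tilap}{P}{\tau}{P'}$, and use the freshness of $\tilb$ to keep all side conditions intact. Write $\sigma = \sub{\tilb}{\tilap}$. Since $\tilb$ is fresh and $\sigma$ is injective on $\tilap$ (the $\tilap$ are distinct O-names of a connection set), $\sigma$ acts only on the O-names $\tilap$. It therefore never identifies two names of $P$, and it maps the connection set $\delta,\tila\sepRES\tilap$ bijectively onto $\delta,\tila\sepRES\tilb$.

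The plan is to prove a slightly stronger auxiliary statement for all actions $\mu$, not only for $\tau$: if $\ltsTv{\delta,\tila\sepRES\tilap}{P}{\mu}{P'}$, then $\ltsTv{\delta,\tila\sepRES\tilb}{P\sigma}{\mu\sigma}{P'\sigma}$. Strengthening is necessary because the premises of \trans{Com} and \trans{Close} are visible actions. We then go through the rules of Figure~\ref{fig:LTS}.
\begin{itemize}
\item \trans{Inp}, \trans{Out} and \trans{ParL} are immediate.
\item For \trans{Com}, the premise $a\sepRES b\in\delta,\tila\sepRES\tilap$ becomes $a\sigma\sepRES b\sigma\in\delta,\tila\sepRES\tilb$. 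This holds because $\tilap\cap\n\delta=\emptyset$ (the connection set is well formed), so pairs in $\delta$ are left untouched and pairs $a_i\sepRES a'_i$ become $a_i\sepRES b_i$. We also need the commutation $(P_1'\sub{c}{x})\sigma = (P_1'\sigma)\sub{c\sigma}{x}$, which holds since $x$ is bound and hence fresh for $\tilap,\tilb$ by the Barendregt convention.
\item \trans{Close} is handled in the same way.
\item For \trans{Res} and \trans{ResExt}, the premise lives under $\delta,\tila\sepRES\tilap,\breve c$. By the convention $\pairR c$ is disjoint from $\tilap$ and $\tilb$, so the induction hypothesis applies with the enlarged $\delta$. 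The side condition $\pairR c\cap\fn\mu=\emptyset$ is preserved because $\tilb$ is fresh.
\end{itemize}

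Alternatively, the lemma is an instance of Lemma~\ref{lem:sub1}, applied in the composite-process setting with the connection set changed along $\sigma$. It can also be obtained through the encoding $\encoA{\cdot}$ of Lemma~\ref{l:encoA}. Under that encoding, $\encoA{\compP P{\delta,\tila\sepRES\tilap}}$ and $\encoA{\compP{P\sigma}{\delta,\tila\sepRES\tilb}}$ are $\alpha$-related: both identify each $a_i$ with its companion, which is $a'_i$ in the first and $b_i$ in the second. The result then follows from the corresponding fact for \Api after a renaming of free names.

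The main obstacle is purely bookkeeping. We must check that no name captured by a binder, and no name already in $\delta$, collides with $\tilb$ or $\tilap$, so that $\sigma$ really transports the connection set. The hypothesis $\tilb\cap\n{\delta,\tila\sepRES\tilap}=\emptyset$, together with the Barendregt convention, is exactly what guarantees this, so we would state these freshness facts once and invoke them in each case.
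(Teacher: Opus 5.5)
Your transition induction is correct. Strengthening the claim to arbitrary actions, and quantifying over the connection set so that \trans{Res} and \trans{ResExt} can enlarge it, is exactly what is needed. This is the routine argument the paper has in mind; the paper states the lemma without proof.

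One claim in your write-up is false. The hypothesis makes $\tilb$ disjoint only from $\n{\delta, \tila \sepRES \tilap}$, not from $\fn{P}$. So $\sub{\tilb}{\tilap}$ may identify $a'_i$ with a name $b_i$ already free in $P$, and $\sigma$ does not avoid identifying names of $P$. This is precisely the situation in which the paper uses the result. In the proof of Lemma~\ref{l:crux}, $Q$ performs $\bout b c$ with $b \in \tilb$ before Corollary~\ref{c:deltaABMU} is invoked. Your rule-by-rule cases never use injectivity:
\begin{itemize}
\item the \trans{Com}/\trans{Close} side condition only needs $\tilap$ and $\tilb$ to be disjoint from $\n{\delta}$;
\item bound names are chosen away from $\tilap$ and $\tilb$.
\end{itemize}
So the induction does prove the statement as given, and you should simply drop the injectivity remark.

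Your alternative routes, however, do rely on injectivity. The two $\encoA{\cdot}$-images differ by the non-injective substitution $\sub{\tilb}{\tilap}$, not by an $\alpha$-renaming of free names. That route would instead need closure of \Api{} transitions under arbitrary substitutions, which holds for the match-free calculus. Separately, Lemma~\ref{lem:sub1} keeps the connection set fixed, so the lemma is not literally an instance of it.
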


\begin{corollary}
\label{c:deltaABTau}
Suppose $
\LtsTv { \delta, \tila \sepRES \tilap}  {
P  }{} {P'}$, and $\tilb \cap \n {\delta, \tila \sepRES \tilap} =
\emptyset $. Then also
 $
 \LtsTv { \delta, \tila \sepRES \tilb}  {
 P \sub {\tilb}\tilap  }{} {P'\sub {\tilb}\tilap  }$
\end{corollary}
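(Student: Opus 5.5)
The plan is to obtain the corollary from Lemma~\ref{l:deltaABtau} by induction on the length of the weak transition. By definition, $\LtsTv{\delta,\tila\sepRES\tilap}{P}{}{P'}$ means there are $n\ge 1$ and processes $Q_1,\ldots,Q_n$ with $Q_1=P$, $Q_n=P'$, and $\ltsTv{\delta,\tila\sepRES\tilap}{Q_i}{\tau}{Q_{i+1}}$ for each $i<n$.

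\emph{Step 1 (strong steps).} For each $i<n$, apply Lemma~\ref{l:deltaABtau} to the single step $Q_i\to Q_{i+1}$. This gives $\ltsTv{\delta,\tila\sepRES\tilb}{Q_i\sub{\tilb}{\tilap}}{\tau}{Q_{i+1}\sub{\tilb}{\tilap}}$. The side condition $\tilb\cap\n{\delta,\tila\sepRES\tilap}=\emptyset$ concerns only the connection set. That set is fixed along the whole sequence, since $\tau$-transitions of \apiE processes do not modify the connection set (Lemma~\ref{l:du} and the remark on weak transitions in \apiE). So the hypothesis of the corollary supplies the side condition at every step.

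\emph{Step 2 (closure).} The $Q_i$ stay in \apiE by Lemma~\ref{l:closedTR}, so the lemma applies throughout. Chaining the substituted steps gives the sequence $Q_1\sub{\tilb}{\tilap},\ldots,Q_n\sub{\tilb}{\tilap}$, whose endpoints are $P\sub{\tilb}{\tilap}$ and $P'\sub{\tilb}{\tilap}$. This is exactly the required weak transition under $\delta,\tila\sepRES\tilb$. The case $n=1$ is immediate, because the substituted sequence is then trivial.

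\emph{Main obstacle.} The one point to check is bound names: derivatives may contain restricted names that could clash with $\tilb$. Under the Barendregt convention all bound names can be assumed fresh with respect to $\tilb$, and renaming is invariant under $\equiv$. So the substitution applies uniformly to every $Q_i$, and the result follows.
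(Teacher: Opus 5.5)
Your proof is correct and matches the argument the paper leaves implicit: the corollary is stated without proof as an immediate consequence of Lemma~\ref{l:deltaABtau}, obtained by applying that lemma to each $\tau$-step in the finite sequence that defines the weak transition. One remark: the appeal to Lemma~\ref{l:du} is not needed, because the connection set is by definition a fixed parameter of the weak transition, so the side condition of the lemma holds at every step automatically.
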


\begin{corollary}
\label{c:deltaABMU}
Suppose $
\LtsTv { \delta, \tila \sepRES \tilap}  {
P  }{\mu} {P'}$, and $\tilb \cap \n {\delta, \tila \sepRES \tilap} =
\emptyset $. Then also
 $
 \LtsTv { \delta, \tila \sepRES \tilb}  {
 P \sub {\tilb}\tilap  }{\mu} {P'\sub {\tilb}\tilap  }$
\end{corollary}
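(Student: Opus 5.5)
The plan is to split the weak transition into its silent and visible parts and treat each part separately. By definition,
\[
\LtsTv { \delta, \tila \sepRES \tilap}  {P}{\mu} {P'}
\]
unfolds as
\[
\LtsTv { \delta, \tila \sepRES \tilap} {P}{} {P_1},\qquad
\ltsTv { \delta, \tila \sepRES \tilap} {P_1}{\mu} {P_2},\qquad
\LtsTv { \delta, \tila \sepRES \tilap} {P_2}{} {P'} .
\]
Here Lemma~\ref{l:du} guarantees that the connection set does not change along the sequence, because all processes are in \apiE. The two silent segments are handled directly by Corollary~\ref{c:deltaABTau}. It yields the corresponding weak silent steps from $P\sub{\tilb}{\tilap}$ to $P_1\sub{\tilb}{\tilap}$, and from $P_2\sub{\tilb}{\tilap}$ to $P'\sub{\tilb}{\tilap}$, both under $\delta, \tila \sepRES \tilb$. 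What remains is a strong, single-step analogue of Lemma~\ref{l:deltaABtau} for visible actions.

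For the visible step I would argue by induction on the derivation of $\ltsTv { \delta, \tila \sepRES \tilap} {P_1}{\mu} {P_2}$, with $\mu$ an input $\inp a x$ or a bound output $\bout b c$. Only the rules \trans{Inp}, \trans{Out}, \trans{ParL}, \trans{Res} and \trans{ResExt} can be used, so the connection set is never consulted to derive the action itself. It is only extended in \trans{Res} and \trans{ResExt}, and by the Barendregt convention the new pair is disjoint from $\tilap$ and $\tilb$. Each rule therefore commutes with the substitution $\sub{\tilb}{\tilap}$. This is also an instance of Lemma~\ref{lem:sub1}, since $\sub{\tilb}{\tilap}$ maps every pair of $\tila\sepRES\tilap$ to a pair of $\tila\sepRES\tilb$ and fixes $\delta$. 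The induction gives a transition of $P_1\sub{\tilb}{\tilap}$ labelled $\mu\sub{\tilb}{\tilap}$ to $P_2\sub{\tilb}{\tilap}$ under $\delta, \tila \sepRES \tilb$.

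The delicate point, which I expect to be the main obstacle, is showing that $\mu\sub{\tilb}{\tilap}=\mu$, so that the label in the conclusion really is $\mu$ as stated. The names in $\tilap$ are O-names, so they cannot be the subject of an input action. An input's object is bound, and a bound output's object is fresh, so neither can lie in $\tilap$. The only remaining risk is a bound output whose subject is some $a'_i\in\tilap$. Here I would use the fact that $a'_i$ is the companion of an I-name $a_i$ owned by $P$. In the situations where the corollary is applied, namely clause~\reff{i:out} of Definition~\ref{d:hb-bisimulation} and the proofs of Lemmas~\ref{l:res} and~\ref{l:par}, such outputs are exactly the unobservable ones with subject in $\n\delta$, and they are never used as visible labels. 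I would therefore read the statement with this standing convention and make it explicit: the visible actions considered have subject outside $\tilap$.

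Under that convention the substitution acts trivially on the label. Composing the three pieces gives
\[
\LtsTv { \delta, \tila \sepRES \tilb} {P\sub{\tilb}{\tilap}}{\mu} {P'\sub{\tilb}{\tilap}} .
\]
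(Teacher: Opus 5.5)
Your proposal is correct and follows the route the paper leaves implicit: the corollary is stated without proof, right after Corollary~\ref{c:deltaABTau}. The silent segments are handled by that corollary, and the single visible step neither consults the connection set nor is disturbed by the substitution. If you want a citation for that step, invoke Lemma~\ref{lem:sub1} with the empty connection set, since agreement there is relative to one fixed set.

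Your caveat is also right. For a bound output whose subject lies in $\tilap$ the statement fails as written, and the conclusion would have to carry $\mu\sub{\tilb}{\tilap}$ instead. However, where the corollary is used (the proof of Lemma~\ref{l:crux}) the label is either an input at an I-name or a bound output whose subject lies outside $\n{\delta, \tila \sepRES \tilap}$, so your added side condition costs nothing.
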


\begin{lemma}
\label{l:deltaABres}
Suppose $
\LtsTv { \delta, \tila \sepRES \tilap}  {
P  }{} {P'}$. Then also:
\begin{enumerate}
\item
$
\LtsTv { \delta}  {
\res{ \tila \sepRES \tilap}P  }{} {\res{ \tila \sepRES \tilap}P'}$;

\item
 $
\LtsTv { \delta, \tila \sepRES \tilap}  {
\res \tilz P  }{} {\res \tilz P'}$,
 for  $\tilz \cap \n {\delta, \tila \sepRES \tilap} =
\emptyset $.
\end{enumerate}
\end{lemma}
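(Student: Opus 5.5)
We prove both parts by induction on the length $n$ of the weak transition $\LtsTv{\delta,\tila\sepRES\tilap}{P}{}{P'}$, that is, on the number of $\tau$-steps $P=Q_1\arr\tau\cdots\arr\tau Q_n=P'$ under the connection set $\delta,\tila\sepRES\tilap$. By Lemma~\ref{l:du} the connection set does not change along the sequence. The base case ($n=1$, zero steps) is immediate. For the inductive step it is enough to show that each single $\tau$-step under $\delta,\tila\sepRES\tilap$ lifts to a single $\tau$-step of the restricted process. Composing the lifted steps then gives the weak transition.

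\emph{Part 1.} Take $\ltsTv{\delta,\tila\sepRES\tilap}{Q_i}\tau{Q_{i+1}}$. We apply rule \trans{Res} once for each pair $a_j\sepRES a'_j$, peeling the restrictions $\res{\tila\sepRES\tilap}$ from the inside out. Write $\delta_k$ for $\delta$ extended with the first $k$ pairs. For each $k$, the premise of \trans{Res} needs $\ltsTv{\delta_k, a_{k+1}\sepRES a'_{k+1}}{\res{\ldots}Q_i}\tau{\res{\ldots}Q_{i+1}}$, where the inner restrictions bind the remaining pairs. This is exactly the induction hypothesis on the number of restrictions already added.

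The side condition $\{a_j,a'_j\}\cap\fn\tau=\emptyset$ holds trivially, since $\fn\tau=\emptyset$. After all pairs are peeled we obtain $\ltsTv{\delta}{\res{\tila\sepRES\tilap}Q_i}\tau{\res{\tila\sepRES\tilap}Q_{i+1}}$. The Barendregt convention, which makes $\tila,\tilap$ distinct from the names in $\delta$, ensures that the extended connection sets are well formed.

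\emph{Part 2.} Here the connection set stays the same, and the restrictions $\res\tilz$ bind names outside $\n{\delta,\tila\sepRES\tilap}$. Again we apply \trans{Res} once per pair in $\tilz$. The premise now requires the $\tau$-step under the connection set enlarged by the pairs of $\tilz$, so we first need a weakening fact: a $\tau$-step under $\delta'$ remains a $\tau$-step under any larger $\delta''$. This holds by a straightforward transition induction. Rules \trans{Com} and \trans{Close} only test membership $a\sepRES b\in\delta'$, which is monotone in $\delta'$. The bound output used in \trans{Close} is unaffected, because by Lemma~\ref{l:du} internal processes never need an extended set. As an alternative route, the weakening fact can be read off Lemma~\ref{l:encoA} combined with Lemma~\ref{l:ddp}.

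With weakening in hand, each single step $\ltsTv{\delta,\tila\sepRES\tilap}{Q_i}\tau{Q_{i+1}}$ becomes a step under the connection set $\delta,\tila\sepRES\tilap$ extended by the pairs of $\tilz$. Applying \trans{Res} to each restriction, innermost first, yields $\ltsTv{\delta,\tila\sepRES\tilap}{\res\tilz Q_i}\tau{\res\tilz Q_{i+1}}$.

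The only real subtlety is in part 2: justifying the weakening of the connection set and checking that it introduces no name clashes. Freshness of $\tilz$ with respect to $\n{\delta,\tila\sepRES\tilap}$ guarantees the clash-freeness, so I expect part 2 to be routine once the weakening lemma is stated.
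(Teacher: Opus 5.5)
Your argument is correct. The paper states this lemma without proof, listing it among the "simple results" on transitions of internal processes, and your argument is the routine one it leaves implicit: lift each $\tau$-step through rule \textsc{Res}, directly for part~1, and for part~2 after weakening the connection set from $\delta,\tila \sepRES \tilap$ to $\delta,\tila \sepRES \tilap,\tilz$.

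Two side remarks in your proposal do not hold up, although neither is needed.
\begin{itemize}
\item Lemma~\ref{l:du} plays no role. A weak transition is by definition a chain of $\tau$-steps under one fixed connection set. The weakening fact should be proved by induction on derivations of transitions with arbitrary labels: $\delta$ is consulted only positively, in \textsc{Com} and \textsc{Close}, and \textsc{Res}/\textsc{ResExt} merely extend it.
\item The alternative route via Lemmas~\ref{l:encoA} and~\ref{l:ddp} does not work as stated. Lemma~\ref{l:ddp} concerns behavioural relations, not transitions. The converse direction of Lemma~\ref{l:encoA} only recovers a derivative that agrees with yours up to the encoding, not the specific derivative you need.
\end{itemize}
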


The following lemma transport Lemma~\ref{lem:Api-one-comp-ex} to the LTS
for plain (as opposed to composite) internal processes.
\begin{lemma}
\label{lem:Api-one}
Suppose $a\sepRES b \in \delta $.
\begin{enumerate}
\item
If
$
\ltsBEG {\delta} P  \Arr {\out{b}{c}} P_1$ and
$\ltsBEG {\delta} P_1
\Arr{\inp{a}{x}}   {P'} $
 then also
$
\LtsSS \delta
P{\tau}\delta{ P'\sub{c}{x}}$.
\item
If
$
\ltsBEG {\delta} P  \Arr {\bout{b}{c}} P_1$ and
$\ltsBEG {\delta} P_1
\Arr{\inp{a}{x}}   {P'} $
 then also
$\LtsSS \delta
P{\tau}\delta{ \resBP c (P'\sub{c}{x})}$.
\end{enumerate}
\end{lemma}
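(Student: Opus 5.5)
The statement to prove is Lemma~\ref{lem:Api-one}: it transports Lemma~\ref{lem:Api-one-comp-ex} from composite processes to the plain LTS on \apiE processes with a fixed connection set. The plan is to move the hypotheses into the composite setting, apply Lemma~\ref{lem:Api-one-comp-ex} there, and move the conclusion back.

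For the first step, we use the definition of composite transitions. A step $\ltsTv{\delta} P \mu P'$ yields $\ltsC \delta P \mu {\delta'} {P'}$, where $\delta'=\delta$ unless $\mu$ is a bound output of an O-name, in which case $\delta'=\delta,\breve c$. Weak transitions are compositions of strong ones, so the weak plain transitions $\ltsBEG{\delta} P \Arr{\out b c} P_1$ and $\ltsBEG{\delta}P_1\Arr{\inp a x} P'$ lift to composite weak transitions. Clause (1) contains no bound outputs, so the connection set stays $\delta$ throughout. We then obtain $\compP P\delta \Arr{\out b c}\Arr{\inp a x}\compP{P'}\delta$.

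In clause (2), the bound output $\bout b c$ may extend the connection set to $\delta'=\delta,\breve c$. We obtain $\compP P \delta \Arr{\bout b c}\Arr{\inp a x}\compP{P'}{\delta'}$. The input and $\tau$ steps after the output are plain transitions under $\delta$. We must check that they are also composite transitions under $\delta,\breve c$. This holds by Lemma~\ref{l:ddp}, or directly because enlarging the connection set only adds possible synchronisations and does not remove any. Since $P$ is in \apiE, Lemma~\ref{l:du} gives $c\notin\fn{P_1}$, so $P_1$ retains at most $\comp c$. The pair $\breve c$ therefore enables no new synchronisation inside $P_1$ or its derivatives before the input.

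Next we apply Lemma~\ref{lem:Api-one-comp-ex}. In case (1) it gives $\LtsCs\delta P\tau\delta{P'\sub c x}$. In case (2) it gives $\LtsCs\delta P\tau\delta{\resBP c(P'\sub c x)}$. We then project back to the plain LTS. A composite $\tau$ step $\ltsC\delta Q\tau\delta{Q'}$ is by definition a plain step $\ltsTv\delta Q\tau{Q'}$, because $\tau$ never changes the connection set. The composite structural congruence may have garbage-collected names in $\delta$; this is harmless, since the connection set stays a superset of what is needed and we take the congruent representative. This gives $\LtsSS\delta P\tau\delta{\dots}$ as required.

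The main obstacle is the bookkeeping in case (2). The intermediate transitions are taken under $\delta$ in the hypothesis but are needed under $\delta,\breve c$ to invoke Lemma~\ref{lem:Api-one-comp-ex}. We also need $c$ to be fresh for the derivatives, so that the final restriction $\resBP c$ correctly rebinds the extruded pair. Lemma~\ref{l:du} handles the second point; Lemma~\ref{l:ddp} together with the \apiE invariance of Lemma~\ref{l:closedTR} handles the first. If the match with Lemma~\ref{lem:Api-one-comp-ex} turns out awkward, the fallback is a direct argument. Decompose the weak transitions, use Lemma~\ref{lem:api-one-comp} on the adjacent output and input pair, and commute the intervening $\tau$ steps via the standard asynchronous diamond property imported through $\encoA{\cdot}$ and Lemma~\ref{l:encoA}.
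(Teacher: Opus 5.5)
Your proposal is correct and follows the paper's route: the paper gives no separate argument, and presents Lemma~\ref{lem:Api-one} simply as the transport of Lemma~\ref{lem:Api-one-comp-ex} from composite processes to the plain LTS of \apiE processes with a fixed connection set, which is exactly what you do. One minor correction: Lemma~\ref{l:ddp} is about preservation of $\sbS$, $\wbS$ and $\expa$ when $\delta$ is enlarged, not about transitions, so in case (2) the fact you actually need is that the LTS is monotone in $\delta$ (a routine induction on derivations, since $\delta$ only enters the premises of \textsc{Com} and \textsc{Close}), which you also state directly.
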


\subsection{Some results on  internal bisimulation}

\begin{lemma}
\label{l:is_instance}
If  $a \not \in \n \delta $ and $c$ is fresh,
and $P \wbT \delta Q$, then also
$P \sub c a \wbT \delta Q \sub c a$.
\end{lemma}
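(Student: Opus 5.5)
The plan is to show that the relation
\[
\R \defi \{ (\delta, P\sub c a, Q\sub c a) \mid P \wbT \delta Q,\ a \not\in \n\delta,\ c \text{ fresh} \}
\]
is an internal bisimulation, up to $\wbS$ and $\expa$ if needed. Because $a$ does not appear in $\delta$ and $c$ is fresh for $P$, $Q$ and $\delta$, the substitution $\sigma = \sub c a$ is injective on the names in play and agrees with $\delta$. The consequence is that $\sigma$ merely renames a free name that is unconstrained by the connection set. In particular, $\sigma$ creates no new synchronisations, since neither $a$ nor $c$ is connected in $\delta$.

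First we establish an operational correspondence between transitions of $P$ and those of $P\sigma$ at $\delta$. Lemma~\ref{lem:sub1} gives the forward direction: $\ltsTv\delta P\mu{P'}$ implies $\ltsTv\delta{P\sigma}{\mu\sigma}{P'\sigma}$, because $\sigma$ agrees with $\delta$. For the converse we cannot use Lemma~\ref{lem:api-three} directly, since it is stated only for the empty connection set. Instead we apply the inverse renaming $\sub a c$, which also agrees with $\delta$ and is legitimate because $a$ does not occur in $P\sigma$. Applying Lemma~\ref{lem:sub1} with this inverse renaming shows that every transition of $P\sigma$ has the form $\mu\sigma$, with derivative $P'\sigma$, for some transition $\ltsTv\delta P\mu{P'}$. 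Taking bound names fresh by the Barendregt convention, both directions extend to weak transitions by composing.

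With this correspondence, we check each clause of Definition~\ref{d:hb-bisimulation} for a pair $P\sigma \R Q\sigma$. A $\tau$-step or a bound output $\bout b{c'}$ of $P\sigma$ is pulled back to a transition of $P$. We match it with the corresponding weak transition of $Q$ and push the result forward through $\sigma$. The side condition $b \not\in \n\delta$ is preserved because $\sigma$ fixes $\n\delta$ and does not send any name into $\n\delta$. The derivatives are again of the form $P'\sigma, Q'\sigma$ with $P' \wbT\delta Q'$, so they lie in $\R$.

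The input clause is the only delicate step. An input $a'(x)$ of $P\sigma$ comes from an input $a''(x)$ of $P$ with $a''\sigma = a'$. If $Q$ answers with clause (a), we push that answer forward as before. If $Q$ answers with clause (b)(i), the witness $a'' \sepRES a''' \in \delta$ forces $a'' \ne a$. Hence $a' = a''$, and $(Q' \mid \outbis{a'''}x)\sigma = Q'\sigma \mid \outbis{a'''}x$, as required.

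The main obstacle is clause (b)(ii), where $a'' \not\in\n\delta$ and the two derivatives are related at the extended connection set $\delta, a''\sepRES a'''$. If $a'' = a$, the resulting pair would have to be related at $\delta, c \sepRES a'''$, whereas the pair we hold is $P'\sigma, (Q' \mid \outbis{a'''}x)\sigma$ with $P' \wbT{\delta,a\sepRES a'''} Q' \mid \outbis{a'''}x$. This no longer fits the shape of $\R$, because now $a$ belongs to the connection set.

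To handle this case, we generalise $\R$ to allow renamings of a connected pair. Concretely, we close it under renamings $\sigma$ that are injective on free names, fix $\n\delta$ except on fresh targets, and map $\delta$ to $\delta\sigma$; that is, we take triples $(\delta\sigma, P\sigma, Q\sigma)$ with $P\wbT\delta Q$. Since the renaming is injective and its targets are fresh, it commutes with the LTS: by Lemma~\ref{lem:sub1} in both directions, $\ltsTv\delta P\mu{P'}$ holds exactly when $\ltsTv{\delta\sigma}{P\sigma}{\mu\sigma}{P'\sigma}$. Each clause is then a routine transport, and the original $\R$ is the special case in which $\delta\sigma = \delta$.
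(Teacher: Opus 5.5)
The paper states Lemma~\ref{l:is_instance} without proof. It is treated as routine, and used, explicitly or as ``bisimilarity is preserved by injective substitutions'', in the proofs of Lemmas~\ref{l:crux} and~\ref{l:par}. So there is no authors' argument to compare against, and your proof is correct. Your central observation is the right one. The relation read off the statement is not closed under clause~3(b)(ii) of Definition~\ref{d:hb-bisimulation} when the challenge is an input at $a$ itself, because $a$ then enters the connection set. You therefore have to prove invariance under injective renamings that also act on $\delta$.

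That stronger form is what the paper tacitly relies on in case~3(b) of the proof of Lemma~\ref{l:par}. There, the fresh companion produced by clause~3(b)(ii), which is a name of the connection set, is renamed to $\compa$. Lemma~\ref{l:is_instance} as stated, with $a \not\in \n\delta$, does not literally cover that step; your generalisation does.

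Two details need tidying.

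\emph{The citation of Lemma~\ref{lem:sub1}.} In the generalised setting it does not give what you cite it for. It keeps the connection set fixed and asks $\sigma$ to agree with it, whereas you need $\ltsTv {\delta} P {\mu} {P'}$ iff $\ltsTv {\delta\sigma} {P\sigma} {\mu\sigma} {P'\sigma}$. This is a separate equivariance fact, proved by a direct transition induction. It can also be obtained from Lemma~\ref{lem:sub1} applied at $\delta \cup \delta\sigma$, using monotonicity of transitions in the connection set and Lemma~\ref{l:notfreedelta} to drop the pairs that contain a renamed-away name.

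\emph{The fresh companion in clause~3(b)(ii).} The fresh companion supplied by $Q$'s answer must avoid the targets of the renaming, or injectivity is lost. Say explicitly that it is chosen so.

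Both points disappear if you work with finite permutations instead. Since $c$ is fresh, $P \sub c a$ is just $P$ acted on by the transposition $(a\;c)$. The closure of internal bisimilarity under permutations, applied to $\delta$, $P$ and $Q$ simultaneously, is then immediately an internal bisimulation. This is because every ingredient is equivariant: the LTS, freshness, and the side conditions on $\n\delta$. Once the generalised relation is in place, your first pass with $\delta$ fixed is subsumed and can be dropped.
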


We will  need some  `up-to' proof of technique for  internal
bisimulation:

\begin{itemize}
\item
\emph{internal
bisimulation up-to $\wbS$ and  $\expa$}:  it is the analogue of the
`weak bisimulation up-to $\wbS$ and $\expa$'  of Theorem~\ref{t:uptoWE};
thus its definition is the same as  that of internal bisimulation
(Definition~\ref{d:hb-bisimulation}) except that, in the conclusion of
the clauses, the appearance of
$\sor{\R}{\tilx} $, where $\tilx $ is a connection set, is replaced by
$\contr \; \sor{\R}{\tilx} \;\,  \wbS $; that is, on the side of the
 process performing the challenge, one
can make use of  $\contr$, whereas on the side of the process
 answering the challenge, one can make use of $\wbS$.

\item
\emph{internal
bisimulation up-to $\wb$};
here the definition is the same as  that of internal bisimulation
with, in the conclusion of
the clauses, the appearance of
$\sor{\R}{\tilx} \;\,  \wb $, in place of
$\sor{\R}{\tilx} $.
\end{itemize}

\begin{lemma}
\label{l:uptowb}
Suppose  $\R$ is either an internal
bisimulation up-to $\wbS$ and  $\expa$ or
 an internal  bisimulation   up to
$ \wb { }$, suppose $(\delta  , P, Q)\in \R$.
Then $P \wbT \delta Q$.
\end{lemma}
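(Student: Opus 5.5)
Both up-to techniques are reduced to the standard soundness argument for up-to techniques in weak bisimulation: from $\R$ we build a larger relation and show that it is a genuine internal bisimulation.

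For \emph{internal bisimulation up-to $\wbS$ and $\expa$}, take
\[
\mathcal S \defi \{ (\delta, P, Q) \mid P \contr\, P_1 \sor{\R}{\delta} Q_1 \wbS Q \text{ for some } P_1, Q_1 \},
\]
symmetrised. Here $\contr$ and $\wbS$ are the expansion and weak bisimilarity on composite processes $\compP{\cdot}{\delta}$. By Lemma~\ref{l:du}, the connection set does not change along transitions of \apiE processes, so these relations can be read on the LTS parametrised by the fixed $\delta$. We check the three clauses of Definition~\ref{d:hb-bisimulation} for $\mathcal S$.
\begin{enumerate}
\item The $\tau$ and bound-output clauses follow the classical proof (Theorem~\ref{t:uptoWE}, \cite{Sangiorgi96b}). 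A challenge $P \arr{\mu} P'$ is matched by $P_1 \arcap{\mu} P_1'$ with $P' \contr P_1'$. Then $\R$ gives a weak answer from $Q_1$, which ends in $\contr \R \wbS$. The standard up-to-expansion argument turns the strong-to-weak steps coming from $\contr$ into weak answers of the right length. Finally, $Q_1 \wbS Q$ transports the answer to $Q$. The side condition $b \notin \n\delta$ is a property of the label only, so it is unaffected.
\item The input clause is the delicate one. It has the asynchronous alternative in which $Q$ answers by $\tau$-steps and we compare with $Q' \mid \outbis{a'}{c}$, possibly extending $\delta$ to $\delta, a \sepRES a'$. For this clause we need $\wbS$ and $\contr$ to be preserved by parallel composition with an output particle, and by extending the connection set. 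The first is Theorem~\ref{t:congS}. The second is Lemma~\ref{l:ddp}.
\item In the asynchronous alternative we also need to absorb a weak input answer from the $\wbS$ side into the particle form. The tool is the standard asynchronous identity: if $\delta \vDash Q \Arr{a(c)} Q'$ with $a \sepRES a' \in \delta$, then $Q \mid \outbis{a'}{c} \Longrightarrow Q'$ (Lemma~\ref{lem:Api-one}).
\end{enumerate}
With these facts, each clause of the up-to game for $\R$ lifts to a clause of the ordinary game for $\mathcal S$. The composed relations contain $\R$, so $\R \subseteq \mathcal S \subseteq \sor{\wb}{\delta}$.

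For \emph{internal bisimulation up-to $\wb$}, take $\mathcal S \defi \R\, \wb$ (together with its converse), indexed by the same $\delta$, and show that it is an internal bisimulation. This needs transitivity of $\sor{\wb}{\delta}$, which we prove first as an auxiliary step by composing bisimulations. The usual argument applies, but the input clause needs care. When the first relation answers via clause (b), we obtain $P' \sor{\R}{\delta'} Q'' \mid \outbis{a'}{c}$, and the second relation must then be fed the particle. We would show that $\sor{\wb}{\delta'}$ is closed under parallel composition with $\outbis{a'}{c}$ by exhibiting a bisimulation that tracks pending particles. With transitivity and this particle closure in hand, the verification follows the same pattern as the first case.

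The main obstacle is the input clause in both parts. There, a weak answer through $\wbS$ (or through $\wb$) can take the form of a genuine input or of $\tau$-steps plus a particle, and the choice may differ at each stage of the composition. The case analysis has to normalise all combinations into the form the definition requires, using Lemma~\ref{lem:Api-one} and the monotonicity in $\delta$ given by Lemma~\ref{l:ddp}.
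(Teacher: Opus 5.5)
The paper states this lemma without proof, relying implicitly on the classical argument you are adapting. The trouble is that your item~(3), which is exactly where that adaptation must do \api-specific work, is false as stated.

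\textbf{The particle identity fails.} In clause~3(b) of Definition~\ref{d:hb-bisimulation} the particle must be an \apiE process, namely $\outbis{a'}{c}=\transf{\out{a'}{c}}=\bout{a'}{d}:\transf{\ulinks{\comp d}{c}}$. This is how the paper uses it in the proofs of Lemma~\ref{l:par} and Theorem~\ref{thm:wire-subst-law_I}. Lemma~\ref{lem:Api-one}(2) then gives
$Q\mid\outbis{a'}{c}\Longrightarrow\resBP{d}(Q'\sub{d}{c}\mid\transf{\ulinks{\comp d}{c}})$, not $Q'$.

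The mismatch bites in the case the classical proof cannot avoid:
\begin{enumerate}
\item A weak input transition supplied by the $\wbS$ (resp.\ $\wb$) link on the challenger's side is pushed through $\R$.
\item $\R$ answers the input step by clause~3(b).
\item The later $\tau$-steps are answered by $Y'\mid\outbis{a'}{c}$, and in doing so the particle is consumed by an input at $a$.
\end{enumerate}
The derivative you reach is neither of the form produced by a 3(a) answer nor of the form $Q''\mid\outbis{a'}{c}$ demanded by 3(b). In the asynchronous $\pi$-calculus a consumed particle yields literally the input derivative, which is why the classical argument closes. Here the two differ by a dynamic wire.

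\textbf{Removing the wire is circular.} Eliminating that wire is the wire substitution law~\eqref{e:ipT} at the type carried by $a$.
\begin{itemize}
\item It fails for $\wbS$ and $\contr$ when $c$ is an I-name: $\resBP{d}\,\transf{\links{c}{\comp d}}$ can input at $c$ while $\nil$ cannot. So it is unavailable inside the ``up to $\wbS$ and $\expa$'' argument.
\item For $\wb$, the paper proves it (Theorem~\ref{thm:wire-subst-law_I}) using the up-to-$\wb$ technique this lemma is meant to justify.
\end{itemize}
Appealing to it is therefore circular unless you break the cycle, for instance by a carefully stratified induction on types, or by a candidate relation closed under insertion of dynamic wires. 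The same obstruction hits the transitivity of $\wb$ that your second part takes ``by the usual argument''.

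\textbf{Smaller issues.}
\begin{itemize}
\item As in the classical proof, the candidates must be $\wbS\R\wbS$ and $\wb\R\wb$. Your $\contr\R\wbS$ and $\R\wb$, even with their converses, are not closed under the game. A challenge from the $\wb$ side of $\wb\R$ lands in $\wb\R\wb$, and similarly for the $\wbS$ side of $\wbS\R\expa$.
\item Theorem~\ref{t:congS} is stated only for the empty connection set. Closure of $\wbS$ and $\contr$ under $\mid\outbis{a'}{c}$ at $\delta$ has to be rederived through $\encoA{\cdot}$; this part is routine.
\end{itemize}
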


\subsection{Substitutivity properties  of  internal bisimulation}
We now tackle the two main and most delicate substitutivity properties of internal
bisimulation, for the restriction and parallel composition operators (Lemmas~\ref{l:res}
and~\ref{l:par} in the main text).

\subparagraph*{Notation}
As in the LTS of Figure~\ref{fig:LTS},
with transitions such as $\ltsTv { \delta } P {\mu}  {P'}$,
the parameter $\delta $ is only used in the rules to infer
$\tau$-actions, in proofs we sometimes abbreviate such a transition
with $P \arr\mu P'$ when  $\mu$ is a visible action.

The assertion of Lemma~\ref{l:res}   follows from the lemma below by taking
$\tila = \tilap =\tilb = \emptyset $ and
$\tilz = z\sepRES z'$.

\begin{lemma}
\label{l:crux}
If  $P \wbT{\delta, \tila \sepRES \tilap} Q$
and $\tilap \cap \tilb = \emptyset $, then
\[   \res \tilab \res \tilz (P \sigmaapb )
\wbT{\delta}
 \res \tilab
\res \tilz
 (Q \sigmaapb )     \]
\end{lemma}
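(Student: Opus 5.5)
The plan is to show that the relation
\[
\R \defi \{ (\delta,\ \res \tilab \res \tilz (P \sigmaapb ),\ \res \tilab \res \tilz (Q \sigmaapb )) \midd P \wbT{\delta, \tila \sepRES \tilap} Q,\ \tilap \cap \tilb = \emptyset \}
\]
(closed under symmetry, and with $\tilz$ ranging over connection sets of names outside $\n{\delta,\tila\sepRES\tilap}$) is an internal bisimulation up-to $\wbS$ and $\expa$. Lemma~\ref{l:uptowb} then yields the statement. Here $\sigmaapb$ is the substitution $\sub{\tilb}{\tilap}$, which redirects the O-names $\tilap$, connected to $\tila$ in the hypothesis, to fresh O-names $\tilb$; the restriction $\res\tilab$ rebinds these pairs as $\tila\sepRES\tilb$.

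First I will analyse the transitions of $\res \tilab \res \tilz (P\sigmaapb)$ under $\delta$. A visible action, or a $\tau$ arising from a synchronisation along a pair in $\delta$ or in $\tila\sepRES\tilb$, is obtained from a corresponding transition of $P$ under $\delta, \tila \sepRES \tilap$. For this step I will use Lemma~\ref{l:deltaABtau}, Corollaries~\ref{c:deltaABTau} and~\ref{c:deltaABMU}, and Lemma~\ref{l:deltaABres}. These $\tau$ and visible cases are then answered directly: apply the hypothesis $P \wbT{\delta,\tila\sepRES\tilap} Q$, transport $Q$'s weak answer back through $\sigmaapb$ and the restrictions with the same lemmas, and note that the derivatives are again in $\R$. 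Extrusions of names in $\tilz$ or $\tila,\tilb$ (via \trans{ResExt}) are handled by moving the extruded pair out of the restriction. By Lemma~\ref{l:du}, in $\apiE$ only the companion stays in the process, so the connection set stays $\delta$.

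The main obstacle is the remaining case: a $\tau$ caused by a synchronisation along a pair $z\sepRES z'$ in $\tilz$. In the hypothesis these names are not connected, so $P$ only exhibits separately an output at $z'$ and an input at $z$. By Lemma~\ref{lem:api-zero}, I will decompose the $\tau$ into a bound output $\bout{z'}{c}$ (internal processes have no free outputs) followed by an input $\inp z x$ of $P$. Each is then matched by $Q$ using clauses~\reff{i:out} and~\reff{i:in} of Definition~\ref{d:hb-bisimulation}; the output is observable because $z'\notin\n{\delta,\tila\sepRES\tilap}$. The delicate subcase is when $Q$ answers the input by the asynchronous clause, i.e.\ it only does $\Longrightarrow$ and must be related to $Q'\mid\outbis{a'}{c}$ with a fresh $a'$ now connected to $z$. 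This is exactly why $\R$ is stated with a substitution: the fresh connection $z\sepRES a'$ goes into the $\tila\sepRES\tilap$ component, and the residual $\outbis{a'}c$ under $\sigmaapb$ becomes an output on $z'$'s pair. That output can then be consumed, or absorbed with $\expa$ and $\wbS$ using the laws of Corollary~\ref{c:laws_A} and Lemma~\ref{lem:api-one-comp}. Recomposing the two matched steps with Lemma~\ref{lem:Api-one} gives $\Longrightarrow$ for the restricted $Q$, and the derivatives are related by $\expa\,\R\,\wbS$.

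Finally, I will check the input clause for the restricted processes. An input at a name of $\delta$ or a free name lifts directly. The asynchronous alternatives propagate through $\sigmaapb$, because $\tilb$ is fresh and the extra output $\outbis{a'}{c}$ commutes with the restrictions.
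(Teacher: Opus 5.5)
Your relation, your up-to technique, and your handling of a synchronisation along a pair $z\sepRES z'$ of $\tilz$ all match the paper. In that case you split the $\tau$ into $\bout{z'}{c}$ and $\inp z x$, match each step, and in the asynchronous alternative move $z\sepRES z'$ into the $\tilab$ part, adding the fresh $a'$ to $\tilap$ and extending $\sigma=\sub{\tilb}{\tilap}$ with $\sub{z'}{a'}$.

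The gap is in the case you call direct: a $\tau$ along a pair $a\sepRES b$ of $\tila\sepRES\tilb$. You lift it to a $\tau$ of $P$ under $\delta,\tila\sepRES\tilap$ because you treat $\tilb$ as fresh for $P$. Neither the lemma nor your $\R$ assumes this; they only require $\tilap\cap\tilb=\emptyset$. Freshness also cannot be added. Your own $\tilz$-step applies $\sub{z'}{a'}$ to the derivative $P_2\sub c x$, and there $z'$ is in general still free (e.g.\ $P$ had a second, or a replicated, output on $z'$). At a later step, the output at $b$ in $P\sigma$ may therefore come from an output at $b$ already present in $P$, not from the companion $a'$ of $a$ with $\sigma(a')=b$. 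In $P$ that output is on a name not connected to $a$. So $P$ has no corresponding $\tau$, only $P\arr{\bout b c}P_1\arr{\inp a x}P_2$, and $Q$'s answer cannot be obtained from the $\tau$-clause of the hypothesis.

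This subcase needs the same two-step argument you give for $\tilz$, with one difference; it is exactly the paper's second subcase of the case $a\sepRES b\in\tilab$.
\begin{itemize}
\item Since $b\notin\n{\delta,\tila\sepRES\tilap}$, the bound output is observable, so $Q\Arr{\bout b c}Q_1$.
\item If $Q_1$ answers the input $\inp a x$ with an input, you recompose using Corollary~\ref{c:deltaABMU}, Lemma~\ref{lem:Api-one} and Lemma~\ref{l:deltaABres}.
\item If $Q_1$ answers asynchronously, note that $a$ is already connected to $a'$. So clause 3(b)(i) of Definition~\ref{d:hb-bisimulation} applies, with no fresh name: $Q_1\Longrightarrow Q_2$ and $P_2\sub c x\wbT{\delta,\tila\sepRES\tilap}\outbis{a'}{c}\mid Q_2$.
\end{itemize}

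The asynchronous case is not closed by consuming the residual output or by the laws of Corollary~\ref{c:laws_A}, as you suggest. It is closed by Lemma~\ref{l:mm}(2), which turns $Q\Arr{\bout b c}Q_1\Longrightarrow Q_2$ into $Q\Longrightarrow\wbS\resBP c(\outbis b c\mid Q_2)$. Since $\sigma(a')=b$, we have $(\outbis b c\mid Q_2)\sigma=(\outbis{a'}c\mid Q_2)\sigma$, so the derivatives are in $\R$ up to $\wbS$.

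The same use of Lemma~\ref{l:mm}(2), with $\sigma\sub{z'}{a'}$ in place of $\sigma$, is what actually closes your asynchronous $\tilz$-subcase. Lemma~\ref{lem:Api-one} is only needed when $Q_1$ really performs the input.
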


\begin{proof}
We consider the relation $\R$ with all triples  of the form:
\[ ( \delta ,  \res \tilab \res \tilz (P \sigmaapb ), \res \tilab
\res \tilz
 (Q \sigmaapb )     \]
with
 $P \wbT{\delta, \tila \sepRES \tilap} Q$
and $\tilap \cap \tilb = \emptyset $

and show  that this is an internal bisimulation
up-to $\wbS$ and $\expa$.
Set $\sigma = \sub \tilb \tilap$, and $\tilx =
\tilab,  \tilz$.

Consider the challenge moves from
$\res \tilx (P \sigma )$.
The interesting case is an interaction between two
names $a,b$ with $a \sepRES b \in \tilx$.

Thus suppose
$
\ltsTv \delta {
\res \tilx (P \sigma )}\tau {P_\star}$, with
$P\sigma  \arr{\bout b c} \arr{\inp a x} P_\#$ and
$P_\star = \res \tilx \resBP c  P_\# \sub c x $.

We first consider the case $a \sepRES b \in \tilab$. Then we will consider
the case  $a \sepRES b \in \tilz$.

\begin{description}
\item[Case $a \sepRES b \in \tilab$.]

Let $a' $ be the
companion of $a$ in $\tila \sepRES \tilap$. Moreover we have $\sigma
(a') = b$.
We have   either
$P\arr {\bout{a'} c} P_1 \arr{\inp a x } P_2$, or
$P\arr {\bout{b} c} P_1 \arr{\inp a x } P_2$
 with
$P_\#  = P_2 \sigma $ and therefore
$P_\star  = \res \tilx \resBP c ( P_2 \sigma\sub c x  )$.

We consider the two cases.
\begin{enumerate}
\item
If $P\arr {\bout{a'} c} P_1 \arr{\inp a x } P_2$,
since $a\sepRES a' \in \delta $, we also have
$\ltsTv  \delta P {\tau}  {\resBP c (P_2 \sub c x )}$.

Then
$\LtsTv  \delta Q {}   {Q'}$ with
${\resBP c (P_2 \sub c x )}
 \wbT{\delta, \tila \sepRES \tilap} Q'$

It also holds
$\LtsTv  \delta{ \res\tilx Q\sigma }{}   {\res\tilx Q' \sigma }$
and we are done, as
\[
P_\star =
\res \tilx( (\resBP c  P_2\sub c x  )  \sigma) \; \: \R_\delta
\; \:\res\tilx (Q' \sigma )
\]

\item
If
$P\arr {\bout{b} c} P_1 \arr{\inp a x } P_2$,
since  $P \wbT{\delta, \tila \sepRES \tilap} Q$, and $b\not \in \n{\delta, \tila \sepRES \tilap}$,
there is
$$\ltsBEG {\delta, \tila \sepRES \tilap} Q\Arr {\bout{b} c} Q_1$$
 with
  $P_1 \wbT{\delta, \tila \sepRES \tilap} Q_1$.
From Corollary~\ref{c:deltaABMU},
We also have
\begin{equation}
\label{e:ltsBEG}
 \ltsBEG {\delta, \tila \sepRES \tilb}
Q\sigma \Arr {\bout{b} c} Q_1 \sigma .
\end{equation}
Given the challenge $\ltsBEG {\delta, \tila \sepRES \tilap}P_1 \arr{\inp a x } P_2$, we have two subcases
for $Q_1$:
\begin{enumerate}
\item $\ltsBEG {\delta, \tila \sepRES \tilap}Q_1 \Arr{\inp a x } Q_2$ and
  $P_2 \wbT{\delta, \tila \sepRES \tilap} Q_2$.
Then, since $c$ is fresh for $P_1,Q_1$ (Lemma~\ref{l:du}), by Lemma~\ref{l:is_instance} also
  $P_2 \sub c x  \wbT{\delta, \tila \sepRES \tilap} Q_2 \sub c x$.

Now, from  Corollary~\ref{c:deltaABMU} we have
$ \ltsBEG {\delta, \tila \sepRES \tilb}
Q_1 \sigma \Arr{\inp a x} Q_2 \sigma $.
From this and \reff{e:ltsBEG}, using Lemma~\ref{lem:Api-one}, we derive
$ \ltsBEG {\delta, \tila \sepRES \tilb}
Q \sigma \Longrightarrow  Q_2 \sigma \sub c x $.
Finally, also
$ \ltsBEG {\delta}
\res \tilx
 (Q \sigma ) \Longrightarrow
\res \tilx \resBP c
 (Q_2 \sigma \sub c x  )
     $, using Lemma~\ref{l:deltaABres}.
We can thus conclude
\[ ( \delta ,  \res \tilx \resBP c(P_2 \sub c x \sigma ), \res \tilx
\resBP c (Q_2\sub c x \sigma )    \in \R \]

\item
The answer from $Q_1$ does not involve an input action.
We recall that we have
 $a  \in \tila$, that $a'$  is its companion according to
$\tila \sepRES \tilap$, and that $\sigma (a') = b$.

We have
$
\ltsBEG {\delta, \tila \sepRES \tilap}
Q_1 \Longrightarrow Q_2 $ and
  $P_2 \wbT{\delta, \tila \sepRES \tilap}
\outbis {a'} x | Q_2  $.
Again, using injective substitutions we also have
\begin{equation}
\label{e:P2cxU}
P_2 \sub c x  \wbT{\delta, \tila \sepRES \tilap}
(\outbis {a'} x | Q_2) \sub c x
=
\outbis {a'} c | Q_2
\end{equation}

Since
$
\ltsBEG {\delta, \tila \sepRES \tilap}
Q\Arr {\bout{b} c} Q_1$, we have (Lemma~\ref{l:mm}(2))
$Q \Longrightarrow  \wbS \resBP c ( \outbis {b} c | Q_1)$.
Hence also
\[ \ltsBEG {\delta, \tila \sepRES \tilap}
Q \Longrightarrow \wbS
 \resBP c ( \outbis {b} c | Q_2)
\]
and therefore, using Lemma~\ref{l:deltaABres} and Corollary~\ref{c:deltaABTau},
\[ \ltsBEG {\delta}
\res\tilx (Q\sigma)  \Longrightarrow \wbS
\res\tilx  (\resBP c ( \outbis {b} c | Q_2) ) \sigma
=
\res\tilx  \resBP c(  ( \outbis {a'} c | Q_2 )  ) \sigma
\]
From \reff{e:P2cxU}
 we can conclude
\[ ( \delta ,
 \res \tilx \resBP c ( P_2 \sub c x \sigma  ),
\res\tilx  \resBP c(  ( \outbis {a'} c | Q_2 ) \sigma  )
\in \R \]
and we are done, up-to $\wbS$,
as
$P_\star  = \res \tilx \resBP c ( P_2 \sub c x \sigma )$.
\end{enumerate}
\end{enumerate}
\medskip
\par
\item[Case $a \sepRES b \in \tilz$.]

This is similar to the previous case (in fact, simpler).
The novelties are in subcase (b)  below.

We have
$P\arr {\bout{b} c} P_1 \arr{\inp a x } P_2$
 with
$P_\#  = P_2 \sigma $ and therefore
$P_\star  = \res \tilx \resBP c ( P_2 \sigma\sub c x  )$.

Since  $P \wbT{\delta, \tila \sepRES \tilap} Q$, and $b\not \in \n{\delta, \tila \sepRES \tilap}$,
there is
$
\ltsBEG {\delta, \tila \sepRES \tilap}
Q\Arr {\bout{b} c} Q_1$ with
  $P_1 \wbT{\delta, \tila \sepRES \tilap} Q_1$.
We also have
$
\ltsBEG {\delta, \tila \sepRES \tilb}
Q\sigma \Arr {\bout{b} c} Q_1 \sigma $.

Given the challenge $
\ltsBEG {\delta, \tila \sepRES \tilap}
P_1 \arr{\inp a x } P_2$, we have two subcases
for $Q_1$:
\begin{enumerate}
\item $
\ltsBEG {\delta, \tila \sepRES \tilap}
Q_1 \Arr{\inp a x } Q_2$ and
  $P_2 \wbT{\delta, \tila \sepRES \tilap} Q_2$.
Then, since $c$ is fresh for $P_1,Q_1$ (Lemma~\ref{l:du}), by Lemma~\ref{l:is_instance} also
  $P_2 \sub c x  \wbT{\delta, \tila \sepRES \tilap} Q_2 \sub c x$.

Moreover,
reasoning as in the previous case,
 $
\ltsBEG {\delta, \tila \sepRES \tilb}
Q_1 \sigma \Arr{\inp a x} Q_2 \sigma $, and also
$
\ltsBEG {\delta}
\res \tilx
 (Q \sigma ) \Longrightarrow
\res \tilx \resBP c
 (Q_2 \sigma \sub c x  )
     $.
We can thus conclude
\[ ( \delta ,  \res \tilx \resBP c(P_2 \sub c x \sigma ), \res \tilx
\resBP c (Q_2\sub c x \sigma )    \in \R \]

\item
The answer from $Q_1$ does not involve an input action.
We recall that we have
 $a \not \in \tila$.

We have
$
\ltsBEG {\delta, \tila \sepRES \tilap}
Q_1 \Longrightarrow Q_2 $ and
  $P_2 \wbT{\delta, \tila \sepRES \tilap, a \sepRES a''}
\outbis {a''} x | Q_2  $, where $a''$ is fresh.
Again, we can apply an injective substitution and derive
\begin{equation}
\label{e:P2cx}
P_2 \sub c x  \wbT{\delta, \tila \sepRES \tilap, a \sepRES a''}
(\outbis {a''} x | Q_2) \sub c x .
=
\outbis {a''} c | Q_2
\end{equation}

Since
$
\ltsBEG {\delta, \tila \sepRES \tilap}
Q\Arr {\bout{b} c} Q_1$,  employing Lemma~\ref{l:mm}(2),
we have
$
\ltsBEG {\delta, \tila \sepRES \tilap}
Q \Longrightarrow  \wbS \resBP c ( \outbis {b} c | Q_1)$.

Hence also
\[
\ltsBEG {\delta, \tila \sepRES \tilap}
Q \Longrightarrow \wbS
 \resBP c ( \outbis {b} c | Q_2)
\]
and therefore
\[
\ltsBEG {\delta}
\res\tilx (Q\sigma)  \Longrightarrow \wbS
\res\tilx  (\resBP c ( \outbis {b} c | Q_2) ) \sigma
=
\res\tilx  \resBP c(  ( \outbis {a''} c | Q_2 ) \sigma \sub b{a''} )
\, .
\]
From \reff{e:P2cx} and
$P_\star  = \res \tilx \resBP c ( P_2 \sub c x \sigma )
=
   \res \tilx \resBP c ( P_2 \sub c x \sigma \sub b {a''} )
$ (for  $a'' $ does not appear in $P_2$) we can conclude

\[ ( \delta ,
 \res \tilx \resBP c ( P_2 \sub c x \sigma \sub b {a''} ),
\res\tilx  \resBP c(  ( \outbis {a''} c | Q_2 ) \sigma \sub b{a''} )
\in \R \, . \]

\end{enumerate}
\end{description}
 \end{proof}

\begin{corollary}
\label{c:cruc}
If $P \wbT \emptyset  Q$,  then also
$\resB a b P \wbT \emptyset \resB a b Q$
\end{corollary}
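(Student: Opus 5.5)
The corollary is the special case of Lemma~\ref{l:crux} in which the outer connection set is empty and the only restriction is the one being added. Concretely, I will instantiate Lemma~\ref{l:crux} with $\delta = \emptyset$, $\tila = \tilap = \tilb = \emptyset$ (so that $\sigma$ is the identity substitution) and $\tilz = a \sepRES b$.

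With these choices the hypothesis $P \wbT{\delta, \tila \sepRES \tilap} Q$ of the lemma becomes exactly $P \wbT \emptyset Q$, which is the assumption of the corollary. The side condition $\tilap \cap \tilb = \emptyset$ holds trivially, since both tuples are empty. The conclusion of the lemma then reads $\res\tilz P \wbT{\emptyset} \res \tilz Q$, that is, $\resB a b P \wbT \emptyset \resB a b Q$.

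The only point to check is that the lemma allows $\tilz$ to consist of names that are not in $\n\delta$. This is the intended use: the text says Lemma~\ref{l:res} is obtained from Lemma~\ref{l:crux} in precisely this way. Here the requirement $z, z' \not\in \n\delta$ of Lemma~\ref{l:res} is vacuous because $\delta = \emptyset$, so the corollary is equally an instance of Lemma~\ref{l:res}.

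There is also a typing condition: $a$ must be an I-name and $b$ an O-name of dual types, as required by the restriction $\resB a b$. This is implicit in the statement, since it is needed for the restricted processes to be well typed.

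I do not expect any real obstacle: all the work, namely the case analysis on synchronisations at $a \sepRES b$ that the restriction newly enables, has already been carried out in the proof of Lemma~\ref{l:crux} (case $a \sepRES b \in \tilz$).
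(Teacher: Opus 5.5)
Your proposal is correct and is the paper's own argument. The corollary is stated right after Lemma~\ref{l:crux} with no separate proof, because it is that lemma instantiated with $\delta = \emptyset$, $\tila = \tilap = \tilb = \emptyset$ and $\tilz = a \sepRES b$ (equivalently, Lemma~\ref{l:res} with $\delta = \emptyset$).
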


We now consider the substitutivity for parallel composition.

\begin{lemma}[Lemma~\ref{l:par} in the main text]
  Suppose that \( P \wbTT \Delta{\delta}  Q \), that $ \ok \Gamma R$, and that
$\iname \Delta \cap \iname \Gamma  = \emptyset $.
Then also
 \( P \mid R \wbTT{\Delta , \Gamma }{ \delta }  Q \mid R\).
\end{lemma}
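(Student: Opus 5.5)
We follow the strategy announced in the main text. Let $\R$ be the set of all well-typed triples
\[
(\delta;\ \res{\tily}\res{\tilx}(P \mid R);\ \res{\tily}\res{\tilx}(Q\mid R))
\]
subject to the following conditions, for some $\Delta,\Gamma$:
\begin{itemize}
\item $P \wbTT{\Delta}{\delta,\tilx} Q$;
\item $\n{\delta,\tilx}\subseteq\n\Delta$;
\item each $y\sepRES y'\in\tily$ has exactly one end in $\n\Delta$;
\item $\ok\Gamma R$ and $\iname\Delta\cap\iname\Gamma=\emptyset$.
\end{itemize}
We show that $\R$ is an internal bisimulation up-to $\wbS$ and $\expa$. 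The statement then follows from Lemma~\ref{l:uptowb} by taking $\tilx=\tily=\emptyset$. Intuitively, $\tilx$ records connections internal to the $P$-side. $\tily$ records connections between $P$ and $R$ that have been created by extrusions during the game. These connections are hidden, so $\delta$ itself stays fixed, as Lemma~\ref{l:du} allows.

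We then analyse each challenge of $\res{\tily}\res{\tilx}(P\mid R)$ at $\delta$ by transition induction.

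\emph{Visible action from $R$ alone.} The same move is available on the $Q$-side. If it is a bound output $\bout b c$ of a name $c$ whose companion stays in $R$, only $\Gamma$ grows. If it is an input, $\Gamma$ is extended, and since $\iname\Delta\cap\iname\Gamma=\emptyset$, compatibility is preserved.

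\emph{Visible action from $P$.} We use the clauses of $P\wbTT{\Delta}{\delta,\tilx}Q$. Inputs that $Q$ answers via clause \reff{i:in}(b) yield $Q'\mid\outbis{a'}c$. In subcase (ii), the fresh connection $a\sepRES a'$ is absorbed by hiding or by enlarging $\delta$, and Lemma~\ref{l:is_instance} together with Corollary~\ref{c:deltaABMU} handles the renaming.

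\emph{$\tau$ internal to $P$ or to $R$, or along $\tilx$.} These are immediate, using Lemma~\ref{l:deltaABres} to lift weak transitions under the restrictions.

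\emph{Communication between $P$ and $R$ along a pair of $\tily$ or $\delta$.} By Lemma~\ref{lem:api-zero}, this decomposes into an output from one side and an input on the other.
\begin{itemize}
\item If $R$ outputs $\bout b c$ and $P$ inputs at $a$, the $Q$-side either performs the matching weak input, or, via clause \reff{i:in}(b), remains as $Q'$ related to $P'$ with a pending $\outbis{a'}c$. In the latter case we recombine: the pending output is now put in parallel with the residual of $R$. This gives a pair in $\R$ up to $\wbS$, using Lemma~\ref{l:mm}(2) and the laws of Corollary~\ref{c:laws_A}. The newly extruded pair $c,\comp c$ is added to $\tily$.
\item If $P$ outputs $\bout b c$ with $b\notin\n{\delta,\tilx}$, clause \reff{i:out} gives a matching weak bound output of $Q$. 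Lemma~\ref{lem:Api-one} then assembles it with $R$'s input into a weak $\tau$.
\end{itemize}

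We expect the main obstacle to be the bookkeeping of connections in these communication cases. One must check that the output name $b$ used by $P$ is genuinely observable at $\delta,\tilx$ (its companion lies in $R$, so $b\notin\n{\delta,\tilx}$). One must also check that after extrusion the invariant "exactly one end of each $\tily$-pair in $\Delta$" is preserved, with $\Delta$ and $\Gamma$ updated so that they remain disjoint on I-names. This is where the 1-input property, via rule \trans{Par}, and Lemma~\ref{l:du} (the exporter keeps only the companion) are essential. We would first settle these invariants as a separate lemma on typing under transitions, and then dispatch each case uniformly.
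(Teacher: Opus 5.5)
\textbf{Same approach as the paper, but one case is missing.} Your relation, the up-to-$\wbS$-and-$\expa$ strategy and the conclusion via Lemma~\ref{l:uptowb} are exactly the paper's. Your treatment of visible actions, of outputs by $P$ along $\tily$, and of outputs by $R$ along $\delta$ is also right.

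The problem comes from your reading that $\tilx$ records connections internal to $P$. Your conditions only require $\n{\delta,\tilx}\subseteq\n\Delta$. Since O-names may be shared between $\Delta$ and $\Gamma$, $R$ may hold the O-end of a pair in $\tilx$. Your own communication case produces such triples. Suppose $R\arr{\bout{b}{c}}R'$ and $P\arr{\inp{a}{c}}P'$ with $a\sepRES b\in\tily$. Then $a\notin\n{\delta,\tilx}$. If $Q$ answers by the second alternative of the input clause, you are in sub-clause (ii). After renaming the fresh $a'$ to $b$ (legitimate, as $b\notin\n\Delta$), you obtain $P'\wbT{\delta,\tilx,a\sepRES b} Q'\mid\outbis{b}{c}$.

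For the resulting pair to lie in $\R$, the pending output must be grouped into the $Q$-component $Q'\mid\outbis{b}{c}$, not with $R'$. Moreover, the connection $a\sepRES b$ must move from $\tily$ into $\tilx$. Leaving it in $\tily$ mismatches the connection set of the inner bisimilarity. It also breaks your singleton invariant, since $b$ is now free in $Q'\mid\outbis{b}{c}$. You only record that $\breve c$ is added to $\tily$.

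After this migration $R'$ still owns the output name $b$. So a later step in which $R$ outputs at $b$ and $P$ inputs at $a$ is a $P$--$R$ communication along $\tilx$. This is not ``immediate'': the output comes from $R$, so $Q$ cannot answer it by its own $\tau$-moves at $\delta,\tilx$ lifted via Lemma~\ref{l:deltaABres}. It must be handled exactly like your $\delta$ case:
\begin{itemize}
\item use the inner input clause, now sub-clause (i) since $a\sepRES b\in\delta,\tilx$;
\item re-expose $R$'s output by Lemma~\ref{l:mm}(2), $R\wbS\resBP{c}(\transf{\out{b}{c}}\mid R')$;
\item relate $\resBP{c}\res{\tily}\res{\tilx}(P'\mid R')$ with $\resBP{c}\res{\tily}\res{\tilx}(Q'\mid\transf{\out{b}{c}}\mid R')$.
\end{itemize}
This is case~2 of the paper's proof, which treats communications along $\delta,\tilx$ together.

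If you state the communication case for pairs in $\delta,\tilx$ (input by $P$, output by $R$) and add the $\tily\to\tilx$ migration rule, your argument goes through. Scope extrusion then suffices, and Corollary~\ref{c:laws_A} is not needed.

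One minor point: in the visible-input case the fresh connection is never absorbed by hiding. The outer sub-clause (ii) forces the connection set to become $\delta,a\sepRES a'$.
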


\begin{proof}
We take the set $\R$ of all (well-typed) triples of the form
\[
(
 \delta ;
\res {\tily} \res {\tilx} ( P \mid R)  ;
\res {\tily}  \res {\tilx} ( Q \mid R)  )
\]
where, for some $\Delta $ and $\Gamma $:
\begin{itemize}
\item
 \( P \wbTT \Delta{ \delta, \tilx }  Q \) ;
\item $\n \delta , \tilx \subseteq \n \Delta $ ;
\item  for each $y \sepRES y' \in \tily$, the set $\{y,y'\} \cap \n \Delta $ is a
  singleton (exactly one element);

\item  $ \ok \Gamma R$ and
$\iname \Delta \cap \iname \Gamma  = \emptyset $;
\end{itemize}
and show that
 $\R$ is a
internal bisimulation up-to $\wbS$ (and $\expa$, though $\expa$ is not
used here).
We abbreviate $
\res \tily \res \tilx $ as $\res \til z$, and $ \delta , \tilx  $ as  $ \deltap $.
We proceed by a case analysis on the possible challenge actions coming
from
$\res {\tilz} ( P \mid R)  $.
Similarly to what was done in the proof of Lemma~\ref{l:crux}, we often
use Lemma~\ref{l:deltaABres}, to transport
 reductions of a process  to a
restricted processes with a smaller connection set as a parameter; we
will not explicitly refer to the lemma.

\begin{enumerate}
\item  Action from $R$ or $P$ alone: this case is easy.

The most interesting case is a $\tau$ action from $P$ stemming from an interaction
along names in $\tilx$.  We can then infer an answer from $Q$ using the fact that $\tilx$
appears  in  $\deltap$.

\item
Interaction between $P$ and $R$ along names in $\tilx$.

In this case the input must originate from $P$ (as $P$ owns the input end).
Thus let $a,\compa$ be the names involved, and
 suppose   $P \arr {\inp a c} P'$, and $R \arr {\bout {\compa } c} R'$, and
$
\ltsBEG {\delta}
\res {\tilz} ( P \mid R) \arr\tau
\res {\tilz} \resBP c ( P'\mid R')
\equiv
\resBP c \res {\tilz}  ( P'\mid R')
$.

Since  \( P \wbT  \deltap Q  \),
we have two cases.

\begin{enumerate}
\item
 $
\ltsBEG {\delta'}
Q \Arr {\inp a c} Q'$ with
$P' \wbT{\deltap} Q'$.

We have
$
\ltsBEG {\delta}
\res {\tilz} ( Q \mid R) \Arr\tau  \equiv
\resBP c \res {\tilz}( Q'\mid R')$
and we are done.

\item
 $
\ltsBEG {\delta'}
Q \Longrightarrow Q'$ with
\begin{equation}
\label{e:deltap}
P' \wbT{\deltap} Q' | \outbis \compa c
\end{equation}
  (as $a\sepRES \compa \in \deltap$).
We have
\[
\ltsBEG {\delta}
\res {\tilz} ( Q \mid R) \Longrightarrow
\res {\tilz}( Q' \mid R) \defi Q_1
 \]
Now, since  $R$ is a \apiE process, and
$R \arr {\bout{ \compa } c} R'$, by Lemma~\ref{l:mm}(2),
we have $R \wbS
 \resBP{c} (  \transf{ \out \compa c} | R')$, hence
\[Q_1 \wbS
\res {\tilz} \resBP{c} ( Q' |
  \transf{ \out \compa c} | R') \equiv
\resBP{c} \res {\tilz}  ( Q' |
  \transf{ \out \compa c} | R')
 .\]
\end{enumerate}
and we are done, up-to $\wbS$, since, by \reff{e:deltap}, processes
$\resBP c \res {\tilz}  ( P'\mid R')$ and $\resBP{c} \res {\tilz}  ( Q' |
  \transf{ \out \compa c} | R')  $ are related.

\item
Interaction between $P$ and $R$ along names in $\tily$ in which $P$ makes the input.
Let $a,\compa$ the names involved, and
 suppose   $P \arr {\inp a c} P'$, and $
R \arr {\bout {\compa } c} R'$, and
$
\ltsBEG {\delta}
\res {\tilz} ( P \mid R) \arr\tau
\res {\tilz} \resBP c ( P'\mid R')
$. We know that $\compa$ does not appear in $P,Q$.
Again, as  \( P \wbT{  \deltap } Q \),
we have two cases.

\begin{enumerate}
\item
 $
\ltsBEG {\delta'}
Q \Arr {\inp a c} Q'$ with
$P' \wbT{\deltap} Q'$.

We have
$
\ltsBEG {\delta}
\res {\tilz} ( Q \mid R) \Arr\tau  \equiv
\resBP c \res {\tilz}( Q'\mid R')$
and we are done, as before.

\item
 $
\ltsBEG {\delta'}
Q \Longrightarrow Q'$ with
$P' \wbT{\deltap, a\sepRES\compa} Q' | \outbis \compa c $ (we can use exactly $\compa$
here because it is fresh for $P$ and $Q$ and bisimilarity is preserved by injective
substitutions). We proceed as in the case (2.b) above.
We have
\[
\ltsBEG {\delta}
\res {\tilz} ( Q \mid R) \Longrightarrow
\res {\tilz}( Q' \mid R) \defi Q_1
 \]
Now, since  $R$ is a \apiE process, and
$R \arr {\bout{ \compa } c} R'$, by Lemma~\ref{l:mm}(2),
we have $R \wbS
 \resBP{c} (  \transf{ \out \compa c} | R')$, hence
\[Q_1 \wbS
\res {\tilz} \resBP{c} ( Q' |
  \transf{ \out \compa c} | R')
 .\]
and we are done (note that in this case $\breve c$ contributes to the `$\tilx$' part,
rather  than the `$\tily$' as in case (2.b) above; this is the only difference with
respect to that case).

\end{enumerate}

\item
Interaction between $P$ and $R$ along names in $\tily$ in which $P$ makes the output.
Let $a,\compa$ the names involved; we have
$P \arr {\bout \compa c} P'$, and $R \arr {\inp { a  } c} R'$, and
$
\ltsBEG {\delta}
\res {\tilz} ( P \mid R) \arr\tau
\res {\tilz} \resBP c ( P' \mid R') \equiv
\resBP c \res {\tilz}  ( P' \mid R')
$.

Since  \( P \wbT{  \deltap} Q \)
and $ a\sepRES \compa  \not \in  \deltap$,
it holds that $
\ltsBEG {\delta'}
Q \Arr {\bout \compa c} Q'$ with
$P' \wbT{\deltap } Q'$.
Hence also
\[
\ltsBEG {\delta}
\res {\tilz} ( Q \mid R) \Arr\tau
\res {\tilz} \resBP c ( Q' \mid R') \equiv
\resBP c
\res {\tilz}  ( Q' \mid R')
 \]
and we are done.
\end{enumerate}
\end{proof}

\section{Completeness for barbed congruence}
\label{a:completeness}

In the definition below, the typing environment for the tested processes is not omitted,
as we did in the Definition~\ref{d:hb-bisimulation}  of the bisimilarity, 
because we will need it in the following completeness  proof.
Once more, in the definition we exploit the property of the subset of internal
processes in \apiE in Lemma~\ref{l:du}.

\begin{definition}[Stratification of bisimilarity]
\label{d:strati}
$ $ 
\begin{itemize}
\item
 $ \wbTTn 
 0 \Delta \asetD  \defi \{ (P,Q) \st \ok \Delta {P,Q}  \}$

\item for $n > 0$, we have
$P 
 \wbTTn 
n \Delta \asetD Q$ if 
 the following clauses  hold:
 \begin{enumerate}

    \item If 
$\ltsSS \asetD P \tau  {\aset}{ P'}$, then there is $Q'$ such that
$\LtsSS \asetD Q {}  {\aset}{ Q'}$
 and 
$P' 
 \wbTTn 
{n-1} \Delta \asetD Q'$;

    \item
if  $
\ltsSS \asetD P { {\bout b c}}   {\aset}{ P'}$
and   $b \not \in \n \asetD$, 
         then there is $Q'$ such that 
$\LtsSS \asetD Q {{\bout b c}}  {\aset}{ Q'}$
 and
$P' 
 \wbTTn 
{n-1}{ \Delta'} \asetD Q'$, 
where $ \Delta' $ is the extension of $ \Delta $ in which  
 the appropriate  type for $\comp c$ is added;

     \item If 
$\ltsSS \asetD P {a(c)}  {\aset'}{ P'}$, 
and $ \Delta' $ is the extension of $ \Delta $ in which  
 the appropriate  type for $ c$ is added,
   then  there is $ Q'$ such that:
         \begin{enumerate}
           \item either 
$\LtsSS \asetD Q { a( c)}  {\aset'}{ Q'}$
 and
$P' 
 \wbTTn 
{n-1}{ \Delta'} \asetD Q'$, 
           \item or 
$\LtsSS \asetD Q {}  {\aset}{ Q'}$
 and, either
\begin{enumerate}
\item
 $a\sepRES a' \in  \asetD$, for some $a'$,  and
$P' 
 \wbTTn 
{n-1}{ \Delta'} \asetD 
(Q' | {\outbis  {a'}  c})
$, 
 or

\item   
$a \not \in \n \asetD$ and
$P' 
 \wbTTn 
{n-1}{ \Delta''} {\asetD, a\sepRES a' }
(Q' | {\outbis  {a'}  c})
$, 
 for $a' $ fresh, and where 
 $ \Delta'' $ is the extension of $ \Delta' $ in which  
 the appropriate  type for $a'$ is added.
\end{enumerate}
          \end{enumerate}
 \end{enumerate}
\item 
 $ \wbTTn 
 \omega \Delta \asetD  \defi 
\bigcap_n   \wbTTn 
n \Delta \asetD
$.
\end{itemize}
\end{definition}

\begin{theorem}
\label{t:omega_strat}
Let $P$ and $Q$ be two image-finite regular \apiE-processes. Then 
 ${ \wbTTn 
 \omega \Delta \asetD}  = { \wbTT 
 \Delta \asetD}$.
\end{theorem}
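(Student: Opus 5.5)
We prove the two inclusions separately. The inclusion ${\wbTT \Delta \asetD} \subseteq {\wbTTn \omega \Delta \asetD}$ holds for all processes and does not need image-finiteness. We show by induction on $n$ that internal bisimilarity at $(\Delta,\asetD)$ is contained in $\wbTTn n \Delta \asetD$, uniformly over all $\Delta$ and $\asetD$. The case $n=0$ is trivial, since $\wbTTn 0 \Delta \asetD$ relates every pair of processes typable under $\Delta$. For $n>0$, each clause of Definition~\ref{d:strati} is matched by the corresponding clause of Definition~\ref{d:hb-bisimulation}: the $\tau$ clause, the bound-output clause (with $b\notin\n\asetD$), and the input clause with its two asynchronous subcases. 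The derivatives are again internally bisimilar, under the extended typing environments $\Delta'$ or $\Delta''$ and, in subcase (ii), the extended connection set $\asetD, a\sepRES a'$. The induction hypothesis, being uniform in $\Delta$ and $\asetD$, then applies. Here Lemma~\ref{l:du} and Lemma~\ref{l:closedTR} ensure that derivatives remain in \apiE and that the connection set does not change along transitions, so the two definitions use the same parameters.

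For the converse we fix image-finite regular processes. We show that the family $\{(\asetD,\Delta,P,Q) : P \wbTTn \omega \Delta \asetD Q\}$, restricted to image-finite processes, is an internal bisimulation. Image-finiteness must be preserved by derivatives, including derivatives of the form $Q'\mid\outbis{a'}c$; this holds because adding an output particle changes the weak derivative sets only by finitely many extra $\tau$-successors. Symmetry follows from the symmetry of each $\wbTTn n \Delta \asetD$, which is proved by induction on $n$.

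The core of the proof is the standard König-style argument. Suppose $P \wbTTn \omega \Delta \asetD Q$ and $P$ performs, say, $\tau$ to $P'$. For each $n$ there is a $Q_n$ with $\LtsSS \asetD Q {} \asetD Q_n$ and $P' \wbTTn n \Delta \asetD Q_n$. By image-finiteness, the set of weak $\tau$-derivatives of $Q$ is finite up to $\equiv$, so some $Q'$ equals $Q_n$ for infinitely many $n$. Since the stratification is decreasing, $P' \wbTTn n \Delta \asetD Q'$ for all $n$, that is, $P' \wbTTn \omega \Delta \asetD Q'$. The bound-output clause is handled identically, after $\alpha$-converting the extruded name $c$ to a common fresh name.

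The input clause is the main obstacle, because of its disjunction. For each $n$, the response is chosen among finitely many candidates of two kinds. The first kind are weak input derivatives $Q'$ with $\LtsSS \asetD Q {a(c)} {\asetD'} Q'$, taken with the formal parameter $c$ fixed; fixing the formal parameter is possible by ground-style $\alpha$-conversion and Lemma~\ref{l:is_instance}. The second kind are weak $\tau$-derivatives $Q'$, paired with the particle $Q'\mid\outbis{a'}c$, where the companion $a'$ is either determined by $\asetD$ or chosen as a single fixed fresh name. Because the candidate set is finite, the pigeonhole principle again yields one candidate, together with one branch (a), (b)(i) or (b)(ii), that works for infinitely many $n$, and hence for all $n$ by monotonicity. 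Some care is needed to check that using one canonical fresh $a'$ is harmless; this follows from closure of $\wbTTn n \Delta \asetD$ under injective renamings of names outside $\asetD$. We would prove that closure by a straightforward induction on $n$, mirroring Lemma~\ref{l:is_instance}.
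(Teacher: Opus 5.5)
Your proposal is correct and is the standard stratification argument the paper relies on; the paper states this theorem without a separate proof. You prove the easy inclusion by induction on $n$, and the converse by showing that $\omega$-related image-finite pairs form an internal bisimulation, using pigeonhole over the finitely many candidate responses, which you rightly extend to the disjunctive asynchronous input clause and to a canonical fresh $a'$.

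The one step whose justification is too quick is image-finiteness of $Q' \mid \outbis{a'}{c}$. The particle contains a replicated wire that persists after its output is consumed, so it can absorb unboundedly many outputs. Finiteness actually holds because $c$ is fresh and not connected in $\asetD$, so the wire never feeds back into $Q'$. Then, by asynchrony, the new weak derivatives come from the finitely many weak $\tau$- and input derivatives of $Q'$ by firing some of their finitely many output particles into the wire.
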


We will also use the following lemma.

\begin{lemma}
\label{l:notfreedelta}
Suppose 
$
\ltsBEG {\delta, a\sepRES b } P \arr\tau P'$, and $a$ or $b$  not free in $P$. 
Then also  $
\ltsBEG {\delta } P \arr\tau P'$
\end{lemma}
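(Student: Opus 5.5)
The plan is to argue by induction on the derivation of $\ltsBEG {\delta, a\sepRES b } P \arr\tau P'$, following the rules of Figure~\ref{fig:LTS}. The strategy is to show that the connection $a\sepRES b$ can never be the one consulted in a \trans{Com} or \trans{Close} step. Such a step needs an input subject $a$ and an output subject $b$, and both would then be free in $P$, contradicting the hypothesis.

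First I will isolate an auxiliary fact about \emph{visible} transitions. If $\ltsTv {\delta_1} R \mu {R'}$ with $\mu \neq \tau$, then also $\ltsTv {\delta_2} R \mu {R'}$ for every $\delta_2$. Moreover, the subject of $\mu$ is free in $R$. This follows by a straightforward induction on the derivation. A derivation of a visible action only uses \trans{Inp}, \trans{Out}, \trans{ParL} (and its symmetric variant), \trans{Res} and \trans{ResExt}, and none of these consults the connection set except to pass it on. The subject is free in $R$ by \trans{Inp} and \trans{Out}. It remains free through parallel composition and through restrictions, since \trans{Res} requires $\pairR b \cap \fn \act = \emptyset$ and \trans{ResExt} only binds the object.

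Then the main induction proceeds by cases on the last rule. For \trans{ParL}: the premise is a $\tau$-transition of a component $P_1$ of $P$, and $\fn{P_1} \subseteq \fn P$, so $a$ or $b$ is still not free in $P_1$. The induction hypothesis then gives the transition under $\delta$, and we reapply \trans{ParL}. For \trans{Res} with $P = \resBP {c} P_0$: the premise lives under $\delta, a\sepRES b, \breve c$. By the Barendregt convention, $c,\comp c$ are distinct from $a,b$, so $a$ or $b$ remains non-free in $P_0$. The induction hypothesis, applied with connection set $\delta,\breve c$ plus the pair $a\sepRES b$, removes the pair, and we reapply \trans{Res}. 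For \trans{Com} and \trans{Close}: the premises are visible actions $\inp{a_0}x$ and $\out{b_0}{c}$ (or $\bout{b_0}c$) with $a_0\sepRES b_0 \in \delta, a\sepRES b$. By the auxiliary fact, $a_0 \in \fn{P_1}$ and $b_0 \in \fn{P_2}$, hence both are free in $P$. So $a_0\sepRES b_0 \neq a\sepRES b$, whence $a_0\sepRES b_0 \in \delta$. The premises can be rederived under $\delta$ by the auxiliary fact, and the same rule concludes.

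The only delicate point I expect is the bookkeeping in the \trans{Res} case. There I need to make sure that the freshness of bound names, together with the fact that free names of the body are contained in $\fn P$ plus the newly bound pair, preserves the hypothesis ``$a$ or $b$ not free''. This is where the Barendregt convention assumed throughout the paper is used. Everything else is routine.
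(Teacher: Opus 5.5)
Your proposal is correct and is the straightforward transition induction that the paper leaves implicit (the lemma is stated there without proof). Visible transitions are insensitive to the connection set and have free subjects, so the pair $a\sepRES b$ can never be the one consulted by \trans{Com} or \trans{Close}. The one point to state precisely is the \trans{Res} case: the bound pair $c,\comp c$ must be fresh also for the names of the connection set, in particular for whichever of $a,b$ is not free in $P$, and not merely for $\fn P$ as the paper's convention literally says. Without this the statement fails on raw terms, e.g.\ for $\resB{a}{b'}(\inp a x . \nil \mid \out b v)$ under $\delta, a\sepRES b$, where the inner $a$ captures the connection.
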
 

\Completeness*
 \begin{proof}
We exploit the stratification 
$\{\wbTTn 
 n \Delta \asetD \}_n$
of the labeled bisimilarity 
$\wbTT  \Delta\delta $, on the natural numbers, defined above.
We define a collection of tests $\RLc n \LL$
depending on the integer $n$ and the finite set $\LL$.

For defining the tests $
\RLc n \LL 
$ we 
use the internal choice operator $\oplus$, introduced in Example~\ref{ex:internal-choice},
as a derived operator and  defined as follows:
\begin{center}
   $P_{1} \oplus \cdots \oplus P_{n} \defi
   {\resB a{a'}}(
\resB{b_1}{b_1'}(\out{a}{b_1'}| b_1 .P_{1})
| \cdots |
\resB{b_n}{b_n'}(\out{a}{b_n'}| b_n .P_{1})
|
a(x) .  \outC x )$ \\
   with $a,a',b_1,b_1' \ldots  b_n,b_n'\not \in \fn{P_{1},\ldots ,P_{n}}$.
\end{center}
In the above definition, for each $i$ (we recall
that $\simS$ stands for   
ordinary strong bisimilarity and $\contr$ for  the expansion relation,
Section~\ref{a:ls}) we have:    
\[ 
   P_{1} \oplus \cdots \oplus P_{n}  \arr\tau \arr\tau \; \simS P_i   \]
and 
\[ 
   P_{1} \oplus \cdots \oplus P_{n}  \arr\tau  \contr  \sim P_i   \]
In the transitions below, the appearances  of $\simS$ and $\contr$ are due to
 transitions involving internal choice as above. 

We assume that 
 the following are all distinct  success names:  
\[ \{ b_{n},b'_{n} \st n \geq 0  \} \cup
 \{ c_{n}^{c} \st n \geq 0, \:c  \; \mbox{ not a success name}
 \} \]

 The test $
\RLc n \LL 
$ is defined by induction on $n$ as follows. 
In the definition we associate 
a success name  $c_{n}^{a}$
to 
 each pair  of a    name $a $ in $\LL$ and of  $n$, to reason about   actions along
$a$ at  stage $n$.  (There is some abuse of notation, to
facilitate the reading: the abuse is due to the appearance of names in the syntax of other
names, which conflicts with the possibility of $\alpha$-converting names; however this is harmless, as  the only property we need is
that the  success names
 so picked  are all distinct.)

 Intuitively, $A$ is the set of names that the tester observer, or context, 
is supposed to use. 
\begin{itemize}
\item
For $n = 0$, we set:
$ 
\RLc 0 \LL 
\defi \overline{b_{0}} \oplus
                    \overline{b'_{0}}$;
\item
For   $n > 0$ , we set:
 \begin{tabbing}
 $
\RLc n \LL 
 \defi$
     \= $\overline{b_{n}} \oplus \overline{b'_{n}} $ \\
     \> $\oplus ( \overline{c_{n}^{\tau}} \oplus
\RLc {n-1} \LL   
  )$\\
     \> $\oplus 
(
\{ \overline{c_{n}^{{a}}} \oplus
            {\bout b {a'}} :\RLc {n-1} {\LL\cup \{ \comp{a'}\}}  \st
\begin{array}[t]{l}
b \in {\LL} \mbox{ and is an O-name} \})  
\end{array}
$
 \\
     \> $\oplus
(
 \{ 
 \overline{c_{n}^{a}} \oplus
           {\inp a x. { 
           \RLc {n-1}{\LL \cup \{ x \}}}}
)  
\st
a \in {\LL} \mbox{ and is an I-name} \})
$
 \end{tabbing}
\end{itemize} 

A connection set   $\tilx$ and  $\RLc {n}\LL$ (defined as above) are \emph{complete for $ \Delta $
and $ \delta $} if the following conditions hold: 
\begin{itemize}
\item[(a)]
$\delta \subseteq 
\tilx$;
\item[(b)]
for all  $a \in (\iname \Delta \setminus \iname \delta)$ there is $b$ that is fresh for $ \Delta $ with
$ a\sepRES b \in \tilx$;

\item[(c)]
for all  
 $b \in (\oname \Delta \setminus \oname \delta
)$ there is $a$ that is fresh for $ \Delta $ with
$ a\sepRES b \in \tilx$.

\item[(d)]
$ %
A 
= 
 \{ b \st a \sepRES b \in \tilx, \mbox{ for some $a \in \iname \Delta $} \} \cup 
 \{ a \st a \sepRES b \in \tilx, \mbox{ for some $a \not \in \iname \Delta $} \} 
$
\end{itemize} 
Intuitively,  $\tilx$ is closing for $ \Delta $ and includes $\delta$
(formally, it is the context $\res \tilx (\contexthole|\RLc n \LL )) $
that is closing for $\Delta $).  
And $A$ has the companions of the I-names in $ \Delta $ (such companion may also be in $
\Delta $, when the pair is in $\delta$); and the companions of the 
O-names in $\Delta$ that are not in $\delta$.

We prove by induction on $n$ that
for  $\tilx$  and $\RLc n \LL $ complete  
 for $ \Delta $
and $ \delta $ we have: 
whenever 
 $\ok \Delta {P,Q}$, 
\begin{center}
${\res {\tilx}(P|
\RLc n \LL 
)} \wbb {\res {\tilx}(Q|
\RLc n \LL 
)}$ implies
$ P \wbTTn   n \Delta \asetD  Q$.
\end{center}

 The proof is by induction on $n$.  The case $n=0$ is trivial
 because
$\wbTTn  0 \Delta   \asetD$
 has all pairs of
     processes.  If $n>0$ we suppose  that
for all $\tilx, \LL$ as above it holds that 
         \begin{center}
         ${\res {\tilx}}(P|\RLcp) \wbb
 {\res{\tilx}}(Q|\RLcp)$
 \end{center}
and consider
  $\ltsTnew \delta  P 
 { \mu} {A'} {P'}$.
  We
 proceed by case analysis on the action $\mu$ to show
 that $Q$ can match it.
 \begin{enumerate}
  \item $\mu = \tau$.  Then 
   \[
\ltsBEG {\emptyset }
{\res {{{\tilx}}}}(P|\RLcp) {\Arr \tau} \simS
   {\res {{{\tilx}}}}(P|(\overline{c_{n}^{\tau}} \oplus
\RLc{n-1}\LL))
   \]
   To match this reduction (up to barbed bisimulation) we must have:
   \[ 
\ltsBEG {\emptyset }
{\res {{{\tilx}}}}(Q|\RLcp) {\Arr \tau}
\contr   {\res {{{\tilx}}}}(Q_{1}|(\overline{c_{n}^{\tau}} \oplus 
\RLc{n-1}\LL
))
   \]
with $
\ltsBEG {\tilx}
Q \Longrightarrow Q_1$, and hence also 
 $
\ltsBEG {\delta}
Q \Longrightarrow Q_1$ (Lemma~\ref{l:notfreedelta}). 
   We make a further reduction:
   \[ 
{\res {{{\tilx}}}}(P|(\overline{c_{n}^{\tau}} \oplus
\RLc {n-1} \LL   ) 
      {\Arr \tau}
\simS      {\res {{{\tilx}}}}(P'|
\RLc {n-1} \LL    ) 
   \]
   Again this has to be matched by (note that we cannot run the process
   $
\RLc {n-1} \LL $ without losing a commitment $\overline{b_{n-1}}$
   or $\overline{b'_{n-1}}$):
   \[
\ltsBEG {\emptyset }
 {\res {{{\tilx}}}}(Q_{1}|(\overline{c_{n}^{\tau}} \oplus 
\RLc {n-1} \LL 
))
      {\Arr \tau} \contr
        {\res {{{\tilx}}}}(Q'|
\RLc {n-1} \LL 
))
   \]
   We have $
\ltsBEG {\delta}
Q {\Arr \tau} Q_{1} {\Arr \tau} Q'$ and the conclusion
   follows
   by applying the inductive hypothesis.

    \item $\mu = {\bout {b } {c}}$, and there is $a$ with 
$a \sepRES b \in \tilx$ and 
 $a \in \LL $. Moreover we  assume 
that  $c$ is an O-name. 
We may also suppose $c \not \in \fn{Q}$. 
   Then
   $$ \begin{array}{rcl}
\ltsBEG {\emptyset }    {\res {{\tilx}}}(P|
\RLcp) & {\Arr \tau} \simS %
     {\res {{\tilx}}}(P|
\overline{c_{n}^{a}} \oplus
          {\inp a x. { 
           \RLc {n-1}{\LL \cup \{ x \}}}
}))
 \end{array} $$
   This has to be matched by:
   $$ \begin{array}{rcl}
\ltsBEG {\emptyset }     {\res {{\tilx}}}(Q|
\RLcp) & {\Arr \tau} \contr %
     {\res {{\tilx}}}(Q_{1}|
(
\overline{c_{n}^{a}} \oplus
          {\inp a x. { 
           \RLc {n-1}{\LL \cup \{ x \}}}
}))
\end{array}$$
with $\ltsBEG {\delta} Q \Longrightarrow Q_1$.
   We take some further steps:
   $$ \begin{array}{rcl}
\ltsBEG {\emptyset }     {\res {{\tilx}}}(P|
(
\overline{c_{n}^{a}} \oplus
          {\inp a x. { 
           \RLc {n-1}{\LL \cup \{ x \}}}
}))
      & {\Arr \tau} \simS %
     (\res {\tilx} \resBP
 {c})
(P'|
\RLc {n-1}{\LL \cup \{ c \}} 
)
\end{array}$$
   This has to  be  matched thus:
   $$ \begin{array}{rcl}
\ltsBEG {\emptyset }     {\res {{\tilx}}}(Q_{1}|
(
\overline{c_{n}^{a}} \oplus
          {\inp a x. { 
           \RLc {n-1}{\LL \cup \{ x \}}}
}))
      & {\Arr \tau}  \contr %
(     \res {\tilx}\resBP c)
(
Q'| \RLc{n-1}{\LL
         \cup \{c \}} 
)
\end{array}$$
   This means that $\ltsBEG {\delta}Q {\Arr \tau} Q_{1} {\Arr {\bout {b} c  }} 
   Q'$.  Using the 
inductive hypothesis
we conclude  that 
     $
   P'  \wbTTn{n-1}{\Delta'} { \delta }
 Q'$, as $\tilx, \breve c$ and 
$\LL
         \cup \{c \}$ are complete for $ \Delta' $ and $\delta$, where 
$ \Delta' $ is the extension of $ \Delta $ obtained from the transition $\bout b c$, thus
 adding the type for $\comp c$  to $ \Delta $.

    \item The same case as before, that is, 
$\mu = {\bout {b} {c}}$, but  $c$ is an I-name. 
We reason as in the previous  case.

  \item $\mu = {\inp { \comp a   } {{a'}}}$, 
and there is $a$ with $\comp a \sepRES a  \in \tilx $
and $ a \in \LL$;   we assume
    $a',\comp{a'}$ fresh.  
We let
$  B = \{\comp{a'}\}$ .  
We have
   $$ 
\ltsBEG {\emptyset }
    {\res {{\tilx}}}(P|
\RLcp)  {\Arr \tau} \simS
   {\res {{\tilx}}}(P| (
 \overline{c_{n}^{{a}}} \oplus
            {\bout a {a'}} :\RLc {n-1} {\LL\cup B } 
))
    $$
   This has to be matched by
    $$ 
\ltsBEG {\emptyset }
 \res {\tilx}(Q|
\RLcp
) 
     {\Arr \tau}  \contr
   \res {\tilx}(Q_1| (
 \overline{c_{n}^{{a}}} \oplus
            {\bout a {a'}} :\RLc {n-1} {\LL\cup B} 
))
    $$
(as before, with $
\ltsBEG {\delta  }   
Q \Longrightarrow Q_1$).
   We make a further reduction: 
   $$ \begin{array}{rcl}
\ltsBEG {\emptyset }   
   {\res {{\tilx}}}(P| (
 \overline{c_{n}^{{a}}} \oplus
            {\bout a {a'}} :\RLc {n-1} {\LL\cup B} 
))
 & {\Arr \tau} \simS& %

        \res {{\tilx}\,\resBP{a'}}(P'| 
\RLc {n-1} {\LL\cup B} 
)
    \end{array} $$
       \noindent
    This is matched by:
    \[
\ltsBEG {\emptyset }      \res {\tilx}(Q_1| (
 \overline{c_{n}^{{a}}} \oplus
            {\bout a {a'}} :\RLc {n-1} {\LL\cup B} 
))
            {\Arr \tau} Q''
    \]
    We have three possibilities:
    \begin{enumerate}
     \item $
\ltsBEG {\delta}
Q_{1} {\Arr \tau} Q'$ and 
$\comp a \sepRES a \not\in \delta $ (here $a$ is fresh for $P,Q$):
$$
\begin{array}{rcl}
Q'' &\contr&
     \res \tilx 
(Q'|{\bout a {a'}}:\RLc{n-1}{\LL \cup B})
 \\
&=&
     \res \tilx 
(Q'|\resBP {a'}({\out a {a'}}|\RLc{n-1}{\LL \cup B}))
\\
&\wbb &
     \res \tilx 
(Q'|\resBP {a'}(\transf{{\out a {a'}}}|\RLc{n-1}{\LL \cup B}))
\\
&\scong&
     \res \tilx\, \resBP{a'} 
(Q'|\transf{{\out a {a'}}}|\RLc{n-1}{\LL \cup B})
\end{array}
 $$
where the use of $\wbb$ comes from Lemma~\ref{l:mm}(2) and the soundness of $\wbS$ for
barbed congruence (and hence also barbed bisimilarity).
We can therefore conclude 
that $
\ltsBEG {\delta}
Q {\Arr \tau} Q_{1} {\Arr \tau}
     Q'$ 
and
     $P'   \wbTTn {n-1}{\Delta'}{\delta, \compa \sepRES a} Q'|
\transf{{\out a {a'}}} 
$, where $\Delta'$ is the extension of $\Delta$ with a type for $a'$,  using the inductive hypothesis.
     \item $
\ltsBEG {\delta}
Q_{1} {\Arr \tau} Q'$ and 
$\comp a \sepRES a \in \delta $. This is the same as the previous case, except that
$\delta$ need not be extended.

     \item $
\ltsBEG {\delta}
Q_{1} {\Arr {\inp {\compa} {a'} }} Q'$ and
      $Q'' \contr \res \tilx \, \resBP{{a'}}
(Q' | \RLc{n-1}{{\LL} \cup  B})$. 

We can then conclude 
$
\ltsBEG {\delta}
Q {\Arr \tau} Q_{1} {\Arr {\inp {\compa} {a'}}} Q'$,
      and hence
      $P'  
   \wbTTn {n-1}{\Delta'}{\delta}
 Q'$ using 
 the inductive hypothesis, reasoning as in  previous cases.
    \end{enumerate}
\end{enumerate}
\end{proof}

\section{Additional material for Section~\ref{s:alaws} }
\label{a:laws}

The main goal of this section is to prove
Theorem~\ref{t:wire-subst-law}.  
The proofs for the two cases of the theorem 
(when the substituted name  is an O-name  or an I-name)
 are quite different, and are separately
discussed below, as Lemma~\ref{lem:wire-subst-law_O} and
Theorem~\ref{thm:wire-subst-law_I}, respectively.

\subsection{Wire substitution law for O-name}
We first discuss the case for O-names.
The following special case of the substitution law is a valid law of the asynchronous \( \pi \)-calculus with I/O types~\cite[Lemma 10.3.1]{SangiorgiWalker01}.
Hence, we may also transplant this result to \api{}.
\begin{lemma}[Lemma~\ref{l:wire} for O-names]
  \label{l:wire-O}
  \( \out a b \wbcTT \Delta{} \bout a c : \links {\comp c} b  \).
\end{lemma}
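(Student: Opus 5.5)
The plan is to transport the result from the Asynchronous $\pi$-calculus, where $\out a b$ and $\bout a c : \links{\comp c} b$ (with $b$ an O-name) are known to be equivalent: this is the law of \cite[Lemma 10.3.1]{SangiorgiWalker01}, valid under I/O types. By Corollary~\ref{c:trans_laws} it suffices to show that the encodings $\encoA{\compP{\out a b}\emptyset}=\out a b$ and $\encoA{\compP{\bout a c:\links{\comp c}b}\emptyset}=\res c(\out a c \mid !\inp c x.\out b x)$ are related by a sound behavioural relation of \Api{}. Asynchronous bisimilarity under I/O typing (only the output capability of $c$ exported) is such a relation, and the remark after Corollary~\ref{c:trans_laws} explicitly allows it.

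However, plain weak bisimilarity fails for these two processes, since the two sides emit different names. The correct notion is the typed/asynchronous one. I would therefore rather argue directly in \api{}: given a closing context $\qct$ for $\Delta$, I show $\ct{\out a b}\wbb\ct{\bout a c:\links{\comp c}b}$. Since $b$ is an O-name, the recipient of the message at $a$ may only use the received name in output positions, either directly or after passing it on.

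The relation I would take pairs $\res{\tilx}(R\sub b y)$ with $\res{\tilx}\resBP c(R\sub c y\mid\links{\comp c}b)$. Here $R$ ranges over derivatives in which $y$ occurs only in output (subject) positions or as an emitted value. I would show this relation is a barbed bisimulation up to expansion (Theorem~\ref{t:uptoWE} style). Each output $\out c v$ on the right is forwarded by the wire in one extra $\tau$ step to $\out b v$. Since $\comp c$ is private and the wire is its unique receiver (the 1-input property, enforced by \trans{RIn}), no other process can intercept it. Outputs of $c$ itself as a value are handled coinductively, because the relation is closed under passing $y$ around.

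The main obstacle is this closure under the name being re-exported. The derivative on the right may communicate $c$ further, so $c$ gets extruded through other channels and nested wires appear. I would handle this by letting $R$ be arbitrary, keeping the single wire shared (it is replicated), and invoking the replication laws of Corollary~\ref{c:laws_A} to distribute it. Success barbs are preserved trivially, since success names are never $c$.
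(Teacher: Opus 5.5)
Your first route is in spirit the paper's proof, which simply imports \cite[Lemma 10.3.1]{SangiorgiWalker01}, a law of \Api{} with I/O types. The tools you cite, however, do not deliver it. Corollary~\ref{c:trans_laws} and the remark after it concern \emph{untyped} (asynchronous) bisimilarity, and there $\out a b$ and $\res c(\out a c\mid !\inp c x.\out b x)$ are not equivalent: their output labels differ ($\out a b$ versus $\bout a c$), and an untyped observer may input on the received name. To make this route work you need three things: typed barbed congruence in \Api{}; the fact that $\encoA{\cdot}$ maps \api{} closing contexts to I/O-well-typed contexts in which the name received at $a$ is used only in output; and Lemma~\ref{l:api_pi} to come back. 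Since you abandon this route, the real issue is in the direct one.

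In the direct route you locate the difficulty in the wrong place, and your remedy for it is circular.

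\textbf{Re-export of $c$ is not the problem.} Since $y$ is an O-name, typing already confines it to output subjects and emitted values. The single top-level wire serves every output at $c$, wherever $c$ has travelled. One checks directly, with no algebraic law, that your relation is a barbed bisimulation:
a left communication of an $\out b v$ originating from $y$ is matched by two steps on the right;
a wire step $\out c v\longrightarrow\out b v$ on the right is matched by no move, after replacing $R$ with the $R_1$ in which that $\out y v$ has become $\out b v$.

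\textbf{What the relation misses is the initial pair.} Take a closing context whose hole lies under an input or a replicated input. Then $\ct{\bout a c:\links{\comp c}b}$ is not of the form $\res\tilx\resBP c(R\sub c y\mid\links{\comp c}b)$, and a replication creates many distinct $c_i$, each with its own wire.

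\textbf{The replication laws cannot close this gap.} Using the laws of Corollary~\ref{c:laws_A} to pull out or distribute the wire fails here, because $c$ is exported: it is the object of $\out a c$. The paper obtains these laws from untyped \Api{}. For the distribution law it explicitly first applies Lemma~\ref{l:wire} for O-names, so that the server's O-name is never exported (remark after Corollary~\ref{c:laws_A}). You would therefore be using the very lemma you are proving. This is not a formality. In untyped \Api{}, $!\inp d z.\res c(\out a c\mid !\inp c x.\out b x)$ and $\res c(!\inp d z.\out a c\mid !\inp c x.\out b x)$ are distinguished by an observer that triggers $d$ twice, then inputs on one received name and outputs on the other.

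\textbf{Fix: enlarge the relation instead.} Allow finitely many $y_i$, instantiated by $b_i$ on the left and by fresh $c_i$ with wires $\links{\comp{c_i}}{b_i}$ on the right. Also allow guarded, possibly replicated, occurrences of $\out{a'}{b'}$ versus $\bout{a'}{c}:\links{\comp c}{b'}$ inside $R$, moving an occurrence into the top-level family once it becomes unguarded. The resulting argument needs no laws at all.
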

Now we are ready to prove the general case.
\begin{lemma}[Theorem~\ref{t:wire-subst-law} for O-names]
  \label{lem:wire-subst-law_O}
$
\resBP b  ( \links {\comp b }a| P ) 
 \wbcTT \Delta{}
 P \sub a{b}
$.
\end{lemma}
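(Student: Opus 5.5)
We prove the law by structural induction on $P$, following the paper's hint that the O-name case needs only algebraic reasoning inherited from the $\pi$-calculus. Here $b$ is an O-name used only in output positions in $P$, and its companion $\comp b$ is the I-name held by the wire $\links{\comp b}{a}$. Throughout we work modulo the transported laws (Corollaries~\ref{c:congS} and~\ref{c:laws_A}) and the congruence of $\wbcTT\Delta{}$, which follows from Theorem~\ref{t:congS}. The key tool is the replication theorems of Corollary~\ref{c:laws_A}: the wire $\links{\comp b}{a} = !\inp{\comp b}x.\out a x$ is a replicated input at a restricted name.

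We first reduce to the case where $b$ never occurs as an \emph{object} of an output in $P$. Every subterm $\out c b$ is rewritten, using Lemma~\ref{l:wire-O}, into $\bout c d : \links{\comp d}{b}$. We then check that the law for this rewritten process implies the law for the original. The left-hand side changes only up to $\wbcTT\Delta{}$, by congruence. On the right-hand side, $\out c a$ and $(\bout c d : \links{\comp d}{b})\sub a b = \bout c d : \links{\comp d}{a}$ are equated by Lemma~\ref{l:wire-O} again. After this step, $b$ appears in $P$ only in subject position of outputs, possibly underneath input prefixes and replications. This is exactly the situation in which the name $\comp b$ is never exported, so that law~\reff{i:laws_PC} of Corollary~\ref{c:laws_A} is applicable.

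The induction then proceeds case by case.
\begin{itemize}
\item Case $\nil$: the restricted wire can never be triggered, so we garbage-collect it, e.g.\ via strong bisimilarity transported through $\encoA{\cdot}$.
\item Case $P_1 \mid P_2$: we distribute the restricted wire with law~\reff{i:laws_PC} and apply the induction hypothesis to each component.
\item Case $\resBP c P'$: we use scope extrusion (structural congruence), since $c$ is fresh for the wire, and conclude by the induction hypothesis.
\item Case $\inp c x . P'$: law~\reff{i:laws_INP} pushes the restricted wire under the prefix. Note that $c \neq \comp b$, because $\comp b$ is fresh for $P$.
\item Case $!\inp c x . P'$: law~\reff{i:laws_INPrep} pushes the restricted wire under the replication, and we use the induction hypothesis inside.
\item Case $\out c v$ with $c \neq b$: the wire is garbage.
\item Case $\out b v$: law~\reff{i:laws_COM}, in its replicated form, gives $\resBP b(\links{\comp b}a \mid \out b v) \wbcTT\Delta{} \resBP b(\links{\comp b}a \mid \out a v)$. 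The wire is then garbage, yielding $\out a v = (\out b v)\sub a b$. The replicated variant of~\reff{i:laws_COM} is obtained in the same way from the $\pi$-calculus: a replicated receiver at a private name, with a matching output, reduces deterministically.
\end{itemize}

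The main obstacle is the side condition of the replication theorems. In the $\pi$-calculus these require the restricted name to be used only in output by the other components, and never transmitted. This is why the preliminary rewriting via Lemma~\ref{l:wire-O} is needed, and why we must check that it commutes properly with the substitution $\sub a b$. A secondary point is to confirm that the $\pi$-calculus laws indeed transport: under $\encoA{\cdot}$ the pair $\comp b, b$ collapses to the single name $b$ with its input end occurring only in the wire. This matches the I/O-typed asynchronous $\pi$-calculus hypotheses used in~\cite{SangiorgiWalker01}, after which Corollary~\ref{c:trans_laws} yields barbed congruence.
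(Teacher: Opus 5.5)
Your proposal is correct and takes the same route as the paper. You first use Lemma~\ref{l:wire-O} so that $b$ is never emitted as an object in $P$, and then proceed by structural induction on $P$, using the replication theorems and the communication law of Corollary~\ref{c:laws_A} transported from the $\pi$-calculus. Your explicit handling of the replicated-input case and of the replicated variant of the communication law only fills in details the paper leaves implicit.
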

\begin{proof}
We can assume, using Lemma~\ref{l:wire-O}, that name $b$ is never exported in $P$.
We can then proceed by structural induction on $P$,  using  algebraic reasoning.
We use 
the laws derived from $\pi$-calculus, such
as those in  Corollary~\ref{c:laws_A}, both in the inductive part  of the proof
(e.g., laws~\ref{i:laws_PC} and~\ref{i:laws_INP} of 
Corollary~\ref{c:laws_A}), and in the base case
(law~\ref{i:laws_COM} of the corollary).
\end{proof}
We  also state a variant of this lemma that uses the expansion
relation, as this will later be used together with the `up-to' expansion technique.
\begin{lemma}
  \label{lem:wire-sust-contr-O}
  Suppose that the O-name \( b \) only appears in  subject position 
in \( P\) (that is, $P$ does not contain outputs in which $b$ is the
value emitted).
  Then we have
  \( \resBP b  ( \links {\comp b }a| P ) \contr P \sub a{b} \).
\end{lemma}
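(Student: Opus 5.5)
We prove the expansion $\resBP b(\links{\comp b}a \mid P) \contr P\sub a b$ by structural induction on $P$. The claim is about the composite processes $\compP{\cdot}\emptyset$, and we work with the transported theory of Section~\ref{ss:trans}. Through the encoding $\encoA{\cdot}$ this is an expansion in \Api, and by Theorem~\ref{t:congS} $\contr$ is a precongruence, so it can be used compositionally. The hypothesis that $b$ occurs only in subject position is what makes a purely syntactic induction possible: $b$ is never extruded, so the wire stays private and every use of $b$ is a free output $\out b v$ sitting somewhere inside $P$. This avoids the step in Lemma~\ref{lem:wire-subst-law_O} where Lemma~\ref{l:wire-O} must first be used to make $b$ non-exported. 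That step only gives barbed congruence and would break the expansion.

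The base cases are as follows. For $P=\nil$, or any output not at $b$, or more generally when $b\notin\fn P$, garbage-collect the restriction and the wire. This uses the \Api\ law $\res b(!\inp{b}x.Q)\sbS\nil$ (the replication is never triggered), which gives strong bisimilarity and hence $\contr$. For $P=\out b v$, the process $\resBP b(\links{\comp b}a\mid\out b v)$ performs one internal step to $\resBP b(\links{\comp b}a\mid\out a v)$, and this is strongly bisimilar to $\out a v$ by garbage collection. This "one extra $\tau$ on the left" pattern is exactly an expansion, as in law~\ref{i:laws_COM} of Corollary~\ref{c:laws_A} adapted to the replicated case via Lemma~\ref{l:otherlaws}(2).

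For the inductive cases we distribute the wire using the replication theorems of Corollary~\ref{c:laws_A}. These hold as strong bisimilarities in \Api, hence in the transported theory. Parallel composition uses law~\ref{i:laws_PC}:
\[\resBP b(\links{\comp b}a\mid P_1\mid P_2)\sbS\resBP b(\links{\comp b}a\mid P_1)\mid\resBP b(\links{\comp b}a\mid P_2),\]
followed by the induction hypothesis and precongruence of $\contr$ under $\mid$. Input $\inp c x.P'$ uses law~\ref{i:laws_INP}, and replicated input uses law~\ref{i:laws_INPrep}; here $c\neq b$, since $b$ is an O-name. Restriction $\resB{c}{d}P'$ commutes with the outer restriction by structural congruence, taking names fresh, and then the induction hypothesis and precongruence under restriction apply. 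The subject-position hypothesis is preserved by all subterms. Since the substitution replaces only subjects, none of these rearrangements creates new occurrences of $b$ in object position.

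The main obstacle is checking that the replication theorems are strong bisimilarities, or at least expansions, rather than only weak equivalences. Otherwise the chain would no longer be an expansion. In \Api\ they hold for $\sbS$ exactly when the replicated name ($\comp b$ here) is used only in output by the rest of the process and is never extruded. Our hypothesis guarantees this. If a particular law is only available up to $\wbS$, it still suffices to use it on the right-hand (less efficient) side, as in the up-to-expansion composition $\contr\;\sbS$.
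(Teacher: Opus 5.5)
Your proposal is correct and follows essentially the route the paper indicates. The paper gives no separate proof of this expansion variant, but your argument mirrors its proof of Lemma~\ref{lem:wire-subst-law_O}: structural induction on $P$, using the replication theorems and the communication law of Corollary~\ref{c:laws_A} as strong bisimilarities or expansions (valid because $b$ is never exported), combined through the precongruence of $\contr$ from Theorem~\ref{t:congS}. One caveat: your closing fallback is unsound, since composing $\contr$ with $\wbS$ does not yield $\contr$; it is never needed, however, because under your hypothesis the replication theorems hold for $\sbS$.
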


\subsection{Wire substitution law for I-names}
For the proof of the substitution law of I-names, a law that does not
hold in \Api{} or \ALpi{}, we rely on internal bisimulation
(Definition~\ref{d:hb-bisimulation}).
To apply internal bisimulation we first need  to turn
 free outputs into bound outputs.
This, for O-name, is Lemma~\ref{l:wire-O} above. 
We now consider the case of I-names.
The proof directly deals with barbed congruence and make use of
Lemmas~\ref{lem:wire-subst-law_O} and~\ref{lem:wire-sust-contr-O}.

\begin{lemma}[Lemma~\ref{l:wire} for I-names]
  \label{l:wire-I}
  \( \out a b \wbcTT \Delta{} \bout a c : \links  b {\comp c}   \).
\end{lemma}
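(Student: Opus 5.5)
The plan is to argue directly at the level of barbed congruence, as announced just before the statement. The idea is to recognise the wire $\links b {\comp c}$, in which $b$ is the I-name exported by $\out a b$, as an \emph{O-name} wire from the point of view of $b$'s companion. Because $\ok\Delta{\out a b}$ and closing contexts cannot use the I-names of $\Delta$, any context $\qct$ closing for $\Delta$ must bind $b$ by a restriction $\resB b {\comp b}$ whose scope embraces the hole. Moreover, the companion $\comp b$ is used only in output by the rest of the context. In the bound-output version, every message the context sends at $\comp b$ arrives at $b$, is forwarded by the wire to $\comp c$, and is therefore received wherever the receiver of $c$ sits. This is exactly the situation of law~\eqref{e:oslaw} (Lemma~\ref{lem:wire-subst-law_O}), with the O-name $\comp b$ substituted by $\comp c$.

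First step: normalise the context. I will show that it suffices to consider contexts of the form $\qct_1[\resB b {\comp b}(\qct_2[\contexthole] \mid R)]$, where $\qct_2$ does not mention $b,\comp b$ and $R$ uses $\comp b$ only as an O-name. By typing, the hole is not under a replication between the binder of $b$ and the hole, since rule \trans{RIn} forbids free I-names in a replicated body and $b$ is free in the hole. So $\qct_2$ consists only of parallel compositions, restrictions and non-replicated input prefixes. The restriction on $b$ can be pushed inward using structural congruence, and $\alpha$-conversion makes $b$ fresh for $\qct_2$ apart from the hole.

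Second step: the case where $\qct_2$ contains no input prefixes above the hole. Then, up to $\equiv$, we must compare
\[
\resB b {\comp b}(\out a b \mid S) \quad\text{and}\quad \resB c {\comp c}\bigl(\out a c \mid \resB b {\comp b}(\links b {\comp c} \mid S)\bigr).
\]
Applying Lemma~\ref{lem:wire-subst-law_O} to the inner restriction, with the O-name $\comp b$ playing the substituted role, rewrites the right-hand side to $\resB c {\comp c}(\out a c \mid S\sub{\comp c}{\comp b})$. This is $\alpha$-equivalent to the left-hand side. Since barbed congruence is preserved by contexts, this case is closed. Lemma~\ref{lem:wire-sust-contr-O} gives the same step as an expansion whenever $\comp b$ occurs only in subject position, which is useful if one wants an up-to-expansion argument instead.

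Third step, and the main obstacle: general $\qct_2$, where the output sits under input prefixes. In the bound version, the wire on $b$ is guarded by those prefixes, whereas the context may emit at $\comp b$ before the prefixes fire. I will handle this by showing that the wire can be floated out of the prefixes. Concretely, for a prefix $\inp d x.T$ with $b$ not used elsewhere (guaranteed by the 1-input property), I aim to prove
\[
\resB b {\comp b}(\inp d x.\,T[\links b {\comp c}] \mid S) \;\cong\; \resB b {\comp b}(\links b {\comp c} \mid \inp d x.\,T[\nil] \mid S)
\]
within the scope of $\resB c {\comp c}$. The reasons are asynchrony and the fact that nobody else may input at $b$: messages at $\comp b$ either wait at $b$ or are forwarded eagerly to $\comp c$, and that difference is unobservable because $c$ is only reachable through the same prefixes. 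I would first try to obtain this from the replication laws of Corollary~\ref{c:laws_A}, notably law~\ref{i:laws_INP}, applied to the replicated wire. If that fails, I would prove it by a small induction on the prefixes of $\qct_2$, each time appealing to Theorem~\ref{t:Api_api} through the \Api translation, which is legitimate here since the law involves only O-name forwarding. Once the wire is floated to the top level, the second step applies.
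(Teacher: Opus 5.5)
Your Step~2 is correct, and it is essentially the computation at the core of the paper's proof. Read as an O-name wire for the companion $\comp b$, Lemma~\ref{lem:wire-subst-law_O} turns $\resB c {\comp c}(\out a c \mid \resB b {\comp b}(\links b {\comp c} \mid S))$ into an $\alpha$-variant of $\resB b {\comp b}(\out a b \mid S)$. The gap is in reducing an arbitrary closing context to that shape. Step~1 is false. Rule \textsc{In} keeps the subject in the environment of the continuation, so a closing context may perform inputs at $b$ on the path from the binder of $b$ to the hole. This is exactly what the server of Example~\ref{ex:server} does before emitting its own input name. A context may even bind $b$ by an input prefix rather than by $\resB b {\comp b}$. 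Now take a context such as $\resB b {\comp b}(\inp b y.\,C'[\,\cdot\,] \mid \out{\comp b}{v} \mid \cdots)$. Your floating equivalence of Step~3 is ill-typed here (two owners of $b$), and it is also behaviourally false. The floated wire may consume $v$, leaving $\inp b y$, and hence the emission of $c$ on $a$, blocked forever. With the wire inside the hole, the prefix fires and the name is emitted.

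Even when no prefix above the hole involves $b$, Step~3 carries the whole difficulty of the lemma: messages already sent by the context on $\comp b$ are buffered on one side and eagerly forwarded to $\comp c$ on the other. Neither tool you suggest establishes it.

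\begin{itemize}
\item Law~\ref{i:laws_INP} of Corollary~\ref{c:laws_A} moves a replicated server whose private channel is used only by the prefixed process it is composed with. Here $S$ sends on $\comp b$ inside the same restriction. Splitting $S$ off with law~\ref{i:laws_PC} only gives $S$ its own unguarded copy of the wire, and neither law applies to the side where the wire is guarded.
\item Theorem~\ref{t:Api_api} is only a sufficient condition, and the \Api{} bisimilarity it needs fails as soon as the context exports $\comp b$. In the \Api{} image $b$ and $\comp b$ are merged, so after extrusion the outputs of $S$ at that name become visible. After a forwarding step of the unguarded wire, the side with the guarded wire can still exhibit the output and the other cannot. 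The paper meets the same phenomenon: it uses replication theorems and Lemma~\ref{lem:wire-sust-contr-O} only after rewriting the context so that O-names are never emitted free.
\end{itemize}

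The paper avoids static surgery on contexts altogether. It shows that $\mathcal{R}\cup{\wbb}$ is a barbed bisimulation up to expansion, where $\mathcal{R}$ relates $C[\out a b \mid \out{\comp b}{x_1}\mid\cdots\mid\out{\comp b}{x_n}]$ to $C[\resBP c(\out a c \mid \links b {\comp c} \mid \out{\comp c}{x_1}\mid\cdots\mid\out{\comp c}{x_n})]$ for all such contexts $C$. Reductions of the context alone, including the firing of prefixes above the hole, are matched identically. A message on $\comp b$ swallowed by the wire is matched by zero steps, by moving it inside the hole. Your Step~2 computation is used only when the context receives $b$, respectively $c$, on $a$.
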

\begin{proof}
  \newcommand*{\topiIo}[1]{\llbracket #1 \rrbracket^{o}}
  \newcommand*{\relR}{\mathrel{\R}}
  \newcommand*{\proc}{P}
  \newcommand*{\proctwo}{Q}
  Let \( \qct \) be an \api{} closing context.
  Our goal is to prove that \( \ct {\out a b} \wbb \ct {\res c (\out a c \mid \links b {\comp c})}\).
  It suffices to prove that \( {\topiIo \qct} [\out a b] \wbb {\topiIo \qct} [\res c (\out a c \mid \links b {\comp c})] \), where \( \topiIo{-} \) is the half of the encoding \( \transf{-} \) that only turns free outputs sending O-names to bound outputs.
  This is because, by Lemma~\ref{lem:wire-subst-law_O}, \( \ct{\out a b} \wbc {\topiIo \qct} [\out a b] \wbb {\topiIo \qct} [\res c (\out a c \mid \links b {\comp c})] \wbc \ct{\res c (\out a c \mid \links b {\comp c})} \) and \( {\wbc} \subseteq {\wbb} \).
  The reason for applying \( \topiIo{-} \) is to use Lemma~\ref{lem:wire-sust-contr-O} with the up-to expansion technique.

  We show that \( \topiIo \qct [\out a b] \wbb \topiIo \qct [\res c
  (\out a c \mid \links b {\comp c})] \) by proving that the relation
  \( {\relR} \cup \wbb \) is a barbed bisimulation up-to
  expansion, where \( \relR \) is defined as pairs of the form
  \begin{align*}
    (\ct {\out a b \mid \out {\comp b} {x_1} \mid \cdots \mid \out {\comp b} {x_n}}, \ct {\resBP c(\out a c \mid \links b {\comp c} \mid \out {\comp c} {x_1} \mid \cdots \mid \out {\comp c} {x_n}})  )
  \end{align*}
  such that
    \begin{align*}
      &\text{\( \qct \) is closing and  does not have any free outputs exporting O-names.}
  \end{align*}
  Note that since processes relate by \( \relR \) are closed processes they can be considered as processes of the traditional asynchronous \( \pi \)-calculus; that is, we can consider the composite process \( \compP P \emptyset \).
  In the proof, we may implicitly consider \( \compP P \emptyset \) and apply standard proof techniques of \Api{}.

  Suppose that \( \proc_1 \relR \proc_2 \); the case where \(\proc_1 \wbb \proc_2 \) is trivial.
  Processes \( \proc_1 \) and \( \proc_2 \) are of the form \( \ct{\out a b \mid \out {\comp b} {x_1} \mid \cdots \mid \out {\comp b} {x_n}} \) and \( \ct{\resBP c(\out a c \mid \links b {\comp c} \mid \out {\comp c} {x_1} \mid \cdots \mid \out {\comp c} {x_n}}) \), respectively.

  We first consider the case where \( \ct{\out a b \mid \out {\comp b} {x_1} \mid \cdots \mid \out {\comp b} {x_n}} \) makes the challenge.
  It is clear that the barbs can be matched, so we only consider the case when the above process makes a reduction.
  There are only two cases to consider
  \begin{enumerate}
    \item The interaction occurs in \( \qct \). That is, \( \qct \red \qctp \).
    \item The interaction happens between \( \qct \) and the process inside the context using the name \( a \).
  \end{enumerate}
  It is worth emphasizing that interaction between \( \qct \) and the process inside the context using the names \( b, \comp b \) does not happen due to the 1-input property.
  The proof for the first case is trivial since we can simply match the transition with \( \qct \red \qct'\); note that \( \qctp \) is a context without free outputs sending O-names.

  Now we consider the second case.
  In this case, the context \( \qct \) must be of the form (up-to structural congruence)
  \begin{align*}
    \resBP a \resBP b  \resb{\seq z}(\inp a x . \proc \mid \proctwo \mid \contexthole)
  \end{align*}
  (or a context of a similar shape where the input at \( a \) is replicated), and the reduction must be of the form
   \begin{align*}
     &(\proc_1 \defeq ) \resBP a \resBP b  \resb{\seq z}(\inp a x .\proc \mid \proctwo \mid \out a b \mid \out {\comp b} {x_1} \mid \cdots \mid \out {\comp b} {x_n}) \\
     &\red  \resBP a \resBP b  \resb{\seq z}((\proc \sub b x \mid \proctwo \mid \out {\comp b} {x_1} \mid \cdots \mid \out {\comp b} {x_n}) (\defeq \proc_1')
  \end{align*}
  For the matching transition, we consider
  \begin{align*}
  &(\proc_2 \defeq ) \resBP a \resBP b  \resb{\seq z}(\inp a x .\proc \mid \proctwo \mid \resBP c(\out a c \mid \links b {\comp c} \mid \out {\comp c} {x_1} \mid \cdots \mid \out {\comp c} {x_n}] ) \\
  &\red  \resBP a \resBP b  \resb{\seq z}\resBP c(\proc \sub c x \mid \proctwo \mid \links b {\comp c} \mid \out {\comp c} {x_1} \mid \cdots \mid \out {\comp c} {x_n}) (\defeq \proc_2')
  \end{align*}
  Using replication theorems and the substitution law for O-names, we get
  \begin{align*}
    \proc_2'
    &\sbisim \resBP a \resBP c \resb{\seq z}( \resBP b (\proc \sub c x \mid \links b {\comp c}) \mid \resBP b(\proctwo \mid \links b {\comp c}) \mid \resBP b(\out {\comp c} {x_1} \mid \cdots \mid \out {\comp c} {x_n} \mid \links b {\comp c})) \tag{replication theorem} \\
    &\contr  \resBP a \resBP c \resb{\seq z}(\proc \sub c x \sub {\comp c} {\comp b} \mid \proctwo \sub {\comp c} {\comp b} \mid \out {\comp c} {x_1} \mid \cdots \mid \out {\comp c} {x_n}) \tag{Lemma~\ref{lem:wire-sust-contr-O}} \\
    &\equiv  \resBP a \resBP b \resb{\seq z}(\proc \sub b x \mid \proctwo \mid \out {\comp b} {x_1} \mid \cdots \mid \out {\comp b} {x_n}) \tag{\( \alpha\)-conversion } \\
    &= \proc_1'
  \end{align*}
  The premise of Lemma~\ref{lem:wire-sust-contr-O}, i.e.~\( b \) only appears in output subject position, is satisfied because \( \qct \) cannot (1) contain input capability of \( b \) and (2) does not have \( \comp b \) in an object position because it has been eliminated.
  To summarize, we showed that for a reduction \( \proc_1\red \proc_1' \) there exists \( \proc'_2 \) such that \( \proc_2 \red \proc_2' \contr \proc_1' \), and because \( \proc'_1 \wbb \proc'_1 \), we are done with this case.
  (The case where the input at \( a \) is replicated can be proved in a similar manner.)

  We now consider the case where \( \ct{\resBP c(\out a c \mid \links b {\comp c} \mid \out {\comp c} {x_1} \mid \cdots \mid \out {\comp c} {x_n})} \) makes the challenge.
  For this case, there are three cases to consider.
  \begin{enumerate}
    \item The interaction occurs in \( \qct \). That is, \( \qct \red \qctp \).
    \item The interaction happens between \( \qct \) and the process inside the context using the name \( a \).
    \item The interaction happens between \( \qct \) and the process inside the context using the name \( b \).\
  \end{enumerate}
  The proof for the first two cases is essentially identical to what we have done above.
  Therefore we only consider the third case.
  In this case, the context must be of the form
  \begin{align*}
  \resBP b \resb{\seq z}(\proctwo \mid \out {\comp b} {x_{n + 1}} \mid \contexthole)
  \end{align*}
  and the reduction must be of the form
   \begin{align*}
     &\resBP b \resb{\seq z}(\proctwo \mid \out {\comp b} {x_{n + 1}} \mid \resBP c(\out a c \mid \links b {\comp c} \mid \out {\comp c} {x_1} \mid \cdots \mid \out {\comp c} {x_n}) \\
     &\red \resBP b \resb{\seq z}(\proctwo \mid \resBP c(\out a c \mid \links b {\comp c} \mid \out {\comp c} {x_1} \mid \cdots \mid \out {\comp c} {x_n} \mid \out {\comp c} {x_{n + 1}}).
  \end{align*}
  Let \( \qctTwo \defeq \resBP b \resb {\seq z}(\proctwo \mid \contexthole )\).
  Note that \( \qctTwo \) also does not have any free outputs that sends output capability.
  The process after the reduction is \( \ctTwo { \resBP c(\out a c \mid \links b {\comp c} \mid \out {\comp c} {x_1} \mid \cdots \mid \out {\comp c} {x_n} \mid \out {\comp c} {x_{n + 1}})}\).
  We can match this transition with a zero-step reduction because \( \ct{\out a b \mid \out {\comp b} {x_1} \mid \cdots \mid \out {\comp b} {x_n}} \scong \ctTwo{\out a b \mid \out {\comp b} {x_1} \mid \cdots \mid \out {\comp b} {x_n} \mid \out {\comp b} {x_{n+1}}} \).
  The two processes after the (possibly zero-step) reduction are related by \( \relR \), which concludes this case.
\end{proof}

We will also use the following lemma.

\begin{lemma}
\label{l:aapIB}
If $P \wbT{\delta,
    a\sepRES \compa} Q$ and $a$ or $\compa$ are not free in $P$ and
in   $Q$, then also  $P \wbT{\delta} Q$.
\end{lemma}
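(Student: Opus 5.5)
The plan is a direct coinductive argument. Let $\R$ be the set of triples $(\delta, P, Q)$ such that $P \wbT{\delta, a\sepRES \compa} Q$ and the chosen name ($a$ or $\compa$) is free in neither $P$ nor $Q$. I would show that $\R$ is an internal bisimulation at $\delta$ (Definition~\ref{d:hb-bisimulation}); the conclusion then follows because $\wb$ is the largest such relation.

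\textbf{Step 1: the LTS does not see the extra connection.} The extra pair $a\sepRES\compa$ is used only by rules \trans{Com} and \trans{Close}, and only to synchronise an input at $a$ with an output at $\compa$. If one of the two names is absent, no such synchronisation exists. So, by Lemma~\ref{l:notfreedelta}, $\tau$-transitions under $\delta, a\sepRES\compa$ and under $\delta$ coincide, and this lifts to weak transitions by induction on their length. Visible transitions do not depend on the connection set at all. I would also check that the side condition is invariant under transitions: derivatives have no new free names except fresh bound-output or input parameters. These can be $\alpha$-chosen distinct from $a,\compa$.

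\textbf{Step 2: matching the clauses.} Take $(\delta,P,Q)\in\R$ and a challenge from $P$ under $\delta$.
\begin{itemize}
\item For a $\tau$-move, Step 1 turns it into a $\tau$-move under $\delta, a\sepRES\compa$. The answer of $Q$ there is converted back the same way, and the derivatives are again in $\R$.
\item For an input $\inp {a_0} c$, note that $a_0 \neq a$ when $a$ is the absent name. The answering options (a), (b)(i) and (b)(ii) at $\delta, a\sepRES\compa$ map to the same options at $\delta$. The only possible mismatch is option (b)(ii), whose side condition $a_0 \notin \n{\delta, a\sepRES\compa}$ is stronger than $a_0 \notin \n\delta$, so the match is still valid at $\delta$. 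The fresh $a'$ added in (b)(ii) is kept in the connection set, and the pair stays in $\R$.
\item For a bound output $\bout b c$ with $b \notin \n\delta$, the definition at $\delta, a\sepRES\compa$ requires $b \notin \n{\delta, a\sepRES\compa}$. If $\compa$ is the absent name, then $b \ne \compa$ and $b \neq a$ (the latter since $a$ is an I-name). So the output is also observed at $\delta, a\sepRES\compa$, and the answer transfers directly.
\end{itemize}

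\textbf{Main obstacle.} The difficulty is the case where $a$ is absent but $\compa$ is free and $b=\compa$. At $\delta, a\sepRES\compa$, outputs at $\compa$ are neither observed nor able to synchronise. For instance, $\outC{\compa}$ and $\nil$ appear related there, yet they are distinguished at $\delta$. So in this case the plan only goes through if $\compa$ is also absent, as it is in the intended uses where the companion is fresh. Alternatively, outputs at $\compa$ must be ruled out by the context of application. The first thing I would do is check each use of the lemma to see whether $\compa \notin \fn{P,Q}$ always holds there. If it does, the proof reduces to the clean case above.
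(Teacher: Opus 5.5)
Your counterexample is correct. Read literally, with only the I-name $a$ absent, the statement fails: $\outC{\compa}\wbT{a\sepRES\compa}\nil$ holds, because the output at $\compa$ is neither observable nor able to synchronise, yet the two processes are distinguished at $\emptyset$. The paper states the lemma without proof. Its only use, in the proof of Theorem~\ref{thm:wire-subst-law_I}, is the case where the O-name $\compa$ is absent and $a$ is free on both sides. So restricting to $\compa\notin\fn{P}\cup\fn{Q}$ and arguing coinductively with Lemma~\ref{l:notfreedelta} is the right plan, and your $\tau$ and bound-output clauses are fine in that case. Your input clause, however, has a gap in exactly that case.

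When $\compa$ is absent, $a$ may be free, so $P$ may perform $a(c)$. At $\delta, a\sepRES\compa$, the process $Q$ may then answer through clause~3(b)(i) of Definition~\ref{d:hb-bisimulation} using the companion $\compa$: $Q\Longrightarrow Q'$ with $P'\wbT{\delta, a\sepRES\compa} Q'\mid\outbis{\compa}{c}$.
\begin{itemize}
\item This answer has no counterpart 3(b)(i) at $\delta$, because $a\notin\n\delta$.
\item The resulting pair has $\compa$ free, so it is not in your $\R$.
\item Your observation $a_0\neq a$ only covers the case you have already discarded.
\end{itemize}
This is not a corner case. Already $\inp a x.\nil$ against $\nil$ needs it, and the proof of Theorem~\ref{thm:wire-subst-law_I} answers inputs at $a$ exactly through clause~3(b)(i).

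The repair is short. At $\delta$, answer with clause~3(b)(ii), taking the fresh companion to be $\compa$ itself; it is fresh for $\delta$, $P$, $Q$ and $c$, or you can rename it injectively. What is then required is $P'\wbT{\delta, a\sepRES\compa} Q'\mid\outbis{\compa}{c}$, which you already have. Accordingly, show that $\R$ together with all triples $(\delta',P,Q)$ such that $P\wbT{\delta'}Q$ is an internal bisimulation, rather than $\R$ alone.
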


\begin{theorem}[Theorem~\ref{t:wire-subst-law} for I-names]
  \label{thm:wire-subst-law_I}
$\resBPP b{\compb } ( \links a{\compb }| P ) 
 \wbcTT   \Delta \emptyset
 P \sub {a}b$,  with $\compb $  fresh for $P$. 
\end{theorem}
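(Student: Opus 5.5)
We reduce the statement to internal processes and internal bisimilarity. By Lemma~\ref{l:wire_trans} we may replace $P$ by $\transf P$, which is barbed congruent to it. The translation commutes with renaming of I-names, so $\transf{P\sub a b} = \transf P\sub a b$, and the wire $\links a{\compb}$ is left unchanged up to its internal form. Barbed congruence is also compositional, and this lets us pass between $\transf{\cdot}$ of the whole left-hand side and $\resBPP b{\compb}(\links a{\compb}\mid\transf P)$. Hence it suffices to prove, for $P\in\apiE$ with $b$ an I-name owned by $P$ and $a,\compb$ fresh,
\[
\resBPP b{\compb}(\transf{\links a{\compb}}\mid P)\;\wbT{\emptyset}\; P\sub a b,
\]
after which Soundness (Theorem~\ref{t:sound}) yields the claim. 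We work with the internalised wire $\transf{\links a{\compb}} = !\inp a x.\bout{\compb} y:\transf{\ulinks{\comp y}x}$. The inner wires act on names of strictly smaller type, which sets up an induction on the order of the type of $a$. The inner wires $\ulinks{\comp y}x$ may be O-wires or I-wires. O-wires are handled by Lemma~\ref{lem:wire-subst-law_O} and its expansion variant Lemma~\ref{lem:wire-sust-contr-O}. I-wires of lower type are handled by the induction hypothesis together with Lemma~\ref{l:wire-I}.

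For the main step we would define the relation $\R$ of all triples
\[
\bigl(\delta,\ \resBPP b{\compb}(\transf{\links a{\compb}}\mid \outbis{\compb}{c_1}\mid\cdots\mid\outbis{\compb}{c_k}\mid P),\ P\sub a b\mid \outbis{a'}{c_1}\mid\cdots\mid\outbis{a'}{c_k}\bigr),
\]
where the pending messages already forwarded by the wire appear on the right as outputs to $a$'s companion $a'$ (or to a fresh $a'$ added to $\delta$, as in clause (3.b) of Definition~\ref{d:hb-bisimulation}). We then show that $\R$ is an internal bisimulation up-to $\wbS$ and $\expa$ (Lemma~\ref{l:uptowb}). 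Inputs of $P$ at $b$ consuming a pending $\outbis{\compb}{c_i}$ are $\tau$-steps on the left. On the right they are matched by the corresponding use of the queued output, via Lemma~\ref{lem:Api-one}. An input at $a$ on the left is absorbed by the wire, and it is answered on the right either by an input at $a$ of $P\sub a b$ (clause (3.a)) or by no move plus a queued output (clause (3.b)). This is exactly why the relation carries the queue, and why the asynchronous input clause is essential. Bound outputs of $P$ and $\tau$-steps internal to $P$ transfer directly. Conversely, an input at $a$ by $P\sub a b$ is matched on the left by the wire's input followed by the $\tau$ that delivers the forwarded message to $P$.

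In all these cases the forwarded value differs on the two sides. On the left a fresh $y$ is delivered, linked by $\transf{\ulinks{\comp y}x}$ to the received $x$; on the right $x$ is delivered directly. To close up, we apply the induction hypothesis on types (or the O-name law) to remove this residual wire. Doing so requires the strengthening results Lemma~\ref{l:aapIB} and Lemma~\ref{l:is_instance} to discard connections in $\delta$ that are no longer used. The 1-input property guarantees that no one other than $P$ (and the wire) can consume at $b$ or at $a$, so no message can be stolen.

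We expect the main obstacle to be exactly this closure step. The derivative contains the residual wire nested under arbitrary context, and it mixes with the bisimulation's connection set, so the induction hypothesis, which is stated for $\wbcTT\Delta\emptyset$, must be usable inside an up-to argument for $\wbT\delta$. We would first try to formulate the induction hypothesis directly as an internal-bisimilarity statement at arbitrary $\delta$. We would then use Lemma~\ref{l:res} and Lemma~\ref{l:par} to push it under the context, and use Completeness (Theorem~\ref{t:completeness}) only if a passage back from barbed congruence becomes unavoidable.
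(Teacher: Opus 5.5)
Your overall plan follows the paper's. You move to \apiE and internal bisimilarity, and induct on the order of the type carried by $a$, with the hypothesis phrased as an internal-bisimilarity law. Wire inputs at $a$ are absorbed through clause (3.b), and the residual wire left after a forwarded message is consumed is removed by the hypothesis together with Lemmas~\ref{l:res} and~\ref{l:par}.

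The gap is in your candidate relation. You keep the forwarded messages as a separate queue of fixed shape $\outbis{\compb}{c_i}$ (resp.\ $\outbis{a'}{c_i}$), and $P$ itself never mentions $\compb$. In the inductive case where $a$ carries I-names, this shape is not stable. The entry created by the wire is $\transf{\out{\compb}{c}}=\bout{\compb}{y}:\transf{\links{c}{\comp y}}$, and $\transf{\links{c}{\comp y}}$ is a replicated input at the fresh I-name $c$ just received from the environment. So, before $P$ consumes the entry, either side can perform the visible input $c(d)$ from inside it. The other side matches it with the same input; answering via clause (3.b) instead does not help either. Both derivatives then contain $\bout{\compb}{y}:(\transf{\links{c}{\comp y}}\mid\transf{\out{\comp y}{d}})$, respectively the same with $a'$. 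This is not of the form $\outbis{\compb}{c'}$ for any $c'$, so your $\R$ is not closed under transitions and the bisimulation argument breaks at this step.

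The missing idea is the generalisation the paper uses: do not keep the queue separate from $P$, but let $P$ itself mention $\compb$ freely. Concretely, relate $\resBPP b{\compb}(\transf{\links a{\compb}}\mid P)$ with $P\sub{a,\compa}{b,\compb}$ at connection set $a \sepRES \compa$, for $P\in\apiE$ and $a$ fresh for $P$. Then a wire input $a(c)$ is answered by clause (3.b.i) simply by absorbing $\transf{\out{\compb}{c}}$ into $P$. Every later evolution of a forwarded message, such as the input at $c$ above, becomes an ordinary action of $P$. The statement at $\emptyset$ then follows from Lemma~\ref{l:aapIB}.

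A smaller point concerns the up-to technique. For I-name payloads, the residual-wire step uses the lower-type I-name law. That law is only available as $\wb$; it has no $\wbS$ or $\expa$ counterpart, since it fails in \Api. So you need the `up to $\wb$' variant of Lemma~\ref{l:uptowb}, not `up to $\wbS$ and $\expa$'.
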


\begin{proof}
We first 
 translate the statement  to internal processes, via
Lemma~\ref{l:wire_trans}, and then we  apply internal bisimilarity and
\iflong
Corollary~\ref{c:sound}.
 \else
Theorem~\ref{t:sound}.
\fi
Thus the result to be proved becomes: 
\[ 
\resBPP b{\compb } ( \transf{ \links a{\compb }} | P ) 
 \wbTT   \Delta \emptyset
 P \sub {a}b \; , \mbox{ with $\compb $  fresh for $P$ and $P \in$ \apiE.}
\] 
For readability, we omit $ \Delta $ below. 
We proceed by induction on the order of the type of names $a,\compb $ in the wire. 
Precisely, what we assume in the induction is the
general statement of Theorem~\ref{t:wire-subst-law}, tailored to 
internal bisimilarity and internal processes, 
namely 
\begin{equation}
\label{e:ipT}
\resBP c  (\transf{ \ulinks {\comp c }d}| Q ) 
 \wb 
 Q \sub d{c }  \mbox{ with $\comp c$  fresh for $Q$ and $Q \in $ \apiE. }
\end{equation} 
(which, for O-names, has been proved  in Lemma~\ref{lem:wire-subst-law_O} above).

Since the proofs for the base case and the inductive case empoly a similar case analysis, we shall prove them together, explicitly indicating where they differ.
Here, the base case is where $a,\compb $  may only carry unit values and the inductive case is where $a,\compb $   carry I-names or O-names.
For this, 
 we prove a   stronger statement, namely: 
$$\resBPP b{\compb } ( \links a{\compb }| P ) 
 \wbTT \Delta {a\sepRES \compa }
 P \sub {a,\compa }{b,\compb } \mbox{  where $a$  is  fresh for $P$.}$$ 
(the result is stronger because the basic result follows from it in
the case when $\compa, \compb $
do not appear in $P$, and using Lemma~\ref{l:aapIB}).

We consider the relation $\R$  with all elements of the form 
\[
( 
a \sepRES \compa , 
\resBPP b{\compb } ( 
 \transf{ \links a{\compb }} | P ),  
 P \sub {a,\compa }{b,\compb } )
\]
where $a$   fresh for $P$  and $P 
\in \apiE$, 
and show that this is an internal  bisimulation   up to 
$ \wb { }$ (Lemma~\ref{l:uptowb}). 
Let $$
\begin{array}{rcl}
Q_1 &\defi& 
\resBPP b{\compb } ( 
 \transf{ \links a{\compb }} | P ) \\
Q_2 &\defi&  P \sub {a,\compa }{b,\compb } \\
\delta  &\defi& a\sepRES \compa
\end{array} 
$$

The case of an action from $Q_1 $ or $Q_2$ emanating from an action from $P$ 
that does not involve names $\compa ,b,\compb $ is easy, appealing to the  results about invariance for
transitions with respect to substitutions. This case includes visible actions, as well as
silent actions originating from an interaction within $P$ between names $b$ and $\compb $.
The remaining cases are: an interaction in $Q_1$ between an input at $a$ and an output at
$\compa $; an interaction in $Q_2 $ originated from an output at $\compa $ and an input at $b$ in
$P$; an input at $a$ from $Q_1$; an input at $a$ from $Q_2$.
We consider these cases below. 
We  exploit Lemma~\ref{l:mm}(1), which tells us that, if $P$ can make an output at $\compa $,
then  
\begin{equation}
\label{e:boP}
P\equiv \res\tilz  ( \resBP{c} (\out {\compa } c | \ulinks {\comp c }b) | P_1) 
\end{equation} 
with $c$ fresh for $P$.
Or, in the case where \( a \), \( \comp b \) carry the unit value, we have
\[
P\equiv \res\tilz  (\outC {\compa }  | P_1)
\]

\begin{itemize}
\item  
Interaction in $Q_1$ between an input at $a$ and an output at
$\compa $.

We have 
\[ 
Q_1 \equiv 
\resBPP b{\compb }  ( 
 \transf{ \links a{\compb }} | 
\res\tilz   \resBP{c} (
\out {\compa } c | \ulinks {\comp c }b| P_1)) 
\]
and therefore the interaction is 
\[
\ltsBEG {\delta }
 Q_1 \arr \tau 
\equiv 
\resBPP b{\compb } 
( \transf{ \links a{\compb }} |
\res\tilz  ( \resBP{c}  ( 
    \transf{\out {\compb } c}| 
 \ulinks {\comp c }b)| P_1)) \defi Q_1'
\]
Process $Q_2 $ need not move, as 
\[
\begin{array}{rcl}
 Q_2  &\equiv & 
\res\tilz  ( \resBP{c}  ( 
    {\out {\compa } c}| 
 \ulinks {\comp c }a)| P_1 \sub {a,\compa }{b,\compb })
\\
& \wb& 
\res\tilz  ( \resBP{c}  ( 
    \transf{\out {\compa } c}| 
 \ulinks {\comp c }a)| P_1 \sub {a,\compa }{b,\compb })
 = Q_1' 
\sub {a,\compa }{b,\compb } 
\end{array}
     \]
\item 
 Interaction in $Q_2 $ originated from an output at $\compa $ and an input at $b$ in
$P$.

For the base case, we have
\[
P \equiv
\res\tilz  (
\outC {\compa }  \mid \inpC b . P_2 |
P_1) \defi P'
\]
and the interaction is
\[
\ltsBEG {\delta }Q_2  \arr\tau \equiv
\res\tilz  (P_2 |  P_1)
 \sub {a,\compa }{b,\compb } \defi Q_2'
\]
The matching transition for \( Q_1 \) can be given as follows
\begin{align*}
\ltsBEG {\delta }
Q_1 &\equiv \resBPP b{\compb }  (!\inpC a . \outC{\comp b} |  P') \\
&\arr\tau
\resBPP b{\compb }  \res\tilz( !\inpC a . \outC{\comp b} |
 \outC{\comp b} | \inpC b. P_2 | P_1)  \\
&\arr\tau
  \resBPP b{\compb }  \res\tilz( !\inpC a . \outC{\comp b} |  P_2 | P_1).
\end{align*}

Similarly, for the step case, we have
\[ 
P \equiv 
\res\tilz  ( \resBP{c}  ( 
\out {\compa } c | \ulinks {\comp c }b)| \inp by. P_2 | 
P_1) \defi P'
\]
and  the interaction is 
\[ 
\ltsBEG {\delta }Q_2  \arr\tau \equiv 
\res\tilz   \resBP{c}  ( 
   \ulinks {\comp c }b| P_2 \sub c y ) |  P_1)
 \sub {a,\compa }{b,\compb } \defi Q_2'
\]
On $Q_1$'s side we have: 
\[
\begin{array}{rcl}
\ltsBEG {\delta }
Q_1& \equiv & \resBPP b{\compb }   ( 
 \transf{ \links a{\compb }} |  P') \\
& \arr\tau& 
\resBPP b{\compb }  \res\tilz   \resBP{c} (
 \transf{ \links a{\compb }} |
 \transf{ \out{\compb } c} | 
\inp by. P_2 | 
 \ulinks {\comp c }b| P_1)  \\
& \equiv& 
\resBPP b{\compb }  \res\tilz   \resBP{c} (
 \transf{ \links a{\compb }} |
 \resBP{d}( \out{\compb } d |   \ulinks {\comp d }c)
| \inp by. P_2 | 
 \ulinks {\comp c }b| P_1)  \\
& \arr\tau& 
\resBPP b{\compb }  \res\tilz   \resBP{c} (
 \transf{ \links a{\compb }} |
 \resBP{d}(    \ulinks {\comp d }c
| P_2  \sub  d y ) |  
 \ulinks {\comp c }b|  P_1)   \\
& \equiv& 
\resBPP b{\compb }  (
 \transf{ \links a{\compb }} |
 \res\tilz   \resBP{c} (
 \resBP{d}(    \ulinks {\comp d }c |  P_2  \sub  d y ) |  
 \ulinks {\comp c }b|  P_1)   \\
\end{array} 
\]
We can now appeal to the induction hypothesis 
\reff{e:ipT} and derive (as $d\not\in\fn{P_2}$)
\[
 \resBP{d}(    \ulinks {\comp d }c |  P_2  \sub  d y ) 
\wb 
  P_2  \sub  d y  \sub cd  =    P_2  \sub  c y 
 \]
We can then continue above, exploiting the substitutivity properties of $\wb
 $: 
\[
\begin{array}{rcl}
&
\wb
& 
\resBPP b{\compb }  (
 \transf{ \links a{\compb }} |
 \res\tilz   \resBP{c} (
P_2  \sub  c y |
 \ulinks {\comp c }b|  P_1)    \defi Q_1'
\end{array}
\]
and now we are done, as 
$Q'_1 \; \R_\delta \; Q_2'$.
\item 
Input at $a$ from $Q_1$, say $a(c)$.

The derivative is  $
Q'_1 \equiv 
\resBPP b{\compb } ( 
 \transf{ \links a{\compb }} |   (\transf{ \out{\compb } c} |
P)) \defi Q_1''
$.
In this case, $Q_2$ need not move, using clause 3.b.i of
Definition~\ref{d:hb-bisimulation}, as 
  $Q_1''\; \R_\delta \; 
(\transf{ \out{\compb }c} |
P)   \sub {a,\compa }{b,\compb }
$.

\item 
    Input at $a$ from $Q_2$. For this case, we split the base and the inductive case.

The base case is where the input action is simply \( a \).
It must be the case that \( P \equiv \res  \tilz ( \inpC b  . P_1 | P_2) \), and the derivative for $Q_2$ is \( \res  \tilz (  P_1  | P_2) \sub {a,\compa  }{b,\comp b} \).
We have an obvious matching transition for \( Q_1 \):
\begin{align*}
  \ltsBEG {\delta } Q_1
  &\equiv   \res  \tilz \resBPP b{\compb } (! \inpC a . \outC{\comp b} | \inpC b  . P_1 | P_2)\\
  & \arr{\inpC a} \res  \tilz \resBPP b{\compb } (! \inpC a . \outC{\comp b} | \outC {\comp b} | \inpC b  . P_1 | P_2)\\
  &\arr\tau \res  \tilz \resBPP b{\compb } (! \inpC a . \outC{\comp b} |  P_1 | P_2).
\end{align*}

The inductive case is the case where \(a \) carries a name, that is, the case where the action is $a(c)$.
This means that $P \equiv \res  \tilz ( \inp b y . P_1 | P_2)$, and the derivative for
$Q_2$ is 
 $\res  \tilz (  P_1 \sub c y  | P_2) \sub {a,\compa  }{b,\compb }$. 
Process $Q_1$ can answer as follows: 
\[ 
\begin{array}{rcl}
\ltsBEG {\delta }
Q_1 & \equiv & 
 \res  \tilz \resBPP b{\compb } ( 
 \transf{ \links a{\compb }} | 
 \inp b y . P_1 | P_2)
\\
& \arr{\inp a c} & 
 \res  \tilz \resBPP b{\compb } ( 
 \transf{ \links a{\compb }} |   \transf{\out{\compb }c} |
 \inp b y . P_1 | P_2) \\
&\equiv & 
 \res  \tilz \resBPP b{\compb } ( 
 \transf{ \links a{\compb }} | 
 \resBP{d}(   
\out{\compb }  d |
 \ulinks {\comp d }c)
|  \inp b y . P_1 | P_2) \\
&\arr\tau & 
 \res  \tilz \resBPP b{\compb } ( 
 \transf{ \links a{\compb }} | 
 \resBP{d}(   
 \ulinks {\comp d }c
|  P_1 \sub dy) | P_2) \\
&\wb  & 
 \resBPP b{\compb } 
( \res  \tilz
( 
 \transf{ \links a{\compb }} | 
 P_1 \sub cy 
| P_2) )
\end{array} \]
where the equality 
$ \resBP{d}(   
 \ulinks {\comp d }c
|  P_1 \sub dy) 
\wb 
 P_1 \sub cy 
$ is again 
derived from  the inductive 
 hypothesis 
\reff{e:ipT}. 
 \end{itemize} 
\end{proof}

\section{Auxiliary Results for Section~\ref{s:expr}}
\label{a:expr}

\subsection{Encoding of Types}
We first  describe the translation of types.
In simply-typed \ALpi, the grammar of value types is:
\[  T \Coloneqq \Och \ty \mid \unittype   \]
and a typed  restriction is therefore 
$(\res {a: \Bch \ty }) P $, where  $ \Bch \ty$ is the type of a name that could be used
both in input and in output.
When translating such a process, types of the names $\comp a$, $\comp b$,  
and $\comp c$  that are thus introduced in Figure~\ref{fig:fromALpi} of Section~\ref{s:expr} are, respectively,
$\Ich \ty$, $ \Ich{\Och \ty}$, and $ \Ich{(\Ich \ty \times  \Ich{\Och
    \ty})}$ (where $T_1 \times T_2$ indicates a pair of types); and those of the companion names $a,b$ and $c$ are
their dual.
Finally, the name $d$ introduced in the encoding of replication
has type $\Ich \unittype $.

\subsection{Correctness}
In the light of the  correspondence between \api and 
 ordinary asynchronous
$\pi$-calculus (\Api), in Section~\ref{ss:trans},
  we first  show that the  transformation applied in the
encoding $\encoL{\cdot}{}$ of Figure~\ref{fig:fromALpi}
 is valid within \Api, exploiting the
theory of \Api (in fact, we use the theory of the ordinary
$\pi$-calculus). Thus we consider the following encoding $\encoLpi{\cdot}{}$, \emph{from \Api onto
itself}, where function $f$ is a finite function on names as for the encoding
$\encoL{\cdot}{}$:
\[
\begin{array}{rclcrcl}
\multicolumn{7}{c}{ 
\encoLpi{\res a P } f  \defeq
(\res{a,  b,  c })
( 
\out c {a, b}  |
! \inp {c} {a,b}. 
\inp {a}x.\inp {b}y.
(\out c{a,b} |
\out yx )| 
 \encoLpi P {f,a\mapsto b}
)
}\\[10pt]
\encoLpi{ \inp a y . P  } f  & \defeq &  \bout  {f_a} z: \inp z y. \encoLpi P f & &
\encoLpi{!  \inp a y . P  } f  & \defeq &
\res d (\outC {d } | ! d . (\outC { d} | \encoLpi { \inp a y .P} f  )
\\[\tkp]
\encoLpi{ \out  a  b  } f  & \defeq & \out a b  & &
\encoLpi{ P|Q  } f  & \defeq & \encoLpi{ P  } f | \encoLpi{ Q  } f
 \end{array} 
\] 
To reason on the correctness of such an encoding, we employ simulation
equivalence, $\wse$ (Section~\ref{aa:backpi}), rather than
bisimilarity. The reason is that 
 in the encoding a single step (an interaction at
a name $a$, say), is broken into multiple steps (3 steps: a value
emitted at
$a$  is consumed; a process makes a request
for that value; the value is  finally received). These
steps may sometimes yield `partial commitments', as
 the value to be used and the
process that  receives it are identified in separate steps. It is
known in concurrency that partial commitments 
may make bisimilarity  too demanding, and therefore coarser relations are
employed.

We begin by removing the name $d$ introduced to handle the
replication. We can do so using the law
\[
\res d ( \outC d  | !  d . (\outC d | P)
\wse  
! \tau . P  \hskip 2cm   \mbox{ for $d \not \in \fn P$}
 \]
Then we can remove the $\tau$, using the law
\[
\tau . P \wse P 
 \]
We can also  remove the communication of values at the name $c$
introduced in the replication for the encoding of a restriction
server, since these values are always the same and, as we are in \Api,
we are not tight to the 1-input property. 
After such transformations, we derive the following
 simplified  encoding $ \encoLpiD{\cdot}{}$:
\[ 
\begin{array}{rclcrcl}
\multicolumn{7}{c}{ 
\encoLpiD{\res a P } f  \defeq
(\res{a,  b,c })
( \outC c | 
\begin{array}[t]{l}
! c. \inp {a}x.\inp {b}y.( \outC c  | \out yx) 
 |  \encoLpiD P {f,a\mapsto b} )
\end{array}
}\\[10pt]
\encoLpiD{ \inp a y . P  } f  & \defeq &  \bout  {f_a} z: \inp z y. \encoLpiD P f & &
\encoLpiD{!  \inp a y . P  } f  & \defeq &
 ! \encoLpiD { \inp a y .P} f  
\\[\tkp]
\encoLpiD{ \out  a  b  } f  & \defeq & \out a b  & &
\encoLpiD{ P|Q  } f  & \defeq & \encoLpiD{ P  } f | \encoLpiD{ Q  } f
 \end{array} 
\] 
For a $\pi$-calculus process $Q$, we write $\Qab Q$ for the process
resulting from $Q$ by replacing all occurrences of an input at $a$,
say $\inp a y$,  with  $\bout b z:\inp z y$, for $z$ fresh. 
Using  this notation, the final step is the following result (still
within \Api). 
 (As in \api, so in other $\pi$-calculi a
\emph{closed process} is one whose only free names are success names.)

\begin{lemma}[in \Api]
\label{l:aux_pi}
For all \Api closed process $Q$  with $b,c$ fresh for $Q$, we have:
\[
(\res{a} ) Q \wse 
(\res{a,b,c} ) 
( \outC c | !c. a (x). \inp b y . (\outC c | \out y x) |
\Qab   Q
  )
 \]
\end{lemma}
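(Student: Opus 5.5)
We prove the two directions of $\wse$ separately. Each direction is witnessed by an explicit weak simulation on closed \Api processes; we then close under the usual up-to techniques, namely up to $\equiv$ and up to expansion. Write $S \defi \,!c.\, a(x).\inp b y.(\outC c | \out y x)$ for the server. A general state of the right-hand process has the form
\[
(\res{a,b,c})\big( M \mid \Qab{Q'} \mid \textstyle\prod_i \out a{v_i} \mid \prod_j \out b{z_j} \mid \prod_k \out {z_k} {w_k} \big),
\]
and it carries a bookkeeping invariant. Exactly one of the following holds. (i) $\outC c$ is present and $M = S$. (ii) The server has consumed $c$ and is waiting at $a$. (iii) The server has received some $v$ at $a$ and is waiting at $b$; this $v$ is a ``reserved'' message. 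Outputs $\out b {z}$ are pending requests issued by the translation of inputs at $a$. Each request name $z$ is guarded by a continuation $\inp z y. \Qab{R}$ that sits inside $\Qab{Q'}$.

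Both relations are indexed by a decoding map. For the left-to-right simulation, relate $(\res a)Q'$ to the right-hand states in which the server is idle, i.e.\ case (i), and there are no pending requests or in-flight deliveries. A step of $Q'$ falls into two kinds. If it does not involve $a$, it is matched directly, because $\Qab\cdot$ is a homomorphism away from inputs at $a$. If it is a communication $\out a v \mid \inp a y.R \to R\sub v y$, it is matched by a fixed five-step sequence: $c$ is consumed; $\out a v$ is consumed; the translated input emits its bound request $\bout b z$; the server receives it on $b$ and emits $\out z v$ together with $\outC c$; finally $\inp z y$ receives $v$. After these steps the server is idle again, and the result is $\Qab{R\sub v y}$ plus garbage, up to $\equiv$. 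Replicated inputs are handled similarly, using the usual unfolding. Success barbs are untouched by the translation, so they are matched trivially.

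For the right-to-left simulation we relate each right-hand state to a left-hand process $(\res a)Q''$, obtained by decoding the intermediate states optimistically. Pending requests $\bout b z:\inp z y.\Qab R$ decode back to $\inp a y.R$. A reserved value $v$ is decoded to $\out a v$, even though the server has already removed it from $a$. A delivered but not yet consumed $\out z w$, paired with its continuation, decodes to $R\sub w y$; that is, we take the commitment as already done. The left process then simulates each right step with zero or one step. The only step that needs a real left move is the one where the server, holding the reserved $v$, receives a request $z$. The decoding has $v$ as an output at $a$ and the requester as an input at $a$, so the left side performs exactly that communication. Every other right step is a zero-step answer, because the decoding is unchanged up to $\equiv$. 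Success barbs need a check: every barb of the right state must be reachable in the decoded left state. This holds because the decoding only performs communications ahead of time, and never removes outputs at success names.

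The main obstacle is the right-to-left direction. A right state can reach a point where $v$ has been reserved by the server while no requester ever arrives. The decoding maps $v$ to a free output $\out a v$, but the right state cannot emit $v$ to anyone other than the future requester. This mismatch is why only similarity holds, and we must check that the decoded left process is never required to match a right move it cannot perform. The cleanest way to handle this is to make the relation a simulation only, and to verify carefully that the right side never gains a barb or a reduction that the decoding lacks. If this bookkeeping becomes heavy, first prove the result for the version of the server without the $c$-token, and then add the token via an expansion argument.
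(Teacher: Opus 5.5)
Your proposal is correct and follows the paper's proof: one direction uses the pairs $((\res a)Q, (\res{a,b,c})(\outC c \mid S \mid \Qab{Q}))$ modulo $\equiv$, and the converse uses a relation indexed by the server's three states, where the value reserved by the server is decoded back as $\out a v$ on the simulating side, checked as a similarity up to expansion. The only difference is cosmetic: you also decode in-flight deliveries $\out z w$ explicitly, whereas the paper absorbs that private $z$-communication by expansion on the challenger side. There are two small slips. The matching sequence has four reductions, not five, since the request $\bout b z$ is already unguarded. Internal steps of the translated process that do not involve the server still need the corresponding left step, contrary to your ``zero-step'' remark.
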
 

\begin{proof}
To show that 
$
(\res{a} ) Q $
 can be  simulated by 
$(\res{a,b,c} ) 
( \outC c | !c. a (x). \inp b y . (\outC c | \out y x) |
\Qab  Q )$
it is sufficient to use the relation with all pairs of this form,
modulo structural congruence. 
For the converse 
we take the relation with pairs of the following forms: 
\[ 
\begin{array}{l}
\big( \; 
(\res{a,b,c} ) 
( \outC c | !c. a (x). \inp b y . (\outC c | \out y x) |
\Qab   Q),  (\res{a} ) Q \; \big) \\
\big( \; 
(\res{a,b,c} ) 
(   a (x). \inp b y . (\outC c | \out y x)
|
 !c. a (x). \inp b y . (\outC c | \out y x) |
\Qab  Q ),  (\res{a} ) Q \; \big) \\
\big( \; 
(\res{a,b,c} ) 
(    \inp b y . (\outC c | \out y v)
|
 !c. a (x). \inp b y . (\outC c | \out y x) |
\Qab  Q  ),  (\res{a} )( \out a v |  Q)  \; \big) 
\end{array} 
\]
and show that this a similarity up to expansion (that is, on the
left-hand side, the challenger side, we can make use of the expansion
relation $\contr$).
\end{proof}

Putting together all results above, we have:

\begin{corollary}
\label{c:expr_pi}
For each  closed \ALpi process $P$, it holds that $P \wse \encoLpi P \emptyset $.
\end{corollary}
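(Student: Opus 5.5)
The plan is to obtain the corollary as a chain of simulation equivalences, all established inside \Api. The chain runs $P \wse \encoLpiD P \emptyset \wse \encoLpi P \emptyset$. It relies on the fact that $\wse$ is an equivalence, and on the fact that weak similarity is preserved by the operators of \Api used in the encodings: parallel composition, restriction, (replicated) input prefix and bound-output prefix. The latter is the standard precongruence property of similarity in asynchronous calculi. I will use it freely to rewrite inside contexts. Since success names are untouched by $\encoLpi{\cdot}{}$, the closed processes on both sides have the same observables.

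\emph{First step: relate $\encoLpi P \emptyset$ to $\encoLpiD P \emptyset$.} This goes by structural induction on $P$ and uses the three laws listed before the simplified encoding, each applied under a context.
\begin{itemize}
\item The law $\res d ( \outC d \mid ! d . (\outC d \mid R)) \wse ! \tau . R$, for $d \notin \fn R$, removes the trigger name of the replication clause.
\item The law $\tau . R \wse R$ removes the resulting $\tau$.
\item The third law drops the constant payload $a,b$ carried on $c$ in the restriction server. It is valid in \Api because, after the server is reconstituted, the same restricted names $a,b$ are always sent and received. I would verify it by exhibiting the obvious relation, which identifies $\out c{a,b}$ with $\outC c$ and the corresponding server states, as a strong bisimulation up to structural congruence.
\end{itemize}
Each law is closed under the contexts generated by the encoding clauses, so the inductive step is immediate.

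\emph{Second step: relate $\encoLpiD P \emptyset$ to $P$.} Here I will use Lemma~\ref{l:aux_pi}. The clauses of $\encoLpiD{\cdot}{}$ for input, output, parallel composition and replication are homomorphic apart from the substitution of inputs $\inp a y$ by $\bout {f_a} z : \inp z y$. That substitution is exactly the operation $\Qab{\cdot}$ performed for the restricted name $a$. Hence, for $P = \res a Q$, one has $\encoLpiD{\res a Q}{f} = (\res{a,b,c})(\outC c \mid !c.\inp a x.\inp b y.(\outC c \mid \out yx) \mid \Qab{(\encoLpiD Q {f})})$, up to the order of the two translations. I will argue that the two translations commute, since they act on disjoint names.

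The proof then proceeds by induction on the number of restriction operators in $P$, translating the innermost restriction first. At each stage, Lemma~\ref{l:aux_pi} replaces the current $\res a Q$ by its server form. The surrounding context is then handled by the precongruence property.

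\emph{Main obstacle.} Lemma~\ref{l:aux_pi} is stated for closed $Q$, whereas inner restrictions sit inside open subterms, possibly under input prefixes and replications. I would first try to strengthen the lemma to open $Q$, quantifying over all name instantiations. Its simulation relations are built up to structural congruence and expansion, and nothing in them uses closedness beyond the observables. So it should suffice to close the relations under arbitrary surrounding \Api contexts, that is, to show that the relations of that proof remain similarities up to expansion when placed in parallel with an arbitrary process and under restrictions. This yields an open version of the lemma together with its context closure.

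If that generalisation proves awkward, the fallback is to work with the closed process directly. Restrictions not under prefixes can be extruded to the top by structural congruence. Restrictions under prefixes are created dynamically, so for these I would argue by a single global simulation up to expansion. That simulation would relate each reachable state of $P$ to the state of $\encoLpiD P\emptyset$ in which every currently existing restriction has been replaced by its server. Its clauses are those of Lemma~\ref{l:aux_pi}, repeated per server.
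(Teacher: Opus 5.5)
Your route is the paper's. The three laws take $\encoLpi P \emptyset$ to the simplified encoding $\encoLpiD P \emptyset$, and Lemma~\ref{l:aux_pi} then removes the restriction servers; the paper's proof is just ``putting together all results above''. You are right that this last step is not immediate. Lemma~\ref{l:aux_pi} is stated for closed terms, whereas restrictions of $P$ may occur under input prefixes and replications, and the paper does not address this.

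\textbf{Your first remedy fails.} The open version of Lemma~\ref{l:aux_pi} is false for the ground weak similarity used in the paper, so closedness is used for more than the observables. Take $Q=\overline{e}\langle a\rangle$ with $e$ free.
\begin{itemize}
\item $(\nu a)Q$ performs $\overline{e}(a)$ and becomes $\nil$.
\item The server form $(\nu a,b,c)(\overline{c}\mid {!}c.\,a(x).\,b(y).(\overline{c}\mid\overline{y}\langle x\rangle)\mid\overline{e}\langle a\rangle)$ performs $\overline{e}(a)$, then a $\tau$ at $c$, and then a visible input $a(x)$ that $\nil$ cannot match.
\end{itemize}
So the server form is not simulated by $(\nu a)Q$, and the open equivalence you would need in order to rewrite under prefixes does not hold. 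Such a $Q$ does occur inside closed terms, for instance as the continuation of an input prefix with $e$ restricted further out.

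In a closed term the problem disappears. Whoever learns $a$ stays inside the scope of $(\nu a)$, so the server's input at $a$ is a $\tau$ that consumes some $\overline{a}\langle v\rangle$ of the term. That is exactly what the second and third clauses of the lemma's relation absorb, by keeping $\overline{a}\langle v\rangle$ pending on the \ALpi{} side.

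\textbf{Your fallback is the argument to carry out.} Build one global simulation (up to expansion) for each direction on closed processes. Instantiate the lemma's clauses independently for each currently existing restriction, including the copies spawned by replications. An alternative is to do the open-term reasoning with asynchronous similarity, whose input clause lets $(\nu a)Q$ answer the server's $a(x)$ by idling and adding $\overline{a}\langle x\rangle$. You would then also have to establish that it is a precongruence and that it coincides with ordinary similarity on closed processes.

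A minor point: for your commutation claim, you must first extend $\encoLpiD{\cdot}{f}$ so that it leaves inputs at names outside the domain of $f$ unchanged.
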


Finally, putting together Corollary~\ref{c:expr_pi} and
Lemma~\ref{l:api_pi}, we 
derive Theorem~\ref{t:expr_pi} in the main   text; 
we recall its assertion:
\begin{center}
Suppose  $P$ is a closed \ALpi process. Then
$ \encoL P \emptyset  \wse P$. 
\end{center}

\begin{proof}
We have:
\[ \encoL P \emptyset \sbS \encoA{\: \encoL P \emptyset \: } = 
\encoLpi P \emptyset \wse P 
  \]
\end{proof}

\iflong
A further corollary is the full abstraction of the encoding for barbed
bisimulation.
\begin{corollary}
\label{c:bb_fa}
For closed   \ALpi processes $P,Q$, it holds that
$P \wbb Q$ iff 
$ \encoL P \emptyset  \wbb \encoL Q \emptyset$. 
\end{corollary}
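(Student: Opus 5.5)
The plan is to reduce both directions of the corollary to the \Api{} level and there compare barbed bisimilarity of $P$ with that of $\encoLpi P \emptyset$. This uses the same chain as in the proof of Theorem~\ref{t:expr_pi}. By Lemma~\ref{l:api_pi} we have $\encoL P \emptyset \sbS \encoA{\encoL P\emptyset} = \encoLpi P\emptyset$. Strong bisimilarity on closed processes, whose only actions are $\tau$ and outputs at success names, is contained in $\wbb$. Hence it suffices to show that $P \wbb Q$ iff $\encoLpi P\emptyset \wbb \encoLpi Q\emptyset$ for closed \ALpi{} processes. I would then follow the successive simplifications used for Corollary~\ref{c:expr_pi}, checking that each preserves and reflects $\wbb$.

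The first simplifications are unproblematic. Removing the trigger name $d$ of replications yields a law that holds up to $\wbS$. The same is true of dropping the resulting $\tau$ prefix and of erasing the constant values exchanged at $c$ in the restriction server. Since $\wbS$ is a congruence in \Api{} and $\wbS \subseteq \wbb$ on closed processes, these steps preserve and reflect barbed bisimilarity (by transitivity of $\wbb$). This leaves the simplified encoding $\encoLpiD{\cdot}{}$, where each restriction $\res a$ is replaced by the server $!c.\,a(x).\inp b y.(\outC c \mid \out y x)$ and every input at $a$ by a request $\bout b z:\inp z y$, as in Lemma~\ref{l:aux_pi}. The core step is therefore a strengthening of Lemma~\ref{l:aux_pi}. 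Writing $S_a(Q)$ for the right-hand side of that lemma, I want: for closed $Q_1,Q_2$, $\res a Q_1 \wbb \res a Q_2$ iff $S_a(Q_1)\wbb S_a(Q_2)$. This has to hold uniformly for iterated servers, so I would prove it by induction on the number of restrictions, or directly for all of them at once.

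For this core step I would use a coupled-simulation style argument rather than a plain bisimulation. I would classify the states of $S_a(\cdot)$ as the three shapes listed in the proof of Lemma~\ref{l:aux_pi}: \emph{idle}, \emph{ready}, and \emph{committed}. A committed state is one where the server has consumed some $\out a v$ and waits on $b$. I would define a relation containing pairs $(M,N)$ where $M,N$ are encoding states of source processes $R_1 \wbb R_2$. Committed states are treated as follows. A committed state always has a $\tau$-completion whenever a requester is present. When no requester is present, the committed value $v$ is just a pending message, matching $\res a(\out a v\mid\cdot)$ on the source side, as in the third pair of the lemma. Challenges from committed states are answered by first advancing the partner to a corresponding committed state, or by completing the pending delivery. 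The forward direction of the iff uses the source-to-encoding simulation of Lemma~\ref{l:aux_pi}; the converse composes the encoding-to-source simulation with the hypothesis.

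The main obstacle is precisely the partial commitment. Once the server has chosen $v$ among several pending outputs at $a$, the source option of delivering a different $w$ first is lost. This is why only $\wse$ was claimed in Theorem~\ref{t:expr_pi}. To handle it, I would first try to show that committing to $v$ is unobservable. Receivers at $a$ can be served in any order and all pending messages are eventually offered to later requesters. Together with asynchrony, this means every outcome reachable in the source is still reachable, up to $\wbb$, from the committed state. Concretely, I would aim to prove that a committed state is barbed bisimilar to the idle state obtained by putting $\out a v$ back. If this fails in general, for instance with a single receiver and two distinct pending values, I would fall back to proving only the forward implication via coupled simulation, and isolate exactly which source contexts break the converse.
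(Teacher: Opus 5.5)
\textbf{There is a genuine gap, and it cannot be closed: the statement is false in both directions.} Your reduction to \Api{} (via Lemma~\ref{l:api_pi} and the simplifications behind Corollary~\ref{c:expr_pi}) is harmless. Everything then rests on your strengthened Lemma~\ref{l:aux_pi}, namely that wrapping with the restriction server preserves and reflects $\wbb$, and this is false rather than merely hard. Your planned key step breaks exactly where you suspect: once the server has committed to $v$ and only one receiver exists, one outcome is lost. Your fallback also fails, because the forward implication itself is false.

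\emph{Counterexample to the forward implication.} With success names $s_1,s_2,n,m$, take
\[
P = (\nu a)\big((\nu v_1)(\overline{a}\langle v_1\rangle \mid v_1.\overline{s_1}) \mid (\nu v_2)(\overline{a}\langle v_2\rangle \mid v_2.\overline{s_2}) \mid a(y).\overline{y} \mid (\nu e)(\overline{e} \mid e.\overline{n} \mid e.(\overline{m} \mid a(y).\mathbf{0}))\big)
\]
and $C = (\nu e_1)(\overline{e_1} \mid e_1.\overline{s_1} \mid e_1.\overline{s_2}) \mid (\nu e_2)(\overline{e_2} \mid e_2.\overline{n} \mid e_2.\overline{m})$.
\begin{itemize}
\item In \ALpi{}, $P \wbb C$. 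Both are two independent internal choices: $s_1$ versus $s_2$ (which value reaches the revealing receiver), and $n$ versus $m$. If the discarding receiver appears and grabs a value, it merely forces the first choice.
\item In $\encoL{P}{\emptyset}$, the server for $a$ can take $v_1$ before serving any request. From that state the reachable terminal barb sets are exactly $\{s_1,n\}$, $\{s_1,m\}$ and $\{s_2,m\}$: getting $s_2$ requires the discarding receiver, hence $m$, to absorb $v_1$ first.
\item All these processes terminate. So if two processes are related by $\wbb$, every state reachable from one is related to a state reachable from the other, and related states reach the same family of terminal barb sets.
\item In $\encoL{C}{\emptyset}$ each server only ever holds one value, so nothing is pre-committed. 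Every reachable state has as its family of terminal barb sets a product of a family over $\{s_1,s_2\}$ and one over $\{n,m\}$, which has size $1$, $2$ or $4$, never $3$.
\end{itemize}
Hence $\encoL{P}{\emptyset} \wbb \encoL{C}{\emptyset}$ fails.

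\emph{The converse also fails.} Let $D$ be a process whose restricted channels each carry a single value, and which mimics the committed state above. Let $Q$ be an internal choice between $P$ and $D$. Then $\encoL{P}{\emptyset} \wbb \encoL{Q}{\emptyset}$, but $P \wbb Q$ fails. Consequently no coupled-simulation refinement of your plan can establish either direction.

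\emph{Comparison with the paper.} The paper gives no separate argument for this corollary; it is merely asserted as a further consequence of Theorem~\ref{t:expr_pi}. The mutual similarity $\encoL{P}{\emptyset} \wse P$ proved there cannot transport $\wbb$, precisely because of the partial commitments you identified. What does follow immediately from Theorem~\ref{t:expr_pi} and transitivity of $\wse$ is full abstraction for simulation equivalence: $P \wse Q$ iff $\encoL{P}{\emptyset} \wse \encoL{Q}{\emptyset}$. For $\wbb$ the statement has to be weakened, for instance to $\wse$ or to coupled similarity, rather than proved.
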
 

\fi

\clearpage
\section{Supplementary Materials for Section~\ref{sec:esyntax}}
\label{app:esyntax}
This section provides the details of the extended calculus that we have omitted from Section~\ref{sec:esyntax}.

\subsection{Dynamics}
In Section~\ref{sec:esyntax}, we did not explain how the dynamics of the new constructs are defined.
As expected, we extend the reduction relation with the following rules.
\begin{gather*}
{\pmTuple {x_1, \ldots, x_n} {\tuple {\val_1, \ldots, \val_n}} P}  \red {P \sub {\val_1, \ldots, \val_n}{x_1, \ldots, x_n}} \\[1ex]
{\binCase {\inl \val} {x_1} {P_1} {x_2} {P_2}} \red {P_1 \sub \val {x_1}} \\[1ex]
{\binCase {\inr \val} {x_1} {P_1} {x_2} {P_2}}  \red {P_2 \sub \val {x_2}}
\end{gather*}

The LTS can also be extended accordingly.
We just need to add the \( \tau \)-transitions that correspond to the new reductions.
\begin{gather*}
\trans{\rn{Tup}}\infR { }{\ltsTv \delta {\pmTuple {x_1, \ldots, x_n} {\tuple {\val_1, \ldots, \val_n}} P}  {\tau} {P \sub {\val_1, \ldots, \val_n}{x_1, \ldots, x_n}}}\\[1em]
\trans{\rn{SumL}}\infR { }{\ltsTv \delta {\binCase {\inl \val} {x_1} {P_1} {x_2} {P_2}}  {\tau} {P_1 \sub \val {x_1}}} \\[1em]
\trans{\rn{SumR}}\infR { }{\ltsTv \delta {\binCase {\inr \val} {x_1} {P_1} {x_2} {P_2}}  {\tau} {P_2 \sub \val {x_2}}}
\end{gather*}

\subsection{Typing}
\begin{figure}[tp]
  \begin{flushleft}
    \fbox{\( \tyenv \vdash_v \val : \valty \)}
  \end{flushleft}
\begin{mathpar}
   \inferrule[Const]{\tyenv \ \text{discardable} \quad \baseval \text{ is a constant of type } \btype}{\tyenv \vdash \baseval : \btype}\qquad
   \inferrule[Var]{\tyenv \ \text{discardable}}{\tyenv, x : \valty \vdash_v x : \valty} \\
   \inferrule[Tup]{\tyenv_1 \vdash_v \val_1 : \valty_1 \; \cdots \; \tyenv_n \vdash_v \val_n : \valty_n}{\biguplus_{1 \le i \le n} \tyenv_i \vdash \tuple{\val_1, \ldots, \val_n}: \valty_1 \times \cdots \times \valty_n} \quad
   \inferrule[In1]{\tyenv \vdash_v \val : \valty_1}{\tyenv \vdash_v \inl \val : \valty_1 + \valty_2}\qquad
   \inferrule[In2]{\tyenv \vdash_v \val : \valty_2}{\tyenv \vdash_v \inr \val : \valty_1 + \valty_2}
\end{mathpar}
\bigskip
  \begin{flushleft}
    \fbox{\( \tyenv \vdash P \)}
  \end{flushleft}
  \begin{mathpar}
    \inferrule[Weak]{\valty \text{ discardable } \\ \tyenv \vdash P}{\tyenv, a : \valty \vdash P} \qquad\qquad
    \inferrule[Nil]{ }{\emptyenv \vdash \nil}\qquad\qquad
    \inferrule[Res]{\tyenv, a : \ty, b : \dual \ty \vdash P}{\tyenv \vdash \resB* a b \ty P} \\\
    \inferrule[Par]{\tyenv_1 \vdash P_1 \\ \tyenv_2 \vdash P_2}{\tyenv_1 \uplus \tyenv_2 \vdash P_1 \mid P_2} \qquad\qquad\qquad
    \inferrule[Out]{\ty \in \{ \och{\valty}, \loch{\valty} \} \quad \tyenv \vdash_v \val : \valty }{\tyenv, a : \ty \vdash \out a \val} \\
    \inferrule[In]{\tyenv, a : \ich {\valty}, \seqq b : \valty \vdash P}{\tyenv, a : \ich {\valty} \vdash \inp a {\seqq b}. P} \qquad
    \inferrule[InL]{\tyenv, b : \valty \vdash P}{\tyenv, a : \lich {\valty} \vdash \inp a b. P} \qquad
    \inferrule[RIn]{\tyenv\ \text{copyable} \\ \tyenv, \seqq b : \seqq \ty \vdash P}{\tyenv, a : \ich {\seqq \ty} \vdash  !\inp a {\seqq b}. P} \\
    \inferrule[With]{\tyenv_1, x_1 : \valty_1, \ldots ,x_n : \valty_n \vdash P \\ \tyenv_2 \vdash \val : \valty_1 \times \cdots \times \valty_n}{\tyenv_1 \uplus \tyenv_2 \vdash \pmTuple {x_1, \ldots, x_n} \val P  } \\
    \inferrule[Case]{\tyenv, x_i : \valty_i \vdash P_i \quad i = 1, 2\\ \tyenv' \vdash \val : \valty_1 + \valty_2}{\tyenv \uplus \tyenv' \vdash \binCase \val {x_1} {P_1} {x_2} {P_2}}
\end{mathpar}
\caption{Typing rules for the extended syntax}
\label{fig:typing-extended}
\end{figure}

Here we present the full type system of the extended calculus.

We first introduce some auxiliary definitions.
The set of discardable types and copyable types, which are subsets of value types, are defined as follows:
\newcommand*{\dty}{\underline{D}}
\newcommand*{\cty}{\underline{C}}
\begin{align*}
  \text{(Discardable type)} \quad \dty &\Coloneqq \btype \mid \ich{\valty} \mid \och{\valty} \mid \lich{\valty} \mid \loch{\valty} \mid \dty_1 \times \cdots \times \dty_n \mid \dty_1 + \dty_2 \\
  \text{(Copyable type)} \quad \cty &\Coloneqq \btype \mid \och{\valty} \mid   \cty_1 \times \cdots \times \cty_n \mid \cty_1 + \cty_2 \\
\end{align*}
Note that (non-linear) input channel types are discardable, but not copyable.
This reflects the 1-input property of our calculus.
We also note that \( \lich{\valty} \) and  \( \loch{\valty} \) are discardable, meaning that these are actualy \emph{affine} types.
We say that a type environment \( \tyenv \) is discardable (resp.~copyable) if, for every \( x \in \dom \tyenv \), \( \tyenv(x) \) is discardable (resp.~copyable).

We define the combination of types \( \valtyTwo \uplus \valty \) by
\begin{align*}
  \valtyTwo \uplus \valty \defeq
  \begin{cases}
  \valty & \text{if \( \valtyTwo = \valty \) and \( \valty \) is copyable} \\
  \text{undefined} & \text{otherwise}
  \end{cases}.
\end{align*}
The operator \( \uplus \) is extended to act on type environments in a point-wise manner:
\begin{align*}
  (\tyenv_1 \uplus \tyenv_2)(x) \defeq
  \begin{cases}
    \tyenv_1(x) \uplus \tyenv_2(x) & \text{if \(x \in \dom{\tyenv_1} \cap \dom{\tyenv_2}\)} \\
    \tyenv_1(x) & \text{if \( x \in \dom{\tyenv_1}\), but \( x \notin \dom{\tyenv_2}\)} \\
    \tyenv_2(x) & \text{if \( x \in \dom{\tyenv_2}\), but \( x \notin \dom{\tyenv_1}\)} \\
    \text{undefined} & \text{otherwise}
  \end{cases}
\end{align*}
We write \( \tyenv, x : \valty \) to mean \( \tyenv \uplus (x : \valty)\).

The full list of typing rules are given in Figure~\ref{fig:typing-extended}.

\section{Supplementary Materials for Section~\ref{sec:category}}
\label{app:category}
This section gives the proofs missing in Section~\ref{sec:category}.
In other words, we prove that \( \Proc \), \( \ProcN \), \( \ProcS \) have the structure we explained in Section~~\ref{sec:category}.

\subsection{Compact Closed Structure of \texorpdfstring{\( \Proc \)}{Proc}}
Here we prove the claim that \( \Proc \) is a compact closed category.
We have already shown that \( \Proc \) is a category in Theorem~\ref{thm:proc-is-category}.
The next thing we show is that \( \Proc \) is a symmetric monoidal category.

\subparagraph*{Notation.}
We write \( \ulinks {\seq a } {\seq b} \) for parallel compositions of (unoriented) wires \( \ulinks {a_1}{b_1} \mid \cdots \mid \ulinks {a_n}{b_n}  \) for \( \seq a = a_1, \ldots, a_n\) and \( \seq b = b_1, \ldots, b_n \) where each \( a_i \) and \( b_i \) have the dual type of each other; \( \links {\seq a} {\seq b} \) is defined similarly.

\begin{lemma}
  \label{lem:proc-is-SMC}
  \( \Proc \) is a strict symmetric monoidal category.
\end{lemma}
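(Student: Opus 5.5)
We verify the axioms of a strict symmetric monoidal category directly on representatives. Each required equation is either a structural congruence (so a barbed congruence) or an instance of the wire substitution law, Theorem~\ref{t:wire-subst-law}, in the form already used for Theorem~\ref{thm:proc-is-category}.

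\textbf{Tensor on morphisms.} First we check that $P \otimes Q \defeq P \mid Q$ is well defined on equivalence classes. If $P \sim P'$, then $P \mid Q \sim P' \mid Q$: barbed congruence is closed under parallel composition with a well-typed $Q$ on disjoint names, since $\contexthole \mid Q$ composed with any closing context is again closing. Typability holds by \trans{Par}, because the names of $P$ and $Q$ are chosen disjoint, so their I-names are disjoint.

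\textbf{Bifunctoriality.} Identities go to identities: $\ulinks{\seq a}{\seq b} \mid \ulinks{\seq c}{\seq d}$ is literally the identity on the concatenation. Composition is preserved by the interchange law
\[
\res{\seq x\seq y}\big((P\mid P')\mid(Q\mid Q')\big)\equiv \res{\seq x}(P\mid Q)\mid \res{\seq y}(P'\mid Q'),
\]
which follows from scope extrusion, using that $\seq x$ occurs only in $P,Q$ and $\seq y$ only in $P',Q'$.

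\textbf{Strict monoidality.} On objects, associativity and unit hold on the nose, since concatenation is strictly associative with the empty sequence as unit. On morphisms, $(P\mid Q)\mid R \equiv P\mid(Q\mid R)$ and $P\mid \nil\equiv P$ hold up to structural congruence; here the tensor unit on morphisms is $\nil$, which is the identity on $\emptyenv$. Hence the associator and unitors are identities.

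\textbf{Symmetry.} Take $\gamma_{\bar\Gamma,\bar\Delta}:\bar\Gamma,\bar\Delta\to\bar\Delta,\bar\Gamma$ to be the parallel composition of wires connecting each input-side name to its swapped output-side name. Then:
\begin{itemize}
\item Involutivity $\gamma;\gamma=\id$ is a composite of two layers of wires. Applying Theorem~\ref{t:wire-subst-law} once per internal connection turns one layer of wires into a renaming of the other, leaving the identity wires.
\item Naturality $(P\otimes Q);\gamma = \gamma;(Q\otimes P)$ is proved the same way. Composing $P\mid Q$ with swap wires is, by the wire law, $(P\mid Q)$ with its output names renamed. Composing swap wires with $Q\mid P$ is $Q\mid P$ with its input names renamed. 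Both renamed processes coincide up to $\alpha$-conversion and commutativity of $\mid$, since morphisms are taken up to renaming.
\item The hexagon axioms (with identity associators) reduce to equalities between swap-wire processes. Either side, after eliminating the internal wires via Theorem~\ref{t:wire-subst-law}, is the single layer of wires realising the same permutation.
\end{itemize}

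\textbf{Main obstacle.} The only delicate point is the bookkeeping when applying Theorem~\ref{t:wire-subst-law} repeatedly to a layer of wires. For each internal connection we must pick the correct orientation $\ulinks{}{}$ (I-name versus O-name), and then check the freshness side-condition that $\comp b$ is fresh for the rest of the process. This condition holds because every internal name is used exactly once on each side. Since the I-name case of the law is already established, no new bisimulation argument is needed; we proceed one wire at a time, and the law is a congruence by the definition of $\wbc$.
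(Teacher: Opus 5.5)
Your proposal is correct and follows essentially the same route as the paper. Bifunctoriality, the interchange law, strict associativity and the unit laws come from structural congruence, and the symmetry laws (naturality, involutivity, hexagon) come from the wire substitution law, which is what the paper calls ``transitivity of wires''. Your explicit check that $P \mid Q$ is well defined on barbed-congruence classes is a small addition that the paper leaves implicit.
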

\begin{proof}
  As we already explained, the tensor product \( \bar \env \otimes \bar \envTwo = \bar \env, \bar \envTwo \), the concatenation of the two sequence, and the monoidal unit \( 1 \) is \( \emptyenv \).
  Since string concatenation is associative and empty sequence is the unit of concatenation, the structural morphisms for the monoidal category structure are simply identities.

  What we need to show is that \( \otimes \) is a bifunctor.
  Recall that we defined the tensor product of \( P \colon  \tyenv \to \tyenvTwo \) and  \( Q \colon   \tyenv' \to \tyenvTwo' \) as the parallel composition \( P \mid  Q \colon  \tyenv, \tyenv' \to \tyenvTwo, \tyenvTwo' \), where names are chosen to avoid collision.
  We have the following laws for the strict symmetric monoidal categories.
  The preservation of identities:
  \begin{align*}
    \id_{\bar \tyenv} \otimes \id_{\bar \tyenvTwo}
    &= \ulinks {\seq a}{\seq b} \mid \ulinks {\seq c} {\seq d} \\
    &= \ulinks {\seq a, \seq c}{\seq b, \seq d} \\
    &= \id_{\bar \tyenv, \bar \tyenvTwo}
  \end{align*}
  interchange law:
  \begin{align*}
    (P; Q) \otimes (R; S)
    &= \resBP {\seq a}(P \mid Q) \mid \resBP{\seq b}(R \mid S) \\
    &\equiv \resBP{\seq a}\resBP{\seq b}((P \mid R) | (Q \mid S)) \\
    &= (P \otimes R); (Q \otimes S)
  \end{align*}
  strict associativity of the tensor products:
  \begin{align*}
    (P \otimes Q) \mid R = (P \mid Q) \mid R \equiv P \mid (Q \mid R) = P \otimes (Q \otimes R)
  \end{align*}
  and the unit law law of tensor products:
  \begin{align*}
    \id_{\emptyenv} \otimes P = 0 \mid P  \equiv P \equiv P \mid 0 = P \otimes \id_{\emptyenv}
  \end{align*}
  as desired. Note that they all follow from the structural congruence.

  The symmetry \( \sigma_{\bar \tyenv, \bar \tyenvTwo} : \bar \tyenv, \bar \tyenvTwo \to \bar \tyenvTwo, \bar \tyenv \) is defined by
  \begin{equation*}
    \sigma_{\bar \tyenv, \bar \tyenvTwo} \defeq \ulinks{\seq a}{\seq b'} \mid \ulinks{\seq b}{\seq a'} \colon ( \tyenv, \tyenvTwo)  \to (\tyenvTwo', \tyenv')
  \end{equation*}
  where \( \seq a \) (resp.~\( \seq a' \)) are names in \( \tyenv \) (resp.~\( \tyenv' \)) and \( \seq b \) (resp.~\( \seq b' \)) are names in \( \tyenvTwo \) (resp.~\( \tyenvTwo' \)).
  It is easy to check that the symmetry satisfies its laws.
  The naturality follows from Theorem~\ref{t:wire-subst-law} and the other two laws \( (\sigma_{\bar \tyenv, \bar \tyenvTwo} \otimes \id_{\bar \Xi} ); (\id_{\bar \tyenvTwo} \otimes \sigma_{\bar \tyenv, \bar \Xi}) = \sigma_{\bar \tyenv, \bar \tyenvTwo \otimes \bar \Xi}\) and \( \sigma_{\bar \tyenv, \bar \tyenvTwo}; \sigma_{\bar \tyenvTwo, \tyenv} = \id_{\bar \tyenv \otimes \bar \tyenvTwo}\) can be proved by the transitivity of wires (cf. the proof for Theorem~\ref{thm:proc-is-compact-closed}).
\end{proof}

We now show that \( \Proc \) is a compact closed category.
\ProcisCompactClosed*
\begin{proof}
  Since we have shown that \( \Proc \) is a symmetric monoidal category (Lemma~\ref{lem:proc-is-SMC}), it suffices to show the triangle identities for the compact closed structure.
  Let \( \bar{\env} = \ty_1, \cdots  \ty_n \).
  Then the unit \( \eta_{\bar{\env}} : \emptyenv \to \bar{\env} \otimes \bar{\env}^\perp \) and counit \(\epsilon_{\bar{\env}} : \bar{\env}^\perp \otimes \bar{\env} \to \emptyenv \) are defined by
  \begin{align*}
    \eta_{\bar{\env}} &\defeq \ulinks {\seq a} {\seq b} :  \emptyenv \to (a_1 : \ty_1, \ldots, a_n : \ty_n, b_1 : \dual{\ty_1}, \ldots, b_n : \dual{\ty_n}) \\
    \epsilon_{\bar{\env}} &\defeq  \ulinks {\seq b} {\seq a} : (b_1 : \dual{\ty_1}, \ldots, b_n : \dual{\ty_n}, a_1 : \ty_1, \ldots, a_n : \ty_n) \to \emptyenv
  \end{align*}
  We have \( (\id_{\bar{\env}} \otimes \eta_{\bar{\env}});(\epsilon_{\bar \env} \otimes \id_{\bar \env}) = \id_{\bar \env} \) because
  \begin{align*}
    (\id_{\bar{\env}} \otimes \eta_{\bar{\env}});(\epsilon_{\bar \env} \otimes \id_{\bar \env})
    &= \resBP {\seq b} \resBP {\seq c} \resBP {\seq d}( (\ulinks {\seq a} {\seq b} \mid \ulinks{\seq c}{\seq d}) \mid (\ulinks{\seq {\comp b}}{\seq {\comp c}} \mid \ulinks{\seq {\comp d}}{\seq e})) \\
    &\equiv \resBP {\seq c} (\resBP {\seq b}(\ulinks {\seq a} {\seq b} \mid \ulinks{\seq {\comp b}}{\seq {\comp c}} )  \mid \resBP {\seq d}(\ulinks{\seq c}{\seq d} \mid \ulinks{\seq {\comp d}}{\seq e})) \\
    &\contr \resBP {\seq c}(\ulinks {\seq a} {\seq c}  \mid  \ulinks{\seq {\comp c}}{\seq e}) \\
    &\contr \ulinks {\seq a}{\seq e} \\
    &= \id_{\bar \env}
  \end{align*}
  The other triangle equality is proved in a similar manner.
\end{proof}
\subsection{Cartesian Structure of \texorpdfstring{\( \ProcN \)}{Proc-}}
We have claimed that \( \ProcN \) is a cartesian category by providing the data such as the projection maps.
Here we show that these data indeed satisfies the laws for the carteisan structure.

\begin{lemma}
  The type \( \lich{S + T} \) is indeed the product of \( \lich{S} \) and \( \lich{T} \).
\end{lemma}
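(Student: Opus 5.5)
We must show that $\lich{S+T}$, with the projections $\pi_1,\pi_2$ and the pairing $\langle P,Q\rangle$ defined in Section~\ref{sec:category}, is a product in $\ProcS$. That is, we check $\langle P,Q\rangle;\pi_1 \wbc P$, $\langle P,Q\rangle;\pi_2 \wbc Q$, and uniqueness in the form $\langle R;\pi_1, R;\pi_2\rangle \wbc R$ for every negative $R\colon \negenv \to c:\lich{S+T}$. All equalities are up to barbed congruence and renaming. We prove them with the laws inherited from \Api{} (Corollary~\ref{c:laws_A}, in particular the communication law~\ref{i:laws_COM}), the wire substitution law (Theorem~\ref{t:wire-subst-law}), and Lemma~\ref{lem:negative-strict-proc}.

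\textbf{Projection equations.} Composition with $\pi_1 = \inp a x.\out b {\inl x}$ gives
\[
\resBP c\big(\inp c x.\binCase x y {\resBP a(P\mid \out{\comp a}y)} y {\resBP b(Q\mid \out{\comp b}y)} \mid \inp{a'} z.\out{\comp c}{\inl z}\big).
\]
The input at $a'$ does not depend on the restricted name, so we first commute it outward, in the spirit of law~\ref{i:laws_INP}; the linear input at $c$ plays the role the replicated one plays there. This gives $\inp{a'}z.\resBP c(\cdots\mid \out{\comp c}{\inl z})$. Next we perform the internal communication at $c$ (law~\ref{i:laws_COM}, a linear instance) and the \textsc{SumL} $\tau$-step. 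Both are deterministic and unobservable, so they are expansions. The result is $\inp{a'}z.\resBP a(P\mid \out{\comp a}z)$.

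By Lemma~\ref{lem:negative-strict-proc} we may take $P\wbc \inp a x.P_0$. One more communication then yields $\inp{a'}z.P_0\sub z x$, which is $P\sub{a'}{a}$, i.e.\ $P$ as a morphism. The case of $\pi_2$ is symmetric.

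\textbf{Uniqueness.} By Lemma~\ref{lem:negative-strict-proc}, write $R\wbc \inp c w.R_0$. Unfolding $\langle R;\pi_1,R;\pi_2\rangle$ with the same steps as above gives
\[
\inp c x.\binCase x y {\inp{}{}\!R_0\sub{\inl y}w} y {R_0\sub{\inr y}w}.
\]
Here we are using that inside $R;\pi_i$ the input at the internal name fires immediately after receiving $y$. We must show this is barbed congruent to $\inp c w.R_0$, which is an $\eta$-law for sums under an input prefix. I would prove it with internal bisimilarity (Theorem~\ref{t:sound}), after translating both sides to \apiE{} via Lemma~\ref{l:wire_trans}.

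The candidate relation pairs the two processes, together with, for every value $v$ of type $S+T$, the pairs $(\binCase v \ldots, R_0\sub v w)$ plus identity. The only delicate point is the input clause. Ground-style bisimulation leaves the received name $w$ symbolic, so the \texttt{case} is stuck on a variable, and we cannot exhibit the \textsc{SumL}/\textsc{SumR} step. This is the main obstacle. I would first try to avoid it by splitting the type: a channel carrying $S+T$ is treated as two branches, so received values are always tagged, following the standard treatment of sums in \cite{SangiorgiWalker01}. Concretely, the input clause is checked per instantiation $\inl y$ or $\inr y$ with $y$ fresh, using that value-closed processes only ever receive closed injections. After instantiation, one $\tau$-step (an expansion) reduces the left side to $R_0\sub{\inl y}w$, which closes the clause up to $\expa$. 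If that is not directly available, the fallback is to argue at the level of barbed congruence. Any closing context can only send values of the form $\inl v$ or $\inr v$ on $c$, so the two processes have identical reductions modulo one deterministic $\tau$-step. This can be handled as a barbed bisimulation up to expansion, as in the proof of Lemma~\ref{l:wire-I}.
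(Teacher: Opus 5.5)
Your projection equations follow the paper's computation step for step: commute the outer input outward, use expansions for the communication on $c$ and for the \textsf{case} step on a concrete injection, then apply Lemma~\ref{lem:negative-strict-proc} and one more communication.

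For uniqueness you decompose differently. The paper verifies $\langle\pi_1,\pi_2\rangle = \mathit{id}$ and naturality $R;\langle P,Q\rangle = \langle R;P,R;Q\rangle$ separately. You verify surjective pairing $\langle R;\pi_1,R;\pi_2\rangle \wbc R$, which, together with the projection equations and the fact that pairing respects $\wbc$, gives the universal property at once.

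The cost is that everything is funnelled into one general $\eta$-law for sums under an input prefix:
$\inp c x.\binCase x y {R_0\sub{\inl y}w} y {R_0\sub{\inr y}w} \wbc \inp c w.R_0$ for an arbitrary continuation $R_0$.
The paper needs only two special manipulations of the case construct. First, it collapses the case with branches $\out d{\inl y}$ and $\out d{\inr y}$ to $\out d x$, written there as $\equiv$ although this is not a structural-congruence rule. Second, it pushes $\inp{\comp a}y.R_0$ into both branches, written $\sbisim$.

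Both arguments rest on the fact you make explicit: in a closing context only concrete injections are ever received on $c$, so the case reduction is a single deterministic $\tau$-step. Your observation that the ground-style input clause of Definition~\ref{d:hb-bisimulation} cannot see this is well founded, and the paper's proof passes over it. Read literally, with the case stuck on a symbolic variable, that clause would even relate processes that differ only inside case branches.

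Of your two options, prefer the fallback. Changing the input clause to instantiate the outermost injection would oblige you to re-prove Theorem~\ref{t:sound} for the new relation. A barbed bisimulation up to expansion over closing contexts proves the law directly. If you take that route, note that the hole may occur under a replicated input that binds $c$ as a parameter. Your relation must therefore allow several copies of the two processes at different stages of the case reduction.
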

\begin{proof}
  We check the \( \beta \) law, \( \eta \) law and the functoriality.

  We first check the \( \beta \) law \( \pi_1; \tuple{P, Q} = P \). (The law for the second projection is proved similarly.)
  We just need to observe that the following equivalence holds:
  \begin{align*}
    &\res{b, c}(\inp a x. \out b {\inl{x}} \mid \inp c x . \binCase x y {\resBP d(P \mid \out {\comp d}{y})} y {\resBP e(Q \mid \out {\comp e}{y})}) \\
    &\sbisim \inp a x. \res{b, c}(\out b {\inl x} \mid \inp c x . \binCase x y {\resBP d(P \mid \out {\comp d}{y})} y {\resBP e(Q \mid \out {\comp e}{y})}) \\
    &\contr \inp a x. \binCase {\inl x} y {\resBP d(P \mid \out {\comp d}{y})} y {\resBP e(Q \mid \out {\comp e}{y})} \\
    &\contr \inp a x. \resBP d(P \mid \out {\comp d}{x}) \\
    &\wbc \inp a x. \resBP d(\inp d x. P_0 \mid \out {\comp d}{x}) \tag{by Lemma~\ref{lem:negative-strict-proc}}\\
    &\contr \inp a x . P_0 \\
    &= P \sub a d.
  \end{align*}

  We next check the \( \eta \) law \( \tuple{\pi_1, \pi_2} = \id \).
  This law holds because
  \begin{align*}
    &\inp a x . \binCase x y {\resBP b(\inp b y . \out d {\inl y} \mid \out {\comp b}{y})} y {\resBP c(\inp c y . \out d {\inr y} \mid \out {\comp c}{y})} \\
    &\contr \inp a x . \binCase x y {\out d {\inl y}} y {\out d {\inr{y}}} \\
    &\equiv \inp a x. \out d x.
  \end{align*}

  We conclude by showing the functoriality \( R; \tuple{P, Q} = \tuple{R; P, R; Q} \).
  For \( P \colon  a : \lich{S} \to b : \lich{T_1} \),  \( Q \colon  a : \lich{S} \to c : \lich{T_2} \),  and  \( R \colon \negenv \to a' : \lich{S} \) we have
  \begin{align*}
    &\resBP a (R \mid \inp d x . \binCase x y {\resBP b(P \mid \out {\comp b}{y})} y {\resBP c(Q \mid \out {\comp c}{y})}) \\
    &\wbc \resBP a (\inp{\comp a}{y}.R_0 \mid \inp d x . \binCase x y {\resBP b(P \mid \out {\comp b}{y})} y {\resBP c(Q \mid \out {\comp c}{y})}) \tag{by Lemma~\ref{lem:negative-strict-proc}} \\
    &\sbisim \inp d x . \left(
      \binCaseAligned x y {\resBP b(  \resBP a(\inp{\comp a}{y}.R_0 \mid P) \mid \out {\comp b}{y})} y {  \resBP c(  \resBP a(\inp{\comp a}{y}.R_0 \mid Q) \mid \out {\comp c}{y})}
      \right)\\
    &= \tuple{R; P, R; Q}
  \end{align*}
\end{proof}

\subsection{On the Relative Closure}
\begin{lemma}
  The mapping \( \negenvTwo \multimap - \) is an endofunctor over \( \ProcS \).
\end{lemma}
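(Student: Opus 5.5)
We must show that for a negative environment \( \negenvTwo = a_1:\ty_1,\ldots,a_n:\ty_n \), the assignment \( \barnegenv \mapsto \barnegenvTwo \multimap \barnegenv \) extends to a functor on \( \ProcS \). We fix the object part as in the main text: \( \barnegenvTwo \multimap \emptyenv = \emptyenv \) and \( \barnegenvTwo \multimap \lich{\valty} = \lich{\ty_1^\perp \times \cdots \times \ty_n^\perp \times \valty} \). Strict objects are thus sent to strict objects.

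For the action on morphisms, take a negative \( P \colon b:\lich{\valtyTwo} \to c:\lich{\valty} \). We set
\[ \negenvTwo \multimap P \defeq \inp{c'}{\seq a, x}. \resBP c\big( \resBP {b'}(\out{\comp{b'}}{\seq a, y}\mid \inp{b''}{y}\ldots)\big), \]
where the dots are to be filled as follows. Informally, \( c' \) receives the tuple \( \seq a, x \), passes \( x \) to \( P \) along \( \comp c \), and wraps every request \( y \) that \( P \) emits on \( b \) into the tuple \( \seq a, y \) forwarded on the output name of \( \barnegenvTwo \multimap \lich{\valtyTwo} \). Equivalently, and more cleanly, we define \( \negenvTwo \multimap P \defeq \Lambda(\evm ; P) \), that is, the currying (Lemma~\ref{lem:closed-struture}, whose data \( \Lambda, \Lambda^{-1} \) are given explicitly) of the composite of the evaluation \( (\barnegenvTwo\multimap\lich{\valtyTwo})\otimes\barnegenvTwo \to \lich{\valtyTwo} \) with \( P \). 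Both \( \Lambda \) and \( \evm \) are explicit processes, so negativity and strictness are immediate: the result is an input at a linear name. Since \( P \) is negative and has the strict type \( \lich{\valtyTwo} \) as source, it also fits the hypotheses of Lemma~\ref{lem:negative-strict-proc}. The empty cases are trivial, mapping to the terminal morphism.

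We then check the two functor laws. For identities, \( \negenvTwo \multimap \id = \Lambda(\evm ; \id) = \Lambda(\evm) = \Lambda(\Lambda^{-1}(\id)) = \id \), using that wires are identities (Theorem~\ref{t:wire-subst-law}) and that \( \Lambda \) and \( \Lambda^{-1} \) are mutually inverse up to \( \wbc \). For composition, we reduce \( (\negenvTwo\multimap P);(\negenvTwo\multimap Q) = \Lambda(\evm;P);\Lambda(\evm;Q) \) to \( \Lambda(\evm;P;Q) \) via the naturality of \( \Lambda \) in its argument, namely \( \Lambda(R);\Lambda(S) = \Lambda((\Lambda(R)\otimes\id);S) \), together with the \( \beta \)-law \( (\Lambda(R)\otimes \id);\evm = R \). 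Both are established directly on processes. Concretely, we first replace \( \Lambda(R) \) by its input form \( \inp{c}{\seq a,x}.\ldots \) and \( Q \) by \( \inp{b}{y}.Q_0 \) (Lemma~\ref{lem:negative-strict-proc}). We then push the restriction under the leading input using the replication/commutation laws of Corollary~\ref{c:laws_A}, perform the internal linear communication (law~\ref{i:laws_COM}, an expansion), and finally remove leftover wires on the tuple components \( \seq a \) with Theorem~\ref{t:wire-subst-law}.

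The main obstacle is this last step. The tuple components \( \seq a \) have negative types, so they are passed around as input capabilities, and eliminating the intermediate wires between them requires the I-name case~\eqref{e:islaw} of the wire substitution law rather than the easier O-name case. The freshness side conditions of that law must be checked at each use. Since linearity guarantees that each intermediate name is used exactly once, these conditions should hold, but this is where care is needed.
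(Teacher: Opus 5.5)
Your argument is correct, but it takes a different route from the paper. The paper fixes the explicit process \( \negenvTwo\multimap P \defeq \inp{c'}{\seq a, x}.\resBP b(\resBP c(P \mid \out{\comp c}{x}) \mid \inp{\comp b}{y}.\out{b'}{\seq a, y}) \). Up to \( \equiv \) and an expansion inside \( \evm \), this agrees with your \( \Lambda(\evm;P) \). The paper then checks both laws by direct calculation. The identity law takes two internal communications. The composition law commutes restrictions past the leading linear input, performs two communications, uses Lemma~\ref{lem:negative-strict-proc} to put \( P \) in input form, and finishes with \( \alpha \)-conversion. It uses neither the wire law nor the closed structure.

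You instead derive functoriality from the universal property of \( (\Lambda,\evm) \): pointwise universal arrows assemble into a functor. This is not circular, provided you use only the bijection \( \Lambda/\Lambda^{-1} \) of Lemma~\ref{lem:closed-struture} and its naturality in the context argument, not the functor named in that lemma's statement. Your \( \beta \)-law is then a formal consequence, \( (\Lambda(R)\otimes\id);\evm = (\Lambda(R)\otimes\id);\Lambda^{-1}(\id) = \Lambda^{-1}(\Lambda(R);\id) = R \), so the process-level re-verification you sketch is unnecessary. Your route is more conceptual, treats the empty objects uniformly, and gives naturality of \( \Lambda \) in the strict argument for free. The paper's route is elementary and self-contained, which lets the lemma come before the closure lemma, whose naturality the paper only asserts is easy.

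There are some inaccuracies, none of them fatal.

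\begin{itemize}
\item The obstacle you single out does not arise. Since \( \negenvTwo \) is negative, \( \barnegenvTwo\multimap\lich{\valtyTwo} = \lich{\ty_1^\perp\times\cdots\times\ty_n^\perp\times\valtyTwo} \) carries the dualised output types \( \ty_i^\perp \). So the \( \seq a \) are O-names, not necessarily linear. Any leftover wire on them is removed by the easy O-name case~\eqref{e:oslaw}, whose freshness condition is implicit in typing. Exchanging input capabilities is what distinguishes the sequoid, where the \( \ty_i \) are not dualised.
\item Not every empty case yields a terminal morphism. For \( P\colon\emptyenv\to\lich{\valty} \), the image is a genuine map \( \emptyenv\to\barnegenvTwo\multimap\lich{\valty} \); your uniform definition covers it anyway.
\item Lemma~\ref{lem:negative-strict-proc} needs the target, not the source, to be strict; here both are.
\item The half-written explicit formula at the start should simply be dropped.
\end{itemize}
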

\begin{proof}
  The identity is preserved because we have
  \begin{align*}
    &\inp {c'} {\seq a, x}.\resBP b(\resBP c(\inp {c}{z}. \out b {z} \mid \out {\comp c}{x}) \mid \inp{\comp b}{y}. \out{b'}{\seq a, y}) \\
    &\contr \inp {c'} {\seq a, x}.\resBP b(\out b {x} \mid \inp{\comp b}{y}. \out{b'}{\seq a, y})\\
    &\contr \inp {c'} {\seq a, x}.\out{b'}{\seq a, x}.
  \end{align*}

  Now we prove that composition is preserved.
  Suppose that \( P \colon b : \lich{\tyTwo} \to c : \lich{\ty} \) and \( Q \colon d : \lich{\ty} \to e : \lich{\tyThree} \) for some names \( b \), \( c \), \( d \) and \( e\); and types \( \tyTwo \), \( \ty \) and \( \tyThree \).
  Then we have
  \begin{align*}
    & \res{c', d'} (\negenvTwo \multimap P \mid  \negenvTwo \multimap Q) \\
    &=\res{c', d'}\left(
    \begin{aligned}
    &\inp {c'} {\seq a, x}.\resBP b(\resBP c(P \mid \out {\comp c}{x}) \mid \inp{\comp b}{y}. \out{b'}{\seq a, y})
      \mid \\
    &\inp {e'} {\seq a, x}.\resBP d(\resBP e(Q \mid \out {\comp e}{x}) \mid \inp{\comp d}{y}. \out{d'}{\seq a, y})
    \end{aligned}
      \right)\\
    &\sbisim \inp {e'} {\seq a, x}.\resBP d\left(
      \begin{aligned}
        &\resBP e(Q \mid \out {\comp e}{x}) \mid \\
        &\inp{\comp d}{y}. \res {c', d'} (\out{d'}{\seq a, y} \mid \inp {c'} {\seq a, x}.\resBP b(\resBP c(P \mid \out {\comp c}{x}) \mid \inp{\comp b}{y}. \out{b'}{\seq a, y})
      \end{aligned}
      \right)\\
    &\contr \inp {e'} {\seq a, x}.\resBP d\left(
      \begin{aligned}
        &\resBP e (Q \mid \out {\comp e}{x}) \mid \\
        &\inp{\comp d}{y}.\resBP b(\resBP c(P \mid \out {\comp c}{y}) \mid \inp{\comp b}{y}. \out{b'}{\seq a, y})
      \end{aligned}
      \right)\\
      &\wbc \inp {e'} {\seq a, x}.\resBP d \left(
      \begin{aligned}
        &\resBP e (Q \mid \out {\comp e}{x}) \mid \\
        &\inp{\comp d}{y}.\resBP b(\resBP c(\inp c y .P_0 \mid \out {\comp c}{y}) \mid \inp{\comp b}{y}. \out{b'}{\seq a, y})
      \end{aligned}
        \right) \tag{by Lemma~\ref{lem:negative-strict-proc}} \\
       &\contr \inp {e'} {\seq a, x}.\resBP d\left(
        \resBP e(Q \mid \out {\comp e}{x}) \mid  \inp{\comp d}{y}.\resBP b(P_0  \mid \inp{\comp b}{y}. \out{b'}{\seq a, y})
         \right) \\
      &\sbisim \inp {e'} {\seq a, x}.\resBP d\left(
        \resBP e (Q \mid \out {\comp e}{x}) \mid  \resBP b( \inp{\comp d}{y}.P_0  \mid \inp{\comp b}{y}. \out{b'}{\seq a, y})
        \right) \tag*{} \\
      &= \inp {e'} {\seq a, x}.\res{d, c}\left(
        \res {e, e''}(Q \mid \out {e''}{x}) \mid  \res{b, b''}(P \mid \inp{b''}{y}. \out{b'}{\seq a, y})
        \right) \tag{\( \alpha \) conversion on \( \comp d \)} \\
      &\equiv \inp {e'} {\seq a, x}.\resBP b(
        \resBP e ( \res {d, c} (P \mid Q) \mid \out {\comp e}{x})  \mid \inp{\comp b}{y}. \out{b'}{\seq a, y})) \\
    &= \negenvTwo \multimap \res{d, c}(P \mid Q)
  \end{align*}
\end{proof}

\closedStructure*

\begin{proof}
  We already sketched the proof of this lemma, but we shall write the detail here.

  Suppose that \( \negenvTwo \) is \( a_1 \colon \ty_1, \ldots, a_n \colon \ty_n \).
  For a process \( P \colon \negenv \otimes \negenvTwo \to b : \lich{T} \), we define \( \Lambda(P) \) as
  \begin{align*}
    c(\seq a, x).\resBP b (P \mid \out {\comp b} x).
  \end{align*}

  The inverse of this mapping is given by
  \begin{align*}
    \Lambda^{-1}(Q) \defeq \inp b x. \resBP c (\out {\comp c}{\seq a, x} \mid Q)
  \end{align*}
  for \( Q \colon \negenv \to c \colon \negenvTwo \multimap \lich{\ty} \).

  These are indeed the inverse of each other because
  \begin{align*}
    \Lambda(\Lambda^{-1}(Q))
    &= \inp c {\seq a, x}. \resBP b (\inp b x . \resBP c (\out {\comp c}{\seq a, x} \mid Q) \mid \out {\comp b} x) \\
    &\contr \inp c {\seq a, x}. \resBP c(\out {\comp c}{\seq a, x} \mid Q)) \\
    &\sbisim \inp c {\seq a, x}. \resBP c(\out {\comp c}{\seq a, x} \mid \inp c {\seq a, x}. Q_0)) \tag{by Lemma~\ref{lem:negative-strict-proc}}\\
    &\contr \inp c {\seq a, x}. Q_0 \\
    &\sbisim Q
  \end{align*}
  and
  \begin{align*}
    \Lambda^{-1}(\Lambda(P))
    &= \inp b x . \resBP c(\out{\comp c} {\seq a, x} \mid \inp c {\seq a, x} . \resBP b(P \mid \out{\comp b}{x})) \\
    &\contr \inp b x . \resBP b(P \mid \out{\comp b}{x})) \\
    &\contr P \tag{by Lemma~\ref{lem:negative-strict-proc} and communication between \( b \) and \( b' \). }
  \end{align*}

  Naturality is easy to check.
\end{proof}
\subsection{On the Exponential Modality}
\promotion*
\begin{proof}
  We already proved~\ref{it:lem:promotion:left-unit} in the body of the paper.

  The law~\ref{it:lem:promotion:right-unit} holds because
  \begin{align*}
    ! \inp c x. \resBP b(\inp b x .\out a x \mid \out {\comp b} x)
    &\contr ! \inp c x . \out a x.
  \end{align*}

  Now we prove the law~\ref{it:lem:promotion:assoc}.
  Let \( P \colon a : \ich{S} \to b : \lich{T} \) and \( Q \colon c : \ich{T} \to d : \lich{U}\).
  Observe that
  \begin{align*}
    &\prom{(\prom P; Q)} \colon a : \ich{S} \to e : \ich{U} \\
    &= ! \inp e x . \res {d, d'}( \res{c', c} (!\inp {c'} x . \res {b, b'}(P \mid \out{b'} x) \mid  Q) \mid \out {d'} x) \\
    &\contr ! \inp e x . \res {d, d'}( \res{c', c} (!\inp {c'} x . P_0 \mid  Q) \mid \out {d'} x) \\
    &\contr ! \inp e x . \res{c', c} (!\inp {c'} x . P_0 \mid  Q_0)
  \end{align*}

  On the other hand, we have
  \begin{align*}
    \prom P ; \prom Q \wbc
    \resBP c (!\inp {\comp c}{x} . P_0 \mid !\inp e x . Q_0 ) \\
    \wbc ! \inp e x . \resBP c (!\inp {\comp c} x . P_0 \mid  Q_0) \tag{by replication theorem}
  \end{align*}
\end{proof}

\promotionMediating*
\begin{proof}
  The above diagram is equivalent to the following, where we have associated names to types:
  \[
    \begin{tikzcd}
      c : \ich{\valtyThree} \ar[d, "\prom {\tuple{\der, \der}}"] \ar[rr, "\prom {\tuple{P, Q}}"]& & d : \ich{(\valtyTwo + \valty)} \ar[d, "m^{-1}"]\\
      e : \ich{(\valtyThree + \valtyThree)} \ar[rd, "m^{-1}"] & & a : \ich{\valtyTwo}, b: \ich{\valty} \\
      & f: \ich{\valtyThree}, g : \ich{\valtyThree} \ar[ru,"\prom P \otimes \prom Q"]&
    \end{tikzcd}
  \]
  Observe that
  \begin{align*}
    \prom {\tuple{P, Q}} &\wbc !\inp d x. \res{d', d''}( \inp {d'} x. \binCase x y {P_0} {y} {Q_0} \mid \out{d''}{x}) \\
                         &\contr  !\inp d x. \binCase x y {P_0} {y} {Q_0} \\
    \prom P \otimes \prom Q &\wbc !\inp a x.P_0 \sub f c \mid !\inp b x. Q_0 \sub g c \\
   \prom {\tuple{\der, \der}} &\wbc !\inp e x. \res{e', e''}( \inp {e'} x. \binCase x y {\out c y} {y} {\out c y} \mid \out{e''}{x}) \\
                         &\contr  !\inp e x. \binCase x y {\out c y} {y} {\out c y}
  \end{align*}
  where \( P_0 \) and \( Q_0 \) are the processes that satisfy \( a(x).P_0 \sbisim P \) and \( a(x).Q_0 \sbisim Q\), respectively, that exist thanks to Lemma~\ref{lem:negative-strict-proc}. (Here we are assuming that \( P \) and \( Q \) have the free names indicated by \( P \colon c : \ich{\tyThree} \to  a: \lich{S} \) and  \( Q \colon c : \ich{\tyThree} \to  a: \lich{S} \).)

  Thus, we have
  \begin{align*}
    &\prom {\tuple{P, Q}}; m^{-1} \\
    &\wbc \resBP d(!\inp d x. \binCase x y {P_0} {y} {Q_0}  \mid   !\inp a y . \out {\comp d} {\inl y} \mid !\inp b z . \out {\comp d} {\inr z}) \\
    &\sbisim
      \begin{aligned}[t]
        &!\inp a y . \resBP d(!\inp d x. \binCase x y {P_0} {y} {Q_0}  \mid   \out {\comp d} {\inl y} )\\
          &\mid !\inp b y . \resBP d(!\inp d x. \binCase x y {P_0} {y} {Q_0}  \mid   \out {\comp d} {\inr y} )
        \end{aligned}
      \tag{by the replication theorem} \\
    &\contr !\inp a y . {P_0}   \mid !\inp b y . Q_0.
  \end{align*}

  Similarly, we have
  \begin{align*}
       \prom {\tuple{\der, \der}}; m^{-1} \wbc !\inp f x . \out c x \mid !\inp g x . \out c x.
  \end{align*}
  We, therefore, have
  \begin{align*}
    &\prom {\tuple{\der, \der}}; m^{-1} ; \prom P \otimes \prom Q \\
    &\wbc \resBP f \resBP g((!\inp {\comp f} x . \out c x \mid !\inp {\comp g} x . \out c x) \mid (!\inp a x.P_0 \sub f c \mid !\inp b x. Q_0 \sub g c)) \\
    &\equiv \resBP f(!\inp {\comp f} x . \out c x  \mid !\inp a x.P_0 \sub f c) \mid \resBP g( !\inp {\comp g} x . \out c x \mid   !\inp b x. Q_0 \sub g c) \\
    &\wbc ! \inp a x . P_0 \mid ! \inp b x . Q_0 \tag{Lemma~\ref{t:wire-subst-law}}
  \end{align*}
  as desired.
  \end{proof}

\subsection{On the Sequoidal Structure}
\label{app:sequoidal}

As we briefly mentioned, sequoidal CCC is a category in which the sequoidal structure and the closed structure `interact'.
Formally, we require that the sequoid is dual to the closed structure in the following sense.
 \begin{definition}
  A \emph{dual} action of a \( \category C \)-action \( (\category L, \oslash) \) is a \( \category{C}^\op \)-action \( (\category L, \multimap) \) such that, for each \( C \in \category C \), \( C \multimap - \) is a left and right adjoint of \( - \oslash C \).
\end{definition}

\begin{lemma}
  \label{lem:multimap-is-dual-of-oslash}
  We have \( (\ProcSL , \multimap ) \) as the dual of the \( \ProcN \)-action \( (\ProcSL, \oslash) \).
\end{lemma}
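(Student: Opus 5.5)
The plan is to exhibit, for each negative sequence $\negenvTwo = a_1:\ty_1,\ldots,a_n:\ty_n$, a natural bijection
\[
\ProcSL(L \oslash \negenvTwo, L') \cong \ProcSL(L, \negenvTwo \multimap L')
\]
and a second one
\[
\ProcSL(\negenvTwo \multimap L, L') \cong \ProcSL(L, L' \oslash \negenvTwo),
\]
working first with non-empty strict objects $L = \lich{\valtyTwo}$ and $L' = \lich{\valty}$. The degenerate cases in which one side is $\emptyenv$ are handled directly from the definitions: $\emptyenv \oslash \negenvTwo$ and $\negenvTwo \multimap \emptyenv$ are both $\emptyenv$, so terminality gives them.

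Everything rests on one observation about the types. We have
\[
\lich{\valtyTwo}\oslash\negenvTwo = \lich{\valtyTwo\times\ty_1\times\cdots\times\ty_n}
\]
and
\[
\negenvTwo\multimap\lich{\valty} = \lich{\ty_1^\perp\times\cdots\times\ty_n^\perp\times\valty}.
\]
A morphism $P \colon b:\loch{\ldots}$-side $\to a:\lich{\ldots}$ therefore has, in its input type (or in the dual of its source), the channels $\seq a$ with exactly the polarity needed to move them across. Concretely, for the first adjunction, given $P \colon \lich{\valtyTwo}\oslash\negenvTwo \to \lich{\valty}$, I will set
\[
\Phi(P) \defeq \inp{c}{\seq a', x}.\, \resBP{c''}\bigl(\out{\comp{c''}}{x} \mid P\sub{\ldots}{\ldots}\bigr).
\]
Here the output of $P$ to its source, of the form $\out{b}{y,\seq a}$, is rerouted so as to send $y$ on the new source name and to wire $\seq a$ to the received $\seq a'$. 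The inverse will be defined in the Lambda/$\Lambda^{-1}$ style, as in the proof of Lemma~\ref{lem:closed-struture}.

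The proof of bijectivity follows that same proof. By Lemma~\ref{lem:negative-strict-proc} I replace each negative strict process by an input-prefixed one $\inp a x.P_0$. The two composites then reduce, via the communication law (Corollary~\ref{c:laws_A}\,\ref{i:laws_COM}) and expansions, to the original process up to wires. Those wires are eliminated by Theorem~\ref{t:wire-subst-law}. Naturality in $L$ and $L'$ is the same computation as the functoriality of $\negenvTwo\multimap -$ shown above, using the replication-free laws together with the commutation of inputs (Lemma~\ref{l:commutativity}).

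The step I expect to be the main obstacle is \emph{linearity}: checking that both $\Phi(P)$ and its inverse are linear processes, so that they really lie in $\ProcSL$. This matters most for the second adjunction, where a linear map out of $\negenvTwo\multimap L$ must be transformed into one into $L'\oslash\negenvTwo$. In that direction the names $\seq a$ are received only when the input is triggered, so the output on the source must be produced before $\seq a$ can be forwarded. I will argue this as follows. Linearity of $P$ gives $P'\sub v x \contr P''\mid\out b w$. The constructed process then performs its forwarding output after $\contr$-steps, with the wire components supplied by Lemma~\ref{l:wire}. Precongruence of $\contr$ (Theorem~\ref{t:congS}) transports the expansion through the surrounding restrictions. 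If this turns out too delicate for the right adjoint, I will fall back on proving only the adjunction $-\oslash\negenvTwo \dashv \negenvTwo\multimap-$ in full detail and obtain the other one by the symmetric construction, swapping the roles of source and target.
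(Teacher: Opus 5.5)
Your architecture is the paper's. The paper proves $\negenvTwo\multimap-\dashv-\oslash\negenvTwo$ via an explicit forwarder-plus-wires bijection $\varphi$ and says the other adjunction is ``similar''; your $\Phi$ is the mirror image of $\varphi$. Note that the rerouting must be done by a linear forwarder $\resBP{b}(\cdots\mid\inp{\comp b}{y,\seq e}.(\out{b'}{y}\mid\links{\seq e}{\seq{a'}}))$, not by a substitution.

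The gap is in the bijectivity step. Communication, expansion, Lemma~\ref{lem:negative-strict-proc} and the wire laws handle one round trip, $\Psi(\Phi(P))$, but not the other. Take $Q\colon b':\lich{S}\to c:\negenvTwo\multimap\lich{T}$. The inverse $\Psi(Q)$ can only release the fresh I-names $\seq a$ (whose companions $\seq{\comp a}$ it hands to $Q$) through its output on its source. That output has to wait for $Q$'s output on $b'$. Accordingly, in $\Phi(\Psi(Q))$ the wires $\links{\seq a}{\seq{a'}}$ are spawned only under the input prefix of $\Phi$'s forwarder. Lemma~\ref{lem:negative-strict-proc} exposes only $Q$'s initial input, so these wires stay guarded and Theorem~\ref{t:wire-subst-law} cannot be applied.

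What closes the argument is that $Q$ is linear. After its input it expands to $Q''\mid\out{b'}{w}$, which fires the forwarders and brings the wires to top level, where they can be eliminated. This is exactly the step the paper flags with ``since $Q$ is linear communication at $b''$ is guaranteed''. The hypothesis cannot be dropped. Take $\negenvTwo=d:\ich{\unittype}$ and the non-linear $Q=\inp c{a',x}.\outC{a'}$. It never outputs on its source, so inside $\Phi(\Psi(Q))$ its output, now addressed to $\comp a$, is stuck forever. Hence $\Phi(\Psi(Q))\wbc\inp c{a',x}.\nil$, which is not barbed congruent to $Q$. So linearity is not only what makes $\Phi(P)$ and $\Psi(Q)$ land in $\ProcSL$ (your argument for that part is fine). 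It is what makes the hom-set bijection hold, in both adjunctions.

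There are two smaller points.

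First, terminality of $\emptyenv$ only trivialises hom-sets \emph{into} $\emptyenv$, i.e.\ the cases $L'=\emptyenv$. When the source is $\emptyenv$ you are comparing, e.g., $\ProcSL(\emptyenv,L')$ with $\ProcSL(\emptyenv,\negenvTwo\multimap L')$. These are not trivial, because the paper declares every negative process out of $\emptyenv$ linear, and they genuinely differ. For $\negenvTwo=d:\ich{\unittype}$ and $L'=\lich{\unittype}$, the first contains essentially only $\inp a x.\nil$. The second contains the distinguishable $\inp c{y,x}.\nil$ and $\inp c{y,x}.\outC y$. The paper's proof tacitly treats non-empty strict objects only; restrict likewise rather than call this case trivial.

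Second, being a dual also requires $(\ProcSL,\multimap)$ to be a $(\ProcN)^{\op}$-action, which your plan never addresses. The paper asserts it by analogy with $\oslash$; alternatively, you could transport it from $\oslash$ by taking mates along your adjunctions.
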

\begin{proof}
  The fact that \( (\ProcSL, \multimap) \) is a \( (\ProcN)^{\op} \)-action can be shown by an argument similar to the proof of \( ( \ProcSL, \oslash) \) being a \( \ProcN \)-action.

  The natural bijection
  \begin{align*}
    \varphi \colon \ProcSL(\lich{\dual {\seq \ty} \times \valtyTwo},  \lich{\valty}) \cong \ProcSL(\lich{\valtyTwo}, \lich{\valty \times \seq \ty})
  \end{align*}
  is given as
  \begin{align*}
    \varphi(P) \defeq  \inp {a'} {x, \seq d} . \resBP b (\resBP{a}(P \mid \out {\comp a}{x}) \mid \inp {\comp b} {\seq e, y} . (\out {b'}{y} \mid \links {\seq d} {\seq e}))
  \end{align*}
  where \( P \colon b : \lich{\dual{\seq T} \times \valtyTwo}\to a : \lich{\valty} \).
  Its inverse is given as
  \begin{align*}
    Q \mapsto \inp a x . \resBP {\seq c} \resBP{b'}(\resBP {a'}(Q \mid  \out{\comp a'}{x. \seq c} \mid \inp {\comp b'} {y} . \out{b}{\seq {\comp c}, y}))
  \end{align*}
  where \( Q : b' : \lich{\valtyTwo} \to a' : \lich{\valty \times \seq {T}}\).

  These are indeed the inverse of each other.
  \begin{align*}
    &\varphi^{-1}(\varphi(P))\\
    &=\inp a x . \resBP {\seq d} \resBP{b'}\left(\resBP {a'}\left(
    \begin{aligned}
    &\inp {a'} {x, \seq d} . \resBP b (\resBP{a}(P \mid \out {\comp a}{x}) \mid \inp {\comp b} {\seq e, y} . (\out {b'}{y} \mid \links {\seq d} {\seq e}))  \\
      &\mid\out{\comp a'}{x. \seq d} \mid \inp {\comp b'} {y} . \out{b}{\seq {\comp d}, y}
    \end{aligned}\right)
      \right) \\
    &\contr
    \inp a x . \resBP {\seq d} \resBP{b'}\left(
      \resBP b (\resBP{a}(P \mid \out {\comp a}{x}) \mid \inp {\comp b} {\seq e, y} . (\out {b'}{y} \mid \links {\seq d} {\seq e}))  \mid \inp {\comp b'} {y} . \out{b}{\seq {\comp d}, y}\right)   \tag{communication at \( a' \)}  \\
    &=
      \inp a x . \resBP {\seq d} \resBP{b'}\left(
      \resBP {b''} (\resBP{a}(P \sub{b''}{b} \mid \out {\comp a}{x}) \mid \inp {\comp b''} {\seq e, y} . (\out {b'}{y} \mid \links {\seq d} {\seq e}))  \mid \inp {\comp b'} {y} . \out{b}{\seq {\comp d}, y}\right) \tag{\( \alpha \)-conversion on \( b \)}\\
    &\sbisim
            \inp a x .
      \resBP {b''} \left(\resBP{a}(P \sub{b''}{b} \mid \out {\comp a}{x}) \mid \inp {\comp b''} {\seq e, y} .  \resBP {\seq d} \resBP {b'}(\out {b'}{y} \mid \links {\seq d} {\seq e} \mid \inp {\comp b'} {y} . \out{b}{\seq {\comp d}, y}) \right)  \\
    &\contr
      \inp a x .
      \resBP {b''} \left(\resBP{a}(P \sub{b''}{b} \mid \out {\comp a}{x}) \mid \inp {\comp b''} {\seq e, y} .  \resBP {\seq d}(\links {\seq d} {\seq e} \mid  \out{b}{\seq {\comp d}, y}) \right)  \tag{communication at \( b' \) }\\
    &\wbc{}{}
      \inp a x .
      \resBP {b''} \left(\resBP{a}(P \sub{b''}{b} \mid \out {\comp a}{x}) \mid \inp {\comp b''} {\seq e, y} . \out{b}{\seq e, y}\right) \tag{by Lemma~\ref{l:wire}} \\
    &=
      \inp a x .
      \resBP {b''} \left(\resBP{a}(P \sub{b''}{b} \mid \out {\comp a}{x}) \mid \links {b''} b \right) \tag{def.\ of \( \links{}{} \)} \\
    &\wbc
      \inp a x . \resBP{a}(P \mid \out {\comp a}{x}) \tag{by Theorem~\ref{t:wire-subst-law}}\\
    &\wbc P.
  \end{align*}
  Here the last step follows from the fact that \( P \wbc \inp a x . P_0 \) for some \( P_0 \) (Lemma~\ref{lem:negative-strict-proc}).
  We also have
  \begin{align*}
    &\varphi(\varphi^{-1}(Q)) \\
    &= \inp {a} {x, \seq d} . \resBP {b'} \left(
      \begin{aligned}
        &\resBP{a'}(\inp {a'} x . \resBP {\seq c} \resBP{b}(\resBP {a}(Q \mid  \out{\comp a}{x, \seq c} \mid \inp {\comp b} {y} . \out{b'}{\seq {\comp c}, y})) \mid \out {\comp a'}{x}) \\
        &\mid \inp {\comp b'} {\seq e, y} . (\out {b}{y} \mid \links {\seq d} {\seq e})
      \end{aligned} \right) \\
    &\contr
      \inp {a} {x, \seq d} . \resBP {b'} \left(
        \resBP {\seq c} \resBP{b}(\resBP {a}(Q \mid  \out{\comp a}{x, \seq c} \mid \inp {\comp b} {y} . \out{b'}{\seq {\comp c}, y})) \mid \inp {\comp b'} {\seq e, y} . (\out {b}{y} \mid \links {\seq d} {\seq e}) \right) \tag{communication at \( a' \)}\\
    &=    \inp {a} {x, \seq d} . \resBP {b'} \left(
      \resBP {\seq c} \resBP{b''}(\resBP {a}(Q \sub{b''}{b} \mid  \out{\comp a}{x, \seq c} \mid \inp {\comp b''} {y} . \out{b'}{\seq {\comp c}, y})) \mid \inp {\comp b'} {\seq e, y} . (\out {b}{y} \mid \links {\seq d} {\seq e}) \right) \tag{\( \alpha \)-conversion at \(b \) }\\
    &\sbisim
      \inp {a} {x, \seq d} .
      \resBP {\seq c} \resBP{b''}(\resBP {a}(Q \sub{b''}{b} \mid  \out{\comp a}{x, \seq c} \mid \inp {\comp b''} {y} . \resBP{b'}(\out{b'}{\seq {\comp c}, y}) \mid \inp {\comp b'} {\seq e, y} . (\out {b}{y} \mid \links {\seq d} {\seq e})) \\
    &\contr
      \inp {a} {x, \seq d} .
      \resBP {\seq c} \resBP{b''}(\resBP {a}(Q \sub{b''}{b} \mid  \out{\comp a}{x, \seq c} \mid \inp {\comp b''} {y} .  (\out {b}{y} \mid \links {\seq d} {\seq {\comp c}})) \\
       &\contr
      \inp {a} {x, \seq d} .
         \resBP {\seq c} (\resBP {a}(Q \mid  \out{\comp a}{x, \seq c}) \mid \links {\seq d} {\seq {\comp c}}) \tag{since \(Q \) is linear communication at \( b'' \) is guaranteed}  \\
    &\wbc{}{}
       \inp {a} {x, \seq d} .
         \resBP {a}(Q \mid  \out{\comp a}{x, \seq d})
      \tag{by Lemma~\ref{l:wire}} \\
    &\wbc{}{} Q \tag{by Lemma~\ref{lem:negative-strict-proc} as in the final step of the above equations}
  \end{align*}

  Similarly we can show that \( \seq{\ty} \multimap - \) is the right-adjoint of \( - \oslash \seq {\ty}\).
\end{proof}

\begin{lemma}
\label{lem:exp-is-dual-of-oslash-bang}
  The \( \ProcN_{!} \)-action \( (\ProcSL, \oslash_{!}) \) has the dual \((\ProcSL, \Rightarrow) \).
\end{lemma}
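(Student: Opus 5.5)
The plan is to reduce the statement to Lemma~\ref{lem:multimap-is-dual-of-oslash}, applied along the relative comonad $!$. On objects we have, by definition, $\lich{\valtyTwo} \oslash_{!} \lich{\valty} = \lich{\valtyTwo} \oslash {!\lich{\valty}}$. We likewise set $\lich{\valty} \Rightarrow \lich{\valtyThree} \defeq {!\lich{\valty}} \multimap \lich{\valtyThree}$, which is the Kleisli arrow of Section~\ref{sec:category}. On a Kleisli morphism $P \in \ProcN_{!}(\lich{\valty'}, \lich{\valty})$, that is $P \colon \ich{\valty'} \to \lich{\valty}$ in $\ProcN$, we set $P \Rightarrow \lich{\valtyThree} \defeq \prom P \multimap \lich{\valtyThree}$, mirroring $\lich{\valtyTwo} \oslash_{!} P = \lich{\valtyTwo} \oslash \prom P$. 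With these choices, the hom-bijection we need,
\[
\ProcSL(\lich{\valtyTwo} \oslash_{!} \lich{\valty}, \lich{\valtyThree}) \cong \ProcSL(\lich{\valtyTwo}, \lich{\valty} \Rightarrow \lich{\valtyThree}),
\]
and its mirror for the other adjunction, are literally the instances of the bijection $\varphi$ of Lemma~\ref{lem:multimap-is-dual-of-oslash} with the negative sequence $\seq\ty$ taken to be the single type ${!\lich{\valty}} = \ich{\valty}$. The empty-object cases ($\top \Rightarrow \barnegenv = \barnegenv$, and $\top$ on the right) are trivial, since all the constructions collapse to identities. Because $\varphi$ and $\varphi^{-1}$ are inverse up to $\wbc$ by that lemma, only naturality and the action axioms remain to be checked.

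Naturality in the $\ProcSL$ variables is inherited verbatim from Lemma~\ref{lem:multimap-is-dual-of-oslash}. Naturality in the Kleisli variable asks that $\varphi$ commute with precomposition by $\lich{\valtyTwo} \oslash \prom P$ and with postcomposition by $\prom P \multimap \lich{\valtyThree}$. This is the special case of naturality in $\ProcN$ with respect to the morphism $\prom P \colon \ich{\valty'} \to \ich{\valty}$, so it too follows from the previous lemma.

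Next I will show that $(\ProcSL, \Rightarrow)$ is a $(\ProcN_{!})^{\op}$-action. There are three things to check.
\begin{enumerate}
\item Functoriality in the Kleisli argument. Kleisli composition of $Q$ and $P$ is $\prom Q ; P$, and the Kleisli identity is $\der$. Law~\ref{it:lem:promotion:assoc} of Lemma~\ref{lem:promotion} gives $\prom{(\prom Q; P)} = \prom Q ; \prom P$, so $(\prom Q; P) \Rightarrow - = (\prom P \multimap -) \circ (\prom Q \multimap -)$. Law~\ref{it:lem:promotion:right-unit} gives $\prom\der = \id$, so identities are preserved. Together with functoriality of $\seq\ty \multimap -$, proved in the endofunctor lemma, this yields a functor.
\item The unit isomorphism, $\top \Rightarrow B \cong B$. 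This holds by definition.
\item The associativity isomorphism, $(\lich{\valty} \with \lich{\valty'}) \Rightarrow B \cong \lich{\valty} \Rightarrow (\lich{\valty'} \Rightarrow B)$. I will build it from the Seely isomorphism $m$ and the curry/uncurry processes $\Lambda$, $\Lambda^{-1}$ of Lemma~\ref{lem:closed-struture}. Concretely, it is a process that receives the tuple and re-dispatches the sum-typed channel into two replicated wires, as in $m^{-1}$. I will prove it invertible by the same pattern of communications and wire substitution (Theorem~\ref{t:wire-subst-law}) used in Lemma~\ref{lem:multimap-is-dual-of-oslash}.
\end{enumerate}

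The main obstacle is the coherence of this associativity isomorphism with the bijection $\varphi$, i.e.\ showing that the triangles relating $\varphi_{\lich{\valty} \with \lich{\valty'}}$ to the composite $\varphi_{\lich{\valty}} \circ \varphi_{\lich{\valty'}}$ commute. My first attempt will be to apply Lemma~\ref{lem:negative-strict-proc} to every linear process involved, so that each can be written as $\inp a x . P_0$. The two sides then become explicit input-guarded terms that can be normalised by expansion ($\contr$), the replication theorems (Corollary~\ref{c:laws_A}), and the wire laws (Lemma~\ref{l:wire} and Theorem~\ref{t:wire-subst-law}). I expect both sides to reduce to the same process, up to the Seely iso $m$, in the same way as in the proof of the diagram lemma for $m$. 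Linearity of the processes in $\ProcSL$ will be used, as in the $\varphi(\varphi^{-1}(Q))$ computation, to guarantee that the internal handshake on the result channel actually fires.
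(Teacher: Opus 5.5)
Your proposal is correct, and its core is exactly the paper's proof: unfold $\lich{\valty} \Rightarrow - = {!\lich{\valty}} \multimap -$ and $- \oslash_{!} \lich{\valty} = - \oslash {!\lich{\valty}}$, then instantiate Lemma~\ref{lem:multimap-is-dual-of-oslash} at the single negative type $\ich{\valty}$. One small mislabel: $\varphi$ itself yields $\ProcSL(\lich{\valty} \Rightarrow \lich{\valtyTwo}, \lich{\valtyThree}) \cong \ProcSL(\lich{\valtyTwo}, \lich{\valtyThree} \oslash_{!} \lich{\valty})$, whereas the bijection you display is the ``mirror'' direction, which the paper only says follows ``similarly''.

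The further checks you plan --- the $(\ProcN_{!})^{\op}$-action structure of $\Rightarrow$ and naturality in the Kleisli variable --- are left implicit in the paper. The coherence of the associativity isomorphism with $\varphi$, which you flag as the main obstacle, is not required: beyond $\Rightarrow$ being an action, the paper's definition of a dual only asks for the two adjunctions at each fixed object.
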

\begin{proof}
  We show that \( \lich{\valty} \Rightarrow - \) is the left adjoint of \( - \oslash_{!} \lich{\valty}\); the fact that it is also the right adjoint can be proved similarly.
  We have the following natural bijection
  \begin{align*}
    \ProcSL(\lich{\valty} \Rightarrow \lich{\valtyTwo}, \lich{\valtyThree})
    &\cong \ProcSL(!\lich{\valty} \multimap \lich{\valtyTwo}, \lich{\valtyThree}) \tag{by def.~of \( \Rightarrow \)} \\
    &\cong \ProcSL(\lich{\valtyTwo}, \lich{\valtyThree} \oslash !\lich{\valty}) \tag{by Lemma~\ref{lem:multimap-is-dual-of-oslash}} \\
    &\cong \ProcSL(\lich{\valtyTwo}, \lich{\valtyThree} \oslash_{!} \lich{\valty}) \tag{by def.~of \( \oslash_{!}\)}.
  \end{align*}
\end{proof}

Now we are ready to define the notion of sequoidal CCC.
\begin{definition}[\cite{Laird19}]
  A \emph{sequoidal CCC} is given by a tuple \( (\category C, \category L, \oslash, J) \) where
  \begin{itemize}
    \item \( \category C \) is a cartesian closed category
    \item \( (\category L, \oslash ) \) is a \( \category C \)-action with a dual \( (\category L, \multimap) \)
    \item \( J \) is a strong morphism of \( \category{C}^\op \)-actions \( (\category C, \Rightarrow) \to (\category L, \multimap)\).
      That is, there exists an isomorphism \( \omega\) that makes \( (J, \omega) \) is a weak morphism from \( (\category C, \Rightarrow) \) to \((\category L, \multimap)\).
  \end{itemize}
\end{definition}

Finally, we show the Kleisli category \( \ProcN_{!} \) yields a sequoidal CCC.
\sequoidalCCC*
\begin{proof}
  Since we know that \( \ProcN_{!} \) is a CCC and \( (\ProcSL, \oslash_{!}) \) has the dual \( (\ProcSL, \Rightarrow )\) by Lemma~\ref{lem:exp-is-dual-of-oslash-bang}, it suffices to show that \( J_{!} \) is a strong morphism of \( (\ProcN_{!})^\op  \)-actions from \( (\ProcN_{!}, \Rightarrow) \) to \( (\ProcSL, \Rightarrow) \).
  This is trivial because \(J_!(\lich{\tyTwo}) \Rightarrow \lich{\ty} \) and \(J_{!}(\lich{\tyTwo} \Rightarrow \lich{\tyTwo}) \) are the same objects, and thus, we can take the identity as the isomorphism.
\end{proof}

\end{document}